\documentclass[11pt]{article}
\usepackage{setspace}
\usepackage{mathtools}
\allowdisplaybreaks
\usepackage{comment}
\usepackage{natbib}     
\setcitestyle{year,open={(},close={)}} %Citation-related commands
\usepackage[psamsfonts]{amssymb}
\usepackage{amsthm,amsmath,amsbsy,mathabx}
\usepackage{bbm, bm}
\usepackage[dvipsnames]{xcolor}
\usepackage{subcaption}
\usepackage[shortlabels]{enumitem}

\RequirePackage[colorlinks,citecolor=blue,urlcolor=blue]{hyperref}
\usepackage{doi}
\usepackage{url}

\usepackage{graphicx,placeins}

\usepackage{inputenc}
\usepackage{graphics,multirow}
\usepackage{supertabular}
\usepackage{amscd}
\usepackage{fancyhdr}
\usepackage{verbatim}

\usepackage{pdflscape}
\usepackage{array}
\usepackage{float}
\usepackage{colortbl}
\usepackage{rotating}
\usepackage{lscape}

\usepackage{booktabs,caption}
\usepackage[flushleft]{threeparttable}
\usepackage{amsthm}
\usepackage{algorithm}
\usepackage{algpseudocode}

\usepackage{dcolumn}
\newcolumntype{L}{D{.}{.}{2,5}}

\usepackage{chngpage} % allows for temporary adjustment of side margins
\newcommand\tr{\mathrm{tr}}
\newcommand\prox{\mathrm{prox}}
\newcommand\acc{\mathrm{acc}}
\DeclareMathOperator*{\argmin}{\arg\min}
\usepackage{footnote}
\usepackage{adjustbox}

\newtheorem{theorem}{Theorem}[section]
\newtheorem{lemma}[theorem]{Lemma}

\newtheorem{proposition}[theorem]{Proposition}

\newtheorem{assumption}{Assumption}

\def\Nc{{\mathcal N}}

\long\def\symbolfootnote[#1]#2{\begingroup%
\def\thefootnote{\fnsymbol{footnote}}\footnotetext[#1]{#2}\footnotemark[#1]\endgroup}

\def\ba#1\ea{\begin{align*}#1\end{align*}} %\ba = \begin{algin*}, \ea = \end{align*}
\def\banum#1\eanum{\begin{align}#1\end{align}} %\banum = \begin{algin}, \eanum

\makeatletter
\@addtoreset{equation}{section}

\makeatother

\newcounter{Fig}[figure]

\newcounter{Tab}[table]

  {\refstepcounter{Tab}
   \addtocounter{table}{1}
   \samepage\vspace{0.2cm}
   \centerline{#1} \nobreak
   \begin{verse}{Table \thesection.\arabic{Tab}:}
  }%
  {\end{verse} \vspace{0.2cm}}

\relax

\def \Eb{{\mathbb E}}

\def \Gb{{\mathbb G}}

\def \Lb{{\mathbb L}}

\def \Pb{{\mathbb P}}
\def \Rb{{\mathbb R}}

\def \Bc{{\mathcal B}}

\def \Ic{{\mathcal I}}

\def \Uc{{\mathcal U}}
\def \Ec{{\mathcal E}}

\def \Sc{{\mathcal S}}

\def \Nc{{\mathcal N}}

\def \Tc{{\mathcal T}}

\def \Xc{{\mathcal X}}

\DeclareMathOperator*{\Sym}{\mathrm{Sym}}

\newcommand{\bqa}{\begin{align*}}
\newcommand{\eqa}{\end{align*}}
\newcommand{\bqan}{\begin{align}}
\newcommand{\eqan}{\end{align}}
\newcommand{\bqt}{\begin{quote}}
\newcommand{\eqt}{\end{quote}}
\newcommand{\bt}{\begin{tabbing}}
\newcommand{\et}{\end{tabbing}}
\newcommand{\bit}{\begin{itemize}}
\newcommand{\eit}{\end{itemize}}
\newcommand{\ben}{\begin{enumerate}}
\newcommand{\een}{\end{enumerate}}
\newcommand{\beq}{\begin{equation}}
\newcommand{\eeq}{\end{equation}}
\newcommand{\beqw}{\begin{equation*}}
\newcommand{\eeqw}{\end{equation*}}

\newcommand{\eps}{\epsilon}

\def\1{{\mathbf 1}}
\def\0{{\mathbf 0}}

\usepackage{authblk}

\usepackage{xr}
\makeatletter
\newcommand*{\addFileDependency}[1]{% argument=file name and extension
  \typeout{(#1)}
  \@addtofilelist{#1}
  \IfFileExists{#1}{}{\typeout{No file #1.}}
}
\makeatother

\makeatletter
\def\@seccntformat#1{\@ifundefined{#1@cntformat}%
   {\csname the#1\endcsname\quad}%      default
   {\csname #1@cntformat\endcsname}%    enable individual control
}
\makeatother

\begin{document}

\title{{\Large \bf Change-point detection in variance-covariance matrix}}

\medskip

\author[a]{Ying Lin}
\author[b]{Benjamin Poignard\footnote{Corresponding author}}

\affil[a]{\it \footnotesize Department of Data and Systems Engineering, The University of Hong Kong, People's Republic of China. E-mail address: ylin95@hku.hk}
\affil[b]{\it \footnotesize Faculty of Science and Technology, Keio University and Riken AIP, Japan. E-mail address: bpoignard@math.keio.ac.jp}

\date{\today}
\maketitle

\begin{abstract}
We consider the joint estimation of change-point locations and the sparsity pattern of the variance–covariance matrix, which is assumed to evolve in a piecewise constant manner.
By applying Group Fused LASSO and LASSO penalties to the squared Frobenius norm, we estimate both the covariance structure and the change points.
Adaptive weights are incorporated into the penalty terms to enhance change-point detection and covariance estimation accuracy.
We establish the conditions under which the estimated change points and the sparse estimators within each segment are consistent.
To solve the resulting optimization problem efficiently, we develop an alternating direction method of multipliers (ADMM) whose updates reduce to computationally tractable subproblems.
The performance of the proposed method is illustrated through synthetic and real-data experiments, including comparisons with several competing procedures.

\medskip

\noindent\textbf{JEL classification}: C13; C52; C61. 

\medskip

\noindent\textbf{Key words}: Alternating direction method of multipliers, Change points, Sparsity, Variance-covariance.
\medskip
\noindent

\end{abstract}

\section{Introduction}

Variance–covariance specification and estimation, particularly in high-dimensional settings, have attracted substantial attention due to the importance of accurately capturing second-order moments and their wide range of applications, including finance and social networks. Sparsity-based procedures have been proposed to simplify the covariance structure and enhance estimation accuracy; see, for example, \cite{bickel2008}, \cite{lam2009}, and \cite{ledoit2022}. However, these works typically assume that the variance–covariance matrix is constant, thereby excluding non-stationary time series settings.

The primary focus of this paper is the detection of change points in the variance–covariance matrix, along with its estimation within each regime. A large body of literature has addressed the detection of multiple change points in time series distributions, often through changes in the mean or variance, which are assumed to be piecewise constant but may exhibit ``jumps'' or ``change points'' over time. One line of research is based on binary segmentation (BS) and CUSUM techniques. The BS method, introduced by \cite{vostrikova1981}, scans for a change point over the entire sample; if one is detected, the sample is split into two sub-samples, and the same procedure is applied recursively until no further change points are identified. The original BS and its variants have been studied in the context of linear time series models. For instance, \cite{cho2012} applied BS to wavelet periodograms to detect change points, while \cite{fryzlewicz2014} introduced the Wild Binary Segmentation (WBS) method to handle closely spaced change points and small jump magnitudes. \cite{cho2015} proposed the Sparsified BS method, which aggregates CUSUM statistics exceeding a given threshold, thereby reducing the impact of noisy signals. \cite{wang2018} developed a sparse projection technique combined with a WBS procedure, in which CUSUM statistics computed from projected time series are used to detect change points in high-dimensional means. These techniques have also been employed to test for change points in covariance structures: see, e.g., \cite{cai2013}, \cite{li2012}. \cite{Avanesov2018} proposed a matrix-based CUSUM statistic using a de-sparsified estimator of the precision matrix. \cite{dette2022} introduced a dimension-reduction-based procedure combined with a CUSUM-type statistic to test for a single change point. To detect multiple change points in high-dimensional variance–covariance matrices, \cite{wang2021} proposed the Wild Binary Segmentation through Independent Projection (WBSIP), a variant of WBS, and showed that its localization error is minimax-rate optimal, although under the restriction that the dimension diverges at most at a slow polynomial rate. See \cite{liu2022} for a detailed review of hypothesis testing methods for change-point detection.

Another line of work on change-point detection relies on penalized likelihood–based techniques applied to model parameters. A widely used approach is the fused LASSO, which identifies change points by penalizing the sum of absolute successive differences in the parameters. For example, \cite{Harchaoui2010} studied multiple change points in one-dimensional piecewise constant signals, a framework extended by \cite{Bleakley2011} to grouped parameters via the $\ell_2$ norm on parameter differences. \cite{Chan2014} applied this methodology to detect change points in the parameters of autoregressive time series models. Change-point detection via parameter regularization has also been applied to precision matrices. While \cite{Kolar2012} focused on detecting multiple change points at the node level through the Group Fused LASSO, \cite{Roy2017} considered a single change point affecting the global network structure. \cite{Hallac2017} and \cite{Gibberd2017} studied the Group Fused Graphical LASSO (GFGL) penalization to detect change points in precision matrices. The latter approach combines the Group Fused LASSO regularization of \cite{Bleakley2011} with LASSO regularization applied to the Gaussian likelihood. As an alternative to Gaussian-based estimators, \cite{LPPT24GFDtL} proposed a D-trace loss estimator under GFGL filtering to detect change points and estimate sparse structures.

Our work builds on this line of regularization-based techniques to detect multiple change points and estimate covariance matrices. More precisely, we consider a combination of the Group Fused LASSO, applied to successive differences of covariance parameters to detect change points, and the LASSO, applied entrywise to the covariance parameters. Both regularization terms are incorporated into a squared Frobenius norm loss, which serves as a dissimilarity measure between the sample covariance and the covariance parameter. Unlike the Gaussian likelihood, this loss does not involve the inverse covariance matrix, which simplifies both the theoretical analysis and the design of efficient algorithms. To improve estimation accuracy for both change points and covariance matrices, we introduce adaptive weights in both regularization terms. Adaptive penalization, originally proposed by \cite{zou2006adaptive} for the LASSO, has been applied, for example, to mixtures of LASSO and Group LASSO for general M-estimators in \cite{poignard2020aism}, and to the Group Fused LASSO for quantile estimation in \cite{Ciuperca2020}. We study the asymptotic properties of the resulting estimator, referred to hereafter as the adaptive Group Fused least squares LASSO (GFlsL). 

To solve the corresponding optimization problem, we employ the alternating direction method of multipliers (ADMM).
ADMM has emerged as a fundamental algorithm for convex optimization tasks involving separable objectives and linear constraints.
The foundational form of ADMM was independently proposed by \cite{GM1975} and \cite{GM1976}, and its theoretical convergence was substantiated by \cite{FG1983}.
Later, \cite{EB92} revealed that ADMM can be interpreted as an application of the proximal point algorithm to a certain maximal monotone operator.
This pivotal connection facilitated the development of the original proximal ADMM, as introduced in \cite{E1994}.
Building on these insights, \cite{HLHY2002} extended the scope of proximal ADMM by incorporating a wider variety of proximal terms.
The methodological framework was further generalized by \cite{FPST2013}, who introduced semi-proximal terms to enhance flexibility and applicability.
In recent years, substantial research efforts have focused on accelerating and refining ADMM variants.
Notable advancements include the Schur complement-based semi-proximal ADMM \cite{LST2016}; inexact symmetric Gauss-Seidel-based ADMM methods \cite{CST2017}, \cite{LST2018}, \cite{LST2019}, \cite{YLST2021}; accelerated linearized ADMM algorithms \cite{OCLP2015}, \cite{X2017}, \cite{LL2019}; preconditioned ADMM frameworks \cite{XCL2018}, \cite{ST2022}, \cite{SYZZ2024}; and accelerated proximal ADMM approaches leveraging Halpern iteration \cite{ZYS2022}, \cite{YZLS2024}.

Our contributions can be summarized as follows.
First, we propose the adaptive GFlsL estimator for change-point detection in variance–covariance matrices.
Second, we establish conditions under which both the change points and the covariance matrices can be consistently estimated.
Third, we develop an alternating direction method of multipliers (ADMM) algorithm to solve the resulting optimization problem, with computationally tractable subproblems.
Finally, we assess the performance of the proposed method against several competing approaches through simulation studies and a real-data application.

The rest of the paper is organized as follows. Section \ref{framework} details the framework and estimation procedure. Section \ref{asymptotic_properties} contains the asymptotic properties. Section \ref{sec:opt} is devoted to the optimization and implementation aspects of the estimation procedure. Section \ref{sec:simulations} contains some simulation experiments, a sensitivity and computational complexity analysis. A real data experiment is provided in Section \ref{sec:real_data}. All proofs are deferred to the appendices.

\textbf{\textit{Notation:}} Throughout this paper, we use $ V_k $ and $ A_{kl} $ to denote the $ k $-th element of a vector $ V \in \Rb^d $ and the $ (k, l) $-th element of a matrix $ A \in \Rb^{m \times n} $, respectively.
We write $A^\top$ (resp. $V^\top$) to denote the transpose of the matrix $A$ (resp. the vector $V$).
For any square matrix $ A \in \Rb^{n \times n} $, we write $ \Sym(A) \coloneqq (A + A^{\top}) / 2 $ to denote the symmetrization of $ A $.
The $ p $-th order identity matrix is denoted by $ I_p $.
We denote by $\textbf{0}_{k \times l} \in \Rb^{k \times l}$ (resp. $\mathbf{1}_{k \times l} \in \Rb^{k \times l}$) the $k \times l$ zero matrix (resp. $ k \times l $ matrix of ones).
We write $\text{vec}(A)$ to denote the vectorization operator that stacks the columns of $A$ on top of one another into a vector.
For two matrices $A$ and $B$, $A \otimes B$ is the Kronecker product.
The set of symmetric matrices is denoted by $ \mathcal{S}^n $.
A symmetric matrix $ S \in \mathcal{S}^n $ is said to be positive semi-definite (resp. positive definite) and written as $ S \succeq 0 $ (resp. $ S \succ 0 $) if all its eigenvalues are non-negative (resp. positive).
The expression $ A \succeq B $ (resp. $ A \succ B $) for $ A $, $ B \in \mathcal{S}^n $ means $ A - B \succeq 0 $ (resp. $ A - B \succ 0 $).
We use $ \mathcal{S}_+^n $ and $ \mathcal{S}_{++}^n $ to denote the sets of positive semi-definite matrices and positive definite matrices, respectively.
For a positive semi-definite matrix $ S $, $ S^{\frac{1}{2}} $ is the unique positive semi-definite matrix such that $ S = S^{\frac{1}{2}} S^{\frac{1}{2}} $.
The $\ell_p$ norm for $V \in \Rb^{d}$ is denoted by $\|V\|_p = \big(\sum^{d}_{k=1} |V_k|^p \big)^{1/p}$, $p \geq 1$.
The Frobenius norm and the off-diagonal $ \ell_1 $ (semi)norm of a matrix $ A \in \Rb^{m \times n} $ are denoted by $ \| A \|_F = \sqrt{\sum_{k=1}^m \sum_{l=1}^n | A_{kl} |^2 } $ and $ \| A \|_{1, \mathrm{off}} = \sum_{k\neq l} | A_{kl} | $, respectively.
%The spectral norm, i.e., the maximum singular value of a matrix $ A $ is written as $ \| A \|_s $. 
For a symmetric matrix A, we use $\lambda_{\max}(A)$ (resp. $\lambda_{\min}(A)$) to denote its largest (resp. smallest) eigenvalue.
The maximum absolute value of all entries of a matrix $ A $ is denoted by $ \| A \|_{\max} $.
We use $ \mathcal{S}_{\mathrm{off}}^n $ to denote the set of $ n \times n $ symmetric matrices whose diagonal elements are $ 0 $.
By $ \{Y_{t, \mathrm{off}}\}_{t=1}^T $ we denote a sequence of symmetric matrices whose diagonal elements are $ 0 $, i.e., $ Y_{t, \mathrm{off}} \in \mathcal{S}_{\mathrm{off}}^n $.
For a sequence of symmetric matrices $ \{\Theta_t\}_{t=1}^T $, $ \Theta_{uv, t} $ refers to the $ (u, v) $-th element of $ \Theta_t $, and $ \Theta_{t, \mathrm{off}} $ is the copy of $ \Theta_t $ with diagonal elements set to $ 0 $; in particular, $ \| \Theta_t \|_{1, \mathrm{off}} = \| \Theta_{t, \mathrm{off}} \|_1 $.
Given $ \epsilon \geq 0 $, we use $ \mathrm{Proj}_{\cdot \succeq \epsilon I_p} $ to denote the projection onto $ \{S \,:\, S \succeq \epsilon I_p\} $.
The proximal operator of a function $ f $ at $ x $ is defined as $\prox_f(x) = \argmin_y \{ f(y) + \frac{1}{2} \| y - x \|^2 \}$; for more details about the proximal operator, we refer the interested readers to Sections 12.4, 14.1, 14.2 in \cite{BC2011}.

\section{Framework}\label{framework}

For a sequence of a $p$-dimensional random vector $(X_t)$ observed at $t=1,\ldots,T$, we consider the estimation of the corresponding variance-covariance matrix $\text{Var}(X_t)$.
The latter is assumed to evolve over time and the task is to recover the locations of the $m$ change points. 
More formally, we denote by $\{\Bc_j\}_{1 \leq j \leq m+1}$ a disjoint partitioning of the set $\{1,\ldots,T\}$ such that $\Bc_j \cap \Bc_{j'}=\emptyset $ whenever $ j \neq j'$, $\cup_j \Bc_j=\{1,\ldots,T\}$, and $\Bc_j = \{T_{j-1},T_{j-1}+1,\ldots,T_j-1\}$.
The partition of the locations of the change points (the change-point dates) is denoted by $\Tc_m =\{T_1<T_2<\ldots<T_m\}$ with the convention $T_0=1, T_{m+1}=T+1$.
Then, we assume $\Eb[X_t]=0$ and $\text{Var}(X_t) = \Sigma_j$ for $t\in \Bc_j$, such that the observations indexed by elements in $\Bc_j$ are $p$-dimensional realizations of a centered random variable with variance-covariance $\Sigma_j$ with entries $\Sigma_{uv,j}, 1 \leq u,v\leq p$.
In practice, we consider the sequence of variance-covariance matrices $\{\Theta_1,\ldots,\Theta_T\}$ such that the total number of distinct matrices in the set is $m+1$ and $\Theta_t = \Sigma_j, \; t \in \Bc_j, \; j = 1,\ldots,m+1$.
We are interested in estimating the unknown true number $m^*$ of unknown change points, the true partition $\Tc^*_{m^*}=\{T^*_1<T^*_2<\ldots<T^*_{m^*}\}$ and the true unknown variance-covariance matrices $\Sigma^*_j$ which are assumed sparse.
As a consequence, the true data generating process is assumed to be
\begin{equation*}
\Eb[X_t] = 0, \; \text{Var}(X_t) = \Sigma^*_j,\; \Theta^*_t = \Sigma^*_j \; \text{when} \; t =T^*_{j-1},T^*_{j-1}+1,\ldots,T^*_{j}-1,
\end{equation*}
for $1 \leq j \leq m^*+1$, where $T^*_0=1, T^*_{m^*+1}=T+1$ with blocks $\Bc^*_j = \{T^*_{j-1},\ldots,T^*_j-1\}$.
We define $s^*_j = |\Sc^*_j|$, with $\Sc^*_j=\{(u,v): u\neq v, \Sigma^*_{uv,j} \neq 0\}$ the true sparse support and $s^*_{\max}=\max_{1 \leq j \leq m^*+1}s^*_j < p(p-1)$. 
While $m^*$ and the change points are unknown, $m^*$ is typically much smaller than $T$ and, assuming the variance-covariance structure may exhibit some sparsity, we consider the sparse estimation of $\Theta_t$'s and the estimation of $\Tc_{m}$ via a mixture of LASSO and Group Fused LASSO applied to the least square loss for variance-covariance estimation, which we will hereafter refer as the Group Fused least squares LASSO (GFlsL).
Our approach builds upon the adaptive version of the Group Fused LASSO: in the first stage, we run the Group Fused LASSO to filter the change points and estimate the variance-covariance matrices in each regime, a framework where the estimator will be denoted by $\widetilde{\Theta}_t$; in the second stage, we propose to detect the change points and estimate the variance-covariance matrices under the sparse constraint via the adaptive GFlsL, where $\widetilde{\Theta}_t$ will be used in the adaptive weights, and the resulting estimator will be denoted by $\widehat{\Theta}_t$. More precisely, the procedure can be outlined as follows:
\begin{itemize}
    \item[(i)] In the first stage, $\Sigma^*_j, j=1,\ldots,m^*+1$, are estimated by minimizing the penalized loss:
\begin{equation}\label{stat_crit_1}
\{\widetilde{\Theta}_t\}^T_{t=1} = \underset{\Theta_t\succ 0, 1 \leq t \leq T}{\arg\;\min} \left\{\Lb(\{\Theta_t\}^T_{t=1},\mathcal{X}_T) + \lambda \overset{T}{\underset{t=2}{\sum}} \|\Theta_t-\Theta_{t-1}\|_F\right\},
\end{equation}
where $\mathcal{X}_T=(X_1,\ldots,X_T)$ is the sample of observations, $\lambda$ is the tuning parameter and the least squares loss is defined as
\begin{equation*}
\Lb(\{\Theta_t\}^T_{t=1},\mathcal{X}_T) = \frac{1}{2T}\overset{T}{\underset{t=1}{\sum}}\text{tr}\big(\Theta^\top_t\Theta_t - (X_tX^\top_t)^\top\Theta_t - \Theta^\top_t(X_tX^\top_t)\big).
\end{equation*}
For a solution $\widetilde{\Theta}_t,t=1,\ldots,T$ of (\ref{stat_crit_1}), there is a block partition $\{\widetilde{\mathcal{B}}_1,\ldots,\widetilde{\mathcal{B}}_{\widetilde{m}+1}\}$ of $\{1,\ldots,T\}$ such that $\widetilde{\Theta}_t = \widetilde{\Theta}_{s}$ for any $t,s \in \widetilde{\mathcal{B}}_j=[\widetilde{T}_{j-1},\widetilde{T}_j-1]$ and $\widetilde{\Theta}_{\widetilde{T}_j} \neq \widetilde{\Theta}_{\widetilde{T}_{j}-1}$, $j=1,\ldots,\widetilde{m}$, where $\widetilde{T}_0=1, \widetilde{T}_{\widetilde{m}+1}=T+1$. So $\widetilde{m}$ and $\widetilde{\Tc}_{\widetilde{m}}\coloneqq \{\widetilde{T}_1<\widetilde{T}_2<\ldots<\widetilde{T}_{\widetilde{m}}\}$ are the estimated number of change points and set of change points. We define $\widetilde{\Sigma}_{j}=\widetilde{\Theta}_{\widetilde{T}_{j-1}}$ as the estimator of $\Sigma_j$ for $j=1,\ldots,\widetilde{m}+1$. Since the penalization does not involve any adaptive weights, we refer to the procedure as non-adaptive change-point estimation. 
\item[(ii)]  In the second stage, we estimate the change points and the variance-covariance matrices under sparsity via the adaptive version of the Group Fused LASSO and LASSO penalization. We consider the adaptive GFlsL estimator defined by:
\begin{equation}\label{stat_crit_adaptive}
\{\widehat{\Theta}_t\}^T_{t=1} = \underset{\Theta_t\succ 0, 1 \leq t \leq T}{\arg\;\min} \left\{\Lb(\{\Theta_t\}^T_{t=1},\mathcal{X}_T) + \lambda_1\overset{T}{\underset{t=1}{\sum}}\sum_{u\neq v}\xi_{uv,1t}|\Theta_{uv,t}|+\lambda_2\overset{T}{\underset{t=2}{\sum}}\xi_{2t} \|\Theta_t-\Theta_{t-1}\|_F\right\},
\end{equation}
with $\lambda_1,\lambda_2$ the tuning parameters, $\xi_{uv,1t}, \xi_{2t}$ are the adaptive weights defined by
\begin{equation}
  \label{eq:adaptive-weight}
\xi_{uv,1t} = \max(|\widetilde{\Theta}_{uv,t}|,a_T)^{-\mu_1}, \; \xi_{2t} = \max(\|\widetilde{\Theta}_{t}-\widetilde{\Theta}_{t-1}\|_F,a_T)^{-\mu_2},
\end{equation}
where $a_T$ is some deterministic sequence tending to zero as $T \rightarrow \infty$, $\mu_1,\mu_2>0$. In the same spirit as the estimator defined in \eqref{stat_crit_1}, $\{\widehat{\mathcal{B}}_1,\ldots,\widehat{\mathcal{B}}_{\widehat{m}+1}\}$ of $\{1,\ldots,T\}$ such that $\widehat{\Theta}_t = \widehat{\Theta}_{s}$ for any $t,s \in \widehat{\mathcal{B}}_j=[\widehat{T}_{j-1},\widehat{T}_j-1]$ and $\widehat{\Theta}_{\widehat{T}_j} \neq \widehat{\Theta}_{\widehat{T}_{j}-1}$, $j=1,\ldots,\widehat{m}$, where $\widehat{T}_0=1, \widehat{T}_{\widehat{m}+1}=T+1$. So $\widehat{m}$ and $\widehat{\Tc}_{\widehat{m}}\coloneqq \{\widehat{T}_1<\widehat{T}_2<\ldots<\widehat{T}_{\widehat{m}}\}$ are the estimated number of change points and the estimated set of change points. We define $\widehat{\Sigma}_{j}=\widehat{\Theta}_{\widehat{T}_{j-1}}$ as the estimator of $\Sigma_j$ for $j=1,\ldots,\widehat{m}+1$. This procedure will be referred to as adaptive change-point estimation.
In Lemma \ref{optimality_cond_adaptive} of Appendix \ref{appendix:intermediary}, we provide the optimality conditions satisfied by the adaptive GFlsL: we show if $t=\widehat{T}_j$ with $j\in \{1,\ldots,\widehat{m}\}$, i.e., $t$ corresponds to one of the estimated change points, then $\widehat{\Theta}_t-\widehat{\Theta}_{t-1} \neq \mathbf{0}_{p \times p}$, and we obtain the corresponding optimality conditions.

\end{itemize}

\section{Asymptotic properties}\label{asymptotic_properties}

First, we define some notations and detail the assumptions. Let $\mathcal{I}^*_j=T^*_j-T^*_{j-1}$ 
and $\mathcal{I}_{\min}=\underset{1 \leq j \leq m^*+1}{\min}|\mathcal{I}^*_j|$, $\eta_{\min}=\underset{1 \leq j \leq m^*}{\min}\|\Sigma^*_{j+1}-\Sigma^*_j\|_F, \; \eta_{\max}=\underset{1 \leq j \leq m^*}{\max}\|\Sigma^*_{j+1}-\Sigma^*_j\|_F$. 
Assume the following conditions.
\begin{assumption}\label{assumption_dgp}
\begin{itemize}
    \item[(i)] $(X_t)$ is a centered strong mixing process, that is, $\exists 0 < \rho < 1$ such that for all $t \in \mathbb{Z}^+$, $\alpha(t) \leq c_{\alpha}\rho^{t}$, with $c_{\alpha}>0$ and $\alpha(\cdot)$ the mixing coefficient $\alpha(T) = \sup_{A \in \mathcal{F}^0_{-\infty}, B\in \mathcal{F}^{\infty}_T } |\Pb(A \cap B)-\Pb(A)\,\Pb(B)|$, where $\mathcal{F}^0_{-\infty},\mathcal{F}^{\infty}_T$ are the filtrations generated by $\{(X_t): - \infty \leq t \leq 0\}$ and $\{(X_t): T \leq t \leq \infty\}$.
    \item[(ii)] $\exists \gamma, b>0$ such that $\forall s>0$, $\forall 1 \leq k,l \leq p$, $\sup_{t\geq 1}\,\Pb(|X_{k,t}X_{l,t}|>s)\leq \exp(1-(s/b)^\gamma)$.
    \end{itemize}
\end{assumption}
\begin{assumption}\label{assumption_regularity}
$\exists \underline{\mu}, \overline{\mu}$:
    $0 < \underline{\mu} \leq \underset{1 \leq j \leq m^*+1}{\min}\lambda_{\min}\big(\Sigma^*_j\big)\leq \underset{1 \leq j \leq m^*+1}{\max}\lambda_{\max}\big(\Sigma^*_j\big)\leq \overline{\mu}<\infty$.
    %\item[(ii)] $T \delta_T (p^2 \log(T))^{-1} \rightarrow 0$ as $T \rightarrow \infty$.
\end{assumption}
\begin{assumption}\label{assumption_rates}
Let $(\delta_T)$ be a non-increasing positive sequence converging to zero. The following conditions hold:
\begin{itemize} 
    \item[(i)] $T \delta_T \geq c_v \log(pT)^{(2+\gamma)/\gamma}$ for some $c_v >0$. 
    \item[(ii)] $m^*=O(\log(T))$ and $\mathcal{I}_{\min}/(T\delta_T)\rightarrow \infty$ as $T\rightarrow \infty$.
    \item[(iii)] $\eta_{\max} =O(1)$, $p\sqrt{\log(pT)/(T\delta_T)}\rightarrow 0$ and $(\sqrt{T\delta_T}\eta_{\min})^{-1}p\sqrt{\log(pT)}\rightarrow 0$.
    \item[(iv)] $\lambda/(\eta_{\min}\delta_T) \rightarrow 0$ as $T \rightarrow \infty$.
    \item[(v)] $T\delta_T=O(\Ic^{1/2}_{\min})$ and $\frac{Tm^*}{\Ic_{\min}\eta^2_{\min}}\Big(p\sqrt{\frac{\log(pT)}{T}}p\sqrt{\frac{\log(pT)}{\Ic_{\min}}}+p^2\frac{\log(pT)}{\Ic_{\min}}+p\sqrt{\frac{\log(pT)}{T}}\Big)=o(1)$.
\end{itemize}
\end{assumption}

Assumption \ref{assumption_dgp}(i) relates to the properties of $(X_t)$. %Condition (ii) of the latter assumption
Assumption \ref{assumption_dgp}(ii) is a tail condition and will allow us to apply exponential inequalities for dependent processes. Assumption \ref{assumption_regularity} ensures the identification of the model: it is similar to Assumption A.1 of \cite{Kolar2012} or Assumption A.2 of \cite{Qian2016}. 
Assumption \ref{assumption_rates} provides conditions on $\delta_T$, $m^*$, $\mathcal{I}_{\min}$, $\eta_{\min}$ and the tuning parameter $\lambda$. Condition (i) concerns the convergence rate of $\delta_T$ to $0$. In condition (ii), the sample size in each regime may diverge with rate $T\delta_T$, but at a slower rate than $T$, and the number of change points $m^*$ may diverge slowly: this is similar to Assumption A.3-(i) of \cite{Qian2016} or Assumption H3 of \cite{Chan2014}. It also sets the slowest rate at which $\delta_T$ may shrink to zero: $\delta_T=o(\mathcal{I}_{\min}/T)$. Conditions (iii) and (iv) specify the fastest rate at which $\delta_T$ may shrink to zero, which is $\delta_T \gg \max(\lambda/\eta_{\min},p^2\log(pT)/(T\eta_{\min}^2))$. 
Conditions (ii)-(iii) imply $p^2\log(pT) =o(\mathcal{I}_{\min}\eta^2_{\min})$ and conditions (ii) and (iv) imply that $\lambda T = o(\eta_{\min}\mathcal{I}_{\min})$. Condition (iv) is used for the analysis of the non-adaptive estimator. Part (v) will ensure the $\sqrt{I^*_j/p^2\log(pT)}$-consistency and that the number of change points estimated by both criteria (\ref{stat_crit_1}) and (\ref{stat_crit_adaptive}) is larger than the true number of change points with probability tending to one. 

\subsection{Non-adaptive case}\label{subsec:asymptotic_nonada}

The consistency of change-point locations $\widetilde{T}_j$ and of the variance covariance $\widetilde{\Sigma}_j$, given $\widetilde{m}=m^*$, is provided in the next result. 

\begin{theorem}\label{theorem_consistency}
Suppose Assumptions \ref{assumption_dgp}-\ref{assumption_regularity} and Assumption \ref{assumption_rates}(i)-(iv) are satisfied. Under $\widetilde{m}=m^*$, then:
\begin{itemize}
    \item[(i)] $\Pb\Big(\underset{1 \leq j \leq m^*}{\max}|\widetilde{T}_j-T^*_j|\leq T \delta_T\Big) \rightarrow 1$ as $T \rightarrow \infty$.
    \item[(ii)] $\|\widetilde{\Sigma}_j-\Sigma^*_j\|_F = O_p\left(\frac{\lambda\,T}{I^*_j}+\frac{T\delta_T}{I^*_j}+p\sqrt{\frac{\log(pT)}{I^*_j}}\right)$, for $j = 1,\ldots,m^*+1$.
\end{itemize}
\end{theorem}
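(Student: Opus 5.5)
The proof follows the standard argument for Group Fused LASSO change-point estimators (Harchaoui and Lévy-Leduc, Bleakley and Vert, Chan et al., Gibberd and Roy), applied to the squared Frobenius loss. The loss is $\frac{1}{2T}\sum_t\|\Theta_t - X_tX_t^\top\|_F^2$ up to a constant. Writing $S_t = X_tX_t^\top$, the subgradient (KKT) conditions of (\ref{stat_crit_1}), summed from $t=l$ to $T$, give for every $l\ge 2$
\[
\frac{1}{T}\sum_{t=l}^{T}(\widetilde\Theta_t - S_t) + \lambda \widetilde Z_l = \mathbf{0}_{p\times p},
\]
with $\|\widetilde Z_l\|_F\le 1$, and $\widetilde Z_l = (\widetilde\Theta_l-\widetilde\Theta_{l-1})/\|\widetilde\Theta_l-\widetilde\Theta_{l-1}\|_F$ at the estimated change points. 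For $l=1$ the sum vanishes exactly. I will ignore the constraint $\Theta_t\succ0$ when deriving these conditions: it is inactive for the solution once the segment estimates are close to $\Sigma_j^*$, using Assumption \ref{assumption_regularity} and part (ii).

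\textbf{Concentration.} The probabilistic input is a uniform bound over all intervals $[s,e)\subset[1,T]$:
\[
\Big\|\sum_{t=s}^{e-1}(S_t-\Eb S_t)\Big\|_F \le C\, p\sqrt{(e-s)\log(pT)} \quad\text{whenever } e-s\ge T\delta_T,
\]
and the same bound with $\sqrt{T\delta_T\log(pT)}$ on the right for shorter intervals. To obtain it, apply a Bernstein-type inequality for strongly mixing processes with Weibull tails (Merlevède, Peligrad and Rio) entrywise, using Assumption \ref{assumption_dgp}, and take a union bound over the $p^2$ entries and $O(T^2)$ intervals. Assumption \ref{assumption_rates}(i) is exactly what makes the Weibull remainder term negligible. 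The Frobenius bound then follows from the max-norm bound at the cost of a factor $p$.

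\textbf{Part (i).} Argue by contradiction on the event $A_T=\{\max_j|\widetilde T_j-T_j^*|>T\delta_T\}$. Since $\widetilde m=m^*$, on $A_T$ there is a $j$ such that, say, $\widetilde T_j - T_j^* > T\delta_T$, and the adjacent estimated change points leave $[T_j^*,\widetilde T_j)$ uncovered. Following Harchaoui and Lévy-Leduc, split $A_T$ into cases according to the positions of $\widetilde T_{j-1}$ and $\widetilde T_{j+1}$ relative to the midpoints $(T^*_{j-1}+T^*_j)/2$ and $(T^*_j+T^*_{j+1})/2$.

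In the main case, take the difference of the KKT identities at $l=\widetilde T_j$ and $l=T_j^*$. This gives
\[
\frac1T\sum_{t=T_j^*}^{\widetilde T_j-1}(\widetilde\Theta_t - S_t) = O(\lambda) .
\]
On this stretch $\widetilde\Theta_t=\widetilde\Sigma_j$ is constant while $\Eb S_t=\Sigma^*_{j+1}$. Apply the same identity on $[T_{j-1}^*\vee \widetilde T_{j-1},\,T_j^*)$, whose length is of order $\Ic_{\min}$. This shows $\widetilde\Sigma_j$ is close to $\Sigma_j^*$, with error $O(\lambda T/\Ic_{\min} + p\sqrt{\log(pT)/\Ic_{\min}})$. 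Substituting back yields
\[
(\widetilde T_j-T_j^*)\,\eta_{\min} \lesssim \lambda T + p\sqrt{(\widetilde T_j - T_j^*)\log(pT)} + (\widetilde T_j-T_j^*)\cdot o(\eta_{\min}).
\]
Divide by $\widetilde T_j-T_j^* > T\delta_T$. Assumption \ref{assumption_rates}(iv) kills $\lambda T/(T\delta_T\eta_{\min})$, and Assumption \ref{assumption_rates}(iii) kills $p\sqrt{\log(pT)/(T\delta_T)}/\eta_{\min}$, which is a contradiction with probability tending to one.

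The remaining cases are those where two estimated change points fall within a single true segment, or where estimated points sit far from all true ones. Treat them by the same telescoping argument on intervals between midpoints; $m^*=O(\log T)$ from Assumption \ref{assumption_rates}(ii) lets a union over $j$ go through. This case analysis is the main obstacle: each case needs an interval of length at least $\Ic_{\min}/2$ on which $\widetilde\Theta$ is constant, in order to pin down the level $\widetilde\Sigma_j$.

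\textbf{Part (ii).} Work on the event of part (i). Take the difference of the KKT identities at $\widetilde T_{j-1}$ and $\widetilde T_j$:
\[
\widetilde\Sigma_j = \frac{1}{|\widetilde\Bc_j|}\sum_{t\in\widetilde\Bc_j}S_t + O\!\Big(\frac{\lambda T}{|\widetilde\Bc_j|}\Big)
\]
in Frobenius norm. Now decompose
\[
\sum_{t\in\widetilde\Bc_j}S_t=\sum_{t\in\widetilde\Bc_j}(S_t-\Eb S_t)+\sum_{t\in\widetilde\Bc_j}\Eb S_t .
\]
The first term is bounded by concentration, giving $p\sqrt{|\widetilde\Bc_j|\log(pT)}$. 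In the second term, at most $2T\delta_T$ indices have $\Eb S_t\ne\Sigma_j^*$, each contributing at most a multiple of $\eta_{\max}=O(1)$; this gives the $T\delta_T/I_j^*$ term. Finally, $|\widetilde\Bc_j|\ge I_j^*-2T\delta_T \asymp I_j^*$ by Assumption \ref{assumption_rates}(ii), and dividing through yields the stated rate.
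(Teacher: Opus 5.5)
Your proposal is essentially the paper's proof: KKT differences from Lemma \ref{optimality_cond}, uniform concentration as in Lemma \ref{bound_gradient}, a Harchaoui--L\'evy-Leduc midpoint case split for (i), and the KKT identity between $\widetilde T_{j-1}$ and $\widetilde T_j$ for (ii). Your block-mean-plus-bias bookkeeping in (ii) is a bit cleaner than the paper's split at $T^*_l$, and the constraint $\Theta_t\succ0$ needs no circular appeal to part (ii), since on an open feasible set any minimizer of a convex objective satisfies the unconstrained KKT conditions.

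The case analysis you flag as the main obstacle can even be skipped. On the concentration event, a true $T^*_k$ with no estimated change point within $T\delta_T$ would make $\widetilde\Theta$ constant on $[T^*_k-T\delta_T,T^*_k+T\delta_T)$, and your KKT difference on each half then gives $\eta_{\min}\le 4\lambda/\delta_T+2Cp\sqrt{\log(pT)/(T\delta_T)}=o(\eta_{\min})$. Hence each window around a $T^*_k$ (these are disjoint because $\Ic_{\min}\gg T\delta_T$) contains one of the $\widetilde m=m^*$ estimated points, which forces $|\widetilde T_k-T^*_k|\le T\delta_T$ for every $k$.
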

Result (i) implies $\max_{1 \leq j \leq m^*}T^{-1}|\widetilde{T}_j-T^*_j|=o_p(\delta_T)$. Since $\delta_T=o(1)$, this means $T^{-1}|\widetilde{T}_j-T^*_j|=o_p(1)$. Here, $\delta_T$ is a key quantity to control for the rate at which $\widetilde{T}_j/T$ converges to $T^*_j/T$. Note that $\delta_T \gg \max(\frac{\lambda}{\eta_{\min}},p^2\log(pT)/(T\eta_{\min}^2))$, implies that the fastest convergence rate for the change-point ratio estimator depends on the regularization parameter $\lambda/\eta_{\min}$ and $p^2\log(pT)/(T\eta_{\min}^2)$. Result (ii) relates to the consistency of the variance-covariance matrix in each regime. The proof is provided in Appendix \ref{appendix:proofs}.

% \mds

The true number of change points $m^*$ is unknown in practice. Following the common approach in the change-point literature, we assume that $m^*$ is bounded by a known conservative upper bound $m_{\max}$ with $m_{\max} \leq C\log(T)$, $C>0$ large enough. Next, we define $h(A,B)\coloneqq \sup_{b \in B}\inf_{a\in A}|a-b|$ for any two sets $A$ and $B$. The next result establishes that all true change points in $\Tc^*_{m^*}$ can be consistently estimated by some points in $\widetilde{\Tc}_{\widetilde{m}}$.
\begin{theorem}\label{theorem_date_recov}
Suppose Assumptions \ref{assumption_dgp}-\ref{assumption_regularity} and Assumption \ref{assumption_rates}(i)-(iv) are satisfied. If $m^* \leq \widetilde{m} \leq m_{\max}$, then $\Pb\left(h(\widetilde{\Tc}_{\widetilde{m}},\Tc^*_{m^*})\leq T\delta_T\right) \rightarrow 1 \; \text{as} \; T \rightarrow \infty$.
\end{theorem}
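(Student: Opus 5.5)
The plan follows the standard argument for over-segmented fused-LASSO estimators (Harchaoui and L\'evy-Leduc 2010, Chan et al. 2014, Gibberd and Nelson 2017), adapted to the least squares loss. We argue by contradiction. Suppose the event $\{h(\widetilde{\Tc}_{\widetilde{m}},\Tc^*_{m^*}) > T\delta_T\}$ occurs. Then there is some true change point $T^*_j$ with no estimated change point in $[T^*_j - T\delta_T, T^*_j + T\delta_T]$. It therefore suffices to show that, for each fixed $j$,
\[
\Pb\big(\exists\, l: \widetilde{T}_l < T^*_j - T\delta_T,\ \widetilde{T}_{l+1} > T^*_j + T\delta_T\big) \rightarrow 0,
\]
uniformly enough in $j$ to absorb a union bound over $j \leq m^* = O(\log T)$ and over the (at most $m_{\max}+1 \leq C\log T+1$) candidate estimated blocks. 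On this event the estimate is constant on the whole window $[T^*_j - T\delta_T, T^*_j + T\delta_T)$; call the common value $\widetilde{\Theta}$.

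\textbf{Step 1 (optimality conditions).} Use the first-order (KKT) conditions of (\ref{stat_crit_1}), the non-adaptive analogue of Lemma \ref{optimality_cond_adaptive}. Summing the subgradient equations from $t=s$ to $T$ gives
\[
\frac{1}{T}\sum_{t\geq s}(\widetilde{\Theta}_t - X_tX_t^\top) + \lambda\, z_s = 0 \quad \text{(up to symmetrization)},
\]
where $\|z_s\|_F \leq 1$. The positivity constraint is inactive, or is handled exactly as in the proof of Theorem \ref{theorem_consistency}. Differencing this identity at $s=T^*_j - T\delta_T$ and $s=T^*_j$, and likewise at $s=T^*_j$ and $s=T^*_j+T\delta_T$, yields two window identities:
\[
\Big\|\sum_{t=T^*_j-T\delta_T}^{T^*_j-1}(\widetilde{\Theta}-\Sigma^*_j) - \sum_{t}(X_tX_t^\top-\Sigma^*_j)\Big\|_F \leq 2\lambda T,
\]
and the analogous inequality with $\Sigma^*_{j+1}$ on $[T^*_j, T^*_j+T\delta_T)$. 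The windows lie inside the true blocks because $\Ic_{\min} \gg T\delta_T$ by Assumption \ref{assumption_rates}(ii).

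\textbf{Step 2 (concentration).} Apply the exponential inequality for strong mixing sequences with Weibull tails (Assumption \ref{assumption_dgp}) entrywise, with a union bound over the $p^2$ entries and all $O(T^2)$ windows. This gives, with probability tending to one,
\[
\max_{\text{windows of length } \geq T\delta_T} \frac{1}{|W|}\Big\|\sum_{t\in W}(X_tX_t^\top - \Sigma^*_t)\Big\|_F \lesssim p\sqrt{\frac{\log(pT)}{T\delta_T}}.
\]
Assumption \ref{assumption_rates}(i) is exactly what is needed for the Bernstein-type bound to be in its sub-Gaussian regime.

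\textbf{Step 3 (contradiction).} Divide each identity by $T\delta_T$. This gives
\[
\|\widetilde{\Theta}-\Sigma^*_j\|_F \leq \frac{2\lambda}{\delta_T} + O_p\!\Big(p\sqrt{\tfrac{\log(pT)}{T\delta_T}}\Big),
\]
and the same bound for $\|\widetilde{\Theta}-\Sigma^*_{j+1}\|_F$. The triangle inequality then gives
\[
\eta_{\min} \leq \|\Sigma^*_{j+1}-\Sigma^*_j\|_F \leq \frac{4\lambda}{\delta_T} + O_p\!\Big(p\sqrt{\tfrac{\log(pT)}{T\delta_T}}\Big).
\]
After dividing by $\eta_{\min}$, the right-hand side is $o_p(1)$ by Assumption \ref{assumption_rates}(iii)--(iv), which is a contradiction.

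\textbf{Main obstacle.} The main difficulty is making Step 1 hold simultaneously for every possible configuration of the estimated change points, since $\widetilde{m}$ is random. This is why the concentration bound in Step 2 must be uniform over all windows rather than over fixed ones. A secondary difficulty is the boundary case where the window would leave $[1,T]$, or where a neighbouring true change point lies close to $T^*_j$. Both are excluded because $\Ic_{\min}/(T\delta_T)\to\infty$, so the two half-windows always sit inside $\Bc^*_j$ and $\Bc^*_{j+1}$. The positive-definiteness constraint adds a normal-cone term to the KKT conditions. I expect to handle it as in Theorem \ref{theorem_consistency}, either by showing it vanishes with high probability because $\widetilde{\Theta}$ is close to $\Sigma^*_j \succeq \underline{\mu} I_p$, or by noting that it only adds a monotone term that does not affect the differenced bound.
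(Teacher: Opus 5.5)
Your argument is correct, and it takes a different and noticeably shorter route than the paper.

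\textbf{The paper's route.} The paper bounds $\Pb(\{h(\widetilde{\Tc}_{\widetilde{m}},\Tc^*_{m^*})>T\delta_T\}\cap\{m^*<\widetilde m\le m_{\max}\})$ by a double sum over $m=m^*+1,\dots,m_{\max}$ and $k=1,\dots,m^*$ of three events:
\begin{itemize}
\item all estimated change points lie far to the left of $T^*_k$;
\item all lie far to the right;
\item $T^*_k$ is sandwiched between two consecutive distant estimated points. This case is further split into four sub-cases, according to whether $\widetilde T_l$ and $\widetilde T_{l+1}$ fall inside $(T^*_{k-1},T^*_{k+1})$.
\end{itemize}
In each case the paper differences the conditions of Lemma \ref{optimality_cond} between an estimated change point and a true one, or between two true ones. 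The resulting windows, such as $[\widetilde T_l,T^*_k)$, therefore have random endpoints, which is why the uniform bound of Lemma \ref{bound_gradient} is genuinely needed there. Several sub-cases also need an auxiliary bound on $\|\widetilde\Sigma_{m+1}-\Sigma^*_{k+1}\|_F$ of order $\lambda T/\Ic_{\min}$ plus noise over a whole true block.

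\textbf{Your route.} You use the fact that $\|\widetilde E_t\|_F\le 1$ holds at \emph{every} $t$, not only at estimated change points. You then difference at the three deterministic times $T^*_j-T\delta_T$, $T^*_j$ and $T^*_j+T\delta_T$. Because $\Ic_{\min}/(T\delta_T)\to\infty$, the two half-windows lie in $\Bc^*_j$ and $\Bc^*_{j+1}$ whatever the estimated configuration. So there is no case analysis and no sum over $m$, and only $2m^*$ fixed windows enter the concentration step.

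\textbf{What each route buys.} Because your windows are fixed, the ``main obstacle'' you anticipate (uniformity over random configurations) does not arise in your own proof; Lemma \ref{bound_gradient} is more than enough. Your route also never uses $m^*\le\widetilde m\le m_{\max}$. The windows around distinct $T^*_j$ are disjoint for large $T$, and each must contain an estimated change point. Hence your argument simultaneously yields $\Pb(\widetilde m\ge m^*)\to1$ under the hypotheses of the present theorem, without Assumption \ref{assumption_rates}(v). That is the conclusion of Theorem \ref{theorem_understimation_neg}, which the paper obtains by a separate objective-function comparison. The paper's route mainly buys continuity with the event decomposition already used for Theorem \ref{theorem_consistency}, whose bounds it reuses.

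\textbf{Two minor remarks.}
\begin{itemize}
\item Your reduction to ``$\exists\, l$ with $\widetilde T_l<T^*_j-T\delta_T$ and $\widetilde T_{l+1}>T^*_j+T\delta_T$'' covers the one-sided configurations only with the conventions $\widetilde T_0=1$ and $\widetilde T_{\widetilde m+1}=T+1$. It is cleaner to work directly with the event that no estimated change point lies in the window.
\item For the positivity constraint, your second alternative is not valid in general: normal-cone terms $N_r$ would leave $\sum_{r=s_1}^{s_2-1}N_r$ in the differenced identity. Neither alternative is needed. The justification implicit in Lemma \ref{optimality_cond} is that $\Theta_t\succ 0$ is an open constraint, so no normal-cone term appears at a minimizer.
\end{itemize}
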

The proof of Theorem \ref{theorem_date_recov}, provided in Appendix \ref{appendix:proofs}, is done by contradiction and follows similar arguments as in the proof of Theorem \ref{theorem_consistency}. Theorem \ref{theorem_date_recov} ensures that even if the number of blocks is overestimated, there will be an estimated change point close to each unknown true change point. The next result establishes that the estimated number of change points is not underestimated, that is $\widetilde{m}\geq m^*$ when $T \rightarrow \infty$.

\begin{theorem}\label{theorem_understimation_neg}
Suppose Assumptions \ref{assumption_dgp}-\ref{assumption_rates} are satisfied. Then $\Pb\left(\widetilde{m}<m^*\right) \rightarrow 0$ as $T \rightarrow \infty$.
\end{theorem}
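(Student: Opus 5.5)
We argue by contradiction, in the spirit of \cite{Chan2014}, \cite{Qian2016} and \cite{Gibberd2017}. On the event $\{\widetilde m<m^*\}$, the pigeonhole principle produces a true change point far from every estimated change point. Precisely, since $\mathcal{I}_{\min}/(T\delta_T)\to\infty$ by Assumption \ref{assumption_rates}(ii), there is an index $j$ such that the window $[T^*_j-\mathcal{I}_{\min}/2,\,T^*_j+\mathcal{I}_{\min}/2]$ contains no point of $\widetilde{\Tc}_{\widetilde m}$. (If every such disjoint window contained an estimated change point, we would have $\widetilde m\ge m^*$.) Hence $\widetilde\Theta_t\equiv\widetilde\Sigma$ is constant on $[T^*_j-\mathcal I_{\min}/2,\,T^*_j+\mathcal I_{\min}/2]$. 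We will show that the probability of this event tends to zero, after a union bound over $j\le m^*=O(\log T)$ and over the admissible configurations.

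The key tool is the first-order optimality condition of \eqref{stat_crit_1}, the non-adaptive analogue of Lemma \ref{optimality_cond_adaptive}. Summed over a stretch $[s,e]$ on which no estimated change occurs, it gives
\begin{equation*}
\Big\|\sum_{t=s}^{e}\big(\widetilde\Theta_t-X_tX_t^\top\big)\Big\|_F\le 2\lambda T,
\end{equation*}
because the subgradients of the fused penalty at the two endpoints have Frobenius norm at most $\lambda$, after the $1/T$ normalisation of the loss. (The constraint $\Theta_t\succ0$ is inactive with probability tending to one, by Assumption \ref{assumption_regularity} and consistency, so it adds no multiplier.) Apply this inequality on the two stretches $[T^*_j-\mathcal I_{\min}/2,\,T^*_j-1]$ and $[T^*_j,\,T^*_j+\mathcal I_{\min}/2]$. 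Write $X_tX_t^\top=\Theta^*_t+E_t$. On each stretch $\Theta^*_t$ is constant, equal to $\Sigma^*_j$ and $\Sigma^*_{j+1}$ respectively, so we obtain
\begin{equation*}
\tfrac{\mathcal I_{\min}}{2}\|\widetilde\Sigma-\Sigma^*_j\|_F\le 2\lambda T+\Big\|\sum E_t\Big\|_F,\qquad
\tfrac{\mathcal I_{\min}}{2}\|\widetilde\Sigma-\Sigma^*_{j+1}\|_F\le 2\lambda T+\Big\|\sum E_t\Big\|_F.
\end{equation*}
The triangle inequality then yields
\begin{equation*}
\eta_{\min}\le\|\Sigma^*_{j+1}-\Sigma^*_j\|_F\le \frac{8\lambda T}{\mathcal I_{\min}}+\frac{4}{\mathcal I_{\min}}\Big(\Big\|\sum_{\text{left}}E_t\Big\|_F+\Big\|\sum_{\text{right}}E_t\Big\|_F\Big).
\end{equation*}

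The noise terms are controlled as in the proof of Theorem \ref{theorem_consistency}. Under Assumption \ref{assumption_dgp}, the Bernstein-type inequality for geometrically $\alpha$-mixing processes with Weibull-type tails (Merlevède--Peligrad--Rio), combined with Assumption \ref{assumption_rates}(i) and a union bound over the $p^2$ entries and the $O(T^2)$ possible stretches, gives uniformly
\begin{equation*}
\Big\|\sum_{t=s}^{e}E_t\Big\|_F\le C\,p\sqrt{(e-s+1)\log(pT)}\quad\text{with probability tending to one.}
\end{equation*}
Dividing by $\mathcal I_{\min}$, the noise contribution is $O_p\big(p\sqrt{\log(pT)/\mathcal I_{\min}}\big)=o(\eta_{\min})$ by Assumption \ref{assumption_rates}(ii)--(iii). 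The penalty contribution satisfies $\lambda T/\mathcal I_{\min}=o(\eta_{\min})$ by Assumption \ref{assumption_rates}(ii) and (iv). This contradicts the displayed lower bound $\eta_{\min}$, so $\Pb(\widetilde m<m^*)\to0$.

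The main obstacle is to make the argument uniform over the random partition. The pair of windows depends on $\widetilde{\Tc}_{\widetilde m}$, so the concentration bound must hold simultaneously over all intervals of length of order $\mathcal I_{\min}$, and the boundary subgradient terms must be handled correctly when an endpoint of a stretch is not itself an estimated change point. To deal with the latter, I would take the stretches to be full estimated blocks intersected with the window, or use the partial-sum form of the KKT conditions, which bounds $\|\sum_{t=s}^{T}(\widetilde\Theta_t-X_tX_t^\top)\|_F\le\lambda T$ for every $s$, and take differences. If sharper control is needed, the additional rate in Assumption \ref{assumption_rates}(v), which is the only place where the full Assumption \ref{assumption_rates} enters beyond (i)--(iv), handles the contribution of the regimes adjacent to the window. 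This contribution arises when $\widetilde\Sigma$ must be compared against a sum over a larger estimated block that straddles neighbouring true change points: the $p\sqrt{\log(pT)/T}$ terms in (v) bound the global partial sums, and the factor $m^*/\mathcal I_{\min}$ accounts for summing over regimes.
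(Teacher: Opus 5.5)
Your argument is correct, and it takes a genuinely different route from the paper's.

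\textbf{The paper's route.} The paper adapts Theorem 3.3 of \cite{Qian2016} and compares criterion values. It shows that every partition with $m<m^*$ breaks has least-squares loss at least $c\,\Ic_{\min}\eta^2_{\min}/T$ above the oracle loss. It also shows that the fit with $m^*$ estimated breaks is within $o_p(\Ic_{\min}\eta^2_{\min}/T)$ of the oracle. That second step is where Theorem \ref{theorem_consistency}(ii) and Assumption \ref{assumption_rates}(v) are used.

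\textbf{Your route.} You argue locally through the optimality conditions. On $\{\widetilde m<m^*\}$, some deterministic window around a true break carries a constant estimate. The partial-sum conditions of Lemma \ref{optimality_cond} hold at every index with $\|\widetilde E_t\|_F\le 1$. Combined with Lemma \ref{bound_gradient}, they give $\eta_{\min}\le 8\lambda T/\Ic_{\min}+O_p\big(p\sqrt{\log(pT)/\Ic_{\min}}\big)$. This is ruled out with probability tending to one under Assumption \ref{assumption_rates}(i)--(iv).

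\textbf{What each buys.} Your route is shorter. It does not rely on a consistency result that is stated conditionally on $\widetilde m=m^*$ and then applied to a hypothetical $m^*$-break estimator. It also shows that Assumption \ref{assumption_rates}(v) is not needed for this theorem. The paper's route is the template the authors reuse verbatim for Theorem \ref{theorem_understimation_neg_adaptive}. Because it only compares objective values, it does not need dual variables bounded at arbitrary indices, so it carries over to selection criteria with no such KKT system.

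\textbf{Simplifications.} Your closing paragraph can be dropped.
\begin{itemize}
\item The ``main obstacle'' is not one. The windows are fixed by $\Tc^*_{m^*}$ and only the index $j$ is random, so the uniformity in Lemma \ref{bound_gradient} (or a maximum over $j\le m^*$) suffices.
\item The endpoint issue is already settled by the partial-sum form you mention. This is exactly how the paper uses Lemma \ref{optimality_cond} at $t=T^*_j$, so no appeal to Assumption \ref{assumption_rates}(v) arises.
\end{itemize}

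\textbf{Two small repairs.}
\begin{itemize}
\item Take the windows as $[T^*_j-\Ic_{\min}/2,\,T^*_j+\Ic_{\min}/2-1]$. Then the break-position sets stay disjoint even when two true breaks are exactly $\Ic_{\min}$ apart, and the right stretch stays inside $\Bc^*_{j+1}$. The pigeonhole step uses only the spacing; Assumption \ref{assumption_rates}(ii) is needed only to make the half-windows longer than $T\delta_T$.
\item You do not need consistency to rule out a positivity multiplier. The set $\{\Theta_t\succ 0\}$ is open, so any minimizer of the convex criterion satisfies the unconstrained conditions recorded in Lemma \ref{optimality_cond}.
\end{itemize}
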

The proof of Theorem \ref{theorem_understimation_neg} is provided in Appendix \ref{appendix:proofs}.
The adaptive LASSO, originally proposed by \cite{zou2006adaptive}, aims to ``adapt'' to the values of the least squares variance-covariance estimator to reduce the bias inherent to LASSO and improve the selection accuracy. Adaptive weights were also proposed in \cite{poignard2020aism} to improve the performances of the Sparse Group LASSO estimator. This motivates the adaptive estimation proposed in (\ref{stat_crit_adaptive}).

\subsection{Adaptive case}\label{subsec:adaptive_change_point_detection}

Under Assumptions \ref{assumption_dgp}-\ref{assumption_rates}, if $m^* \leq \widetilde{m} \leq m_{\max}$ holds, then by Theorem \ref{theorem_consistency}, Theorem \ref{theorem_date_recov} and Theorem \ref{theorem_understimation_neg}, we have: $\forall k =1,\ldots,m^*+1,\; \underset{1 \leq j \leq \widetilde{m}+1,|\widetilde{T}_j-T^*_k|\leq T \delta_T}{\sup} \|\widetilde{\Sigma}_j-\Sigma^*_k\|_F = O_p\left(p\sqrt{\frac{\log(pT)}{\Ic_{\min}}}\right)$.
The adaptive GFlsL defined in (\ref{stat_crit_adaptive}) relies on $\{\widetilde{\Theta}_t\}^T_{t=1}$ and its theoretical properties require that the non-adaptive procedure does not underestimate the number of change points.
Hereafter, we make the following assumption. Define $\alpha_{p,T}=p\sqrt{\log(pT)/\Ic_{\min}}$.
Hereafter, we will work under the following condition.
\begin{assumption}\label{assumption_rates_adaptive}
$\lambda_2\max(\alpha_{p,T},a_T)^{-\mu_2}/(\eta_{\min}\delta_T) \rightarrow 0$ and $\lambda_1Tp\max(\alpha_{p,T},a_T)^{-\mu_1}/\eta_{\min}\rightarrow 0$ holds.
% \begin{itemize}
%     \item[(i)] $\lambda_2\max(\alpha_{p,T},a_T)^{-\mu_2}/(\eta_{\min}\delta_T) \rightarrow 0$ and $\lambda_1Tp\max(\alpha_{p,T},a_T)^{-\mu_1}/\eta_{\min}\rightarrow 0$.
%     \item[(ii)] $\lambda_2T\max(\alpha_{p,T},a_T)^{-\mu_2}=O(p\sqrt{\Ic_{\min}\log(pT)})$ and $\lambda_1T\max(\alpha_{p,T},a_T)^{-\mu_1}=O\Big(\sqrt{\frac{\log(pT)}{\Ic_{\min}}}\Big)$. 
%     \item[(iii)] $m^* s^*_{\max}/(p\Ic_{\min}\eta_{\min}) \rightarrow 0$, and $m^*T\delta_T/(\Ic_{\min}\eta_{\min})\rightarrow 0$.
% \end{itemize}
\end{assumption}
Assumption \ref{assumption_rates_adaptive} modifies Assumption \ref{assumption_rates} due to the introduction of the adaptive weights and the entry-wise penalization of the variance-covariance matrix. 
The LASSO penalty primarily affects the estimation of the covariance structure within each regime and does not directly influence the detection of change points, which explains the absence of link between $\lambda_1$ and $\delta_T$. Assume $\mu_1=\mu_2=1$. Assumption \ref{assumption_rates_adaptive} implies that $\lambda_1 = o(\eta_{\min}(Tp)^{-1}\max(\alpha_{p,T},a_T))$ and $\lambda_2 = o(\eta_{\min}\delta_T \max(\alpha_{p,T},a_T))$. Suppose we set $a_T=\alpha_{p,T}$, then $\lambda_1 = o(\eta_{\min}T^{-1}\sqrt{\log(pT)/\Ic_{\min}})$ and $\lambda_2 = o(\eta_{\min}\delta_Tp \sqrt{\log(pT)/\Ic_{\min}})$: the adaptive weights alter the calibration of $\lambda_2$ compared with $\lambda$ specified in criterion (\ref{stat_crit_1}), which satisfies $\lambda = o(\eta_{\min}\delta_T)$. 
In the next result, we establish the consistency of $\widehat{T}_j$ and $\widehat{\Sigma}_j$ when $\widehat{m}=m^*$.
\begin{theorem}\label{theorem_consistency_adaptive}
Suppose Assumptions \ref{assumption_dgp}-\ref{assumption_rates_adaptive} are satisfied. Under $\widehat{m}=m^*$, then:
\begin{itemize}
    \item[(i)] $\Pb\Big(\underset{1 \leq j \leq m^*}{\max}|\widehat{T}_j-T^*_j|\leq T \delta_T\Big) \rightarrow 1$ as $T \rightarrow \infty$.
    \item[(ii)] $\|\widehat{\Sigma}_j-\Sigma^*_j\|_F = O_p\left(\frac{\lambda_2 \, T }{\max(\alpha_{p,T},a_T)^{\mu_2}I^*_{j}}+\frac{\lambda_1Tp}{\max(\alpha_{p,T},a_T)^{\mu_1}}\Big(1+\frac{T\delta_T}{I^*_{j}}\Big)+\frac{T\delta_T}{I^*_{j}} + p\sqrt{ \frac{\log(pT)}{I^*_{j}}}\right)$, for $j = 1,\ldots,m^*+1$.
\end{itemize}
\end{theorem}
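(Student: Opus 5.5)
The plan is to repeat the argument behind Theorem \ref{theorem_consistency} for criterion (\ref{stat_crit_adaptive}), with two additions: the adaptive weights must be bounded using the first-stage results, and the extra LASSO term must be controlled. First I will bound the weights. By the display preceding Assumption \ref{assumption_rates_adaptive}, which combines Theorems \ref{theorem_consistency}--\ref{theorem_understimation_neg}, the first-stage jumps satisfy the following on an event of probability tending to one: $\|\widetilde{\Theta}_t-\widetilde{\Theta}_{t-1}\|_F \geq \eta_{\min}-O_p(\alpha_{p,T})$ at the estimated change points near true ones, and the jumps are otherwise arbitrary. Since $\max(\cdot,a_T)\geq \max(\alpha_{p,T},a_T)$ up to constants on the relevant event, I will use the crude uniform bounds $\xi_{2t}\leq C\max(\alpha_{p,T},a_T)^{-\mu_2}$ and $\xi_{uv,1t}\leq C\max(\alpha_{p,T},a_T)^{-\mu_1}$. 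Because $\max(x,a_T)\geq a_T$, a deterministic version $a_T^{-\mu}$ always holds, and the sharper $\alpha_{p,T}$ version is used where needed. Thus the adaptive problem behaves like the non-adaptive one with effective tuning parameters $\bar\lambda_2:=\lambda_2 C\max(\alpha_{p,T},a_T)^{-\mu_2}$ and $\bar\lambda_1:=\lambda_1 C\max(\alpha_{p,T},a_T)^{-\mu_1}$.

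Second, I will use the KKT conditions of Lemma \ref{optimality_cond_adaptive}. Summing the stationarity conditions from $t=s$ to $T$ gives, for every $s\geq 2$,
\begin{equation*}
\frac{1}{T}\sum_{t=s}^{T}(\widehat{\Theta}_t - X_tX_t^\top) + \lambda_1\sum_{t=s}^T \xi_{1t}\odot Z_t + \lambda_2 \xi_{2s} W_s = \mathbf{0}_{p\times p},
\end{equation*}
where $\|W_s\|_F\leq 1$ (with equality at change points) and the entries of $Z_t$ lie in $[-1,1]$. For part (i), I will argue by contradiction as in Theorem \ref{theorem_consistency}. Suppose some $|\widehat{T}_j-T^*_j|>T\delta_T$, say $\widehat{T}_j>T^*_j+T\delta_T$ with $\widehat T_{j-1}<T^*_j$ (the other configurations are handled by the usual case analysis on the neighbouring estimated change points). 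Take the difference of the KKT equations at $s=T^*_j$ and $s=\widehat{T}_j$. On the interval $[T^*_j,\widehat T_j-1]$, $\widehat\Theta_t$ is constant and equal to $\widehat\Sigma_j$. This yields
\begin{equation*}
(\widehat T_j - T^*_j)\|\widehat\Sigma_j - \Sigma^*_{j+1}\|_F \leq \Big\|\sum_{t=T^*_j}^{\widehat T_j-1}(X_tX_t^\top-\Sigma^*_{j+1})\Big\|_F + 2T\bar\lambda_2 + T\bar\lambda_1 p\,(\widehat T_j-T^*_j),
\end{equation*}
where the last term uses $\|\xi_{1t}\odot Z_t\|_F\leq p\,\max_{uv}\xi_{uv,1t}$. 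A symmetric inequality on the left side of $T^*_j$, together with $\widehat{T}_{j-1}<T^*_j$ and the triangle inequality, gives $\|\widehat\Sigma_j-\Sigma^*_{j+1}\|_F+\|\widehat\Sigma_j-\Sigma^*_j\|_F\geq\eta_{\min}$. Dividing by $T\delta_T$, the noise term is $O_p(p\sqrt{\log(pT)/(T\delta_T)})$ uniformly over intervals of length at least $T\delta_T$. This follows from the exponential inequality for strong mixing processes (the Merlevède--Peligrad--Rio type bound used in the proof of Theorem \ref{theorem_consistency}), together with Assumption \ref{assumption_dgp}(ii) and Assumption \ref{assumption_rates}(i), and a union bound over $O(T^2p^2)$ intervals and entries. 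The second term is $\bar\lambda_2/\delta_T$, and the third is $T\bar\lambda_1 p$. By Assumption \ref{assumption_rates}(iii) and Assumption \ref{assumption_rates_adaptive}, all three are $o(\eta_{\min})$, which is a contradiction. A union bound over $j\leq m^*=O(\log T)$ completes (i).

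For part (ii), on the event of (i), I will sum the KKT conditions over the estimated block $\widehat{\mathcal{B}}_j$, which overlaps $\mathcal{B}^*_j$ except for at most $2T\delta_T$ points. This gives
\begin{equation*}
|\widehat{\mathcal{B}}_j|\,\|\widehat\Sigma_j-\Sigma^*_j\|_F \leq \Big\|\sum_{t\in\widehat{\mathcal{B}}_j}(X_tX_t^\top - \Theta^*_t)\Big\|_F + 2T\delta_T\eta_{\max} + 2T\bar\lambda_2 + T\bar\lambda_1 p\,|\widehat{\mathcal{B}}_j|.
\end{equation*}
Using $|\widehat{\mathcal{B}}_j|\geq I^*_j-2T\delta_T\asymp I^*_j$ (by Assumption \ref{assumption_rates}(ii)) and dividing through produces the terms $T\delta_T/I^*_j$, $p\sqrt{\log(pT)/I^*_j}$, and $\lambda_2T\max(\alpha_{p,T},a_T)^{-\mu_2}/I^*_j$. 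The LASSO term becomes $\lambda_1Tp\max(\cdot)^{-\mu_1}(1+T\delta_T/I^*_j)$ once the boundary mismatch is kept separately, which is the stated rate.

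The main obstacle is the weight bound. The weights are data-dependent, so the bound must hold uniformly on an event whose probability tends to one. This is where Theorems \ref{theorem_date_recov}--\ref{theorem_understimation_neg} and Assumption \ref{assumption_rates}(v) enter. If the uniform control via $\alpha_{p,T}$ is delicate at spurious first-stage change points, I will fall back on the deterministic bound $\max(\cdot,a_T)\geq a_T$ for those points, and use $\alpha_{p,T}$ only where the first-stage estimator is provably close to $\Sigma^*$. A secondary point is the positive-definiteness constraint, which makes the KKT conditions include a normal-cone term. I will argue that this term vanishes because $\widehat\Sigma_j$ is near $\Sigma^*_j\succeq\underline{\mu}I_p$ (Assumption \ref{assumption_regularity}), so the constraint is inactive with probability tending to one.
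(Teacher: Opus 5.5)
Your route is the paper's. Part (i) reruns the proof of Theorem \ref{theorem_consistency} with the differenced KKT conditions of Lemma \ref{optimality_cond_adaptive}, bounding the LASSO subgradient sum by $\lambda_1 p\max_{u\neq v}\xi_{uv,1r}$ per time point and the weights by $\max(\alpha_{p,T},a_T)^{-\mu_k}$. Part (ii) sums the KKT conditions over all of $\widehat{\Bc}_j$ and charges each of the at most $2T\delta_T$ misassigned points $\eta_{\max}$; this is a tidier version of the paper's cases (ii-a)/(ii-b), and it even gives the $\lambda_1$ term without the factor $1+T\delta_T/I^*_j$.

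Two small points. First, in (i) the condition $\widehat T_{j-1}<T^*_j$ alone does not make the interval to the left of $T^*_j$ long. You therefore need the paper's $C_T$, $D^{(l)}_T,D^{(m)}_T,D^{(r)}_T$ bookkeeping, not a single contradiction. Second, the positive-definiteness worry is moot: the feasible set $\{\Theta_t\succ 0\}$ is open, so the normal cone at a minimizer is trivial. Your argument for it would be circular anyway, since it uses the consistency being proved.

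The real problem is the step you flag as the main obstacle: your resolution does not work.
\begin{itemize}
\item At every $t$ that is not a first-stage change point, $\widetilde\Theta_t-\widetilde\Theta_{t-1}=\mathbf{0}_{p\times p}$, so $\xi_{2t}=a_T^{-\mu_2}$ exactly. Nothing forces $T^*_j$ or $\widehat T_j$ to be first-stage change points, yet their weights enter the differenced KKT conditions.
\item At inactive entries, $|\widetilde\Theta_{uv,t}|$ is only known to be small, so $\xi_{uv,1t}$ can be as large as $a_T^{-\mu_1}$.
\end{itemize}
First-stage consistency gives lower bounds on $|\widetilde\Theta|$ only where the truth is large. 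It therefore cannot deliver $\xi\lesssim\max(\alpha_{p,T},a_T)^{-\mu}$ uniformly. Your claim that $\max(\cdot,a_T)\gtrsim\max(\alpha_{p,T},a_T)$ ``on the relevant event'' is false whenever $a_T=o(\alpha_{p,T})$. In that regime the fallback $a_T^{-\mu}$ is not controlled by Assumption \ref{assumption_rates_adaptive}.

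The bound does hold, deterministically, when $a_T\gtrsim\alpha_{p,T}$ (e.g.\ $a_T=\alpha_{p,T}$). In that case Theorems \ref{theorem_consistency}--\ref{theorem_understimation_neg} and Assumption \ref{assumption_rates}(v) play no role in it. The paper's proof uses the same bound without justification. It is implicit in the treatment of $AC_{j,1}$, and stated as ``by definition of the adaptive weights, $\xi_{uv,1j}\leq\max(\alpha_{p,T},a_T)^{-\mu_1}$'' in the proof of Theorem \ref{theorem_understimation_neg_adaptive}. So your argument is exactly as complete as the paper's, but you should impose $a_T\gtrsim\alpha_{p,T}$ explicitly rather than attribute the bound to the first stage.
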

The proof of Theorem \ref{theorem_consistency_adaptive} can be found in Appendix \ref{appendix:proofs}.
As in Theorem \ref{theorem_consistency}, we have $T^{-1}|\widehat{T}_j-T^*_j|=o_p(1)$ as $\delta_T \rightarrow 0$. However, the introduction of the adaptive weights implies $\delta_T \gg \max(\lambda_2\max(\alpha_{p,T},a_T)^{-\mu_2}/\eta_{\min},p^2\log(pT)/(T\eta_{\min}^2))$.
The $\ell_1$-regularization and the presence of adaptive weights introducing additional scaling factors through $\max(\alpha_{p,T},a_T)$ alters the convergence rate obtained in Theorem \ref{theorem_consistency}.
In the same spirit as in Theorem \ref{theorem_date_recov}, the following result shows that the true change points can be consistently recovered by some points in $\widehat{\Tc}_{\widehat{m}}$ by the adaptive estimation, under the assumption that $m^* \leq C\log(T)$, $C>0$.
\begin{theorem}\label{theorem_date_recov_adaptive}
Suppose Assumptions \ref{assumption_dgp}-\ref{assumption_rates_adaptive} hold. Under $m^* \leq \widehat{m} \leq m_{\max}$, then, as $T\rightarrow \infty$, $\Pb\left(h(\widehat{\Tc}_{\widehat{m}},\Tc^*_{m^*})\leq T\delta_T\right) \rightarrow 1$.
\end{theorem}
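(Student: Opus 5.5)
We argue by contradiction, mirroring the proof of Theorem \ref{theorem_date_recov} but with the adaptive optimality conditions of Lemma \ref{optimality_cond_adaptive}. Suppose that with probability bounded away from zero there is a true change point $T^*_k$ with no estimated change point in $[T^*_k - T\delta_T, T^*_k + T\delta_T]$. On this event, $\widehat{\Theta}_t$ is constant on the whole window $\{T^*_k - T\delta_T,\ldots,T^*_k+T\delta_T\}$, equal to some matrix $\widehat{\Sigma}$. Since $\widehat{m}\le m_{\max}=O(\log T)$, we do not have $\widehat{m}=m^*$, so we cannot identify estimated blocks with true ones. We therefore use only the local structure around $T^*_k$, together with the first-order conditions obtained by summing the subgradient equations over the two half-windows $[T^*_k - T\delta_T, T^*_k-1]$ and $[T^*_k, T^*_k+T\delta_T]$.

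Summing the optimality conditions of Lemma \ref{optimality_cond_adaptive} from $s$ to $t$ within a segment where $\widehat{\Theta}$ is constant gives the identity
\[
\sum_{r=s}^{t}(\widehat{\Sigma} - X_rX_r^\top) + T\lambda_1 \sum_{r=s}^t \xi_{1r}\circ \widehat{Z}_r = T\lambda_2(\xi_{2s}\widehat{U}_s - \xi_{2,t+1}\widehat{U}_{t+1}),
\]
up to symmetrization and up to the positive-definiteness constraint, which is inactive with probability tending to one because of Assumption \ref{assumption_regularity} and consistency. Here $\|\widehat{U}\|_F\le 1$ and the $\widehat{Z}$ have entries in $[-1,1]$. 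We apply this on both half-windows and subtract. The population parts then give $T\delta_T(\Sigma^*_k - \Sigma^*_{k+1})$ up to sign, because $\mathcal{I}_{\min}\gg T\delta_T$ by Assumption \ref{assumption_rates}(ii), so each half-window lies inside a single true regime. The common $\widehat{\Sigma}$ terms cancel. We obtain
\[
T\delta_T\,\eta_{\min} \le \Big\|\sum (X_rX_r^\top - \Eb X_rX_r^\top)\Big\|_F + 4T\lambda_2\max_r \xi_{2r} + 2T\delta_T\, T\lambda_1 p\max_r\|\xi_{1r}\|_{\max}.
\]

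The stochastic term is handled with the exponential inequality for strong mixing processes under Assumption \ref{assumption_dgp}, applied with a union bound over entries and over the $O(T)$ possible window endpoints; this is the same deviation bound used for Theorem \ref{theorem_consistency}. It is $O_p(p\sqrt{T\delta_T\log(pT)})$, which is $o(T\delta_T\eta_{\min})$ by Assumption \ref{assumption_rates}(iii). The penalty terms require bounds on the weights. By definition, $\xi_{2r}\le a_T^{-\mu_2}$, and more sharply $\xi_{2r}\lesssim \max(\alpha_{p,T},a_T)^{-\mu_2}$ on the event of Theorems \ref{theorem_date_recov} and \ref{theorem_understimation_neg} together with the uniform rate $O_p(\alpha_{p,T})$ for $\widetilde{\Sigma}_j$ stated at the start of Section \ref{subsec:adaptive_change_point_detection}. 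The same argument gives $\|\xi_{1r}\|_{\max}\lesssim\max(\alpha_{p,T},a_T)^{-\mu_1}$. Dividing by $T\delta_T\eta_{\min}$ and invoking Assumption \ref{assumption_rates_adaptive}, both penalty terms are $o(1)$. This contradicts $\eta_{\min}>0$, so the probability of the bad event vanishes; a union over $k\le m^*=O(\log T)$ is absorbed in the log factor of the deviation bound.

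The main obstacle is the weight bound. Lower bounds on $\xi$ would be harmless, but here we need upper bounds, i.e.\ lower bounds on $\max(|\widetilde{\Theta}_{uv,t}|,a_T)$ and $\max(\|\widetilde{\Theta}_t-\widetilde{\Theta}_{t-1}\|_F, a_T)$. The trivial bound $a_T^{-\mu}$ is always available. The plan is to state the penalty control with $\max(\alpha_{p,T},a_T)$, exactly as in Assumption \ref{assumption_rates_adaptive}, and to fall back to the trivial bound when $a_T\ge\alpha_{p,T}$. When $a_T<\alpha_{p,T}$, the relevant boundary terms only involve the two endpoint weights $\xi_{2s}$ and $\xi_{2,t+1}$. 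We would choose the half-window endpoints to avoid the at most $m_{\max}$ jumps of $\widetilde{\Theta}$, which only costs a constant factor in the window length.
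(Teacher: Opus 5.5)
Your core argument is correct and takes a genuinely different route from the paper's.

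\textbf{How the two proofs differ.} The paper works on $\{m^*<\widehat m\le m_{\max}\}$ and splits the bad event into $L_{m,k,1}$, $L_{m,k,2}$, $L_{m,k,3}$. The last is split again into four sub-cases according to where $\widehat T_l,\widehat T_{l+1}$ sit relative to $T^*_{k-1},T^*_{k+1}$. In each case the paper applies Lemma \ref{optimality_cond_adaptive} at pairs of true and estimated change points to bound $\|\widehat\Sigma_{l+1}-\Sigma^*_k\|_F$ and $\|\widehat\Sigma_{l+1}-\Sigma^*_{k+1}\|_F$ separately. It then closes with the triangle inequality and a sum over $m\le m_{\max}$ and $k\le m^*$.

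You use only that $\widehat\Theta$ is constant on the deterministic window of length about $2T\delta_T$ centred at $T^*_k$. You then difference the summed KKT identities over its two halves. These halves must have exactly the same length $n=\lfloor T\delta_T\rfloor$; yours have lengths $T\delta_T$ and $T\delta_T+1$. With equal lengths the unknown level $\widehat\Sigma$ cancels, leaving $n\eta_{\min}$ against noise plus boundary penalties.

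This buys several things:
\begin{itemize}
\item It removes the case analysis and any need to control $\widehat\Sigma$.
\item It needs the deviation bound only on $2m^*$ fixed windows.
\item It never uses $m^*\le\widehat m\le m_{\max}$.
\item Since $\Ic_{\min}>2T\delta_T$, it even gives $\widehat m\ge m^*$ directly. That is Theorem \ref{theorem_understimation_neg_adaptive} without its extra rate conditions, modulo the weight issue below.
\end{itemize}
A side remark: do not justify the positive-definiteness step by consistency of $\widehat\Theta$, which is not available when $\widehat m\neq m^*$. What you actually use is the KKT system of Lemma \ref{optimality_cond_adaptive}, exactly as the paper does.

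\textbf{The weight bound does not hold as claimed, and your patch goes the wrong way.} Consistency of the first stage bounds deviations from above. That gives \emph{lower} bounds on weights where the target is zero, not upper bounds.
\begin{itemize}
\item At every $r$ where $\widetilde\Theta$ does not jump, $\xi_{2r}=a_T^{-\mu_2}$ exactly.
\item For $(u,v)\notin\Sc^*_j$ one only has $|\widetilde\Theta_{uv,r}|=O_p(\alpha_{p,T})$, so $\xi_{uv,1r}$ can be as large as $a_T^{-\mu_1}$.
\end{itemize}
Hence $\max_r\xi_{2r}\lesssim\max(\alpha_{p,T},a_T)^{-\mu_2}$, and its $\xi_1$ analogue, are false when $a_T<\alpha_{p,T}$. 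Your proposed fix does not help:
\begin{itemize}
\item Placing the half-window endpoints away from jumps of $\widetilde\Theta$ makes $\xi_{2s}$ and $\xi_{2,t+1}$ equal to $a_T^{-\mu_2}$, the worst possible value.
\item The weight $\xi_{2T^*_k}$ enters with coefficient $2$ and cannot be moved at all, because the split must be at $T^*_k$.
\item The $\lambda_1$-term involves $\xi_{uv,1r}$ for every $r$ in the window and every off-diagonal entry, so no choice of endpoints affects it.
\end{itemize}

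What your argument does establish, via the trivial bound $\xi\le a_T^{-\mu}$, is the theorem under Assumption \ref{assumption_rates_adaptive} with $\max(\alpha_{p,T},a_T)$ replaced by $a_T$. This coincides with the stated assumption when $a_T\ge\alpha_{p,T}$.

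The paper's proof has the same gap. It substitutes $\max(\alpha_{p,T},a_T)^{-\mu}$ for every weight without justification; in the proof of Theorem \ref{theorem_understimation_neg_adaptive} it even asserts this ``by definition''. So when $a_T<\alpha_{p,T}$ neither argument is complete. This includes the choice $a_T=T^{-1/2}$ used in the experiments.
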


The next result guarantees that the true number of change points is not underestimated by the adaptive estimation. Let $r_{p,T} := \frac{\lambda_2 \, T }{\max(\alpha_{p,T},a_T)^{\mu_2}\Ic_{\min}}+\frac{\lambda_1Tp}{\max(\alpha_{p,T},a_T)^{\mu_1}}\frac{T\delta_T}{\Ic_{\min}}+\frac{T\delta_T}{\Ic_{\min}} + p\sqrt{ \frac{\log(pT)}{\Ic_{\min}}}$.
\begin{theorem}\label{theorem_understimation_neg_adaptive}
Suppose Assumptions \ref{assumption_dgp}-\ref{assumption_rates_adaptive} hold. Assume $m^* s^*_{\max}/(p\Ic_{\min}\eta_{\min}) \rightarrow 0$, $m^*T\delta_T/(\Ic_{\min}\eta_{\min})\rightarrow 0$, $\frac{Tm^*}{\Ic_{\min}\eta^2_{\min}}\Big(p\sqrt{\frac{\log(pT)}{T}}r_{p,T}+r^2_{p,T}\Big)=o(1)$. Then $\Pb\left(\widehat{m}<m^*\right) \rightarrow 0 \; \text{as} \; T \rightarrow \infty$.
\end{theorem}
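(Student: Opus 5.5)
The argument is by contradiction, following the standard route for fused-LASSO change-point estimators (as in \cite{Chan2014}, \cite{Qian2016}) that was also used for Theorem \ref{theorem_understimation_neg}. Suppose $\widehat{m}<m^*$ with non-vanishing probability. By pigeonhole, there is then a true change point $T^*_k$ such that no estimated change point lies within distance $\Ic_{\min}/2$ of it. Equivalently, some estimated block $\widehat{\Bc}_j$ contains the interval $[T^*_k-\Ic_{\min}/2,\,T^*_k+\Ic_{\min}/2]$, on which $\widehat{\Theta}_t\equiv\widehat{\Sigma}_j$ is constant while the truth jumps from $\Sigma^*_k$ to $\Sigma^*_{k+1}$ with $\|\Sigma^*_{k+1}-\Sigma^*_k\|_F\geq\eta_{\min}$.

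The first step is the subgradient (KKT) characterization of Lemma \ref{optimality_cond_adaptive}. Summing the optimality conditions over $t\in[T^*_k-\Ic_{\min}/2,\,T^*_k)$, and separately over $[T^*_k,\,T^*_k+\Ic_{\min}/2)$, yields two matrix identities. Their left-hand sides are $\sum_t(\widehat{\Sigma}_j-X_tX_t^\top)$. Their right-hand sides consist of fused-penalty terms, bounded in Frobenius norm by $2\lambda_2\max_t\xi_{2t}T$, plus LASSO subgradient terms, bounded by $\lambda_1 p\,\max\xi_{uv,1t}\cdot(\text{length})\cdot T$. Since $\widehat{m}<m^*\leq\widetilde{m}$ with high probability by Theorem \ref{theorem_understimation_neg}, I will use Theorems \ref{theorem_consistency}--\ref{theorem_date_recov} to control the weights. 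Away from the $\widetilde{\Tc}$ points we have $\xi_{2t}=a_T^{-\mu_2}$. Near true breaks, $\|\widetilde{\Theta}_t-\widetilde{\Theta}_{t-1}\|_F\gtrsim\eta_{\min}$, and the $\widetilde{\Theta}$ entries are within $\alpha_{p,T}$ of the truth, so all weights are bounded by $\max(\alpha_{p,T},a_T)^{-\mu}$.

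The second step subtracts the two identities, after normalizing each by its length. This gives
\[
\eta_{\min}\leq\|\Sigma^*_{k+1}-\Sigma^*_k\|_F\leq\Big\|\tfrac{2}{\Ic_{\min}}\textstyle\sum_{t}(X_tX_t^\top-\Eb X_tX_t^\top)\Big\|_F+O\Big(\tfrac{\lambda_2T}{\max(\alpha_{p,T},a_T)^{\mu_2}\Ic_{\min}}+\tfrac{\lambda_1Tp}{\max(\alpha_{p,T},a_T)^{\mu_1}}\Big).
\]
The stochastic term is $O_p(p\sqrt{\log(pT)/\Ic_{\min}})$, uniformly over the at most $m^*=O(\log T)$ candidate intervals. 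This follows from the exponential inequality for strongly mixing processes under Assumption \ref{assumption_dgp} combined with a union bound over entries, as in the proof of Theorem \ref{theorem_consistency}. Under Assumption \ref{assumption_rates_adaptive} and Assumption \ref{assumption_rates}(iii),(v), every term on the right is $o(\eta_{\min})$, which gives the contradiction.

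The main obstacle is that the estimated block $\widehat{\Bc}_j$ may have its own endpoints adjacent to other true change points, and the sparsity penalty interacts with the sign pattern of $\widehat{\Sigma}_j$. The clean two-sided difference above may therefore not be directly available, and one may instead need a loss-comparison argument. In that argument, one constructs a competitor sequence by inserting a break at $T^*_k$ and setting each piece to its oracle-type estimate. One then compares the objective values: the gain in the quadratic loss is of order $\Ic_{\min}\eta_{\min}^2/T$, while the added penalty costs are controlled by $s^*_{\max}$, $m^*$, $T\delta_T$, and the cross terms $p\sqrt{\log(pT)/T}\,r_{p,T}+r_{p,T}^2$. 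The extra hypotheses of the theorem are exactly what make these costs $o(\Ic_{\min}\eta^2_{\min}/(Tm^*))$ per break, summed over $m^*$ breaks. I would attempt the KKT route first and fall back to this objective comparison, using Theorem \ref{theorem_consistency_adaptive}(ii)'s rate $r_{p,T}$ to bound the remaining blocks.
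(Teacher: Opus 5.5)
Your primary KKT route is correct, and it differs genuinely from the paper's proof. The paper's proof is essentially your fallback. Following the proof of Theorem \ref{theorem_understimation_neg}, it compares the penalized criterion at an arbitrary segmentation with fewer than $m^*$ breaks against its value at the $m^*$-break fit $\widehat{\Tc}_{m^*}$. After scaling by $T/(\Ic_{\min}\eta^2_{\min})$, it shows three things. The penalty differences are $o_p(1)$, which is where $m^*s^*_{\max}/(p\Ic_{\min}\eta_{\min})\to0$ and $m^*T\delta_T/(\Ic_{\min}\eta_{\min})\to0$ are used. The $m^*$-break fit is within $o_p(1)$ of the true segmentation, using the rate $r_{p,T}$ of Theorem \ref{theorem_consistency_adaptive}(ii) and the condition on $p\sqrt{\log(pT)/T}\,r_{p,T}+r^2_{p,T}$. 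Any coarser segmentation loses at least a constant.

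Your argument is local instead. With $L=\lfloor\Ic_{\min}/2\rfloor$ the windows $(T^*_k-L,T^*_k+L)$ are disjoint, so if $\widehat m<m^*$ one of them contains no $\widehat T_i$. Differencing the conditions of Lemma \ref{optimality_cond_adaptive} at $T^*_k-L$, $T^*_k$ and $T^*_k+L$ then yields
\[
\eta_{\min}\leq 2T\lambda_1p\max_{t,\,u\neq v}\xi_{uv,1t}+\frac{4T\lambda_2\max_t\xi_{2t}}{L}+O_p\Big(p\sqrt{\log(pT)/\Ic_{\min}}\Big),
\]
with the stochastic term uniform by Lemma \ref{bound_gradient}. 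This is the same mechanism as the event $E_{T,j}$ in the proof of Theorem \ref{theorem_consistency_adaptive}(i). It needs only Lemma \ref{bound_gradient}, Assumption \ref{assumption_rates}(ii)--(iii) and Assumption \ref{assumption_rates_adaptive}. So it dispenses with Theorem \ref{theorem_consistency_adaptive}(ii), which is stated under $\widehat m=m^*$, and with all three extra hypotheses of the statement. The same computation for criterion \eqref{stat_crit_1} would also give Theorem \ref{theorem_understimation_neg} without Assumption \ref{assumption_rates}(v). What the paper's route buys is a quantitative global statement: a criterion gap of order $\Ic_{\min}\eta^2_{\min}/T$ for every coarser segmentation, at the price of those extra conditions.

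Two remarks. First, the obstacle you anticipate does not arise. Lemma \ref{optimality_cond_adaptive} holds at every $t$, with $\|\widehat E_{2t}\|_F\leq1$ where $\widehat\Theta$ does not jump. You may therefore difference at arbitrary times, wherever the endpoints of $\widehat\Bc_j$ lie. The sign pattern of $\widehat\Sigma_j$ enters only through $|\widehat E_{uv,1r}|\leq1$, and the fallback is unnecessary. Note also that these conditions are already cumulative in $t$, so you difference them rather than sum them.

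Second, the upper bounds on the weights are deterministic, $\xi_{uv,1t}\leq a_T^{-\mu_1}$ and $\xi_{2t}\leq a_T^{-\mu_2}$, so no appeal to Theorems \ref{theorem_consistency}--\ref{theorem_understimation_neg} is needed. However, your conclusion that all weights are at most $\max(\alpha_{p,T},a_T)^{-\mu}$ does not follow from $\xi_{2t}=a_T^{-\mu_2}$ off $\widetilde{\Tc}_{\widetilde m}$ unless $a_T\geq\alpha_{p,T}$. You do use $\xi_{2t}$ at the generic times $T^*_k\pm L$. The paper's proof makes the same identification (``by definition of the adaptive weights''), so this is a shared convention rather than a flaw specific to your argument. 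You should still state it explicitly, for instance by taking $a_T\geq\alpha_{p,T}$.
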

The condition $\frac{Tm^*}{\Ic_{\min}\eta^2_{\min}}\Big[p\sqrt{\frac{\log(pT)}{T}}r_{p,T}+r^2_{p,T}\Big]=o(1)$ is similar to  Assumption \ref{assumption_rates}-(v), but now accounts for the consistency of $\widehat{\Sigma}_j$ derived in Theorem \ref{theorem_consistency_adaptive}. Finally, we consider the correct estimation of the non-zero entries of the
true variance-covariance matrix. In the same spirit as \cite{Kolar2012}, we
show that, for a sufficiently large estimated block $\widehat{\Bc}_j$, the
estimated support $\widehat{\Sc}_j$ coincides with $\Sc(\Theta_k^*)$, where
$\Sc(\Theta_k^*)$ denotes the support of the true parameter $\Theta_k^*$ on
the true block $\Bc_k^*$ that has the largest overlap with
$\widehat{\Bc}_j$.
\begin{theorem}\label{theorem_recovery}
Assume that the conditions of Theorem \ref{theorem_consistency_adaptive} are satisfied. Moreover, assume $\frac{1}{T}\Big(\sqrt{\frac{\log(pT)}{\Ic_{\min}}}+\frac{T\delta_T}{\Ic_{\min}}\Big)=o(\lambda_1\max(\alpha_{p,T},a_T)^{-\mu_1})$ and $\Ic^{-1}_{\min}\lambda_2\max_t\xi_{2,t}/(\lambda_1\max(\alpha_{p,T},a_T)^{-\mu_1}) \rightarrow 0$.
Assume 
$\min_j \min_{(u,v)\in \Sc^*_j}|\Sigma^*_{uv,j}| \geq L\left(p\sqrt{\frac{\log(pT)}{\Ic_{\min}}}+\frac{T\delta_T}{\Ic_{\min}}+T\lambda_1+\frac{T}{\Ic_{\min}}\lambda_2\max_t\xi_{2,t}\right)$,
with $L>0$ a finite constant.
Then, if $\widehat{m}=m^*$, $\forall j=1,\ldots,m^*+1,\; \Pb\left(\widehat{\Sc}_j = \Sc^*_j\right) \rightarrow 1$.
\end{theorem}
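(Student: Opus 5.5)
Under $\widehat{m}=m^*$, Theorem \ref{theorem_consistency_adaptive}(i) places us with probability tending to one on the event $\mathcal{E}_T=\{\max_j|\widehat{T}_j-T^*_j|\leq T\delta_T\}$, and we work on $\mathcal{E}_T$ throughout. The proof has two parts: a \emph{no false negatives} part and a \emph{no false positives} part. Both use the block-wise optimality conditions of Lemma \ref{optimality_cond_adaptive}, obtained by summing the stationarity conditions over $t\in\widehat{\Bc}_j$. On $\widehat{\Bc}_j$ the fused terms telescope and leave only the two boundary subgradients $\lambda_2\xi_{2\widehat{T}_{j-1}}\widehat{Z}_{\widehat{T}_{j-1}}-\lambda_2\xi_{2\widehat{T}_j}\widehat{Z}_{\widehat{T}_j}$, each of Frobenius norm at most $\lambda_2\max_t\xi_{2,t}$. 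For each off-diagonal entry $(u,v)$ this gives
\begin{equation*}
|\widehat{\Bc}_j|\big(\widehat{\Sigma}_{uv,j}-\bar{S}_{uv,j}\big)+T\lambda_1\sum_{t\in\widehat{\Bc}_j}\xi_{uv,1t}\,\widehat{g}_{uv,t}=O\big(T\lambda_2\max_t\xi_{2,t}\big),
\end{equation*}
where $\bar{S}_j=|\widehat{\Bc}_j|^{-1}\sum_{t\in\widehat{\Bc}_j}X_tX_t^\top$ and $\widehat{g}_{uv,t}\in\partial|\cdot|$ at $\widehat{\Sigma}_{uv,j}$. (The positive-definiteness constraint is inactive near $\Sigma^*_j$ by Assumption \ref{assumption_regularity} and the consistency result, so it contributes no multiplier.)

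\textbf{No false negatives.} Theorem \ref{theorem_consistency_adaptive}(ii) bounds $\|\widehat{\Sigma}_j-\Sigma^*_j\|_F$, hence $\|\widehat{\Sigma}_j-\Sigma^*_j\|_{\max}$. Under the stated rate conditions and $\Ic^*_j\geq \Ic_{\min}$, this bound is dominated by $p\sqrt{\log(pT)/\Ic_{\min}}+T\delta_T/\Ic_{\min}+T\lambda_1+T\lambda_2\max_t\xi_{2,t}/\Ic_{\min}$. The $\lambda_1$ term of Theorem \ref{theorem_consistency_adaptive}(ii) is absorbed into $T\lambda_1$ because the weights satisfy $\xi\le a_T^{-\mu_1}$ and the factor $p\max(\cdot)^{-\mu_1}$ is controlled by Assumption \ref{assumption_rates_adaptive}. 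The beta-min condition, taken with $L$ large, then forces $|\widehat{\Sigma}_{uv,j}|\geq|\Sigma^*_{uv,j}|/2>0$ for every $(u,v)\in\Sc^*_j$. Hence $\Sc^*_j\subseteq\widehat{\Sc}_j$.

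\textbf{No false positives.} Take $(u,v)\notin\Sc^*_j$ and suppose $\widehat{\Sigma}_{uv,j}\neq0$. Then $\widehat{g}_{uv,t}=\mathrm{sign}(\widehat{\Sigma}_{uv,j})$ for every $t$, so the penalty term in the display has magnitude $T\lambda_1\sum_t\xi_{uv,1t}$. We need a lower bound on the weights. Since $\widetilde{\Theta}_{uv,t}$ is consistent for $\Sigma^*_{uv,k}=0$ at rate $\alpha_{p,T}$ (combining Theorems \ref{theorem_consistency}--\ref{theorem_understimation_neg}), with high probability $\xi_{uv,1t}\gtrsim\max(\alpha_{p,T},a_T)^{-\mu_1}$ for all $t$ in $\widehat{\Bc}_j$ away from the $T\delta_T$-neighbourhoods of the change points. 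That is at least $|\widehat{\Bc}_j|-2T\delta_T\gtrsim|\widehat{\Bc}_j|$ indices. Dividing by $|\widehat{\Bc}_j|$, the penalty contributes at least $c\,T\lambda_1\max(\alpha_{p,T},a_T)^{-\mu_1}$.

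The remaining terms must be shown to be smaller. The first is $|\widehat{\Sigma}_{uv,j}-\bar{S}_{uv,j}|$. Here $\bar{S}_{uv,j}-\Sigma^*_{uv,j}=O_p(\sqrt{\log(pT)/\Ic_{\min}}+T\delta_T/\Ic_{\min})$: the first part comes from the entrywise exponential inequality under Assumption \ref{assumption_dgp}, and the second is the contamination bias from misallocated observations, which is bounded using $\eta_{\max}=O(1)$. The second is the boundary term $T\lambda_2\max_t\xi_{2,t}/|\widehat{\Bc}_j|$. The two extra rate assumptions of the theorem say exactly that both terms are $o(T\lambda_1\max(\cdot)^{-\mu_1})$, up to the normalisation by $T$ in the loss. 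This contradicts the stationarity equation, so $\widehat{\Sigma}_{uv,j}=0$ and $\widehat{\Sc}_j\subseteq\Sc^*_j$.

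\textbf{Main obstacle.} The hardest step is the uniform lower bound on $\xi_{uv,1t}$ over the off-support entries. It requires entrywise (max-norm) control of $\widetilde{\Theta}_t$ uniformly in $t$ and in all $p^2$ entries, including near change points where the first-stage blocks may be misaligned. I would handle it by discarding the at most $2T\delta_T$ indices near $\widetilde{T}$'s and $T^*$'s, using Theorem \ref{theorem_date_recov}, and bounding them trivially with $\xi\geq0$. Since those indices only weaken the penalty and $\xi\ge 0$, discarding them still leaves the desired lower bound once $T\delta_T=o(\Ic_{\min})$.
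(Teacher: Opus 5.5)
\textbf{Gap: the no-false-negatives step.} It does not go through as written. The $\lambda_1$-term in Theorem \ref{theorem_consistency_adaptive}(ii) is $T\lambda_1 p\max(\alpha_{p,T},a_T)^{-\mu_1}(1+T\delta_T/I^*_j)$, and it cannot be absorbed into the bare $T\lambda_1$ of the beta-min threshold, because $p\max(\alpha_{p,T},a_T)^{-\mu_1}\to\infty$. Assumption \ref{assumption_rates_adaptive} only compares this term with $\eta_{\min}$, not with the threshold.

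In fact the theorem's own extra conditions rule your argument out. Multiplying the first condition by $Tp$ gives $p\sqrt{\log(pT)/\Ic_{\min}}+T\delta_T/\Ic_{\min}=o\big(T\lambda_1 p\max(\alpha_{p,T},a_T)^{-\mu_1}\big)$. The second condition gives the same for $T\lambda_2\max_t\xi_{2,t}/\Ic_{\min}$. So the rate of Theorem \ref{theorem_consistency_adaptive}(ii) is of strictly larger order than every term of the beta-min threshold, and no fixed $L$ makes the threshold dominate it. That worst-case rate charges every entry the largest possible weight, which is exactly what you cannot afford here.

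\textbf{How to repair it.} You need the adaptive-LASSO mechanism on the support, which your proposal never uses: you only invoke $\xi\le a_T^{-\mu_1}$. For $(u,v)\in\Sc^*_j$, first-stage consistency together with the beta-min condition keeps $\widetilde{\Theta}_{uv,t}$ bounded away from zero. Hence $\bar\xi_{uv}:=|\widehat{\Bc}_j|^{-1}\sum_{t\in\widehat{\Bc}_j}\xi_{uv,1t}=O_p(1)$ there. Plugging this into your own block KKT display, entrywise at active coordinates, gives
\[
|\widehat{\Sigma}_{uv,j}-\Sigma^*_{uv,j}|\le|\bar S_{uv,j}-\Sigma^*_{uv,j}|+T\lambda_1\bar\xi_{uv}+\frac{2T\lambda_2\max_t\xi_{2,t}}{|\widehat{\Bc}_j|}.
\]
This is of exactly the order of the beta-min threshold, and it is where the bare $T\lambda_1$ there comes from. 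The paper does precisely this for the oracle $\overline{\Sigma}$ via (\ref{optim_block_active}).

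\textbf{No-false-positives part.} This part is valid in outline but takes a different route from the paper. The paper builds a restricted oracle on $\mathcal{O}_{\widehat m+1}$, proves strict dual feasibility of the weighted subgradient ratio off the support, and transfers the zeros to every minimizer via Lemma \ref{lemma_unique_inactive_pattern}. You instead apply the necessary KKT condition directly to an arbitrary minimizer, which is shorter and avoids the oracle.

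Tighten one step, though. You cannot control $|\widehat{\Sigma}_{uv,j}-\bar S_{uv,j}|$ through $\bar S_{uv,j}$ alone, since $\widehat{\Sigma}_{uv,j}$ is the unknown. Use the sign instead: $\widehat{\Sigma}_{uv,j}$ and $T\lambda_1\bar\xi_{uv}\,\mathrm{sign}(\widehat{\Sigma}_{uv,j})$ have the same sign. Hence $|\widehat{\Sigma}_{uv,j}|+T\lambda_1\bar\xi_{uv}\le|\bar S_{uv,j}|+2T\lambda_2\max_t\xi_{2,t}/|\widehat{\Bc}_j|$, which is impossible under the two extra rate conditions.
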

Theorem \ref{theorem_recovery} establishes support recovery on every sufficiently large estimated block of the partition, where each block satisfies $|\widehat{\Bc}_j|\geq \Ic_{\min}-2T\delta_T$. Since $T\delta_T=o(\Ic_{\min})$, this guarantees that each estimated block contains enough observations to recover the sparse support. Theorem \ref{theorem_recovery} provides the trade-off between $\lambda_2$-regularization for the detection of the change points and $\lambda_1$-regularization for sparse covariance estimation: $\lambda_2\max_t\xi_{2,t}=o(\lambda_1\Ic_{\min}\max(\alpha_{p,T},a_T)^{-\mu_1})$. Moreover, the lower bound condition on $\min_j \min_{(u,v)\in \Sc^*_j}|\Sigma^*_{uv,j}|$ is similar to Assumption A.5 in~\cite{Kolar2012}. The proof of Theorem \ref{theorem_recovery} is detailed in Appendix \ref{appendix:proofs}.

\section{Optimization and implementation}\label{sec:opt}
In this section, we develop an alternating direction method of multipliers (ADMM) for solving a practical variant of~\eqref{stat_crit_adaptive} and discuss its implementation details.

\subsection{An alternating direction method of multipliers}\label{subsec:admm}
We consider the following optimization problem
\begin{equation}
	\label{eq:opt-prob-two-stage}
	\min_{\substack{\Theta_t \succeq \epsilon I_p, \\[0.1cm] 1 \leq t \leq T}} \left\{ \sum_{t=1}^T \frac{1}{2T} \| X_t X_t^{\top} - \Theta_t \|_F^2 +\, \lambda_1\sum_{t=1}^T \sum_{u \neq v} \xi_{uv, 1t} | \Theta_{uv, t} | + \lambda_2 \sum_{t=2}^T \xi_{2t} \| \Theta_t - \Theta_{t-1} \|_F \right\},
\end{equation}
where the constraint \( \Theta_t \succeq \epsilon I_p \) for a small \( \epsilon > 0 \) ensures positive definiteness of the solution without affecting the theoretical properties of the estimator.
Setting \( \lambda_1 = 0 \) and \( \lambda_2 \xi_{2t} = \lambda \) for all \( u, v, t \) reduces~\eqref{eq:opt-prob-two-stage} to~\eqref{stat_crit_1} with a relaxed positive semidefiniteness constraint, so the same algorithm applies to both formulations.

We introduce auxiliary variables \( V_t = \Theta_t \), \( \Upsilon_{t, \mathrm{off}} = \Theta_{t, \mathrm{off}} \), and \( D_t = \Theta_t - \Theta_{t-1} \) to decouple the objective terms, yielding the following constrained problem equivalent to~\eqref{eq:opt-prob-two-stage}
\begin{align}
  \min_{\mathbf{X}} \quad & \left\{ \sum_{t=1}^T \left[ \frac{1}{2T} \| X_t X_t^{\top} - \Theta_t \|_F^2 + \delta_{\cdot \succeq \epsilon I_p}(V_t) \right] + \lambda_1 \sum_{t=1}^T \sum_{u \neq v} \xi_{uv, 1t} | \Upsilon_{uv, t} | + \lambda_2 \sum_{t=2}^T \xi_{2t} \| D_t \|_F \right\}, \notag\\[0.1cm]
  \text{s.t.} \quad & V_t = \Theta_t, \Upsilon_{t, \mathrm{off}} = \Theta_{t, \mathrm{off}} \,\, \forall \, t = 1, \dots , T; \quad D_t = \Theta_t - \Theta_{t-1}\,\, \forall \, t = 2, \dots ,T, \label{eq:opt-prob-two-stage-constrained}
\end{align}
where \( \mathbf{X} \coloneqq \left\{ \{\Theta_t\}_{t=1}^T, \{V_t\}_{t=1}^T, \{\Upsilon_{t, \mathrm{off}} \}_{t=1}^T, \{D_t\}_{t=2}^T \right\} \); \( \Theta_{t, \mathrm{off}} \) denotes the copy of \( \Theta_t \) with diagonal entries set to zero, and \( \Upsilon_{t, \mathrm{off}} \in \mathcal{S}_{\mathrm{off}}^p \).

We next develop an ADMM to solve~\eqref{eq:opt-prob-two-stage-constrained}.
Let \( f(\mathbf{X}) \) denote the objective function in~\eqref{eq:opt-prob-two-stage-constrained}.
The augmented Lagrangian with penalty parameter \( \beta > 0 \) is
\begin{align}
  & \mathcal{L}_\beta(\mathbf{X}; \boldsymbol{\Lambda}) \coloneqq  f(\mathbf{X}) - \sum_{t=1}^T \langle A_t, \Theta_t - V_t \rangle - \sum_{t=1}^T \langle Y_{t, \mathrm{off}}, \Theta_{t, \mathrm{off}} - \Upsilon_{t, \mathrm{off}} \rangle - \sum_{t=2}^T \langle Z_t, \Theta_t - \Theta_{t-1} - D_t \rangle \notag\\
  & \qquad + \frac{\beta}{2} \sum_{t=1}^T \| \Theta_t - V_t \|_F^2 + \frac{\beta}{2} \sum_{t=1}^T \| \Theta_{t, \mathrm{off}} - \Upsilon_{t, \mathrm{off}} \|_F^2 + \frac{\beta}{2} \sum_{t=2}^T \| \Theta_t - \Theta_{t-1} - D_t \|_F^2, \label{eq:aug-lag}
\end{align}
where \( \boldsymbol{\Lambda} \coloneqq \left\{ \{A_t\}_{t=1}^T, \{Y_{t, \mathrm{off}}\}_{t=1}^T, \{Z_t\}_{t=2}^T \right\} \) collects the multipliers associated with the constraints \( V_t = \Theta_t \), \( \Upsilon_{t, \mathrm{off}} = \Theta_{t, \mathrm{off}} \), and \( D_t = \Theta_t - \Theta_{t-1} \), respectively.
Let \( (\mathbf{X}^k, \boldsymbol{\Lambda}^k) \) denote the \( k \)-th primal-dual ADMM iterate, with \( k=0 \) corresponding to the initialization. For notational convenience, we set \( Z_1^k = Z_{T+1}^k = D_1^k = D_{T+1}^k = \mathbf{0}_{p\times p} \) for all \( k \).
The ADMM is initialized with \( \Theta_t^0 = X_tX_t^{\top} \), \( V_t^0 = \mathrm{Proj}_{\cdot \succeq \epsilon I_p}(\Theta_t^0) \), \( \Upsilon_{t, \mathrm{off}}^0 = \Theta_{t, \mathrm{off}}^0 \), and \( D_t^0 = \Theta_t^0 - \Theta_{t-1}^0 \) for \( t = 2, \dots, T \), while \( A_t^0 = Y_{t, \mathrm{off}}^0 = \mathbf{0}_{p \times p} \) and \( Z_t^0 = \mathbf{0}_{p \times p} \).
This choice is natural because \( \Theta_t^0 = X_tX_t^{\top} \) is the unpenalized least-squares fit, \( V_t^0 \) is its projected feasible version, \( \Upsilon_{t, \mathrm{off}}^0 \) and \( D_t^0 \) are the induced copy and first-difference variables, and zero multipliers provide a neutral starting point before the equality constraints are enforced by the iterations.
With these initial values in hand, iteration \( k+1 \) consists of three successive updates:
\begin{enumerate}[\normalfont(i)]
	\item \textbf{\( \{\Theta_t\}_{t=1}^T \) update:} minimize~\eqref{eq:aug-lag} jointly over \( \{\Theta_t\}_{t=1}^T \) with all other primal and dual variables fixed at their current values, yielding a linear system;
	\item \textbf{\( \{V_t, \Upsilon_{t, \mathrm{off}}, D_t\} \) update:} minimize~\eqref{eq:aug-lag} over \( (V_t, \Upsilon_{t, \mathrm{off}}, D_t) \) for each \( t \) with \( \{\Theta_t^{k+1}\} \) and all dual variables fixed; this subproblem decomposes into three independent optimization problems that admit closed-form solutions;
	\item \textbf{Dual ascent:} update the dual variables with stepsize \( \gamma = 1.61 \in (0, (\sqrt{5}+1)/2) \), a range permitted by the convergence theory in~\cite[Appendix~B]{FPST2013}:
    \begin{align}
            A_t ^{k+1} & = A_t ^k - \gamma \beta (\Theta_t^{k+1} - V_t^{k+1}),                                                                        \notag\\
            Y_{t, \mathrm{off}} ^{k+1} & = Y_{t, \mathrm{off}} ^k - \gamma \beta (\Theta_{t, \mathrm{off}}^{k+1} - \Upsilon_{t, \mathrm{off}}^{k+1}), \label{eq:dual-update}\\
            Z_t ^{k+1} & = Z_t ^k - \gamma \beta (\Theta_t^{k+1} - \Theta_{t-1}^{k+1} - D_t^{k+1}). \notag
    \end{align}
\end{enumerate}

We first discuss Step~(i).
Setting to zero the derivative of the augmented Lagrangian with respect to each \( \Theta_t \) yields
\[
	\tfrac{\Theta_t - X_tX_t^{\top}}{T} + \beta(2\Theta_t + \Theta_{t, \mathrm{off}} - \Theta_{t+1} - \Theta_{t-1}) = A_t^k + Y_{t, \mathrm{off}}^k - Z_{t+1}^k + Z_t^k + \beta V_t^k + \beta \Upsilon_{t, \mathrm{off}}^k - \beta D_{t+1}^k + \beta D_t^k
\]
for all \( t = 1, \dots, T \). By denoting the right-hand side by
\[
	\Psi_t^k \coloneqq \frac{X_tX_t^{\top}}{T} + A_t^k + Y_{t, \mathrm{off}}^k - Z_{t+1}^k + Z_t^k + \beta V_t^k + \beta \Upsilon_{t, \mathrm{off}}^k - \beta D_{t+1}^k + \beta D_t^k,
\]
the system reduces to the following linear system
\begin{align*}
	(2\beta + \tfrac{1}{T}) \Theta_1 + \beta \Theta_{1, \mathrm{off}} - \beta \Theta_2                                & = \Psi_1^k,     \\[.01cm]
		- \beta \Theta_1 + ( 3\beta + \tfrac{1}{T} ) \Theta_2 + \beta \Theta_{2, \mathrm{off}} - \beta \Theta_3           & = \Psi_2^k,     \\[.01cm]
		\dots                                                                                                  &                 \\[.01cm]
		- \beta \Theta_{T-2} + ( 3\beta + \tfrac{1}{T} ) \Theta_{T-1} + \beta \Theta_{T-1, \mathrm{off}} - \beta \Theta_T & = \Psi_{T-1}^k, \\[.01cm]
		- \beta \Theta_{T-1} + (2\beta + \tfrac{1}{T}) \Theta_T + \beta \Theta_{T, \mathrm{off}}                          & = \Psi_T^k.
\end{align*}
Notice that this system is coordinate-decomposable: exploiting the symmetry of each \( \Theta_t \), it decouples into \( \frac{p(p+1)}{2} \) independent tridiagonal subsystems.
For each pair \( (u, v) \) with \( 1 \leq u \leq v \leq p \), the subsystem reads
\begin{equation}
	\label{eq:linear-subsystem}
	\begin{bmatrix}
		b_{uv} & -\beta &        &        &        \\
		-\beta & c_{uv} & -\beta &        &        \\
		       &        & \dots  &        &        \\
		       &        & -\beta & c_{uv} & -\beta \\
		       &        &        & -\beta & b_{uv}
	\end{bmatrix}
	\begin{bmatrix}
		\Theta_{uv, 1}   \\
		\Theta_{uv, 2}   \\
		\dots            \\
		\Theta_{uv, T-1} \\
		\Theta_{uv, T}
	\end{bmatrix} =
	\begin{bmatrix}
		\Psi_{uv, 1}^k   \\
		\Psi_{uv, 2}^k   \\
		\dots            \\
		\Psi_{uv, T-1}^k \\
		\Psi_{uv, T}^k
	\end{bmatrix},
\end{equation}
where \( b_{uv} = 3 \beta + 1 / T \) and \( c_{uv} = 4\beta + 1 / T \) if \( u \neq v \), otherwise \( b_{uu} = 2\beta + 1 / T \), \( c_{uu} = 3\beta + 1 / T \).
Each tridiagonal system can be solved in \( O(T) \) operations via the Thomas algorithm, while we solve it through sparse diagonal matrices and backslash (\textsf{\(\backslash\)}) in \textsc{Matlab}.

We next turn to Step~(ii).
Since the augmented Lagrangian is separable across \( V_t \), \( \Upsilon_{t, \mathrm{off}} \), and \( D_t \) when \( \{\Theta_t^{k+1}\} \) is fixed, this step decomposes into three independent subproblems, each admitting a closed-form solution.

For \( V_t \), the subproblem is \( \min_{V_t \succeq \epsilon I_p} \frac{\beta}{2}\| V_t - \Theta_t^{k+1} \|_F^2 - \langle A_t^k, V_t - \Theta_t^{k+1} \rangle \).
Completing the square in \( V_t \) reduces this to the Euclidean projection onto \( \{S \in \mathcal{S}^p \mid S \succeq \epsilon I_p\} \):
\begin{equation}
	\label{eq:V-update}
	V_t^{k+1} = \mathrm{Proj}_{\cdot \succeq \epsilon I_p}\!\left( \Theta_t^{k+1} - A_t^k / \beta \right),
\end{equation}
which is computed via eigendecomposition and thresholding of eigenvalues at \( \epsilon \).

For \( \Upsilon_{t, \mathrm{off}} \), each off-diagonal entry \( (u, v) \) with \( u \neq v \) is updated by solving a one-dimensional \( \ell_1 \)-penalized least-squares problem, whose solution is the soft-thresholding operator \( \mathcal{T}_{\tau}(x) \coloneqq \mathrm{sgn}(x)(|x| - \tau)_+ \), where \( (\cdot)_+ \coloneqq \max \{\cdot , 0\} \):
\begin{equation}
	\label{eq:Up-update}
	\Upsilon_{uv, t}^{k+1} = \mathcal{T}_{\lambda_1 \xi_{uv, 1t} / \beta}\!\left( \Theta_{uv, t}^{k+1} - Y_{uv, t}^k / \beta \right).
\end{equation}

For \( D_t \), letting \( \Xi_t^k \coloneqq \Theta_t^{k+1} - \Theta_{t-1}^{k+1} - Z_t^k / \beta \), the subproblem amounts to minimizing \( \frac{\lambda_2 \xi_{2t}}{\beta} \| D_t \|_F + \frac{1}{2}\| D_t - \Xi_t^k \|_F^2 \). By the Cauchy--Schwarz inequality, the solution is aligned with \( \Xi_t^k \), reducing the problem to a one-dimensional group shrinkage:
\begin{equation}
	\label{eq:D-update}
	D_t^{k+1} = \left( 1 - \frac{\lambda_2 \xi_{2t} / \beta}{\| \Xi_t^k \|_F} \right)_{\!+} \Xi_t^k.
\end{equation}

The ADMM is summarized in Algorithm~\ref{algo:ADMM}.
Its convergence to a primal-dual optimal pair is guaranteed by~\cite[Appendix B]{FPST2013}.
Algorithm~\ref{algo:ADMM} stops when a termination criterion is satisfied.
The specific version used in our implementation is discussed in Subsection~\ref{subsec:convergence-termination}.
The two-stage adaptive procedure is described in Algorithm~\ref{algo:adaptive}.

\begin{algorithm}[t]
	\caption{ADMM for solving \eqref{eq:opt-prob-two-stage}}\label{algo:ADMM}
		\begin{algorithmic}[1]
			\onehalfspacing
			\Statex \hspace{-0.5cm}\textbf{input:} data $ \mathcal{X}_T = (X_1, \dots, X_T) $; parameters $ \epsilon > 0 $, $ \beta > 0 $, \( \lambda_1 \geq 0 \), \( \lambda_2 > 0 \); weights \( \xi_{uv, 1t} \geq 0 \), \( \xi_{2t} > 0 \).
			\Statex \hspace{-0.5cm}\textbf{initialize:} for \( t = 1, \dots, T \), set \( \Theta_t^0 = X_tX_t^{\top} \), \( V_t^0 = \mathrm{Proj}_{\cdot \succeq \epsilon I_p}(\Theta_t^0) \), \( \Upsilon_{t, \mathrm{off}}^0 = \Theta_{t, \mathrm{off}}^0 \),
			\Statex \hspace{1.15cm}\( A_t^0 = Y_{t, \mathrm{off}}^0 = \mathbf{0}_{p \times p} \); for \( t = 2, \dots, T \), set \( D_t^0 = \Theta_t^0 - \Theta_{t-1}^0 \), \( Z_t^0 = \mathbf{0}_{p \times p} \).

		\For{$ k = 0, 1, 2, \dots $}
		\State Solve $\frac{p(p+1)}{2}$ tridiagonal systems~\eqref{eq:linear-subsystem} to obtain $ \{\Theta_t^{k+1}\}_{t=1}^T $.
		\State Update $ \{V_t^{k+1}\}_{t=1}^T $, $ \{\Upsilon_{t, \mathrm{off}}^{k+1}\}_{t=1}^T $, $ \{D_t^{k+1}\}_{t=2}^T $ via~\eqref{eq:V-update},~\eqref{eq:Up-update},~\eqref{eq:D-update}.
		\State Perform $\gamma$-scaled dual ascent~\eqref{eq:dual-update} to obtain $ \{A_t^{k+1}\}_{t=1}^T $, $ \{Y_{t, \mathrm{off}}^{k+1}\}_{t=1}^T $, $ \{Z_t^{k+1}\}_{t=2}^T $.
			\State \textbf{if} a termination criterion is met \textbf{then} \textbf{return} all primal and dual variables.
			\EndFor
		\end{algorithmic}
	\end{algorithm}

\begin{algorithm}[t]
	\caption{Two-stage adaptive change-point detection}\label{algo:adaptive}
	\begin{algorithmic}[1]
		\onehalfspacing
		\Statex \hspace{-0.5cm}\textbf{input:} $ \mathcal{X}_T = (X_1, \dots, X_T) $; parameters $ \epsilon > 0 $, $ \beta > 0 $, \( \lambda > 0 \), \( a_T > 0 \), \( \mu_1, \mu_2 > 0 \), \( \lambda_1 > 0 \), \( \lambda_2 > 0 \).

			\State Apply Algorithm~\ref{algo:ADMM} to solve~\eqref{eq:opt-prob-two-stage} with \( \lambda_1 = 0 \), \( \lambda_2 = \lambda \), and \( \xi_{2t} = 1 \) for all \( t \), and obtain the preliminary estimate \( \{\widetilde{\Theta}_t\}_{t=1}^T \).
    \State Compute adaptive weights \( \tilde{\xi}_{uv, 1t} \) and \( \tilde{\xi}_{2t} \) via~\eqref{eq:adaptive-weight}.
    \State Apply Algorithm~\ref{algo:ADMM} to solve~\eqref{eq:opt-prob-two-stage} with \( \lambda_1 \), \( \lambda_2 \), \( \xi_{uv, 1t} = \tilde{\xi}_{uv, 1t} \), \( \xi_{2t} = \tilde{\xi}_{2t} \) and return \( \{\widehat{\Theta}_t\}_{t=1}^T \).
	\end{algorithmic}
\end{algorithm}

\subsection{Termination and algorithm parameters}\label{subsec:convergence-termination}
We describe the termination criteria used in our implementation.
The algorithm is terminated by monitoring dual feasibility and the primal-dual gap derived from the dual problem in Proposition~\ref{prop:dual}.
Let \( \mathbf{Y} \coloneqq \left\{ \{W_t\}_{t=1}^T, \{Y_{t, \mathrm{off}}\}_{t=1}^T, \{Z_t\}_{t=2}^T \right\} \), where \( W_t \in \mathcal{S}^p \), \( Y_{t, \mathrm{off}} \in \mathcal{S}_{\mathrm{off}}^p \), and \( Z_t \in \mathcal{S}^p \), and define \( \Delta_t \coloneqq Z_{t+1} - Z_t + W_t - Y_{t, \mathrm{off}} \) with \( Z_1 = Z_{T+1} = \mathbf{0}_{p \times p} \).
Here \( W_t \) is the dual variable associated with the auxiliary relation \( U_t = \Theta_t - X_tX_t^{\top} \) introduced in the proof of Proposition~\ref{prop:dual}, and thus corresponds to the least-squares block of the primal problem.
The dual derivation further shows that the \( U_t \)-block satisfies \( W_t = U_t / T \); since \( U_t^k = \Theta_t^k - X_tX_t^{\top} \) at iteration \( k \), we set \( W_t^k = (\Theta_t^k - X_tX_t^{\top}) / T \) and \( \mathbf{Y}^k \coloneqq \left\{ \{W_t^k\}_{t=1}^T, \{Y_{t, \mathrm{off}}^k\}_{t=1}^T, \{Z_t^k\}_{t=2}^T \right\} \).
We then compute
\begin{align*}
	\mathsf{dfeas}_1(\mathbf{Y}) & \coloneqq \max_{1 \leq t \leq T} \frac{| \min \{ \lambda_{\min}(\Delta_t), 0 \} |}{1 + \| \Delta_t \|_F}, &
	\mathsf{dfeas}_2(\mathbf{Y}) & \coloneqq \frac{\max_{2 \leq t \leq T} ( \| Z_t \|_F - \lambda_2 \xi_{2t} )_+}{1 + \max_{2 \leq t \leq T} \| Z_t \|_F}, \\
	\mathsf{dfeas}_3(\mathbf{Y}) & \coloneqq \frac{\max_{\substack{1 \leq t \leq T \\ u \neq v}} ( | Y_{uv, t} | - \lambda_1 \xi_{uv, 1t} )_+}{1 + \max_{\substack{1 \leq t \leq T \\ u \neq v}} | Y_{uv, t} |}, &
	\mathsf{dfeas}(\mathbf{Y})   & \coloneqq \max_{i=1,2,3} \mathsf{dfeas}_i(\mathbf{Y}),
\end{align*}
and the relative duality gap $\mathsf{gap}(\mathbf{X}, \mathbf{Y}) \coloneqq \frac{| v_p(\mathbf{X}) - v_d(\mathbf{Y}) |}{1 + | v_p(\mathbf{X}) | + | v_d(\mathbf{Y}) |}$,
where \( v_p(\mathbf{X}) \) and \( v_d(\mathbf{Y}) \) are the primal~\eqref{eq:opt-prob-two-stage} and dual~\eqref{eq:dual-prob} objective values. For any matrix sequence \( \{M_t\}_{t \in \mathcal{T}} \), let
$\| \{M_t\}_{t \in \mathcal{T}} \| \coloneqq \left( \sum_{t \in \mathcal{T}} \| M_t \|_F^2 \right)^{1/2}$.
Given tolerance \( \epsilon_{\mathrm{tol}} > 0 \), Algorithm~\ref{algo:ADMM} stops at iteration \( k \) as soon as either
\begin{enumerate}[\normalfont(T1)]
	\item\label{cond:T1} the current iterate is already sufficiently close to primal-dual optimality: \\$\max \left\{ \mathsf{gap}(\mathbf{X}^k, \mathbf{Y}^k), \mathsf{dfeas}(\mathbf{Y}^k) \right\} \leq \epsilon_{\mathrm{tol}}$;
	\item\label{cond:T2} as a safeguard against numerical stagnation, the relative successive changes of all primal and dual blocks satisfy
	      \begin{align*}
		      \max \Big\{ & \frac{\| \{\Theta_t^{k+1} - \Theta_t^k\}_{t=1}^T \|}{1 + \| \{\Theta_t^{k+1}\}_{t=1}^T \| + \| \{\Theta_t^k\}_{t=1}^T \|}, \frac{\| \{A_t^{k+1} - A_t^k\}_{t=1}^T \|}{1 + \| \{A_t^{k+1}\}_{t=1}^T \| + \| \{A_t^k\}_{t=1}^T \|},                                                                                              \\
		                  & \frac{\| \{Y_{t, \mathrm{off}}^{k+1} - Y_{t, \mathrm{off}}^k\}_{t=1}^T \|}{1 + \| \{Y_{t, \mathrm{off}}^{k+1}\}_{t=1}^T \| + \| \{Y_{t, \mathrm{off}}^k\}_{t=1}^T \|}, \frac{\| \{Z_t^{k+1} - Z_t^k\}_{t=2}^T \|}{1 + \| \{Z_t^{k+1}\}_{t=2}^T \| + \| \{Z_t^k\}_{t=2}^T \|} \Big\} \leq \frac{\epsilon_{\mathrm{tol}}}{10^3}.
	      \end{align*}
\end{enumerate}

In all experiments, we set $ \epsilon = 0.01 $, $ a_T = T^{-0.5} $, and $ \epsilon_{\mathrm{tol}} = 10^{-3} $.
The penalty parameter $ \beta $ affects only convergence speed and is not fine-tuned. %$ \mu_1 = 0.8 $, $ \mu_2 = 1.5 $,

\subsection{Tuning parameter selection}\label{subsec:tuning-para}
Selecting the combination of tuning parameters \(\mu_1, \mu_2, \lambda, \lambda_1,\lambda_2\) for the adaptive GFlsL estimator is a challenging task, as the adaptive weights introduce an extra layer of complexity and are sensitive to the preliminary estimation.
While information criteria such as BIC~\cite{Monti2014} provide a natural starting point, they often fail to capture the tradeoff between fit, sparsity, and temporal smoothness in high-dimensional scenarios: see Section 4.2 and Appendix of~\cite{Gibberd2017} and Section 6.2 of~\cite{LPPT24GFDtL}.
We describe the following five criteria for choosing the optimal combination.
Let \( \{\widehat{\Theta}_t\}_{t=1}^T \) denote an estimator throughout.
\begin{itemize}
	\item[\textbf{optimal}.] This method was proposed in~\cite{LPPT24GFDtL}.
	      Assuming the underlying data generating process is known in advance, we select the optimal combination of tuning parameters as the combination that minimizes or maximizes some performance measures, which will be described in later sections.
	      This method, although requiring knowledge of the ground truth, can help us understand the performance of the estimator in an ideal setting.

	\item[\textbf{lossval}.] This strategy also comes from~\cite{LPPT24GFDtL}.
	      If we have multiple samples \( \{\mathcal{X}_{i, T}\}_{i=1}^B \), where \( B > 1 \) is the number of samples, we use one of them, e.g., \( \mathcal{X}_{1, T} \), for estimating the estimator \( \{\widehat{\Theta}_t\}_{t=1}^T \), and evaluate the mean of loss function values over the rest of samples \( \{\mathcal{X}_{i, T}\}_{i=2}^B \) to get the testing loss function value.
	      More precisely, the combination of optimal tuning parameters is selected as the one minimizes
	      \begin{equation}\label{eq:def-lossval}
		      \text{lossval}(\mu_1, \mu_2, \lambda, \lambda_1, \lambda_2) \coloneqq \frac{1}{B} \sum_{i=2}^B \Lb\left(\{\widehat{\Theta}_t\}_{t=1}^T, \mathcal{X}_{i, T} \right).
	      \end{equation}

	\item[\textbf{BIC}.]
	      We adopt a standard BIC-style criterion following the approach of~\cite{Gibberd2017}. Specifically, for each candidate $(\lambda_1, \lambda_2)$, we select the optimal combination of tuning parameters by minimizing
	      \begin{equation}\label{eq:def-BIC}
		      \text{BIC}(\mu_1, \mu_2, \lambda, \lambda_1, \lambda_2) \coloneqq \Lb(\{\widehat{\Theta}_t\}_{t=1}^T, \mathcal{X}_T) + p\log(T)K,
	      \end{equation}
	      where $K$ quantifies model complexity or degrees of freedom.
	      Here, $K$ is defined as the total number of active edges at $t=1$ plus the number of edge changes across all time points, consistent with~\cite{Gibberd2017}:
	      \begin{equation*}
		      K = \text{card}\big(\mathbf{1}(\widehat\Theta_{uv,t}\neq \widehat\Theta_{uv,t-1}), \forall 2 \leq t\leq T, \forall u \neq v\big) + \text{card}\big(\mathbf{1}(\widehat\Theta_{uv,1} \neq 0), \forall u \neq v\big).
	      \end{equation*}
	      Alternative definitions exist, such as counting the number of nonzero blocks over all time and node pairs as in~\cite{Monti2014}, but our experiments indicate that these definitions yield very similar choices for the optimal tuning parameters.

	\item[\textbf{HBIC}.]
	      The BIC in~\eqref{eq:def-BIC} penalizes the model complexity \( K \) using only \( \log (T) \), assuming the dimension of variable is fixed or growing slowly.
	      However, in our cases, the dimension of variable \( \frac{Tp(p+1)}{2} \) grows polynomially in \( T \), then BIC tends to under-penalize the degrees of freedom, causing overfitting or inconsistent detection of change points.
	      %This behavior is also confirmed via empirical analysis in~\cite[Section 6.2]{LPPT24GFDtL}.
	      To handle this issue, we employ the high-dimensional BIC (\textbf{HBIC})~\cite{WKL2013} that scales the penalty depending on the variable dimension.
	      In particular, our HBIC contains three components: the least squares loss as the fit term, a high-dimensional sparsity penalty for the within-segment edge complexity, and a BIC-type penalty on the number of estimated change points. The latter two ingredients appear separately in existing HBIC-type criteria as in~\cite{CC2008}, \cite{WKL2013}, \cite{ZS2012}. Our HBIC is defined as follows:
	      \begin{equation}
		      \label{eq:def-HBIC}
		      \hspace{-1.5cm}
		      \text{HBIC}( \mu_1, \mu_2, \lambda, \lambda_1, \lambda_2) \coloneqq \underbrace{\Lb(\{\widehat{\Theta}_t\}_{t=1}^T, \mathcal{X}_T)}_{\text{fit}} + \underbrace{\frac{\log(p) \log (T)}{T} \sum_{j=1}^{\hat{m}} K_j}_{\text{within-segment complexity}} + \underbrace{\frac{\log (T)p}{T} \hat{m}}_{\text{model complexity}},
	      \end{equation}
	      where \( \hat{m} \) is the number of estimated change points, \( K_j \) is the complexity within the \( j \)-th block, i.e., \( K_j \coloneqq \text{card}(\big(\mathbf{1}(\widehat\Theta_{uv,\hat{T}_{j-1}} \neq 0), \forall u \geq v\big)) \).
	      The key distinction between BIC and HBIC lies in their penalization of model complexity.
	      While BIC applies a single global complexity penalty based on the total number of active edges and change points, HBIC employs more refined penalties that separately account for both the complexity within each segment and the number of detected change points.
	      This enables HBIC to more effectively control overfitting in high-dimensional or highly dynamic settings.

	\item[\textbf{HBICG}.]
	      While the least squares loss offers computational simplicity, it may fail to fully capture important structural properties of the solution, especially when the underlying data distribution is Gaussian.
        To address this, we consider an alternative HBIC formulation replacing the fit term by the negative Gaussian log-likelihood loss:
	      \[
		      \Lb_{\text{Gauss}}(\{\widehat{\Theta}_t\}_{t=1}^T; \mathcal{X}_T) = \sum_{t=1}^T [\log\det(\widehat{\Theta}_t) + \text{tr}(X_tX_t^{\top} \widehat{\Theta}_t^{-1})].
	      \]
        This modification allows the criterion to better reflect the true data generating mechanism and improve the selection of tuning parameters in Gaussian settings.
        This criterion is named by HBIC with \underline{\textbf{G}}aussian loss, in short, \textbf{HBICG}.
\end{itemize}

\subsection{Implementation details}\label{subsec:implementation}
To save computational resources, rather than selecting \(\mu_1, \mu_2 \) via cross-validation, we fix their values prior to running the algorithm. We set $(\mu_1,\mu_2)=(0.8,1.5)$. Recall that we also fix $a_T = T^{-\iota}$, with $\iota=0.5$.
Given \( a_T, \mu_1, \mu_2\), we select the optimal combination of tuning parameters over given \( \lambda \in \{\lambda^{(i)}\}_{i\geq 0} \), and \( (\lambda_1, \lambda_2) \in \{(\lambda_1^{(j)}, \lambda_2^{(j)})\}_{j\geq 0} \) as follows.
We first solve~\eqref{eq:opt-prob-two-stage} with \( \lambda_1 = 0 \), \( \lambda_2 = \lambda^{(i)} \), and \( \xi_{2t} = 1 \) for all \( t \) to obtain \( \{\{\widetilde{\Theta}_t^{(i)}\}_{t=1}^T\}_{i\geq 0} \).
Then for each \( \lambda^{(i)} \), we compute \( \{\xi_{uv, 1t}^{(i)}\}_{u,v=1,t=1}^{p,p,T} \) and \( \{\xi_{2t}^{(i)}\}_{t=1}^T \) using~\eqref{eq:adaptive-weight} and \( a_T, \mu_1, \mu_2 \).
Next, we solve~\eqref{eq:opt-prob-two-stage} with \( \{\xi_{uv, 1t}^{(i)}\}_{u,v=1,t=1}^{p,p,T} \), \( \{\xi_{2t}^{(i)}\}_{t=1}^T \), and \( (\lambda_1^{(j)}, \lambda_2^{(j)}) \) to obtain \( \{\{\widehat{\Theta}_t^{(ij)}\}_{t=1}^T\}_{i,j\geq 0} \).
Finally, using \( \{\{\widehat{\Theta}_t^{(ij)}\}_{t=1}^T\}_{i,j\geq 0} \) we select the optimal combination of tuning parameters via the above five criteria.

All algorithms and numerical experiments are implemented in \textsc{Matlab} R2025a; the codes are available at \url{https://github.com/linyopt/GFlsL}.
The default choice \((\mu_1,\mu_2)=(0.8,1.5)\) is assessed in Subsection~\ref{subsec:sensitivity}, and the empirical computational complexity of GFlsL is examined in Subsection~\ref{subsec:complexity}.
All experiments were conducted on an Intel(R) Xeon(R) Gold6242R CPU@3.10GHz (3.09GHz) with 128GB of RAM.

\section{Synthetic experiments}\label{sec:simulations}
In this section, we conduct some simulation experiments to assess the performances of the GFlsL. 

\subsection{Simulation settings}\label{subsec:simulations}
We consider the following three data generating processes, where we denote by $m^*$ the true number of change points and by $\Sigma^*_j, j = 1,\ldots,m^*+1$ the true $p \times p$ variance-covariance matrices:

\noindent\textbf{Setting (i):}
For each true $\Sigma^*_j, j = 1,\ldots,m^*+1$, its off-diagonal non-zero entries are generated in the uniform distribution $\Uc([-2,2])$ and diagonal elements are drawn in $\Uc([1.5,3.5])$.
The proportion of zero entries in the lower triangular part of $\Sigma^*_j$ is set as $80\%$, which represents approximately $36$ and $152$ zero entries in each regime out of the $45$ and $190$ lower triangular elements for $p = 10$ and $p = 20$, respectively.
To ensure that the resulting matrix is positive-definite, if the simulated $\Sigma^*_j$ satisfies $\lambda_{\min}(\Sigma^*_j)<0.01$, we apply $\Sigma^*_j = \Sigma^*_j + (\zeta+|\lambda_{\min}(\Sigma^*_j)|)I_p$, where $\zeta$ is the first value in $\{0.005,0.01,0.015,\ldots\}$ such that $\lambda_{\min}(\Sigma^*_j)>0.01$.

\noindent\textbf{Setting (ii):}
The variance-covariance matrix is generated following the same spirit as in Section 5 of \cite{cai2016}.
We construct $\Sigma^*_j = D^{1/2}_j\,C_j\,D^{1/2}_j$, $j =1,\ldots,m^*+1$, where $D_j = \text{diag}(U_{1,j},\ldots,U_{p,j})$ with $U_{k,j} \in \Uc([1.5,4]), 1 \leq k \leq p$, where $D_j$ makes the diagonal entries in $\Sigma^*_j$ and $\Theta^*_j$ different.
We set $C_j = (c_{uv,j})_{1 \leq u,v \leq p}$ with $c_{uv,j}=a^{|v-u|}_j$.
The coefficient $a_j$ equals $0.3$ with probability $0.5$, and equals $0.8$ otherwise.
Then, we set $\Sigma^{*}_{j,uv}=0$ when $|\Sigma^{*}_{j,uv}|<0.05$ and each non-zero off-diagonal coefficient is multiplied by $1$ (resp. $-1$) with probability $0.5$ (resp. $0.5$).
To ensure that the resulting matrix is positive-definite, we apply the same final step as in \textbf{Setting (i)}. This creates a banded structure.

\noindent\textbf{Setting (iii):}
The variance-covariance matrix follows a sparse factor model structure.
For each regime $j$, we generate \(\Sigma^*_j = \Lambda_j \Lambda_j^\top + \Psi_j \), where $\Lambda_j \in \Rb^{p \times r_j}$ is a sparse factor loading matrix with $r_j$ latent factors, and $\Psi_j$ is a $p \times p$ diagonal matrix of variable-specific variances.
The low-rank component $\Lambda_j \Lambda_j^\top$ captures the cross-sectional dependence driven by $r_j$ common factors: column $k$ of $\Lambda_j$ determines how each of the $p$ variables loads on the $k$-th factor, so that a larger $r_j$ yields a richer dependence structure (rank at most $r_j$).
For each regime, $r_j$ is drawn uniformly from $\{2,\ldots,6\}$.
Approximately $80\%$ of the $p\,r_j$ entries of $\Lambda_j$ are set to zero, meaning each variable loads on only a small subset of the factors; the remaining non-zero entries are drawn from $\Uc([-2,-0.5]\cup[0.5,2])$.
The diagonal entries of $\Psi_j$ are drawn from $\Uc([0.5,1])$. In that way, the variance-covariance $\Sigma^*_j$ is sparse. The asymptotic properties of the sparse factor models can be found in \cite{poignard_terada_2025}.

\subsection{Experiments}\label{subsec:exp}
We set \( T = 200 \) and consider \( p = 10 \) and \( p = 20 \).
Three cases relating to the change points are considered: (a) no change point (\( m^{*} = 0 \)); (b) a single change point (\( m^{*} = 1 \)); (c) multiple change points (\( m^{*} = 3 \)).
The location of the change points, i.e., $T^*_j,j=1,\ldots,m^*$, are randomly set, conditionally on $\mathcal{I}_{\min}$ being at least $\kappa T$, where $\kappa=1/(m^*+c)$.
We set $c=8$ so that $\kappa = 0.11$ (resp. $\kappa=0.0833$) when $m^*=1$ (resp. $m^*=3$): the regimes may have different time lengths but satisfy a minimum time length condition.
This minimum regime length constraint is analogous to the trimming parameter commonly used in the change-point literature to ensure that each segment contains enough observations for reliable estimation. In particular, Section 4 in~\cite{Qian2016} reports that $\kappa \in [0.05,0.25]$ is a standard range, and our choices fall within this interval.
For each setting, for $t=1,\ldots,T$, we draw $X_t \sim \Nc_{\Rb^p}(0,\Theta^{*}_t)$ the $p$-dimensional Gaussian distribution, with $\Theta^*_t = \Sigma^*_j$ when $t \in \Bc^*_j$.

For each simulation setting, and for each \( p \) and \( m^{*} \), we draw one hundred batches of $T$ independent samples $\Xc_T$.
For each batch, we apply Algorithm~\ref{algo:adaptive} with \( a_T = T^{-0.5} \), \( \mu_1 = 0.8 \), \( \mu_2 = 1.5 \), over a \( p \)-specific grid of $(\lambda,\lambda_1,\lambda_2)$ candidates, i.e., the candidates are \emph{independent} of simulation setting and \( m^{*} \), and select the optimal combination according to the five criteria described in Subsection~\ref{subsec:tuning-para}.
This yields five solutions to problem \ref{stat_crit_adaptive} depending on $(\lambda,\lambda_1,\lambda_2)$. For the sake of comparison, we also compute the non-adaptive GFlsL estimator (\ref{stat_crit_adaptive}) with $\xi_{uv,1t}=1, \xi_{2t}=1$, and select the optimal pair $(\lambda_1,\lambda_2)$ according to the same five criteria.

We compare the change-point detection performance of GFlsL with four competing methods: Binary Segmentation through Operator Norm (BSOP) and Wild Binary Segmentation through Independent Projection (WBSIP) of~\cite{wang2021}, Divide and Conquer Dynamic Programming (DCDP) of~\cite{li2023dcdp}, and the Threshold Block Fused Lasso (TBFL) of~\cite{bai2024unified}.
BSOP recursively partitions the time series by locating the split that maximizes an operator-norm discrepancy between the sample covariances on either side.
WBSIP extends this idea by drawing random sub-intervals and projecting the data onto one-dimensional subspaces, then aggregating the resulting univariate change-point statistics to guide segmentation.
DCDP solves a penalized dynamic programming problem using a divide-and-conquer scheme tailored to high-dimensional covariance change-point detection.
TBFL is a three-step procedure that combines penalized multivariate regression with neighborhood selection to perform change-point detection in Gaussian graphical models.
Among these competitors, only TBFL estimates precision matrices; BSOP, WBSIP, and DCDP return only change-point locations.
We note that WBSIP requires two \emph{independent} samples, which limits its practical applicability.
Furthermore, Assumptions~2 and~3 in~\cite{wang2021} impose conditions linking \(T\) and \(p\) that require a sufficiently large sample size relative to the dimension; in our setting with \(p=20\) and \(T=200\), these conditions are likely not met, and both BSOP and WBSIP systematically detect no change point.
For BSOP and WBSIP, since explicit tuning parameter selection procedures are not provided in~\cite{wang2021}, we use the parameters following their \texttt{R} package \texttt{changepoints}.\footnote{Available at \url{https://cran.r-project.org/web/packages/changepoints/}.}
In particular, for BSOP we set \(\tau = 10\); for WBSIP we use \(\Delta = 5\), \(\tau = 1.5\sqrt{p \log(T)}\) and the number of candidate intervals \(M = 100\).
For DCDP, we set \(\zeta = 0\), \(Q = 50\), and select \(\gamma\) from \(\{100,200,300\}\) via the cross-validation procedure based on testing datasets, also following their implementation.
For TBFL, we use the default parameters from their \texttt{R} package \texttt{LinearDetect}.\footnote{Archived at \url{https://cran.r-project.org/src/contrib/Archive/LinearDetect/}.}

To provide a fairer basis for covariance comparison, we augment each change-point-detection-only competitor with a post-segmentation covariance refit: after the detector produces its change-point estimates, we refit one covariance matrix per estimated segment.
This two-step benchmark remains conditional on the estimated segmentation and is therefore not theoretically equivalent to our joint GFlsL estimator, but it permits a direct comparison of covariance estimation accuracy across all methods.
We consider a representative set of well-established refits spanning distinct covariance structures: the positive-definite \(\ell_1\)-penalized estimator (in short, Xue-Ma-Zou) of~\cite{xue2012positive}, the EC2 estimator of~\cite{liu2014ec2}, and adaptive thresholding of~\cite{cai2011adaptive} combined with the fixed-support positive-definite correction of~\cite{choi2019fspd} (in short, AT+).\footnote{Because the adaptive thresholding of~\cite{cai2011adaptive} does not guarantee positive definiteness, we apply the fixed-support positive-definite correction of~\cite{choi2019fspd} to the resulting estimator.}
For TBFL, we also evaluate its native regime-wise precision estimates (inverted to obtain covariance matrices, denoted by TBFL native).
In addition, for \textbf{Setting~(ii)} we include convex banding~\cite{bien2016convex} to match the banded structure.

The tuning parameters for each post-segmentation refit are set as follows.
For the Xue--Ma--Zou estimator, we follow~\cite{xue2012positive} and set the \(\ell_1\) penalty to \(\lambda = \bar{\sigma}^2 \sqrt{\log(p)/n_{\mathrm{seg}}}\), where \(\bar{\sigma}^2 = p^{-1}\operatorname{tr}(\widehat{\Sigma}_{n_{\mathrm{seg}}})\) is the average diagonal entry of the segment sample covariance \(\widehat{\Sigma}_{n_{\mathrm{seg}}}\), and $n_{\mathrm{seg}}$ is the length of corresponding segment.
For EC2 in~\cite{liu2014ec2}, we apply the penalty \(\lambda = \sqrt{\log(p)/n_{\mathrm{seg}}}\) on the correlation scale and rescale the resulting sparse correlation matrix back to the covariance scale.
For AT+, the entrywise-adaptive threshold is \(\lambda_{ij} = \delta \sqrt{\widehat{\theta}_{ij} \log(p) / n_{\mathrm{seg}}}\) with \(\delta = 2\) following~\cite{cai2011adaptive}, where \(\widehat{\theta}_{ij}\) estimates the variance of the \((i,j)\)-th sample covariance entry.
The fixed-support positive-definite correction of~\cite{choi2019fspd} is then applied with eigenvalue floor \(0.01\) and the data-dependent shrinkage target recommended therein.
For convex banding, we use the penalty form \(\lambda = 2c\sqrt{\log(p)/n_{\mathrm{seg}}}\) with the hierarchical weights in~\cite{bien2016convex}; we calibrate the constant \(c\) via a small pilot simulation, obtaining \(c = 6.0\) for \(p = 10\) and \(c = 5.0\) for \(p = 20\).

For all methods, we report the following two change-point metrics:
\begin{enumerate}[(i)]
   \item the number of detected change points, denoted by $nb$.
  \item the Hausdorff distance $d_h =100\times \max\big(h(\widehat{\Tc}_{\widehat{m}},\Tc^*_{m^*}),h(\Tc^*_{m^*},\widehat{\Tc}_{\widehat{m}})\big)/T$, which serves as a measure of the estimation accuracy of the location of the change points.
        When one of $\widehat{\Tc}_{\widehat{m}}$ and $\Tc^*_{m^*}$ is empty, we set $d_h$ equal to the largest time index of the change points in the other set; when both sets are empty, we set $d_h=0$.
\end{enumerate}
For GFlsL and all post-refit competitor variants, we additionally report:
\begin{enumerate}[(i)]
\setcounter{enumi}{2}
\item the $\text{F}_1$ score defined as $\text{F}_1=2\text{TP}/(2\text{TP}+\text{FN}+\text{FP})$, where $\text{TP}$ is the number of correctly identified non-zero entries, $\text{FN}$ is the number of incorrectly
estimated zero entries, and $\text{FP}$ is the number of incorrectly estimated non-zero entries. The closer to $1$ the $\text{F}_1$ score is, the better.
  \item the support accuracy defined as $ \acc = (\text{TP} + \text{TN}) / (\text{TP} + \text{TN} + \text{FP} + \text{FN}) $, where $ \text{TN} $ is the number of correctly estimated zero entries.
\item the root mean squared error (MSE) $\sqrt{(p^2T)^{-1}\sum^{T}_{t=1}\|\widehat{\Theta}_t-\Theta^*_t\|^2_F}$ for estimation accuracy.
\end{enumerate}

Tables~\ref{tab:res-iid},~\ref{tab:res-banded}, and~\ref{tab:res-factor} report the above five metrics that are averaged over 100 independent replications for \textbf{Settings~(i)}, \textbf{(ii)} and \textbf{(iii)}, respectively.
Each table contains the non-adaptive and adaptive GFlsL estimators, as well as the detector-plus-refit competitor pipelines, across all five selection criteria in Section~\ref{subsec:tuning-para}.
Each row corresponds either to a GFlsL estimator under a specific criterion or to a complete detector-plus-refit pipeline.
For BSOP, WBSIP, DCDP, and TBFL, the reported $nb$ and $d_h$ depend only on the detector, so these two metrics are shown once per detector and shared across its post-refit variants.
The selection criterion ``\textbf{optimal}'' introduced in Subsection~\ref{subsec:tuning-para} refers to the set of tuning parameters that sequentially (i) minimize the Hausdorff distance, (ii) maximize the F$_1$ score among those achieving the minimal Hausdorff distance, and (iii) minimize the error within the previously selected subset.
In these tables, the best results among the displayed methods are highlighted in bold.
We note that the ``optimal'' and ``lossval'' criteria, as well as the WBSIP and DCDP algorithms, utilize information beyond the training data and are therefore not applicable in practice.
To highlight fully data-driven performance, we mark with boxes the best results obtained using only training data, namely the best outcomes among non-adaptive and adaptive GFlsL selected by BIC, HBIC, or HBICG, together with those of BSOP and TBFL.

\begin{table}[htb!]
\centering
\caption{Change-point recovery, support recovery, and estimation accuracy for \textbf{Setting~(i)}, based on 100 replications.
\textbf{Bold} marks the best overall result; \fbox{boxed} marks the best among fully data-driven methods (training data only).
Abbreviations: NA = Non-adaptive, AD = Adaptive, XMZ = Xue--Ma--Zou, AT+ = adaptive thresholding with FSPD correction.\label{tab:res-iid}}
{\setlength{\tabcolsep}{2pt}
\resizebox{\textwidth}{!}{%
\begin{tabular}{ll|ccccc|ccccc|ccccc|ccccc|ccccc|ccccc}\hline\hline
Methods & Criteria/ & \multicolumn{5}{c|}{$(0,10)$} & \multicolumn{5}{c|}{$(1,10)$} & \multicolumn{5}{c|}{$(3,10)$} & \multicolumn{5}{c|}{$(0,20)$} & \multicolumn{5}{c|}{$(1,20)$} & \multicolumn{5}{c}{$(3,20)$} \\
& Refit & $nb$ & $d_h$ & $\text{F}_1$ & $\acc$ & MSE & $nb$ & $d_h$ & $\text{F}_1$ & $\acc$ & MSE & $nb$ & $d_h$ & $\text{F}_1$ & $\acc$ & MSE & $nb$ & $d_h$ & $\text{F}_1$ & $\acc$ & MSE & $nb$ & $d_h$ & $\text{F}_1$ & $\acc$ & MSE & $nb$ & $d_h$ & $\text{F}_1$ & $\acc$ & MSE \\ \hline
\multirow{5}{*}{\shortstack{NA\\GFlsL}} & optimal & 0.00 & \textbf{0.00} & \textbf{0.87} & 0.93 & 0.23 & 2.80 & 24.58 & 0.67 & 0.78 & 0.39 & 5.65 & 60.17 & 0.56 & 0.64 & 0.46 & 0.00 & \textbf{0.00} & \textbf{0.79} & \textbf{0.91} & 0.32 & 10.87 & 36.39 & 0.49 & 0.75 & 0.46 & 31.68 & 44.83 & 0.44 & 0.58 & 0.49 \\
 & lossval & 0.03 & 1.72 & 0.85 & 0.91 & \textbf{0.19} & 3.64 & 34.98 & 0.66 & 0.73 & 0.33 & 4.91 & 60.59 & 0.56 & 0.63 & 0.45 & 5.70 & 40.41 & 0.73 & 0.85 & 0.32 & 23.79 & 64.00 & 0.52 & 0.64 & 0.42 & 32.10 & 45.46 & 0.45 & 0.57 & 0.49 \\
 & BIC & 0.00 & \fbox{\textbf{0.00}} & 0.53 & 0.82 & 0.49 & 0.00 & 92.43 & 0.53 & 0.82 & 0.50 & 0.01 & 151.46 & 0.53 & 0.82 & 0.52 & 0.00 & \fbox{\textbf{0.00}} & 0.34 & 0.81 & 0.52 & 0.00 & 100.69 & 0.34 & 0.81 & 0.52 & 0.00 & 151.10 & 0.34 & 0.81 & 0.52 \\
 & HBIC & 0.00 & \fbox{\textbf{0.00}} & 0.85 & 0.91 & \fbox{\textbf{0.19}} & 0.03 & 89.04 & 0.74 & 0.83 & 0.37 & 0.00 & 151.96 & 0.61 & 0.76 & 0.48 & 0.00 & \fbox{\textbf{0.00}} & \fbox{\textbf{0.79}} & \fbox{\textbf{0.91}} & 0.32 & 0.00 & 100.69 & 0.65 & 0.85 & 0.44 & 0.00 & 151.10 & 0.49 & 0.80 & 0.50 \\
 & HBICG & 0.00 & \fbox{\textbf{0.00}} & 0.85 & 0.91 & \fbox{\textbf{0.19}} & 0.00 & 92.43 & 0.70 & 0.84 & 0.40 & 0.00 & 151.96 & 0.54 & 0.82 & 0.51 & 0.00 & \fbox{\textbf{0.00}} & 0.68 & 0.89 & 0.37 & 0.00 & 100.69 & 0.35 & 0.81 & 0.51 & 0.00 & 151.10 & 0.35 & 0.81 & 0.52 \\
\hline
\multirow{5}{*}{\shortstack{AD\\GFlsL}} & optimal & 0.02 & 0.68 & 0.86 & 0.93 & 0.23 & 1.02 & \textbf{11.59} & \textbf{0.82} & \textbf{0.91} & 0.32 & 2.12 & 56.92 & \textbf{0.72} & 0.85 & 0.42 & 0.11 & 6.26 & 0.70 & 0.89 & 0.34 & 1.13 & \textbf{10.71} & 0.67 & \textbf{0.87} & 0.39 & 4.18 & \textbf{26.39} & \textbf{0.57} & \textbf{0.82} & 0.47 \\
 & lossval & 0.88 & 56.03 & \textbf{0.87} & 0.93 & 0.20 & 1.60 & 29.63 & \textbf{0.82} & 0.90 & \textbf{0.28} & 2.02 & 59.02 & 0.71 & 0.85 & 0.41 & 1.18 & 84.04 & 0.76 & 0.89 & 0.29 & 2.32 & 48.10 & \textbf{0.68} & 0.86 & \textbf{0.36} & 3.55 & 37.30 & \textbf{0.57} & \textbf{0.82} & \textbf{0.45} \\
 & BIC & 0.40 & 21.48 & 0.53 & 0.82 & 0.49 & 1.13 & 32.22 & 0.53 & 0.82 & 0.50 & 1.59 & 72.17 & 0.53 & 0.82 & 0.51 & 1.28 & 60.64 & 0.34 & 0.81 & 0.52 & 2.45 & 52.14 & 0.34 & 0.81 & 0.52 & 3.77 & 43.74 & 0.34 & 0.81 & 0.54 \\
 & HBIC & 1.67 & 35.91 & 0.86 & 0.92 & 0.25 & 1.25 & \fbox{26.09} & \fbox{\textbf{0.82}} & \fbox{\textbf{0.91}} & 0.30 & 1.73 & 70.18 & \fbox{0.71} & 0.85 & 0.42 & 2.79 & 81.64 & 0.74 & 0.89 & 0.35 & 4.29 & 53.52 & 0.67 & \fbox{0.86} & 0.42 & 6.16 & \fbox{36.38} & \fbox{\textbf{0.57}} & \fbox{\textbf{0.82}} & 0.49 \\
 & HBICG & 0.05 & 2.15 & \fbox{\textbf{0.87}} & \fbox{\textbf{0.94}} & 0.22 & 0.23 & 68.34 & 0.75 & 0.88 & 0.37 & 0.06 & 148.99 & 0.56 & 0.82 & 0.50 & 0.08 & 6.26 & 0.67 & 0.89 & 0.35 & 0.34 & 70.11 & 0.36 & 0.81 & 0.51 & 0.20 & 139.22 & 0.35 & 0.81 & 0.53 \\
\hline
\multirow{3}{*}{BSOP} & EC2 & \multirow{3}{*}{1.92} & \multirow{3}{*}{131.20} & 0.76 & 0.85 & 0.28 & \multirow{3}{*}{1.93} & \multirow{3}{*}{72.28} & 0.74 & 0.84 & 0.32 & \multirow{3}{*}{1.90} & \multirow{3}{*}{\fbox{\textbf{42.94}}} & 0.68 & 0.80 & \fbox{\textbf{0.40}} & \multirow{3}{*}{0.00} & \multirow{3}{*}{\fbox{\textbf{0.00}}} & 0.77 & 0.89 & \fbox{\textbf{0.25}} & \multirow{3}{*}{0.00} & \multirow{3}{*}{100.69} & 0.62 & 0.81 & 0.42 & \multirow{3}{*}{0.00} & \multirow{3}{*}{151.10} & 0.47 & 0.75 & 0.50 \\
 & XMZ &  &  & 0.76 & 0.85 & 0.29 &  &  & 0.74 & 0.84 & 0.33 &  &  & 0.68 & 0.80 & 0.41 &  &  & 0.77 & 0.89 & \fbox{\textbf{0.25}} &  &  & 0.62 & 0.81 & 0.42 &  &  & 0.48 & 0.75 & 0.50 \\
 & AT+ &  &  & 0.80 & 0.91 & 0.40 &  &  & 0.76 & 0.89 & 0.43 &  &  & 0.68 & \fbox{\textbf{0.86}} & 0.48 &  &  & 0.73 & 0.90 & 0.39 &  &  & 0.52 & 0.84 & 0.49 &  &  & 0.38 & 0.81 & 0.52 \\
\hline
\multirow{3}{*}{WBSIP} & EC2 & \multirow{3}{*}{6.40} & \multirow{3}{*}{157.39} & 0.72 & 0.83 & 0.40 & \multirow{3}{*}{5.71} & \multirow{3}{*}{102.21} & 0.71 & 0.83 & 0.41 & \multirow{3}{*}{5.12} & \multirow{3}{*}{45.06} & 0.67 & 0.80 & 0.45 & \multirow{3}{*}{0.00} & \multirow{3}{*}{\textbf{0.00}} & 0.77 & 0.89 & \textbf{0.25} & \multirow{3}{*}{0.00} & \multirow{3}{*}{100.69} & 0.62 & 0.81 & 0.42 & \multirow{3}{*}{0.00} & \multirow{3}{*}{151.10} & 0.47 & 0.75 & 0.50 \\
 & XMZ &  &  & 0.71 & 0.83 & 0.42 &  &  & 0.71 & 0.83 & 0.43 &  &  & 0.67 & 0.80 & 0.46 &  &  & 0.77 & 0.89 & \textbf{0.25} &  &  & 0.62 & 0.81 & 0.42 &  &  & 0.48 & 0.75 & 0.50 \\
 & AT+ &  &  & 0.69 & 0.87 & 0.50 &  &  & 0.68 & 0.86 & 0.50 &  &  & 0.64 & 0.85 & 0.52 &  &  & 0.73 & 0.90 & 0.39 &  &  & 0.52 & 0.84 & 0.49 &  &  & 0.38 & 0.81 & 0.52 \\
\hline
\multirow{3}{*}{DCDP} & EC2 & \multirow{3}{*}{46.00} & \multirow{3}{*}{196.00} & 0.54 & 0.71 & 0.73 & \multirow{3}{*}{46.00} & \multirow{3}{*}{137.46} & 0.53 & 0.71 & 0.74 & \multirow{3}{*}{46.00} & \multirow{3}{*}{59.02} & 0.53 & 0.71 & 0.75 & \multirow{3}{*}{46.00} & \multirow{3}{*}{196.00} & 0.39 & 0.74 & 0.80 & \multirow{3}{*}{46.00} & \multirow{3}{*}{135.12} & 0.38 & 0.74 & 0.79 & \multirow{3}{*}{46.00} & \multirow{3}{*}{61.95} & 0.39 & 0.74 & 0.80 \\
 & XMZ &  &  & 0.54 & 0.75 & 0.79 &  &  & 0.53 & 0.75 & 0.79 &  &  & 0.54 & 0.75 & 0.80 &  &  & 0.39 & 0.75 & 0.92 &  &  & 0.39 & 0.75 & 0.92 &  &  & 0.39 & 0.75 & 0.93 \\
 & AT+ &  &  & 0.52 & 0.79 & 0.75 &  &  & 0.52 & 0.79 & 0.75 &  &  & 0.52 & 0.79 & 0.76 &  &  & 0.36 & 0.79 & 0.82 &  &  & 0.35 & 0.79 & 0.81 &  &  & 0.36 & 0.79 & 0.82 \\
\hline
\multirow{4}{*}{TBFL} & EC2 & \multirow{4}{*}{1.12} & \multirow{4}{*}{89.36} & 0.78 & 0.86 & 0.23 & \multirow{4}{*}{1.13} & \multirow{4}{*}{30.62} & 0.75 & 0.84 & \fbox{0.29} & \multirow{4}{*}{1.17} & \multirow{4}{*}{79.12} & 0.67 & 0.79 & 0.41 & \multirow{4}{*}{1.11} & \multirow{4}{*}{75.99} & 0.72 & 0.87 & 0.33 & \multirow{4}{*}{1.02} & \multirow{4}{*}{\fbox{35.05}} & \fbox{\textbf{0.68}} & 0.85 & \fbox{\textbf{0.36}} & \multirow{4}{*}{0.94} & \multirow{4}{*}{95.21} & 0.53 & 0.79 & \fbox{0.47} \\
 & XMZ &  &  & 0.78 & 0.86 & 0.24 &  &  & 0.76 & 0.85 & 0.30 &  &  & 0.67 & 0.79 & 0.41 &  &  & 0.71 & 0.86 & 0.33 &  &  & \fbox{\textbf{0.68}} & 0.85 & 0.37 &  &  & 0.53 & 0.79 & \fbox{0.47} \\
 & AT+ &  &  & 0.83 & 0.92 & 0.35 &  &  & 0.80 & \fbox{\textbf{0.91}} & 0.39 &  &  & 0.67 & 0.85 & 0.48 &  &  & 0.59 & 0.86 & 0.47 &  &  & 0.55 & 0.85 & 0.48 &  &  & 0.41 & \fbox{\textbf{0.82}} & 0.52 \\
 & Native &  &  & 0.48 & 0.39 & 7.73 &  &  & 0.49 & 0.49 & 5.56 &  &  & 0.52 & 0.65 & 4.01 &  &  & 0.39 & 0.38 & 6.83 &  &  & 0.41 & 0.55 & 3.04 &  &  & 0.39 & 0.74 & 1.15 \\
\hline\hline
\end{tabular}}}
\end{table}

\begin{table}[htb!]
\centering
\caption{Change-point recovery, support recovery, and estimation accuracy for \textbf{Setting~(ii)}, based on 100 replications.
\textbf{Bold} marks the best overall result; \fbox{boxed} marks the best among fully data-driven methods (training data only).
Abbreviations: NA = Non-adaptive, AD = Adaptive, XMZ = Xue--Ma--Zou, AT+ = adaptive thresholding with FSPD correction, CB = convex banding.\label{tab:res-banded}}
{\setlength{\tabcolsep}{2pt}
\resizebox{\textwidth}{!}{%
\begin{tabular}{ll|ccccc|ccccc|ccccc|ccccc|ccccc|ccccc}\hline\hline
Methods & Criteria/ & \multicolumn{5}{c|}{$(0,10)$} & \multicolumn{5}{c|}{$(1,10)$} & \multicolumn{5}{c|}{$(3,10)$} & \multicolumn{5}{c|}{$(0,20)$} & \multicolumn{5}{c|}{$(1,20)$} & \multicolumn{5}{c}{$(3,20)$} \\
& Refit & $nb$ & $d_h$ & $\text{F}_1$ & $\acc$ & MSE & $nb$ & $d_h$ & $\text{F}_1$ & $\acc$ & MSE & $nb$ & $d_h$ & $\text{F}_1$ & $\acc$ & MSE & $nb$ & $d_h$ & $\text{F}_1$ & $\acc$ & MSE & $nb$ & $d_h$ & $\text{F}_1$ & $\acc$ & MSE & $nb$ & $d_h$ & $\text{F}_1$ & $\acc$ & MSE \\ \hline
\multirow{5}{*}{\shortstack{NA\\GFlsL}} & optimal & 0.40 & \textbf{12.07} & 0.96 & 0.97 & 1.43 & 4.89 & 21.73 & 0.76 & 0.85 & 2.19 & 12.34 & 19.56 & 0.60 & 0.65 & 2.30 & 0.00 & \textbf{0.00} & 0.98 & 0.99 & 1.23 & 4.45 & \textbf{7.69} & 0.58 & 0.87 & 1.79 & 23.24 & \textbf{21.60} & 0.39 & 0.58 & 1.73 \\
 & lossval & 1.84 & 75.38 & 0.56 & 0.53 & \textbf{0.72} & 11.98 & 92.81 & 0.47 & 0.37 & 1.20 & 25.74 & 49.72 & 0.46 & 0.35 & 1.66 & 0.86 & 34.54 & 0.42 & 0.59 & 0.59 & 17.78 & 95.15 & 0.28 & 0.25 & 1.00 & 41.10 & 53.80 & 0.27 & 0.23 & 1.34 \\
 & BIC & 1.46 & 16.31 & 0.76 & 0.88 & 2.42 & 9.72 & 42.69 & 0.66 & 0.82 & 2.55 & 12.32 & 72.37 & 0.57 & 0.77 & 2.66 & 0.00 & \fbox{\textbf{0.00}} & 0.52 & 0.91 & 1.90 & 0.41 & 90.64 & 0.51 & 0.90 & 1.96 & 0.37 & 148.45 & 0.51 & 0.90 & 1.99 \\
 & HBIC & 49.79 & 192.13 & 0.45 & 0.31 & 1.92 & 50.86 & 129.01 & 0.45 & 0.31 & 1.95 & 54.43 & 59.87 & 0.45 & 0.31 & 2.04 & 51.99 & 136.76 & 0.32 & 0.32 & 1.27 & 70.87 & 124.66 & 0.27 & 0.21 & 1.52 & 74.56 & 64.13 & 0.26 & 0.20 & 1.64 \\
 & HBICG & 0.31 & \fbox{12.27} & 0.78 & 0.80 & 1.03 & 3.15 & 40.59 & 0.71 & 0.78 & 1.86 & 4.80 & 59.57 & 0.58 & 0.66 & 2.28 & 0.00 & \fbox{\textbf{0.00}} & 0.94 & 0.98 & 0.86 & 0.32 & 90.98 & 0.59 & 0.91 & 1.86 & 0.11 & 148.19 & 0.54 & 0.90 & 1.96 \\
\hline
\multirow{5}{*}{\shortstack{AD\\GFlsL}} & optimal & 1.07 & 73.34 & \textbf{0.98} & \textbf{0.99} & 1.01 & 2.67 & 39.46 & 0.95 & 0.97 & 1.42 & 4.97 & \textbf{17.61} & 0.88 & 0.93 & 1.87 & 1.31 & 94.54 & 0.99 & \textbf{1.00} & 0.92 & 3.01 & 32.84 & 0.93 & 0.98 & 1.17 & 5.44 & 22.50 & 0.84 & 0.96 & 1.50 \\
 & lossval & 1.16 & 86.02 & 0.85 & 0.90 & 0.78 & 2.74 & 42.86 & 0.83 & 0.88 & \textbf{1.18} & 4.83 & 22.74 & 0.76 & 0.83 & \textbf{1.65} & 1.48 & 100.12 & 0.81 & 0.93 & 0.64 & 3.25 & 37.35 & 0.78 & 0.92 & 0.97 & 4.46 & 29.84 & 0.71 & 0.89 & \textbf{1.33} \\
 & BIC & 18.62 & 128.68 & 0.75 & 0.81 & 2.87 & 27.43 & 106.59 & 0.75 & 0.82 & 3.85 & 20.31 & 44.33 & 0.77 & 0.85 & 3.27 & 10.41 & 124.45 & 0.88 & 0.96 & 1.48 & 6.68 & 46.87 & 0.88 & 0.96 & 1.43 & 3.91 & 39.69 & 0.81 & 0.95 & 1.60 \\
 & HBIC & 46.99 & 192.00 & 0.48 & 0.39 & 6.82 & 47.68 & 128.96 & 0.48 & 0.39 & 6.87 & 51.21 & 59.74 & 0.47 & 0.38 & 6.79 & 68.05 & 192.17 & 0.29 & 0.31 & 7.07 & 71.80 & 129.64 & 0.29 & 0.31 & 7.34 & 73.20 & 64.26 & 0.29 & 0.31 & 7.35 \\
 & HBICG & 1.01 & 73.35 & 0.91 & 0.94 & \fbox{0.88} & 2.52 & 39.52 & 0.92 & 0.95 & 1.30 & 3.51 & \fbox{33.44} & 0.87 & 0.93 & \fbox{1.88} & 1.31 & 94.54 & 0.94 & 0.98 & 0.76 & 2.87 & 33.12 & 0.92 & 0.98 & 1.19 & 3.31 & \fbox{39.02} & 0.82 & 0.95 & 1.55 \\
\hline
\multirow{4}{*}{BSOP} & EC2 & \multirow{4}{*}{1.89} & \multirow{4}{*}{127.61} & 0.85 & 0.90 & 1.34 & \multirow{4}{*}{1.85} & \multirow{4}{*}{65.87} & 0.84 & 0.89 & 1.50 & \multirow{4}{*}{1.90} & \multirow{4}{*}{43.78} & 0.80 & 0.87 & 1.94 & \multirow{4}{*}{0.00} & \multirow{4}{*}{\fbox{\textbf{0.00}}} & 0.80 & 0.93 & 0.62 & \multirow{4}{*}{0.00} & \multirow{4}{*}{96.88} & 0.75 & 0.91 & 1.39 & \multirow{4}{*}{0.00} & \multirow{4}{*}{151.06} & 0.70 & 0.90 & 1.75 \\
 & XMZ &  &  & 0.86 & 0.90 & 1.37 &  &  & 0.85 & 0.90 & 1.53 &  &  & 0.80 & 0.88 & 1.96 &  &  & 0.81 & 0.93 & 0.63 &  &  & 0.75 & 0.91 & 1.40 &  &  & 0.71 & 0.90 & 1.75 \\
 & AT+ &  &  & 0.94 & 0.97 & 2.14 &  &  & 0.92 & 0.96 & 2.20 &  &  & 0.83 & 0.92 & 2.39 &  &  & \fbox{\textbf{1.00}} & \fbox{\textbf{1.00}} & 1.09 &  &  & 0.86 & 0.97 & 1.59 &  &  & 0.75 & 0.94 & 1.86 \\
 & CB &  &  & 0.97 & 0.98 & 1.26 &  &  & 0.98 & 0.99 & 1.44 &  &  & 0.98 & \fbox{\textbf{0.99}} & 1.93 &  &  & 0.99 & \fbox{\textbf{1.00}} & \fbox{\textbf{0.53}} &  &  & \fbox{\textbf{0.99}} & \fbox{\textbf{1.00}} & 1.37 &  &  & \fbox{\textbf{0.99}} & \fbox{\textbf{1.00}} & 1.73 \\
\hline
\multirow{4}{*}{WBSIP} & EC2 & \multirow{4}{*}{12.83} & \multirow{4}{*}{183.41} & 0.74 & 0.84 & 2.55 & \multirow{4}{*}{12.58} & \multirow{4}{*}{119.51} & 0.74 & 0.84 & 2.52 & \multirow{4}{*}{12.44} & \multirow{4}{*}{51.42} & 0.73 & 0.83 & 2.50 & \multirow{4}{*}{0.00} & \multirow{4}{*}{\textbf{0.00}} & 0.80 & 0.93 & 0.62 & \multirow{4}{*}{0.00} & \multirow{4}{*}{96.88} & 0.75 & 0.91 & 1.39 & \multirow{4}{*}{0.00} & \multirow{4}{*}{151.06} & 0.70 & 0.90 & 1.75 \\
 & XMZ &  &  & 0.72 & 0.83 & 2.62 &  &  & 0.72 & 0.84 & 2.58 &  &  & 0.72 & 0.83 & 2.56 &  &  & 0.81 & 0.93 & 0.63 &  &  & 0.75 & 0.91 & 1.40 &  &  & 0.71 & 0.90 & 1.75 \\
 & AT+ &  &  & 0.59 & 0.83 & 2.97 &  &  & 0.60 & 0.84 & 2.94 &  &  & 0.59 & 0.84 & 2.90 &  &  & \textbf{1.00} & \textbf{1.00} & 1.09 &  &  & 0.86 & 0.97 & 1.59 &  &  & 0.75 & 0.94 & 1.86 \\
 & CB &  &  & 0.91 & 0.95 & 2.54 &  &  & 0.92 & 0.96 & 2.51 &  &  & 0.92 & 0.96 & 2.51 &  &  & 0.99 & \textbf{1.00} & \textbf{0.53} &  &  & \textbf{0.99} & \textbf{1.00} & 1.37 &  &  & \textbf{0.99} & \textbf{1.00} & 1.73 \\
\hline
\multirow{4}{*}{DCDP} & EC2 & \multirow{4}{*}{46.00} & \multirow{4}{*}{196.00} & 0.56 & 0.72 & 3.52 & \multirow{4}{*}{46.00} & \multirow{4}{*}{131.36} & 0.56 & 0.72 & 3.51 & \multirow{4}{*}{46.00} & \multirow{4}{*}{61.81} & 0.55 & 0.72 & 3.44 & \multirow{4}{*}{46.00} & \multirow{4}{*}{196.00} & 0.45 & 0.82 & 2.58 & \multirow{4}{*}{46.00} & \multirow{4}{*}{132.06} & 0.45 & 0.82 & 2.59 & \multirow{4}{*}{46.00} & \multirow{4}{*}{66.35} & 0.45 & 0.82 & 2.59 \\
 & XMZ &  &  & 0.57 & 0.77 & 3.71 &  &  & 0.57 & 0.77 & 3.68 &  &  & 0.57 & 0.76 & 3.62 &  &  & 0.46 & 0.83 & 2.88 &  &  & 0.46 & 0.83 & 2.91 &  &  & 0.46 & 0.83 & 2.90 \\
 & AT+ &  &  & 0.53 & 0.80 & 3.63 &  &  & 0.53 & 0.80 & 3.61 &  &  & 0.53 & 0.80 & 3.55 &  &  & 0.47 & 0.88 & 2.63 &  &  & 0.47 & 0.88 & 2.66 &  &  & 0.47 & 0.88 & 2.64 \\
 & CB &  &  & 0.65 & 0.85 & 3.56 &  &  & 0.65 & 0.85 & 3.55 &  &  & 0.64 & 0.85 & 3.48 &  &  & 0.66 & 0.92 & 2.56 &  &  & 0.67 & 0.92 & 2.58 &  &  & 0.67 & 0.92 & 2.57 \\
\hline
\multirow{5}{*}{TBFL} & EC2 & \multirow{5}{*}{1.17} & \multirow{5}{*}{107.93} & 0.86 & 0.90 & 1.12 & \multirow{5}{*}{1.09} & \multirow{5}{*}{\fbox{\textbf{14.17}}} & 0.86 & 0.91 & 1.22 & \multirow{5}{*}{1.30} & \multirow{5}{*}{71.29} & 0.81 & 0.88 & 1.96 & \multirow{5}{*}{1.17} & \multirow{5}{*}{97.92} & 0.80 & 0.93 & 0.90 & \multirow{5}{*}{1.02} & \multirow{5}{*}{\fbox{16.94}} & 0.80 & 0.93 & 0.95 & \multirow{5}{*}{1.04} & \multirow{5}{*}{79.06} & 0.73 & 0.91 & 1.48 \\
 & XMZ &  &  & 0.87 & 0.91 & 1.15 &  &  & 0.86 & 0.91 & 1.24 &  &  & 0.82 & 0.89 & 1.97 &  &  & 0.79 & 0.92 & 0.93 &  &  & 0.80 & 0.93 & 0.98 &  &  & 0.73 & 0.91 & 1.49 \\
 & AT+ &  &  & 0.97 & 0.98 & 1.87 &  &  & 0.96 & 0.98 & 1.89 &  &  & 0.84 & 0.92 & 2.35 &  &  & 0.94 & 0.98 & 1.51 &  &  & 0.94 & 0.98 & 1.52 &  &  & 0.80 & 0.95 & 1.78 \\
 & CB &  &  & \fbox{\textbf{0.98}} & \fbox{\textbf{0.99}} & 1.07 &  &  & \fbox{\textbf{0.99}} & \fbox{\textbf{1.00}} & \fbox{\textbf{1.18}} &  &  & \fbox{\textbf{0.99}} & \fbox{\textbf{0.99}} & 1.96 &  &  & 0.99 & \fbox{\textbf{1.00}} & 0.76 &  &  & \fbox{\textbf{0.99}} & \fbox{\textbf{1.00}} & \fbox{\textbf{0.83}} &  &  & \fbox{\textbf{0.99}} & \fbox{\textbf{1.00}} & \fbox{1.43} \\
 & Native &  &  & 0.44 & 0.28 & 4.32 &  &  & 0.45 & 0.35 & 5.12 &  &  & 0.50 & 0.51 & 6.85 &  &  & 0.25 & 0.15 & 7.22 &  &  & 0.28 & 0.31 & 4.38 &  &  & 0.45 & 0.68 & 2.93 \\
\hline\hline
\end{tabular}}}
\end{table}

\begin{table}[htb!]
\centering
\caption{Change-point recovery, support recovery, and estimation accuracy for \textbf{Setting~(iii)}, based on 100 replications.
\textbf{Bold} marks the best overall result; \fbox{boxed} marks the best among fully data-driven methods (training data only).
Abbreviations: NA = Non-adaptive, AD = Adaptive, XMZ = Xue--Ma--Zou, AT+ = adaptive thresholding with FSPD correction.\label{tab:res-factor}}
{\setlength{\tabcolsep}{2pt}
\resizebox{\textwidth}{!}{%
\begin{tabular}{ll|ccccc|ccccc|ccccc|ccccc|ccccc|ccccc}\hline\hline
Methods & Criteria/ & \multicolumn{5}{c|}{$(0,10)$} & \multicolumn{5}{c|}{$(1,10)$} & \multicolumn{5}{c|}{$(3,10)$} & \multicolumn{5}{c|}{$(0,20)$} & \multicolumn{5}{c|}{$(1,20)$} & \multicolumn{5}{c}{$(3,20)$} \\
& Refit & $nb$ & $d_h$ & $\text{F}_1$ & $\acc$ & MSE & $nb$ & $d_h$ & $\text{F}_1$ & $\acc$ & MSE & $nb$ & $d_h$ & $\text{F}_1$ & $\acc$ & MSE & $nb$ & $d_h$ & $\text{F}_1$ & $\acc$ & MSE & $nb$ & $d_h$ & $\text{F}_1$ & $\acc$ & MSE & $nb$ & $d_h$ & $\text{F}_1$ & $\acc$ & MSE \\ \hline
\multirow{5}{*}{\shortstack{NA\\GFlsL}} & optimal & 0.00 & \textbf{0.00} & \textbf{0.97} & \textbf{0.98} & 0.21 & 1.94 & 8.83 & 0.75 & 0.88 & 0.53 & 12.83 & \textbf{37.77} & 0.58 & 0.75 & 0.66 & 0.00 & \textbf{0.00} & 0.94 & 0.98 & 0.28 & 2.18 & 10.73 & 0.69 & 0.88 & 0.53 & 14.22 & 37.82 & 0.51 & 0.77 & 0.67 \\
 & lossval & 0.04 & 1.39 & 0.96 & \textbf{0.98} & \textbf{0.19} & 3.44 & 26.85 & 0.71 & 0.82 & 0.35 & 5.08 & 44.59 & 0.57 & 0.68 & 0.57 & 0.10 & 6.44 & 0.93 & 0.98 & 0.28 & 3.60 & 22.25 & 0.74 & 0.88 & 0.44 & 6.67 & 41.93 & 0.58 & 0.78 & 0.62 \\
 & BIC & 0.00 & \fbox{\textbf{0.00}} & 0.63 & 0.87 & 0.58 & 0.00 & 101.17 & 0.64 & 0.88 & 0.68 & 0.00 & 153.10 & 0.63 & 0.88 & 0.75 & 0.00 & \fbox{\textbf{0.00}} & 0.45 & 0.87 & 0.63 & 0.00 & 101.90 & 0.44 & 0.87 & 0.69 & 0.00 & 150.14 & 0.43 & 0.86 & 0.73 \\
 & HBIC & 0.00 & \fbox{\textbf{0.00}} & \fbox{0.96} & \fbox{\textbf{0.98}} & \fbox{\textbf{0.19}} & 0.95 & 54.72 & 0.78 & 0.88 & 0.44 & 0.71 & 131.51 & 0.66 & 0.80 & 0.66 & 0.00 & \fbox{\textbf{0.00}} & 0.94 & 0.98 & 0.28 & 0.48 & 74.47 & 0.76 & 0.91 & 0.52 & 0.13 & 145.11 & 0.58 & 0.85 & 0.70 \\
 & HBICG & 0.00 & \fbox{\textbf{0.00}} & \fbox{0.96} & \fbox{\textbf{0.98}} & \fbox{\textbf{0.19}} & 0.00 & 101.17 & 0.80 & 0.90 & 0.55 & 0.00 & 153.10 & 0.65 & 0.86 & 0.72 & 0.00 & \fbox{\textbf{0.00}} & 0.94 & 0.98 & 0.28 & 0.00 & 101.90 & 0.69 & 0.90 & 0.59 & 0.00 & 150.14 & 0.46 & 0.86 & 0.72 \\
\hline
\multirow{5}{*}{\shortstack{AD\\GFlsL}} & optimal & 0.23 & 14.62 & 0.95 & \textbf{0.98} & 0.22 & 1.04 & \textbf{4.80} & \textbf{0.94} & \textbf{0.97} & 0.27 & 2.60 & 43.03 & \textbf{0.85} & \textbf{0.94} & 0.47 & 0.61 & 34.97 & 0.82 & 0.95 & 0.34 & 1.14 & \textbf{2.03} & 0.85 & 0.95 & 0.33 & 2.73 & \textbf{34.26} & \textbf{0.77} & \textbf{0.93} & 0.47 \\
 & lossval & 0.59 & 46.09 & 0.95 & \textbf{0.98} & 0.20 & 1.32 & 10.36 & 0.93 & \textbf{0.97} & \textbf{0.25} & 2.61 & 44.46 & \textbf{0.85} & 0.93 & \textbf{0.45} & 1.25 & 86.54 & 0.91 & 0.97 & 0.25 & 1.62 & 10.66 & \textbf{0.87} & \textbf{0.96} & \textbf{0.31} & 2.85 & 34.72 & \textbf{0.77} & \textbf{0.93} & \textbf{0.46} \\
 & BIC & 0.71 & 42.25 & 0.69 & 0.89 & 0.53 & 1.09 & 23.54 & 0.68 & 0.89 & 0.57 & 1.59 & 71.70 & 0.63 & 0.88 & 0.69 & 0.92 & 56.21 & 0.51 & 0.88 & 0.60 & 1.15 & 13.85 & 0.45 & 0.87 & 0.65 & 1.91 & 59.95 & 0.43 & 0.86 & 0.70 \\
 & HBIC & 2.49 & 71.27 & 0.94 & 0.97 & 0.31 & 1.37 & \fbox{14.73} & \fbox{0.93} & \fbox{\textbf{0.97}} & \fbox{0.27} & 2.60 & 46.70 & \fbox{\textbf{0.85}} & \fbox{0.93} & \fbox{\textbf{0.45}} & 3.02 & 95.09 & 0.90 & 0.96 & 0.33 & 1.46 & \fbox{10.77} & \fbox{\textbf{0.87}} & \fbox{\textbf{0.96}} & 0.32 & 2.76 & \fbox{35.80} & \fbox{\textbf{0.77}} & \fbox{\textbf{0.93}} & \fbox{0.47} \\
 & HBICG & 0.22 & 14.62 & 0.94 & 0.97 & 0.23 & 0.79 & 19.51 & 0.91 & 0.96 & 0.29 & 0.73 & 105.19 & 0.73 & 0.90 & 0.62 & 0.56 & 37.25 & 0.79 & 0.94 & 0.36 & 0.89 & 14.55 & 0.80 & 0.94 & 0.37 & 0.93 & 91.64 & 0.50 & 0.87 & 0.67 \\
\hline
\multirow{3}{*}{BSOP} & EC2 & \multirow{3}{*}{1.83} & \multirow{3}{*}{123.84} & 0.78 & 0.88 & 0.30 & \multirow{3}{*}{1.82} & \multirow{3}{*}{64.53} & 0.77 & 0.88 & 0.35 & \multirow{3}{*}{1.87} & \multirow{3}{*}{\fbox{41.79}} & 0.71 & 0.84 & 0.52 & \multirow{3}{*}{0.00} & \multirow{3}{*}{\fbox{\textbf{0.00}}} & 0.83 & 0.93 & 0.17 & \multirow{3}{*}{0.00} & \multirow{3}{*}{101.90} & 0.67 & 0.84 & 0.52 & \multirow{3}{*}{0.00} & \multirow{3}{*}{150.14} & 0.52 & 0.74 & 0.67 \\
 & XMZ &  &  & 0.84 & 0.92 & 0.28 &  &  & 0.84 & 0.92 & 0.33 &  &  & 0.75 & 0.87 & 0.51 &  &  & 0.88 & 0.96 & \fbox{\textbf{0.14}} &  &  & 0.71 & 0.86 & 0.51 &  &  & 0.55 & 0.75 & 0.66 \\
 & AT+ &  &  & 0.89 & 0.95 & 0.46 &  &  & 0.89 & 0.95 & 0.48 &  &  & 0.80 & 0.92 & 0.61 &  &  & \fbox{\textbf{0.96}} & \fbox{\textbf{0.99}} & 0.33 &  &  & 0.74 & 0.91 & 0.59 &  &  & 0.56 & 0.85 & 0.71 \\
\hline
\multirow{3}{*}{WBSIP} & EC2 & \multirow{3}{*}{8.60} & \multirow{3}{*}{164.72} & 0.71 & 0.84 & 0.50 & \multirow{3}{*}{7.25} & \multirow{3}{*}{101.29} & 0.74 & 0.86 & 0.49 & \multirow{3}{*}{7.08} & \multirow{3}{*}{51.96} & 0.70 & 0.84 & 0.56 & \multirow{3}{*}{0.00} & \multirow{3}{*}{\textbf{0.00}} & 0.83 & 0.93 & 0.17 & \multirow{3}{*}{0.00} & \multirow{3}{*}{101.90} & 0.67 & 0.84 & 0.52 & \multirow{3}{*}{0.00} & \multirow{3}{*}{150.14} & 0.52 & 0.74 & 0.67 \\
 & XMZ &  &  & 0.79 & 0.90 & 0.49 &  &  & 0.81 & 0.91 & 0.48 &  &  & 0.76 & 0.88 & 0.55 &  &  & 0.88 & 0.96 & \textbf{0.14} &  &  & 0.71 & 0.86 & 0.51 &  &  & 0.55 & 0.75 & 0.66 \\
 & AT+ &  &  & 0.74 & 0.90 & 0.62 &  &  & 0.78 & 0.92 & 0.61 &  &  & 0.73 & 0.90 & 0.67 &  &  & \textbf{0.96} & \textbf{0.99} & 0.33 &  &  & 0.74 & 0.91 & 0.59 &  &  & 0.56 & 0.85 & 0.71 \\
\hline
\multirow{3}{*}{DCDP} & EC2 & \multirow{3}{*}{46.00} & \multirow{3}{*}{196.00} & 0.54 & 0.74 & 0.76 & \multirow{3}{*}{46.00} & \multirow{3}{*}{132.15} & 0.54 & 0.75 & 0.76 & \multirow{3}{*}{46.00} & \multirow{3}{*}{62.35} & 0.54 & 0.75 & 0.79 & \multirow{3}{*}{46.00} & \multirow{3}{*}{196.00} & 0.46 & 0.80 & 0.72 & \multirow{3}{*}{46.00} & \multirow{3}{*}{133.76} & 0.46 & 0.80 & 0.74 & \multirow{3}{*}{46.00} & \multirow{3}{*}{61.42} & 0.46 & 0.79 & 0.75 \\
 & XMZ &  &  & 0.66 & 0.84 & 0.78 &  &  & 0.66 & 0.85 & 0.77 &  &  & 0.66 & 0.84 & 0.80 &  &  & 0.58 & 0.85 & 0.71 &  &  & 0.58 & 0.85 & 0.73 &  &  & 0.57 & 0.85 & 0.74 \\
 & AT+ &  &  & 0.59 & 0.84 & 0.79 &  &  & 0.60 & 0.85 & 0.79 &  &  & 0.59 & 0.85 & 0.82 &  &  & 0.44 & 0.84 & 0.74 &  &  & 0.43 & 0.84 & 0.75 &  &  & 0.42 & 0.84 & 0.77 \\
\hline
\multirow{4}{*}{TBFL} & EC2 & \multirow{4}{*}{0.98} & \multirow{4}{*}{91.48} & 0.79 & 0.89 & 0.24 & \multirow{4}{*}{1.12} & \multirow{4}{*}{37.55} & 0.77 & 0.88 & 0.34 & \multirow{4}{*}{1.19} & \multirow{4}{*}{72.90} & 0.67 & 0.80 & 0.57 & \multirow{4}{*}{1.18} & \multirow{4}{*}{91.40} & 0.81 & 0.92 & 0.25 & \multirow{4}{*}{1.00} & \multirow{4}{*}{28.24} & 0.78 & 0.90 & 0.34 & \multirow{4}{*}{1.13} & \multirow{4}{*}{76.50} & 0.64 & 0.83 & 0.56 \\
 & XMZ &  &  & 0.86 & 0.93 & 0.21 &  &  & 0.83 & 0.91 & 0.33 &  &  & 0.71 & 0.83 & 0.57 &  &  & 0.86 & 0.95 & 0.22 &  &  & 0.83 & 0.93 & \fbox{\textbf{0.31}} &  &  & 0.68 & 0.85 & 0.55 \\
 & AT+ &  &  & 0.93 & 0.97 & 0.38 &  &  & 0.90 & 0.96 & 0.46 &  &  & 0.77 & 0.91 & 0.65 &  &  & 0.89 & 0.96 & 0.46 &  &  & 0.85 & 0.95 & 0.51 &  &  & 0.65 & 0.89 & 0.66 \\
 & Native &  &  & 0.74 & 0.80 & 0.49 &  &  & 0.68 & 0.79 & 0.58 &  &  & 0.60 & 0.75 & 0.80 &  &  & 0.55 & 0.68 & 1.95 &  &  & 0.49 & 0.69 & 1.57 &  &  & 0.45 & 0.69 & 1.22 \\
\hline\hline
\end{tabular}}}
\end{table}

We first emphasize that the four metrics $d_h$, $\text{F}_1$, $\acc$, and MSE must be interpreted jointly, and that a reliable estimator should perform well not only across these metrics but also across diverse configurations of the data generating process and \((T, p, m^*)\).
That means, a method that achieves low estimation error on a mis-specified segmentation, or one that performs well only under a specific covariance structure or a narrow range of \(m^*\), is not preferable.
Under this criterion, Tables~\ref{tab:res-iid},~\ref{tab:res-banded}, and~\ref{tab:res-factor} confirm that the proposed GFlsL estimator, especially its adaptive variant, is the most consistently competitive procedure.
Specifically, across the majority of configurations, it attains the best or near-best fully data-driven value in at least three of the four metrics, and remains close to the leading competitor in the fourth.

The contribution of the adaptive weighting is particularly amplified in the cases with breaks.
When no change point is present, i.e., \(m^*=0\), the non-adaptive estimator performs well: it is conservative by construction, frequently selecting \(\widehat{m}=0\) and thereby yielding \(d_h=0\).
Once structural breaks exist, however, the non-adaptive fully data-driven criteria, especially BIC and HBICG, systematically underestimate breaks, yielding large Hausdorff distances and poor support recovery.
The adaptive estimator addresses this deficiency by rescaling the group fused penalty on \(\|\widetilde{\Theta}_t - \widetilde{\Theta}_{t-1}\|_F\) according to the magnitudes of the initial entry-level estimates, thereby preventing entries with large marginal variance from masking genuine change points.
The resulting improvement is substantial and consistent across settings.
In \textbf{Setting~(i)} with \((m^*,p)=(1,10)\), replacing non-adaptive HBIC by its adaptive counterpart reduces \(d_h\) from \(89.04\) to \(26.09\), raises \(\text{F}_1\) from \(0.74\) to \(0.82\) and \(\acc\) from \(0.83\) to \(0.91\), and lowers MSE from \(0.37\) to \(0.30\).
The gain is amplified in \textbf{Setting~(iii)} with \((m^*,p)=(1,20)\): adaptive HBIC attains \((d_h,\text{F}_1,\acc,\text{MSE})=(10.77,\,0.87,\,0.96,\,0.32)\), compared with \((74.47,\,0.76,\,0.91,\,0.52)\) for the non-adaptive version.
Adaptive weighting therefore does not merely sharpen detection, it yields uniform improvement across all four metrics of the joint estimation problem.

Turning to the comparison with two-step baselines, the adaptive GFlsL estimator is characterized by the balance it maintains across localization, support recovery, and estimation accuracy.
In \textbf{Settings~(i)} and \textbf{(iii)}, adaptive GFlsL is frequently the only fully data-driven method that performs well on all four metrics simultaneously.
For example, consider \textbf{Setting~(iii)} under adaptive HBIC: at \((m^*,p)=(1,10)\), it attains the best fully data-driven value in all four substantive metrics (\(d_h=14.73\), \(\text{F}_1=0.93\), \(\acc=0.97\), MSE~\(0.27\)); at \((3,10)\), it achieves the best fully data-driven \(\text{F}_1\), \(\acc\), and MSE (\(0.85\), \(0.93\), \(0.45\)) while remaining within close range of the best fully data-driven localization; at \((1,20)\), it again leads in \(d_h\), \(\text{F}_1\), and \(\acc\), with MSE~\(0.32\) essentially indistinguishable from the best value of \(0.31\); and at \((3,20)\), it dominates all four fully data-driven metrics (\(d_h=35.80\), \(\text{F}_1=0.77\), \(\acc=0.93\), MSE~\(0.47\)).
A qualitatively similar, though less uniform, pattern is observed in \textbf{Setting~(i)}, where adaptive HBIC is the best fully data-driven procedure in at least three metrics at both \((1,10)\) and \((3,20)\), with MSE comparable to the leading competitor.
\textbf{Setting~(ii)} presents a more challenging comparison because the true covariance matrices are banded, so the convex banding (CB) refit of~\cite{bien2016convex} is structurally aligned with the data generating process.
This alignment accounts for the strong $\text{F}_1$ and $\acc$ achieved by TBFL or BSOP paired with CB in Table~\ref{tab:res-banded}.
However, the apparent strength of these detector-plus-refit pipelines is qualified by two considerations.
First, applying CB requires prior knowledge of the banded structure, a feature that is generally unavailable in practice.
Second, the first-stage detectors themselves exhibit structural deficiencies: BSOP detects zero breaks at \(p=20\) regardless of the true \(m^*\), and TBFL estimates approximately one break across all configurations (e.g., \(\widehat{m}\approx 1.17\) when \(m^*=3\) at \(p=10\), and \(\widehat{m}\approx 0.94\) when \(m^*=3\) at \(p=20\)).
Even under the banded structure pattern that is favorable to CB, adaptive GFlsL with HBICG remains the second-best fully data-driven procedure.
For instance, at \((m^*,p)=(3,10)\), it achieves \((d_h,\text{F}_1,\acc,\text{MSE})=(33.44,\,0.87,\,0.93,\,1.88)\), trailing TBFL+CB only in slightly weaker \(\text{F}_1\) and \(\acc\), while attaining superior localization (\(d_h=33.44\) vs.\ \(71.30\)) and MSE (\(1.88\) vs.\ \(1.96\)).
The pattern across all three settings thus confirms that adaptive GFlsL provides the most reliable fully data-driven performance, achieving the best or second-best outcome in the large majority of metrics and configuration pairs without requiring structural assumptions on the covariance.

The detection characteristics of the competing methods further illustrate the fragility of the two-step approach.
Beyond the aforementioned deficiencies of BSOP and TBFL, WBSIP and DCDP display even more severe instability: WBSIP over-segments at \(p=10\) but detects no break at \(p=20\), while DCDP returns approximately 46 estimated breaks nearly uniformly regardless of the underlying design.
Since the covariance refit is entirely conditional on the estimated partition, even an oracle-quality post-segmentation refit cannot recover from a grossly inaccurate first-stage segmentation.
This evidence underscores a structural limitation of the two-step approach.
Indeed, because the detection and estimation stages optimize distinct loss functions, errors in the first stage propagate irreversibly into the second, whereas GFlsL integrates both objectives within a single convex criterion.

We conclude the discussion of the simulation results with an assessment of the fully data-driven tuning criteria.
BIC imposes excessive penalization in the \textbf{Settings (i)} and \textbf{(iii)}, producing systematic under-segmentation and correspondingly large \(d_h\) whenever \(m^*>0\).
HBIC constitutes the most effective data-driven rule for adaptive GFlsL in \textbf{Settings~(i)} and~\textbf{(iii)}, striking a favorable balance among localization accuracy, support recovery, and estimation error.
In \textbf{Setting~(ii)}, however, HBIC tends to over-segment, and HBICG emerges as the more appropriate criterion.
This design-dependence is not unexpected: the penalty must simultaneously govern temporal segmentation complexity and high-dimensional covariance sparsity, and no single information criterion can achieve uniform optimality across structurally heterogeneous data generating processes.
In summary, the evidence presented in Tables~\ref{tab:res-iid},~\ref{tab:res-banded} and~\ref{tab:res-factor} establishes adaptive GFlsL as the most dependable fully data-driven procedure for the joint change-point detection and covariance estimation problem.

\subsection{Sensitivity to adaptive weight exponents}\label{subsec:sensitivity}

The adaptive weights~\eqref{eq:adaptive-weight} depend on the exponents \( \mu_1 \) (element-wise shrinkage) and \( \mu_2 \) (temporal-difference shrinkage), in addition to the threshold floor \( a_T \coloneqq T^{-\iota} \).
In this subsection, we examine the sensitivity of the adaptive estimator to \( (\mu_1, \mu_2) \) with \( \iota = 0.5 \) fixed, and assess whether the default choice \( (\mu_1, \mu_2) = (0.8, 1.5) \) adopted throughout our applications, provides robust performance.
We consider \( \mu_1 \in \{0.5, 0.8, 1.5\} \) and \( \mu_2 \in \{0.5, 0.8, 1.5\} \), yielding nine configurations per scenario, with \( T = 200 \), \( p = 10 \), and \( m^* \in \{1, 3\} \) across the three data generating processes described in Subsection~\ref{subsec:simulations}.
For each configuration, we report the five metrics averaged over 100 independent experiments under the five criteria in Subsection~\ref{subsec:tuning-para}, where the ``optimal'' criterion is defined as in Subsection~\ref{subsec:exp}.
Tables~\ref{tab:sensitivity-m1} and~\ref{tab:sensitivity-m3} report the results for \( m^* = 1 \) and \( m^* = 3 \), respectively, with the default row \( (\mu_1, \mu_2) = (0.8, 1.5) \) shaded.

\begin{table}[ht!]
    \centering
    \caption{Sensitivity of the adaptive estimator to \((\mu_1, \mu_2)\) for \(m^* = 1\) (\(T = 200\), \(p = 10\)), averaged over 100 replications.
    The default row \((\mu_1, \mu_2) = (0.8, 1.5)\) is shaded; \textbf{bold} marks the best value within each column.}
    \label{tab:sensitivity-m1}
    \resizebox{\textwidth}{!}{%
        \begin{tabular}{c|ccccc|ccccc|ccccc|ccccc|ccccc}
            \toprule
            \multicolumn{1}{c|}{\textbf{Setting~(i)}} & \multicolumn{5}{c|}{optimal} & \multicolumn{5}{c|}{lossval} & \multicolumn{5}{c|}{BIC} & \multicolumn{5}{c|}{HBIC} & \multicolumn{5}{c}{HBICG} \\
            \( (\mu_1, \mu_2) \) & $nb$ & $d_h$ & $\text{F}_1$ & $\acc$ & MSE & $nb$ & $d_h$ & $\text{F}_1$ & $\acc$ & MSE & $nb$ & $d_h$ & $\text{F}_1$ & $\acc$ & MSE & $nb$ & $d_h$ & $\text{F}_1$ & $\acc$ & MSE & $nb$ & $d_h$ & $\text{F}_1$ & $\acc$ & MSE \\
            \midrule
            \( (0.5, 0.5) \) & 1.70 & 10.01 & 0.763 & 0.873 & 0.35 & 5.28 & 39.37 & 0.764 & 0.850 & \textbf{0.28} & 0.73 & 55.57 & 0.526 & \textbf{0.820} & \textbf{0.51} & 1.05 & 37.61 & 0.798 & 0.885 & 0.30 & 0.04 & 97.85 & 0.724 & 0.865 & 0.40 \\
            \( (0.5, 0.8) \) & 0.96 & 17.62 & 0.805 & 0.894 & \textbf{0.31} & 1.40 & 26.40 & 0.808 & 0.891 & \textbf{0.28} & 0.81 & 50.61 & 0.526 & \textbf{0.820} & \textbf{0.51} & 1.09 & 32.34 & 0.813 & 0.896 & \textbf{0.29} & 0.07 & 95.69 & 0.726 & 0.867 & 0.40 \\
            \( (0.5, 1.5) \) & 0.90 & 22.11 & 0.817 & 0.902 & 0.32 & 1.11 & 27.97 & 0.811 & 0.895 & 0.29 & 0.83 & \textbf{46.03} & 0.526 & \textbf{0.820} & \textbf{0.51} & 0.98 & 33.12 & 0.813 & 0.898 & \textbf{0.29} & 0.17 & 86.40 & 0.736 & 0.872 & 0.38 \\
            \( (0.8, 0.5) \) & 1.67 & 9.76 & 0.779 & 0.886 & 0.34 & 5.29 & 35.61 & 0.784 & 0.873 & 0.29 & 0.68 & 55.97 & 0.526 & \textbf{0.820} & \textbf{0.51} & 1.15 & 34.78 & 0.810 & 0.899 & 0.30 & 0.04 & 97.85 & 0.728 & 0.869 & 0.40 \\
            \( (0.8, 0.8) \) & 0.99 & 17.63 & 0.815 & 0.903 & \textbf{0.31} & 1.37 & 27.08 & 0.815 & 0.902 & 0.29 & 0.70 & 51.34 & 0.526 & \textbf{0.820} & \textbf{0.51} & 1.11 & 29.64 & \textbf{0.819} & \textbf{0.905} & 0.30 & 0.09 & 93.20 & 0.731 & 0.872 & 0.40 \\
            \rowcolor{gray!15} \( (0.8, 1.5) \) & 0.91 & 22.11 & \textbf{0.819} & \textbf{0.907} & 0.32 & 1.10 & 27.47 & \textbf{0.816} & \textbf{0.903} & 0.30 & 0.73 & 48.28 & 0.526 & \textbf{0.820} & \textbf{0.51} & 1.03 & 31.79 & 0.817 & \textbf{0.905} & 0.30 & 0.23 & 81.05 & 0.745 & 0.878 & \textbf{0.37} \\
            \( (1.5, 0.5) \) & 1.60 & \textbf{9.45} & 0.780 & 0.891 & 0.34 & 3.61 & 29.75 & 0.786 & 0.889 & 0.30 & 0.38 & 73.85 & 0.527 & \textbf{0.820} & \textbf{0.51} & 1.27 & 36.18 & 0.798 & 0.900 & 0.31 & 0.06 & 96.90 & 0.731 & 0.872 & 0.40 \\
            \( (1.5, 0.8) \) & 0.96 & 17.63 & 0.800 & 0.901 & 0.33 & 1.34 & \textbf{25.69} & 0.800 & 0.901 & 0.31 & 0.44 & 66.14 & \textbf{0.528} & \textbf{0.820} & \textbf{0.51} & 1.20 & \textbf{28.81} & 0.800 & 0.902 & 0.31 & 0.13 & 90.98 & 0.735 & 0.875 & 0.39 \\
            \( (1.5, 1.5) \) & 0.89 & 22.11 & 0.801 & 0.903 & 0.33 & 1.12 & 30.16 & 0.799 & 0.901 & 0.31 & 0.65 & 50.47 & \textbf{0.528} & \textbf{0.820} & \textbf{0.51} & 1.02 & 31.79 & 0.798 & 0.901 & 0.31 & 0.28 & \textbf{76.53} & \textbf{0.748} & \textbf{0.882} & \textbf{0.37} \\
            \midrule
            \multicolumn{1}{c|}{\textbf{Setting~(ii)}} & \multicolumn{5}{c|}{optimal} & \multicolumn{5}{c|}{lossval} & \multicolumn{5}{c|}{BIC} & \multicolumn{5}{c|}{HBIC} & \multicolumn{5}{c}{HBICG} \\
            \( (\mu_1, \mu_2) \) & $nb$ & $d_h$ & $\text{F}_1$ & $\acc$ & MSE & $nb$ & $d_h$ & $\text{F}_1$ & $\acc$ & MSE & $nb$ & $d_h$ & $\text{F}_1$ & $\acc$ & MSE & $nb$ & $d_h$ & $\text{F}_1$ & $\acc$ & MSE & $nb$ & $d_h$ & $\text{F}_1$ & $\acc$ & MSE \\
            \midrule
            \( (0.5, 0.5) \) & 1.88 & \textbf{19.66} & 0.961 & 0.978 & 1.51 & 2.42 & 40.10 & 0.705 & 0.759 & 1.08 & 4.67 & \textbf{31.06} & 0.939 & \textbf{0.965} & 1.68 & 91.32 & 130.39 & 0.456 & 0.333 & 3.26 & 1.67 & \textbf{20.58} & 0.923 & 0.954 & 1.31 \\
            \( (0.5, 0.8) \) & 2.18 & 29.33 & \textbf{0.963} & \textbf{0.979} & 1.51 & 2.56 & 39.49 & 0.740 & 0.796 & 1.09 & 2.65 & 32.36 & 0.941 & 0.964 & 1.66 & 24.64 & 119.44 & 0.473 & 0.376 & 2.08 & 2.00 & 29.33 & \textbf{0.932} & \textbf{0.959} & 1.34 \\
            \( (0.5, 1.5) \) & 2.79 & 38.16 & 0.958 & 0.976 & 1.55 & 2.92 & 40.20 & 0.789 & 0.844 & 1.20 & 3.02 & 41.02 & 0.933 & 0.960 & 1.66 & 4.94 & 52.51 & 0.549 & 0.537 & 1.91 & 2.70 & 38.16 & 0.928 & 0.957 & 1.41 \\
            \( (0.8, 0.5) \) & 1.91 & 20.58 & 0.953 & 0.972 & 1.34 & 2.36 & 40.05 & 0.751 & 0.810 & 1.07 & 12.59 & 47.03 & 0.923 & 0.953 & 1.58 & 90.44 & 130.32 & 0.469 & 0.366 & 3.19 & 1.70 & \textbf{20.58} & 0.921 & 0.952 & 1.22 \\
            \( (0.8, 0.8) \) & 2.25 & 30.69 & 0.957 & 0.975 & 1.34 & 2.52 & 38.58 & 0.783 & 0.840 & 1.08 & 2.67 & 35.21 & \textbf{0.945} & \textbf{0.965} & 1.42 & 24.46 & 118.98 & 0.492 & 0.420 & 2.04 & 2.06 & 30.69 & 0.930 & 0.958 & 1.23 \\
            \rowcolor{gray!15} \( (0.8, 1.5) \) & 2.90 & 39.25 & 0.952 & 0.972 & 1.40 & 2.92 & 39.87 & 0.835 & 0.887 & 1.18 & 2.97 & 39.55 & 0.939 & 0.964 & 1.42 & 4.93 & 52.51 & 0.575 & 0.583 & 1.89 & 2.74 & 39.25 & 0.928 & 0.957 & 1.30 \\
            \( (1.5, 0.5) \) & 2.13 & 23.76 & 0.918 & 0.948 & \textbf{1.17} & 2.43 & 38.44 & 0.808 & 0.863 & \textbf{1.06} & 28.23 & 74.05 & 0.875 & 0.918 & 1.68 & 85.01 & 129.82 & 0.514 & 0.470 & 3.00 & 1.88 & 25.00 & 0.899 & 0.934 & \textbf{1.13} \\
            \( (1.5, 0.8) \) & 2.48 & 32.15 & 0.926 & 0.954 & 1.20 & 2.50 & \textbf{38.16} & 0.830 & 0.882 & 1.08 & 2.92 & 35.68 & 0.921 & 0.950 & \textbf{1.23} & 21.94 & 116.11 & 0.544 & 0.530 & 1.97 & 2.15 & 32.32 & 0.901 & 0.936 & 1.14 \\
            \( (1.5, 1.5) \) & 2.98 & 39.25 & 0.924 & 0.954 & 1.28 & 2.91 & 39.73 & \textbf{0.864} & \textbf{0.910} & 1.17 & 2.95 & 39.70 & 0.915 & 0.947 & 1.24 & 4.92 & \textbf{52.50} & \textbf{0.618} & \textbf{0.652} & \textbf{1.84} & 2.84 & 39.26 & 0.894 & 0.928 & 1.24 \\
            \midrule
            \multicolumn{1}{c|}{\textbf{Setting~(iii)}} & \multicolumn{5}{c|}{optimal} & \multicolumn{5}{c|}{lossval} & \multicolumn{5}{c|}{BIC} & \multicolumn{5}{c|}{HBIC} & \multicolumn{5}{c}{HBICG} \\
            \( (\mu_1, \mu_2) \) & $nb$ & $d_h$ & $\text{F}_1$ & $\acc$ & MSE & $nb$ & $d_h$ & $\text{F}_1$ & $\acc$ & MSE & $nb$ & $d_h$ & $\text{F}_1$ & $\acc$ & MSE & $nb$ & $d_h$ & $\text{F}_1$ & $\acc$ & MSE & $nb$ & $d_h$ & $\text{F}_1$ & $\acc$ & MSE \\
            \midrule
            \( (0.5, 0.5) \) & 1.35 & \textbf{4.31} & 0.918 & 0.961 & 0.33 & 2.35 & 27.94 & 0.892 & 0.941 & \textbf{0.27} & 0.72 & 39.21 & 0.633 & 0.875 & 0.66 & 1.39 & 12.78 & 0.918 & 0.959 & 0.29 & 0.55 & 48.31 & 0.876 & 0.943 & 0.40 \\
            \( (0.5, 0.8) \) & 1.05 & 4.87 & \textbf{0.934} & \textbf{0.970} & \textbf{0.29} & 1.20 & 6.31 & \textbf{0.930} & \textbf{0.968} & \textbf{0.27} & 0.93 & 20.07 & 0.637 & 0.877 & 0.64 & 1.16 & 9.31 & \textbf{0.933} & \textbf{0.969} & \textbf{0.27} & 0.64 & 39.29 & 0.888 & 0.951 & 0.36 \\
            \( (0.5, 1.5) \) & 1.05 & 6.53 & \textbf{0.934} & \textbf{0.970} & \textbf{0.29} & 1.13 & 8.48 & 0.929 & 0.967 & \textbf{0.27} & 0.92 & \textbf{18.64} & 0.641 & 0.878 & 0.64 & 1.08 & \textbf{7.92} & 0.931 & 0.968 & 0.28 & 0.67 & 37.80 & \textbf{0.903} & \textbf{0.956} & \textbf{0.35} \\
            \( (0.8, 0.5) \) & 1.40 & 4.34 & 0.913 & 0.960 & 0.33 & 2.40 & 27.67 & 0.900 & 0.950 & \textbf{0.27} & 0.68 & 49.91 & 0.652 & 0.880 & 0.65 & 1.46 & 14.90 & 0.916 & 0.961 & 0.29 & 0.56 & 45.57 & 0.879 & 0.946 & 0.39 \\
            \( (0.8, 0.8) \) & 1.05 & 4.87 & 0.929 & 0.968 & 0.30 & 1.18 & 6.30 & 0.924 & 0.966 & \textbf{0.27} & 0.80 & 28.72 & 0.659 & 0.882 & 0.63 & 1.17 & 9.36 & 0.928 & 0.968 & 0.28 & 0.64 & 38.40 & 0.890 & 0.952 & 0.37 \\
            \rowcolor{gray!15} \( (0.8, 1.5) \) & 1.04 & 6.53 & 0.930 & 0.969 & 0.30 & 1.13 & 8.48 & 0.924 & 0.966 & 0.28 & 0.82 & 27.20 & 0.658 & 0.882 & 0.63 & 1.09 & 8.11 & 0.927 & 0.967 & 0.29 & 0.66 & 38.47 & 0.900 & \textbf{0.956} & 0.36 \\
            \( (1.5, 0.5) \) & 1.20 & 4.35 & 0.902 & 0.956 & 0.34 & 2.02 & 25.10 & 0.896 & 0.953 & 0.29 & 0.54 & 56.55 & 0.688 & 0.890 & 0.62 & 1.72 & 21.31 & 0.904 & 0.958 & 0.30 & 0.61 & 40.71 & 0.866 & 0.942 & 0.40 \\
            \( (1.5, 0.8) \) & 1.06 & 4.87 & 0.911 & 0.962 & 0.32 & 1.16 & \textbf{6.22} & 0.904 & 0.958 & 0.29 & 0.62 & 45.29 & 0.700 & 0.893 & 0.61 & 1.18 & 9.51 & 0.909 & 0.961 & 0.29 & 0.67 & \textbf{36.99} & 0.871 & 0.945 & 0.38 \\
            \( (1.5, 1.5) \) & 1.03 & 6.53 & 0.912 & 0.962 & 0.31 & 1.12 & 8.47 & 0.907 & 0.959 & 0.30 & 0.75 & 30.86 & \textbf{0.708} & \textbf{0.896} & \textbf{0.58} & 1.11 & 8.52 & 0.910 & 0.961 & 0.30 & 0.68 & 37.47 & 0.879 & 0.948 & 0.37 \\
            \bottomrule
        \end{tabular}%
    }
\end{table}

\begin{table}[ht!]
    \centering
    \caption{Sensitivity of the adaptive estimator to \((\mu_1, \mu_2)\) for \(m^* = 3\) (\(T = 200\), \(p = 10\)), averaged over 100 replications.
    The default row \((\mu_1, \mu_2) = (0.8, 1.5)\) is shaded; \textbf{bold} marks the best value within each column.}
    \label{tab:sensitivity-m3}
    \resizebox{\textwidth}{!}{%
        \begin{tabular}{c|ccccc|ccccc|ccccc|ccccc|ccccc}
            \toprule
            \multicolumn{1}{c|}{\textbf{Setting~(i)}} & \multicolumn{5}{c|}{optimal} & \multicolumn{5}{c|}{lossval} & \multicolumn{5}{c|}{BIC} & \multicolumn{5}{c|}{HBIC} & \multicolumn{5}{c}{HBICG} \\
            \( (\mu_1, \mu_2) \) & $nb$ & $d_h$ & $\text{F}_1$ & $\acc$ & MSE & $nb$ & $d_h$ & $\text{F}_1$ & $\acc$ & MSE & $nb$ & $d_h$ & $\text{F}_1$ & $\acc$ & MSE & $nb$ & $d_h$ & $\text{F}_1$ & $\acc$ & MSE & $nb$ & $d_h$ & $\text{F}_1$ & $\acc$ & MSE \\
            \midrule
            \( (0.5, 0.5) \) & 4.87 & 34.10 & 0.672 & 0.813 & 0.42 & 5.52 & 36.16 & 0.678 & 0.797 & \textbf{0.39} & 1.23 & 92.91 & 0.526 & \textbf{0.820} & \textbf{0.52} & 1.89 & 79.18 & 0.666 & 0.819 & 0.42 & 0.00 & 152.50 & 0.559 & 0.823 & 0.51 \\
            \( (0.5, 0.8) \) & 2.50 & 48.56 & 0.699 & 0.835 & \textbf{0.41} & 2.59 & 49.39 & 0.699 & 0.827 & \textbf{0.39} & 1.29 & 85.29 & 0.526 & \textbf{0.820} & \textbf{0.52} & 1.95 & 66.48 & 0.692 & 0.831 & \textbf{0.41} & 0.00 & 152.50 & 0.559 & 0.823 & 0.51 \\
            \( (0.5, 1.5) \) & 2.01 & 55.36 & 0.708 & 0.844 & \textbf{0.41} & 2.06 & 55.82 & 0.705 & 0.836 & 0.40 & 1.34 & 77.20 & 0.526 & \textbf{0.820} & \textbf{0.52} & 1.81 & 66.23 & 0.701 & 0.840 & \textbf{0.41} & 0.09 & 147.58 & 0.556 & 0.822 & 0.51 \\
            \( (0.8, 0.5) \) & 4.65 & \textbf{33.77} & 0.692 & 0.837 & 0.42 & 5.30 & \textbf{36.08} & 0.701 & 0.831 & 0.40 & 1.17 & 94.04 & 0.526 & \textbf{0.820} & \textbf{0.52} & 2.24 & 68.26 & 0.688 & 0.842 & 0.42 & 0.00 & 152.50 & 0.560 & 0.823 & 0.51 \\
            \( (0.8, 0.8) \) & 2.44 & 48.98 & 0.710 & 0.852 & 0.42 & 2.48 & 50.28 & 0.711 & 0.848 & 0.40 & 1.20 & 89.12 & 0.526 & \textbf{0.820} & \textbf{0.52} & 2.11 & 63.03 & 0.706 & 0.850 & \textbf{0.41} & 0.00 & 152.50 & 0.559 & 0.823 & 0.51 \\
            \rowcolor{gray!15} \( (0.8, 1.5) \) & 2.05 & 55.01 & \textbf{0.715} & 0.855 & 0.42 & 2.04 & 55.47 & \textbf{0.712} & 0.851 & 0.41 & 1.38 & \textbf{75.06} & \textbf{0.527} & \textbf{0.820} & \textbf{0.52} & 1.82 & 65.90 & \textbf{0.707} & 0.854 & 0.42 & 0.09 & 147.58 & 0.560 & 0.824 & 0.51 \\
            \( (1.5, 0.5) \) & 3.35 & 38.08 & 0.681 & 0.850 & 0.44 & 3.83 & 41.79 & 0.696 & 0.853 & 0.42 & 0.81 & 111.58 & 0.526 & \textbf{0.820} & \textbf{0.52} & 2.22 & 69.51 & 0.683 & 0.854 & 0.43 & 0.00 & 152.50 & 0.566 & 0.825 & 0.51 \\
            \( (1.5, 0.8) \) & 2.33 & 50.69 & 0.693 & 0.855 & 0.43 & 2.37 & 51.46 & 0.694 & 0.855 & 0.42 & 1.06 & 94.74 & \textbf{0.527} & \textbf{0.820} & \textbf{0.52} & 2.09 & \textbf{62.49} & 0.689 & 0.855 & 0.43 & 0.00 & 152.50 & 0.566 & 0.825 & 0.51 \\
            \( (1.5, 1.5) \) & 2.01 & 55.44 & 0.696 & \textbf{0.858} & 0.44 & 2.04 & 56.11 & 0.693 & \textbf{0.857} & 0.43 & 1.22 & 81.70 & \textbf{0.527} & \textbf{0.820} & \textbf{0.52} & 1.78 & 67.96 & 0.687 & \textbf{0.857} & 0.43 & 0.11 & \textbf{146.05} & \textbf{0.571} & \textbf{0.827} & \textbf{0.50} \\
            \midrule
            \multicolumn{1}{c|}{\textbf{Setting~(ii)}} & \multicolumn{5}{c|}{optimal} & \multicolumn{5}{c|}{lossval} & \multicolumn{5}{c|}{BIC} & \multicolumn{5}{c|}{HBIC} & \multicolumn{5}{c}{HBICG} \\
            \( (\mu_1, \mu_2) \) & $nb$ & $d_h$ & $\text{F}_1$ & $\acc$ & MSE & $nb$ & $d_h$ & $\text{F}_1$ & $\acc$ & MSE & $nb$ & $d_h$ & $\text{F}_1$ & $\acc$ & MSE & $nb$ & $d_h$ & $\text{F}_1$ & $\acc$ & MSE & $nb$ & $d_h$ & $\text{F}_1$ & $\acc$ & MSE \\
            \midrule
            \( (0.5, 0.5) \) & 5.41 & \textbf{18.90} & 0.854 & 0.897 & 1.94 & 4.51 & 28.26 & 0.677 & 0.729 & \textbf{1.61} & 5.85 & 36.44 & 0.869 & 0.931 & 2.07 & 91.27 & 65.21 & 0.456 & 0.335 & 3.47 & 2.64 & 36.63 & 0.878 & 0.934 & 1.96 \\
            \( (0.5, 0.8) \) & 5.19 & 21.81 & 0.873 & 0.921 & 1.94 & 4.52 & 29.91 & 0.700 & 0.759 & \textbf{1.61} & 3.59 & 34.35 & 0.872 & \textbf{0.933} & 2.04 & 24.27 & 57.12 & 0.476 & 0.384 & 2.40 & 2.96 & 33.34 & \textbf{0.880} & \textbf{0.936} & 1.93 \\
            \( (0.5, 1.5) \) & 4.57 & 27.07 & \textbf{0.890} & \textbf{0.937} & 1.96 & 4.24 & 29.63 & 0.750 & 0.816 & 1.68 & 3.86 & 31.39 & 0.868 & 0.931 & 2.03 & 6.74 & 31.62 & 0.533 & 0.511 & 2.34 & 3.71 & 32.79 & 0.871 & 0.932 & 1.95 \\
            \( (0.8, 0.5) \) & 5.51 & 19.22 & 0.853 & 0.898 & 1.86 & 4.44 & \textbf{28.14} & 0.695 & 0.752 & \textbf{1.61} & 8.84 & 39.30 & 0.867 & 0.925 & 2.01 & 90.60 & 65.19 & 0.468 & 0.366 & 3.41 & 2.71 & 36.36 & 0.875 & 0.930 & 1.86 \\
            \( (0.8, 0.8) \) & 5.27 & 22.12 & 0.872 & 0.922 & 1.86 & 4.41 & 29.85 & 0.715 & 0.774 & 1.62 & 3.85 & 35.00 & \textbf{0.875} & \textbf{0.933} & 1.91 & 23.80 & 56.57 & 0.496 & 0.432 & 2.37 & 3.06 & 33.64 & 0.875 & 0.931 & 1.83 \\
            \rowcolor{gray!15} \( (0.8, 1.5) \) & 4.66 & 27.30 & 0.886 & 0.934 & 1.90 & 4.14 & 29.44 & 0.762 & 0.827 & 1.68 & 4.01 & 31.41 & \textbf{0.875} & 0.932 & 1.90 & 6.72 & \textbf{31.58} & 0.560 & 0.562 & 2.32 & 3.78 & 32.12 & 0.878 & 0.933 & 1.86 \\
            \( (1.5, 0.5) \) & 6.82 & 20.66 & 0.846 & 0.902 & \textbf{1.78} & 4.35 & 28.36 & 0.746 & 0.810 & 1.63 & 28.38 & 52.24 & 0.819 & 0.886 & 2.15 & 85.76 & 64.81 & 0.509 & 0.467 & 3.23 & 2.85 & 35.11 & 0.853 & 0.912 & \textbf{1.75} \\
            \( (1.5, 0.8) \) & 5.35 & 22.80 & 0.856 & 0.912 & \textbf{1.78} & 4.34 & 29.49 & 0.761 & 0.827 & 1.65 & 4.23 & 33.59 & 0.857 & 0.916 & \textbf{1.79} & 21.18 & 55.09 & 0.546 & 0.538 & 2.31 & 3.17 & 32.87 & 0.853 & 0.910 & \textbf{1.75} \\
            \( (1.5, 1.5) \) & 4.65 & 27.46 & 0.864 & 0.918 & 1.84 & 4.13 & 29.54 & \textbf{0.794} & \textbf{0.859} & 1.71 & 4.06 & \textbf{31.37} & 0.860 & 0.919 & 1.80 & 6.71 & \textbf{31.58} & \textbf{0.598} & \textbf{0.628} & \textbf{2.29} & 3.83 & \textbf{31.70} & 0.850 & 0.909 & 1.78 \\
            \midrule
            \multicolumn{1}{c|}{\textbf{Setting~(iii)}} & \multicolumn{5}{c|}{optimal} & \multicolumn{5}{c|}{lossval} & \multicolumn{5}{c|}{BIC} & \multicolumn{5}{c|}{HBIC} & \multicolumn{5}{c}{HBICG} \\
            \( (\mu_1, \mu_2) \) & $nb$ & $d_h$ & $\text{F}_1$ & $\acc$ & MSE & $nb$ & $d_h$ & $\text{F}_1$ & $\acc$ & MSE & $nb$ & $d_h$ & $\text{F}_1$ & $\acc$ & MSE & $nb$ & $d_h$ & $\text{F}_1$ & $\acc$ & MSE & $nb$ & $d_h$ & $\text{F}_1$ & $\acc$ & MSE \\
            \midrule
            \( (0.5, 0.5) \) & 4.23 & \textbf{16.51} & 0.819 & 0.920 & 0.47 & 4.62 & 22.66 & 0.823 & 0.911 & \textbf{0.40} & 1.31 & 83.96 & 0.629 & 0.879 & 0.71 & 3.37 & 33.05 & 0.827 & 0.918 & \textbf{0.43} & 0.39 & 127.27 & 0.676 & 0.879 & 0.69 \\
            \( (0.5, 0.8) \) & 2.67 & 34.55 & 0.843 & 0.929 & 0.47 & 2.73 & 35.25 & 0.839 & 0.925 & 0.43 & 1.88 & 55.47 & 0.632 & 0.879 & 0.68 & 2.69 & 37.06 & 0.842 & 0.927 & 0.44 & 0.76 & 103.69 & 0.684 & 0.887 & 0.67 \\
            \( (0.5, 1.5) \) & 2.38 & 39.46 & \textbf{0.846} & 0.931 & \textbf{0.46} & 2.38 & 40.05 & 0.840 & 0.927 & 0.45 & 1.85 & \textbf{52.84} & 0.634 & 0.880 & 0.68 & 2.32 & 40.81 & 0.841 & 0.928 & 0.45 & 1.06 & 84.86 & 0.713 & 0.895 & 0.62 \\
            \( (0.8, 0.5) \) & 4.06 & 17.05 & 0.830 & 0.926 & \textbf{0.46} & 4.47 & 22.43 & 0.838 & 0.925 & 0.41 & 1.08 & 98.52 & 0.634 & 0.880 & 0.71 & 3.44 & 31.65 & 0.838 & 0.927 & \textbf{0.43} & 0.42 & 124.59 & 0.684 & 0.882 & 0.69 \\
            \( (0.8, 0.8) \) & 2.70 & 34.57 & 0.844 & \textbf{0.932} & \textbf{0.46} & 2.68 & 35.47 & \textbf{0.841} & 0.929 & 0.44 & 1.52 & 70.63 & 0.638 & 0.881 & 0.69 & 2.66 & 37.85 & \textbf{0.843} & \textbf{0.931} & 0.45 & 0.80 & 102.39 & 0.692 & 0.888 & 0.66 \\
            \rowcolor{gray!15} \( (0.8, 1.5) \) & 2.39 & 39.46 & 0.844 & \textbf{0.932} & 0.47 & 2.38 & 40.05 & 0.840 & \textbf{0.930} & 0.46 & 1.65 & 60.94 & 0.644 & 0.883 & 0.68 & 2.31 & 40.84 & 0.841 & \textbf{0.931} & 0.46 & 1.10 & 83.72 & 0.723 & 0.899 & 0.61 \\
            \( (1.5, 0.5) \) & 4.10 & 16.54 & 0.828 & 0.929 & \textbf{0.46} & 4.31 & \textbf{21.49} & 0.836 & \textbf{0.930} & 0.44 & 0.87 & 108.17 & 0.644 & 0.882 & 0.71 & 3.59 & \textbf{29.69} & 0.832 & 0.930 & 0.45 & 0.48 & 120.65 & 0.698 & 0.889 & 0.68 \\
            \( (1.5, 0.8) \) & 2.58 & 34.83 & 0.834 & 0.931 & 0.48 & 2.59 & 35.70 & 0.831 & 0.929 & 0.46 & 1.23 & 86.41 & 0.658 & 0.886 & 0.69 & 2.63 & 37.01 & 0.833 & \textbf{0.931} & 0.46 & 0.89 & 97.36 & 0.714 & 0.896 & 0.63 \\
            \( (1.5, 1.5) \) & 2.37 & 39.73 & 0.832 & 0.931 & 0.48 & 2.35 & 40.32 & 0.828 & 0.929 & 0.47 & 1.45 & 66.51 & \textbf{0.670} & \textbf{0.889} & \textbf{0.67} & 2.31 & 41.11 & 0.829 & 0.930 & 0.48 & 1.22 & \textbf{75.54} & \textbf{0.737} & \textbf{0.904} & \textbf{0.59} \\
            \bottomrule
        \end{tabular}%
    }
\end{table}

Tables~\ref{tab:sensitivity-m1} and~\ref{tab:sensitivity-m3} reveal that the two exponents play distinct and interpretable roles.
The fusion exponent \(\mu_2\) governs the degree to which consecutive covariance estimates are fused: larger \(\mu_2\) penalizes temporal differences more aggressively, suppressing spurious breaks at the cost of potentially missing genuine ones.
This effect is most visible in \textbf{Settings~(i)} and~\textbf{(iii)} when \(m^*=3\).
In \textbf{Setting~(i)}, increasing \(\mu_2\) from \(0.5\) to \(1.5\) with \(\mu_1=0.8\) fixed reduces the detected break count under the optimal criterion from \(4.65\) to \(2.05\) and raises \(d_h\) from \(33.77\) to \(55.01\); a parallel pattern appears in \textbf{Setting~(iii)}, where the optimal \(d_h\) increases from \(17.05\) to \(39.46\) over the same \(\mu_2\) range.
When \(m^*=1\), the sensitivity to \(\mu_2\) is attenuated: in \textbf{Setting~(i)}, the optimal \(d_h\) varies only between \(9.76\) and \(22.11\) across all nine configurations, and \(\text{F}_1\) remains in the range \(0.76\text{--}0.82\).

The sparsity exponent \(\mu_1\) controls the degree of sparsity-adaptive shrinkage applied to individual covariance entries.
Larger \(\mu_1\) strengthens the penalization of entries with small initial magnitude, which can improve support recovery up to a point but eventually over-shrinks genuinely non-zero entries.
This non-monotonicity is evident in \textbf{Setting~(i)} with \(m^*=1\) under the optimal criterion at \(\mu_2=1.5\): \(\text{F}_1\) peaks at \(\mu_1=0.8\) (\(0.819\)) and declines to \(0.801\) at \(\mu_1=1.5\), while MSE increases from \(0.32\) to \(0.33\).
A similar pattern holds in \textbf{Setting~(iii)} with HBIC and \(m^*=3\) at \(\mu_2=1.5\): \(\text{F}_1\) is \(0.841\) at both \(\mu_1=0.5\) and \(\mu_1=0.8\) but drops to \(0.829\) at \(\mu_1=1.5\), with MSE rising from \(0.45\) to \(0.48\).
Excessively large \(\mu_1\) thus degrades both support recovery and estimation accuracy, confirming that \(\mu_1=0.8\) occupies a favorable intermediate position.

In contrast, \textbf{Setting~(ii)} is largely insensitive to \((\mu_1,\mu_2)\).
Under HBICG with \(m^*=3\), the \(\text{F}_1\) scores vary by less than \(0.03\) and \(d_h\) by less than \(5\) across all nine configurations.
This stability reflects the regularity of the banded structure, which does not require aggressive adaptive shrinkage to achieve good performance.

The default \((\mu_1,\mu_2)=(0.8,1.5)\) thus represents a deliberate compromise.
Setting \(\mu_2=1.5\) provides sufficient temporal regularization to prevent over-segmentation, a priority in practice where the true \(m^*\) is unknown, while \(\mu_1=0.8\) delivers effective sparsity-adaptive shrinkage without the over-penalization that arises at \(\mu_1=1.5\).
As the shaded rows confirm, this configuration achieves competitive or near-best \(d_h\), \(\text{F}_1\), \(\acc\), and MSE across both \(m^*=1\) and \(m^*=3\) under the practical criteria HBIC and HBICG in all three settings, establishing it as a robust default when the true change-point structure and covariance sparsity are unknown.

\subsection{Empirical computational complexity}\label{subsec:complexity}

We empirically examine the computational complexity of GFlsL under \textbf{Setting~(i)} with \(m^*=1\), recording wall-clock time (in seconds) averaged over 100 replications for three estimator configurations:
(a)~the first-stage estimator~\eqref{stat_crit_1}, depending only on \(\lambda\);
(b)~the non-adaptive estimator, depending on \((\lambda_1,\lambda_2)\);
and (c)~the second-stage adaptive estimator~\eqref{eq:opt-prob-two-stage}, which takes as input a first-stage solution whose estimated number of change points is at least \(m^*\) and closest to it, then solves over the \((\lambda_1,\lambda_2)\) grid.
We first study how computation time varies with the tuning parameters at \((T,p)=(200,10)\), then examine the scaling behavior as \(T\) and \(p\) increase separately.
The tuning grids are \(\lambda_1 \in p\times\{10^{-5},2\times10^{-5},\ldots,10^{-4}\}\) and \(\lambda_2 \in p\times\{10^{-2},2\times10^{-2},\ldots,10^{-1}\}\), each containing 10 equally spaced values; the adaptive-weight parameters \(a_T\), \(\mu_1\), and \(\mu_2\) are set as in the synthetic experiments.

Figure~\ref{fig:comp-time-stage1} displays the first-stage computation time as a function of \(\lambda\).
The curve is non-monotone: the average time decreases from \(1.08\) seconds at \(\lambda=0.1\) to approximately \(0.56\) seconds near \(\lambda=0.4\text{--}0.5\), then rises to \(0.84\) seconds at \(\lambda=1.0\).
This pattern reflects the number of detected change points: small \(\lambda\) values produce many change points and are computationally more expensive, while large \(\lambda\) values require more iterations to converge.

\begin{figure}[ht!]
	\centering
	\includegraphics[width=0.5\textwidth]{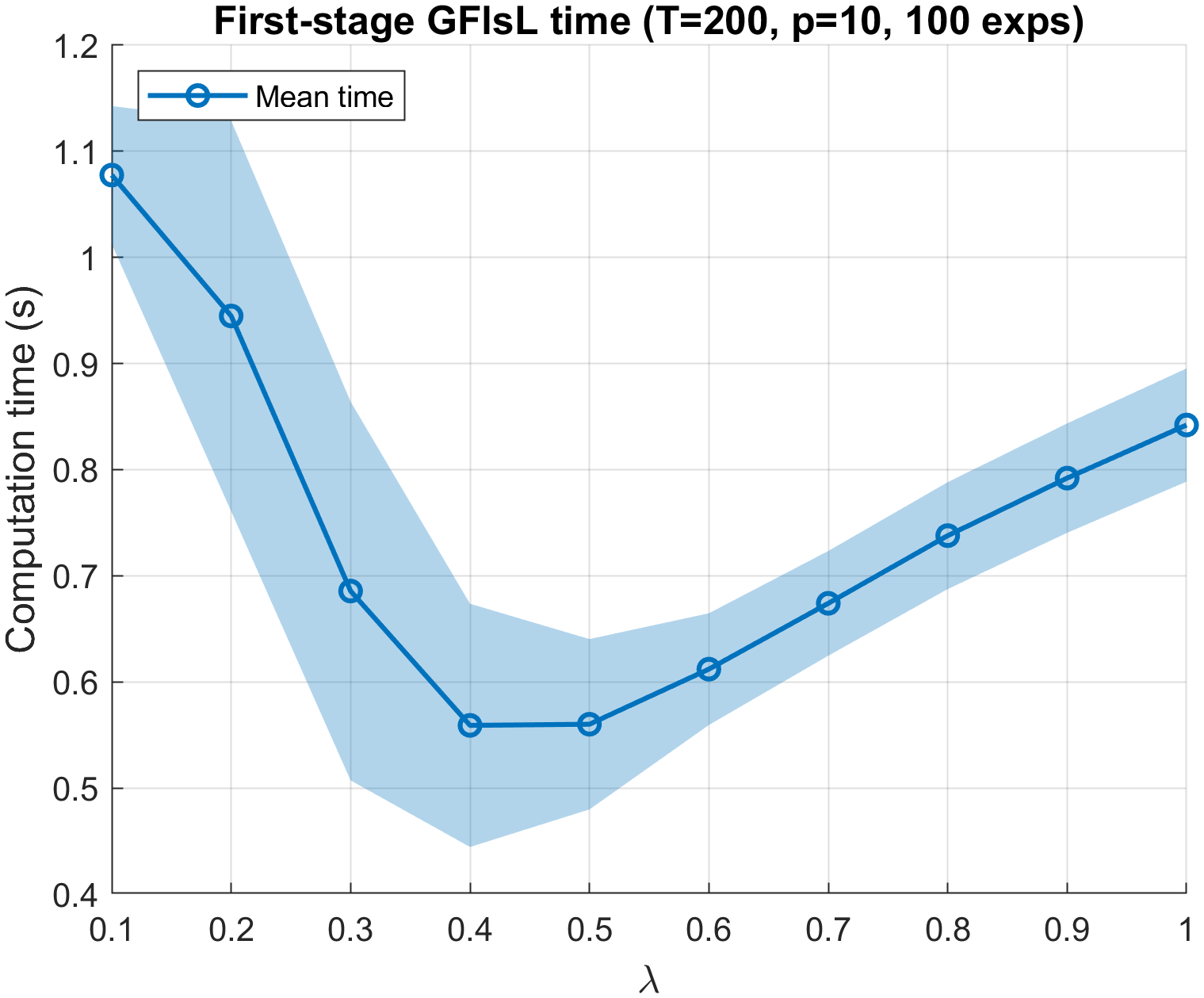}
	\caption{First-stage computation time (in seconds) as a function of \(\lambda\) (\(T = 200\), \(p = 10\), \(m^* = 1\)), averaged over 100 replications. Shaded band: \(\pm 1\) standard deviation.}
	\label{fig:comp-time-stage1}
\end{figure}

Figure~\ref{fig:comp-time-heatmaps} presents the computation times for the non-adaptive and second-stage adaptive estimators as heatmaps over the \((\lambda_1,\lambda_2)\) grid.
For the non-adaptive estimator (Figure~\ref{fig:comp-time-nonadaptive}), computation time peaks at small \(\lambda_2\) (approximately \(1.10\) seconds), reaches its minimum near \(\lambda_2=0.4\text{--}0.5\) (approximately \(0.54\) seconds), and rises again as \(\lambda_2\) approaches \(1.0\); variation along the \(\lambda_1\) direction is mild.
The second-stage adaptive estimator (Figure~\ref{fig:comp-time-stage2}) exhibits a different pattern: for each fixed \(\lambda_1\), computation time increases broadly monotonically with \(\lambda_2\), ranging from approximately \(0.96\) to \(2.06\) seconds, while larger \(\lambda_1\) values reduce the runtime slightly.
Across all three configurations, every tuning-parameter combination completes in at most approximately \(2.1\) seconds on average, confirming that GFlsL remains computationally tractable over the full tuning grid.

\begin{figure}[ht!]
	\centering
	\begin{subfigure}[t]{0.48\textwidth}
		\centering
		\includegraphics[width=\textwidth]{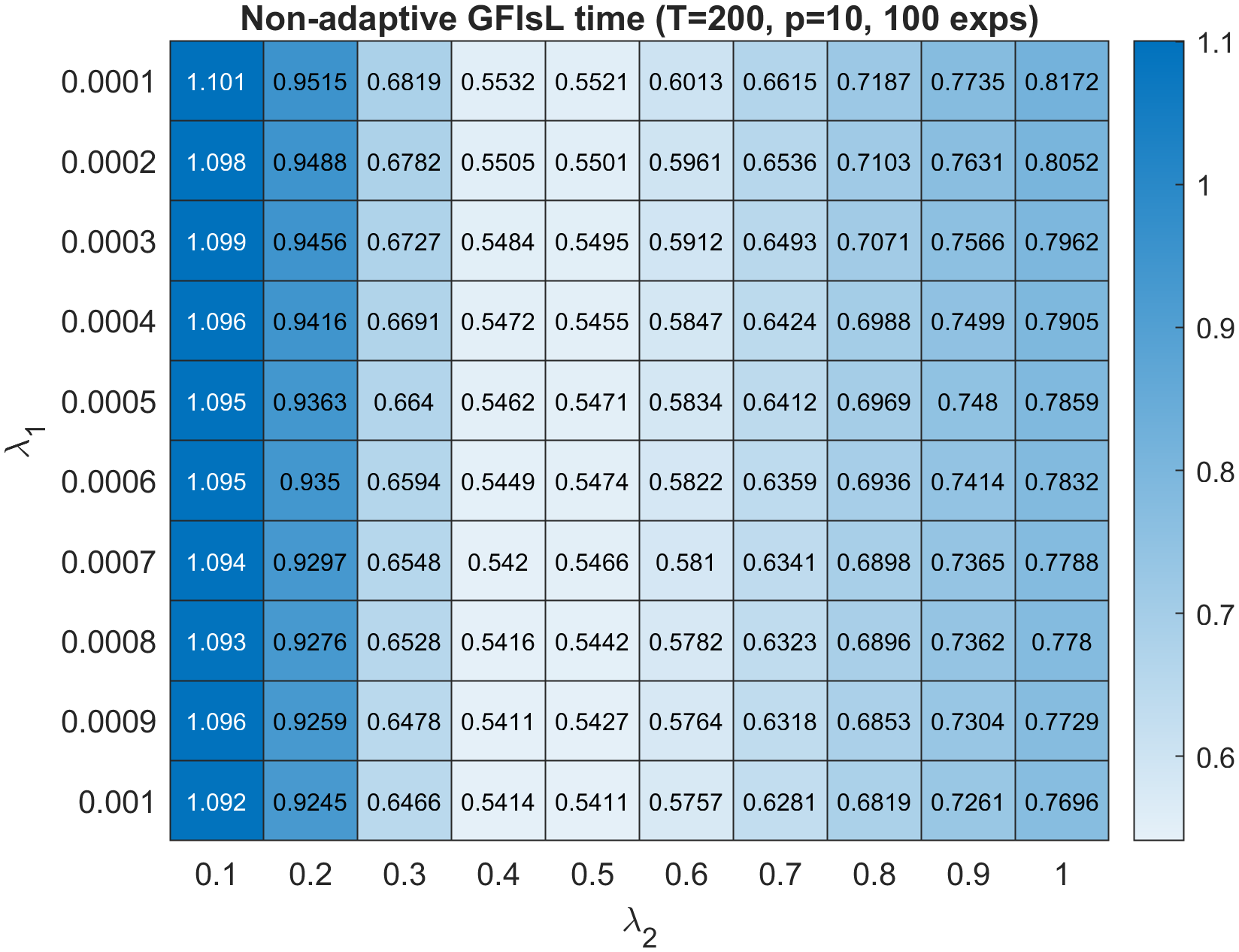}
		\caption{Non-adaptive estimator.}
		\label{fig:comp-time-nonadaptive}
	\end{subfigure}\hfill
	\begin{subfigure}[t]{0.48\textwidth}
		\centering
		\includegraphics[width=\textwidth]{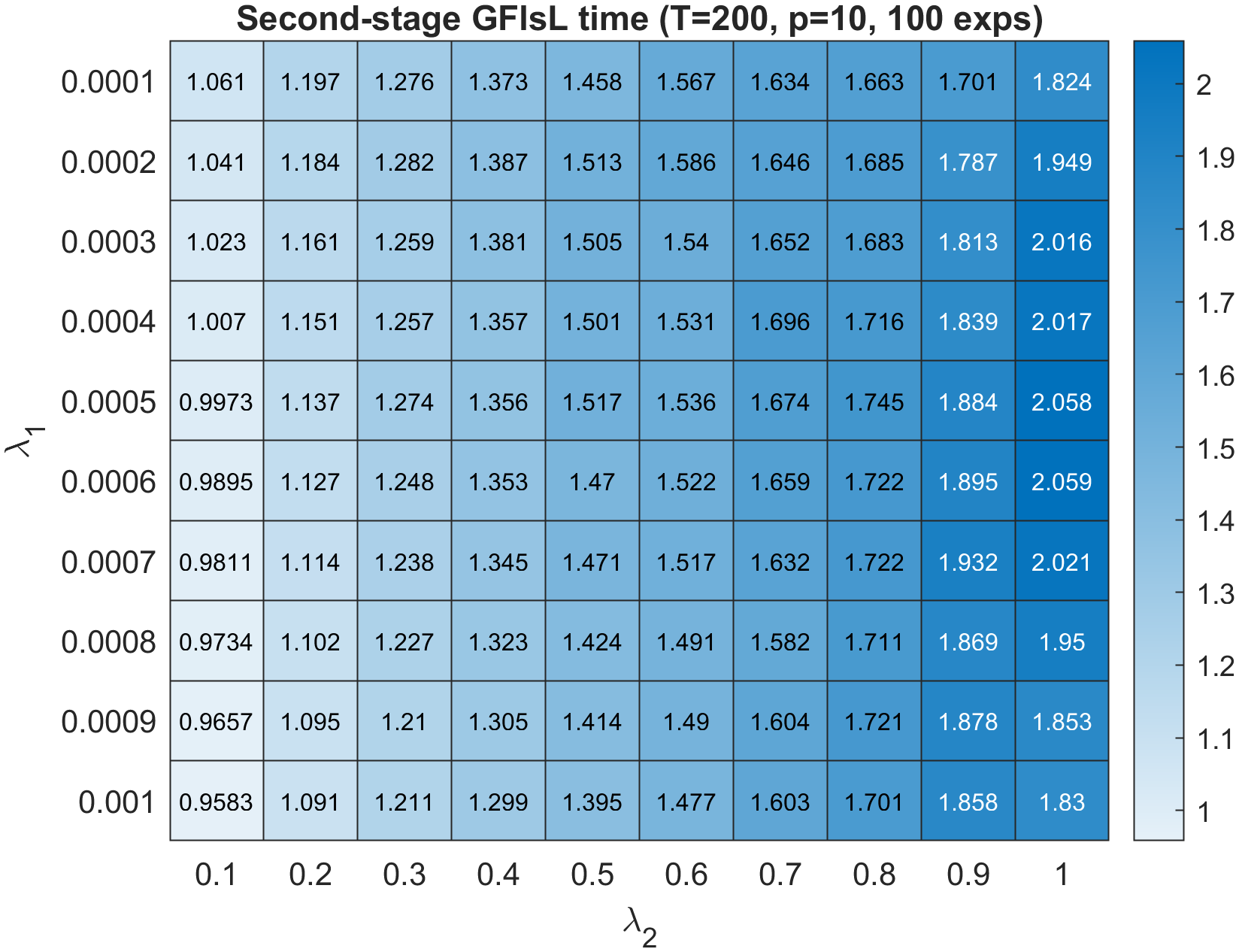}
		\caption{Second-stage (adaptive) estimator.}
		\label{fig:comp-time-stage2}
	\end{subfigure}
	\caption{Computation time (in seconds) over \((\lambda_1, \lambda_2)\) (\(T = 200\), \(p = 10\), \(m^* = 1\)), averaged over 100 replications.}
	\label{fig:comp-time-heatmaps}
\end{figure}

We next examine how computation time scales with problem size by fixing representative tuning parameters (\(\lambda=0.05p\) for the first stage; \(\lambda_1=5\times10^{-5}p\), \(\lambda_2=0.05p\) for the non-adaptive and adaptive stages) and varying \(T\) and \(p\) separately.
Figure~\ref{fig:comp-time-varyT} reports computation time as a function of \(T\in\{50,100,150,200,250,300\}\) with \(p=10\) fixed.
All three estimators exhibit monotonically increasing runtime: the first-stage and non-adaptive runtimes grow from approximately \(0.12\) seconds at \(T=50\) to \(1.17\) and \(1.13\) seconds at \(T=300\), respectively, while the second-stage adaptive runtime increases more steeply from \(0.13\) to \(3.18\) seconds.
This scaling is consistent with the \(O(T)\) per-iteration cost of solving the tridiagonal system~\eqref{eq:linear-subsystem}, with the adaptive stage carrying the largest constant factor due to the weight computation.
Figure~\ref{fig:comp-time-varyp} reports computation time as a function of \(p\in\{5,10,15,20,25,30\}\) with \(T=200\) fixed.
The growth is sharply superlinear in \(p\): runtimes increase from \(0.21\), \(0.21\), and \(0.64\) seconds at \(p=5\) to \(10.70\), \(9.29\), and \(20.25\) seconds at \(p=30\) for the first-stage, non-adaptive, and adaptive estimators, respectively.
This behavior reflects the \(O(p^2)\) cost arising from the \(p(p+1)/2\) independent tridiagonal subsystems and the eigendecomposition in the \(V_t\) update~\eqref{eq:V-update}.

\begin{figure}[ht!]
	\centering
	\begin{subfigure}[t]{0.3\textwidth}
		\centering
		\includegraphics[width=\textwidth]{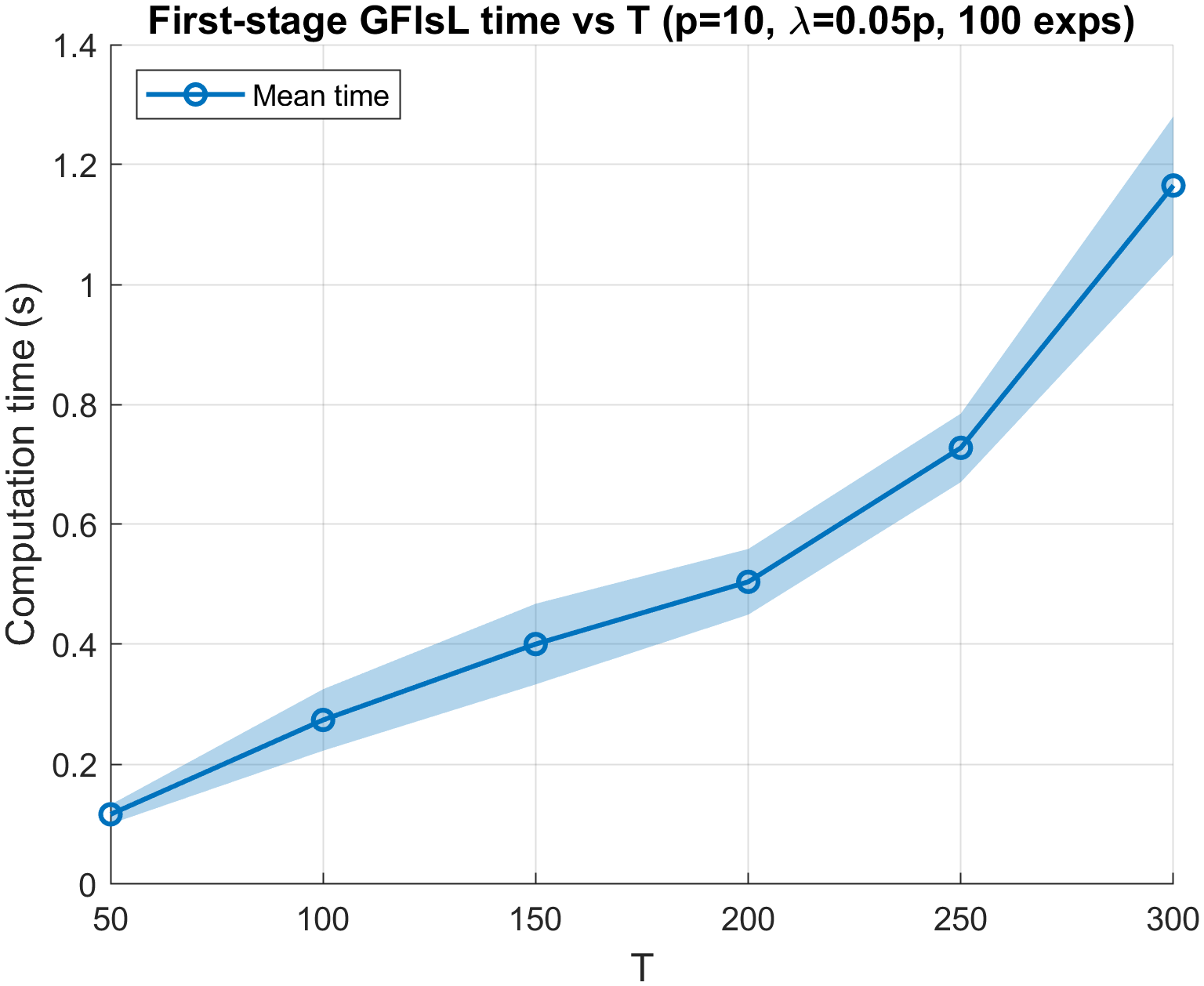}
		\caption{First-stage.}
	\end{subfigure}\hfill
	\begin{subfigure}[t]{0.32\textwidth}
		\centering
		\includegraphics[width=\textwidth]{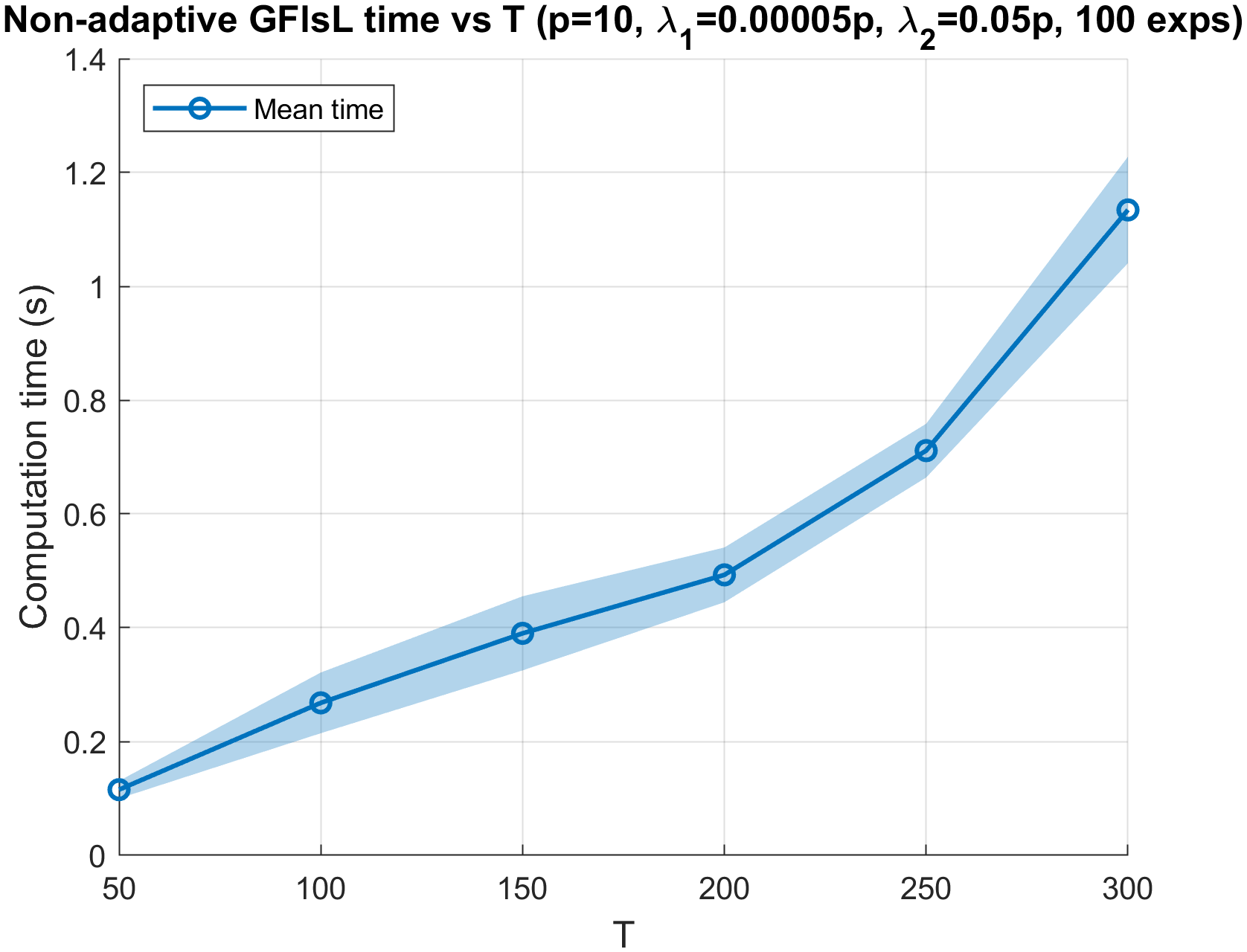}
		\caption{Non-adaptive.}
	\end{subfigure}\hfill
	\begin{subfigure}[t]{0.32\textwidth}
		\centering
		\includegraphics[width=\textwidth]{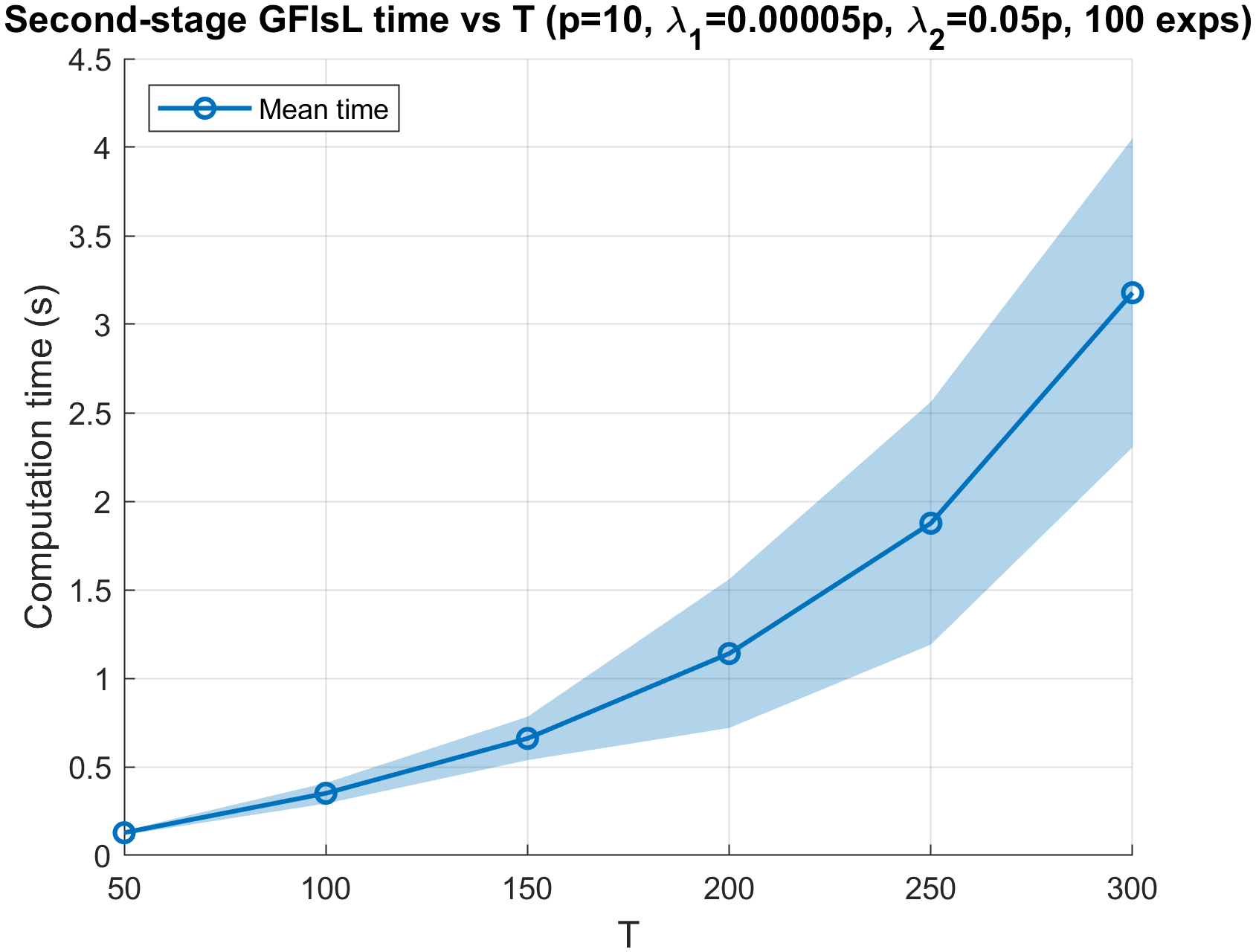}
		\caption{Second-stage adaptive.}
	\end{subfigure}
	\caption{Computation time (in seconds) as a function of \(T\) with \(p = 10\) fixed, averaged over 100 replications. Shaded bands: \(\pm 1\) standard deviation.}
	\label{fig:comp-time-varyT}
\end{figure}

\begin{figure}[ht!]
	\centering
	\begin{subfigure}[t]{0.29\textwidth}
		\centering
		\includegraphics[width=\textwidth]{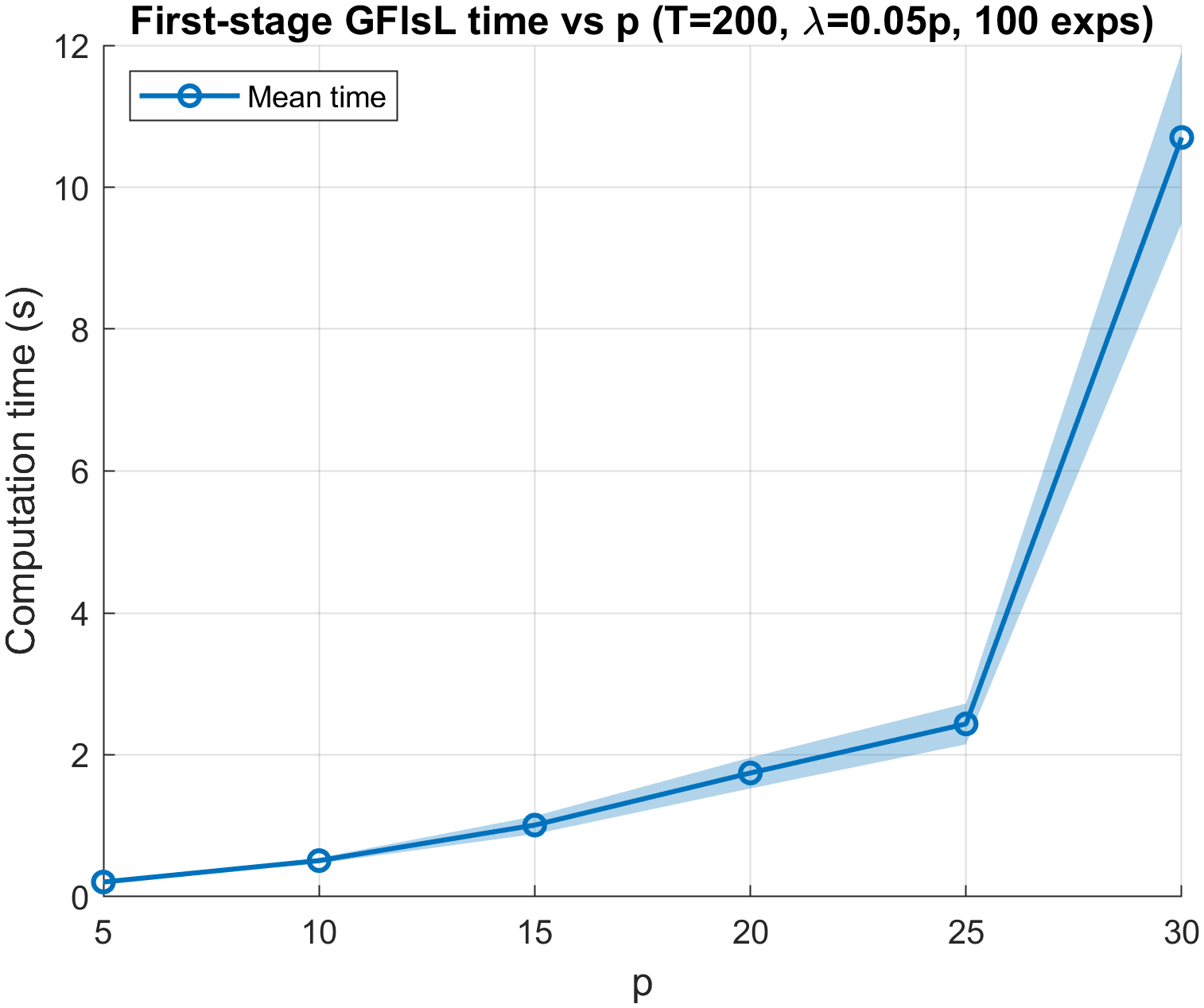}
		\caption{First-stage.}
	\end{subfigure}\hfill
	\begin{subfigure}[t]{0.32\textwidth}
		\centering
		\includegraphics[width=\textwidth]{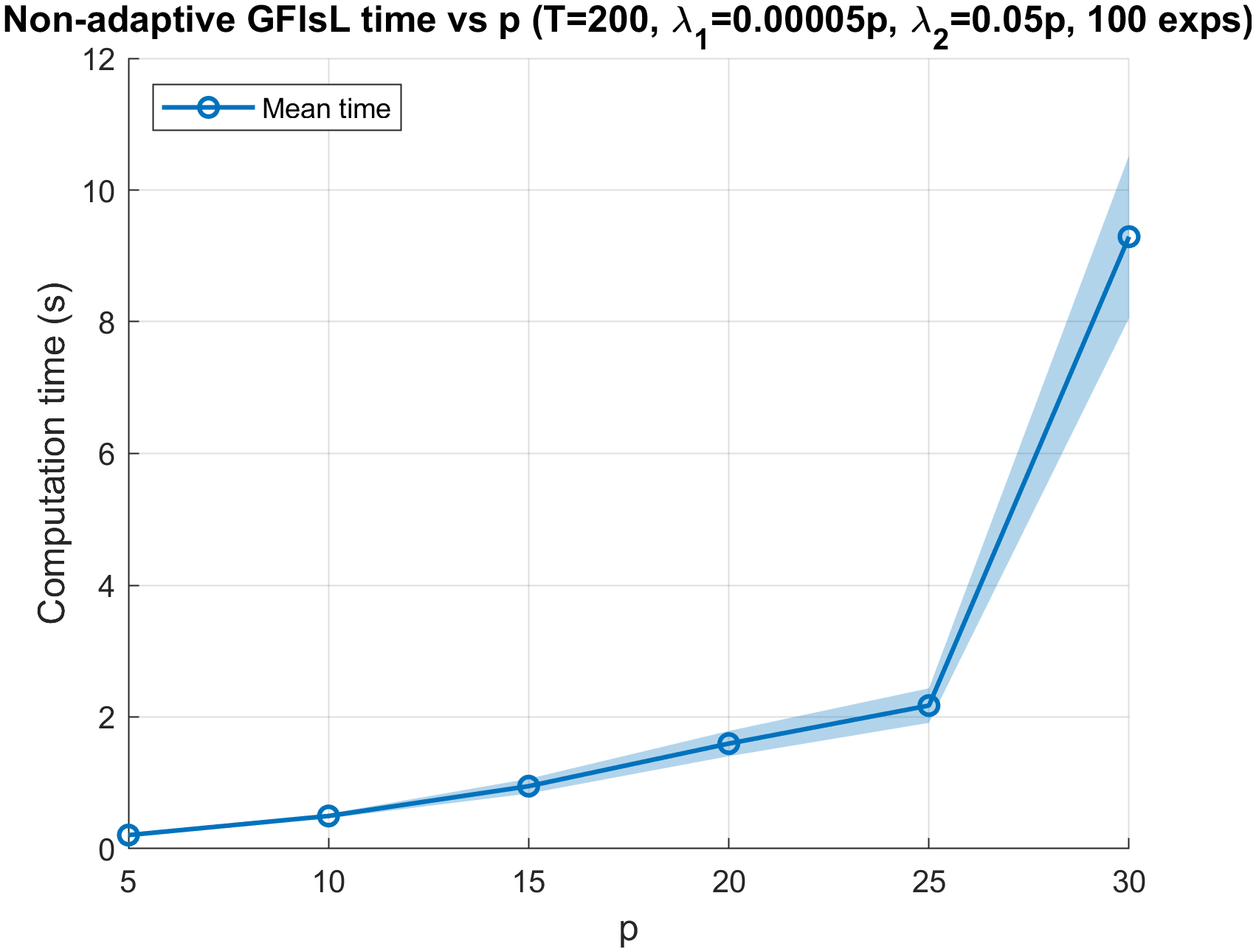}
		\caption{Non-adaptive.}
	\end{subfigure}\hfill
	\begin{subfigure}[t]{0.32\textwidth}
		\centering
		\includegraphics[width=\textwidth]{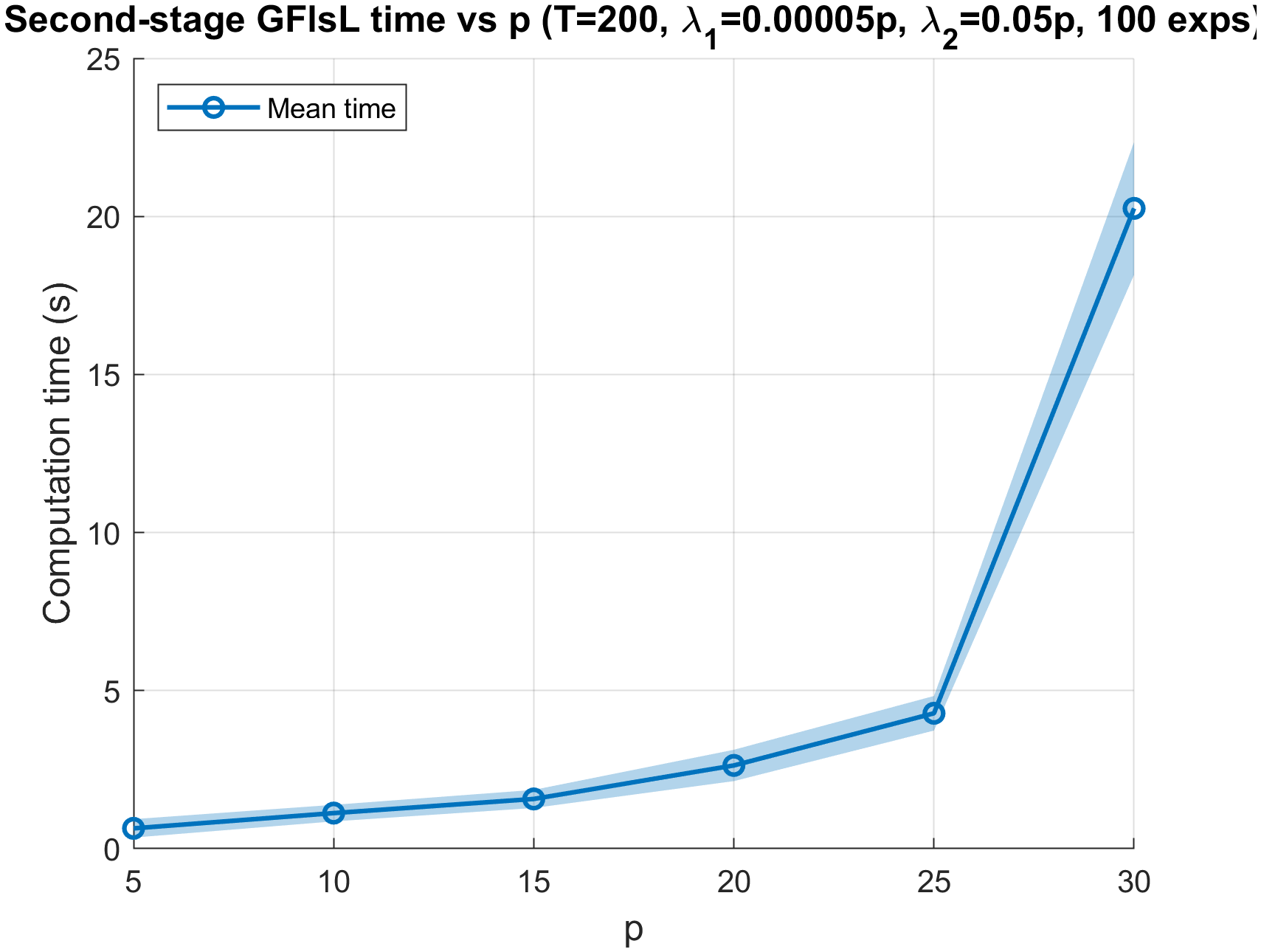}
		\caption{Second-stage adaptive.}
	\end{subfigure}
	\caption{Computation time (in seconds) as a function of \(p\) with \(T = 200\) fixed, averaged over 100 replications. Shaded bands: \(\pm 1\) standard deviation.}
	\label{fig:comp-time-varyp}
\end{figure}

\section{Real data experiment}\label{sec:real_data}

In this section, we consider change point detection in the variance-covariance of daily log-returns from \(p=30\) stocks selected from the S\&P~500, representing a broad range of economic sectors.\footnote{3M, Alphabet, Amazon, Apple, AT\&T, Bank of America, Berkshire Hathaway, Boeing, Caterpillar, Chevron, Exxon Mobil, General Electric, General Motors, Honeywell, Intel, Johnson \& Johnson, JPMorgan, Lockheed Martin, Microsoft, NextEra Energy, Nvidia, Oracle, Pfizer, Procter \& Gamble, Progressive, Schlumberger, United Parcel Service, Verizon, Vulcan Materials, Walmart. The data are sourced from \url{https://finance.yahoo.com}.}
The sample spans September~25, 2014 to January~27, 2022, yielding \(T=1902\) trading days.
For each asset \(j\), the log-return is defined as \(X_{t,j}=100\times\log(P_{t,j}/P_{t-1,j})\), where \(P_{t,j}\) is the closing price on day~\(t\).

Figure~\ref{fig:realdata-diagnostics} displays the change points and the variance-covariance regime shifts estimated by GFlsL, BSOP and TBFL. The vertical red dashed lines represent the estimated change points. First, let us consider these latter estimated change points. Both non-adaptive BIC and HBICG solutions detect 16 change points concentrated between February~17 and June~29, 2020, corresponding to the COVID-19 crash and the early reopening phase, which constitute the most prominent structural change in the sample.
Non-adaptive HBIC detects 36 change points spanning August~2015 to March~2021, capturing a richer set of stress episodes including the August~2015 selloff, the January--February~2018 volatility spike, the October~2018 correction, the dense COVID-19 cluster, and the late-2020 to early-2021 post-election uncertainty.
The adaptive variants exhibit a similar pattern: adaptive BIC identifies 20 change points from February~2018 to November~2020, adaptive HBIC detects 29 spanning August~2015 to March~2021, and adaptive HBICG selects 17 concentrated from early~2018 to late~2020.
As for the competing models, BSOP identifies six change points (March~2016, January~2017, February~2018, January~2019, February~2020, January~2021), which align with genuine stress episodes but yield a much coarser segmentation than GFlsL, particularly around the COVID period.
TBFL selects a single change point on October~18, 2017, a date that does not correspond to any recognized market event, suggesting that its segmentation is not economically meaningful for this dataset.
Since the post-segmentation refit is conditional on the detected partition, these limitations of BSOP and TBFL propagate directly into the covariance estimates.

To complement the analysis of the change points, we measure the regime shifts by the Frobenius norm of consecutive covariance differences $\|\widehat\Theta_t - \widehat\Theta_{t-1}\|_F$, where $\widehat{\Theta}_t$ is the piecewise constant variance-covariance estimated at time $t$ by a specific model. For the sake of comparison, we will also report $\|\widehat{\Sigma}_t^{\mathrm{roll}}-\widehat{\Sigma}_{t-1}^{\mathrm{roll}}\|_F$, where $\widehat{\Sigma}_t^{\mathrm{roll}}$ is a rolling blended proxy $\widehat{\Sigma}_t^{\mathrm{roll}} = (1-a)\,X_t X_t^\top + a\,\overline{\Sigma}_{\mathrm{roll},t}$, where $\overline{\Sigma}_{\mathrm{roll},t}$ is the sample covariance over a 42-day (approximately 2 months) rolling window and $a=0.01$.
The paths of the Frobenius norms of the consecutive covariance differences $\widehat{\Theta}_t$ and $\widehat{\Sigma}_t^{\mathrm{roll}}$ are displayed in Figure~\ref{fig:realdata-diagnostics} on a shared time axis with separate y-scales.

Figures~\ref{fig:rd-na-hbic}--\ref{fig:rd-ad-hbicg} show that the GFlsL covariance paths broadly track the rolling proxy: the estimated Frobenius norm of consecutive covariance differences tends to spike near the major proxy peaks, particularly during the COVID-19 period, although not every fluctuation in the proxy is reflected in the piecewise-constant estimate.
The HBIC variants, which detect the most change points, produce the sharpest alignment with the proxy's temporal dynamics.
By contrast, Figures~\ref{fig:rd-bsop-ec2}--\ref{fig:rd-tbfl-native} reveal that BSOP's six-break segmentation captures only the coarsest volatility transitions, missing the fine structure within the COVID cluster, while TBFL's single break on October~2017 bears no visible relationship to the proxy signal.
These plots support the economic interpretation drawn from the change-point dates: GFlsL's joint estimation framework produces covariance paths that are both statistically coherent and empirically plausible.

\begin{figure}[p]
	\centering
	\setlength{\abovecaptionskip}{3pt}
	\setlength{\belowcaptionskip}{0pt}
	\begin{subfigure}[t]{0.44\textwidth}
		\centering
		\includegraphics[width=\textwidth]{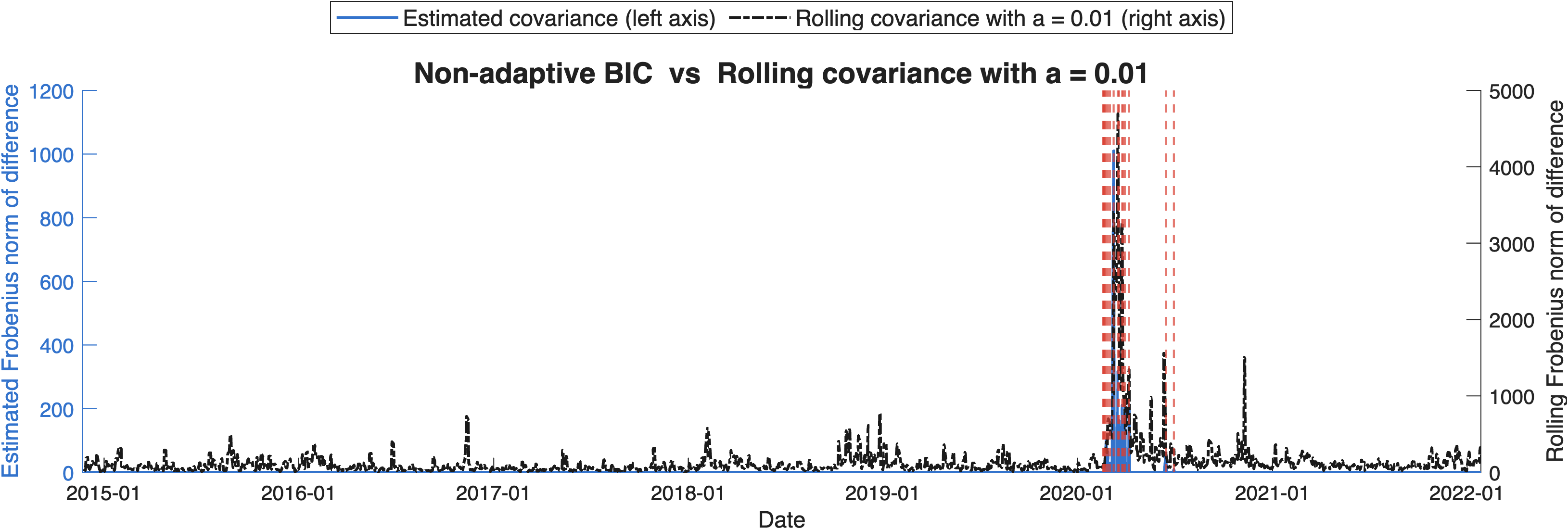}
		\caption{GFlsL: Non-adaptive BIC.}
		\label{fig:rd-na-bic}
	\end{subfigure}\hfill
	\begin{subfigure}[t]{0.44\textwidth}
		\centering
		\includegraphics[width=\textwidth]{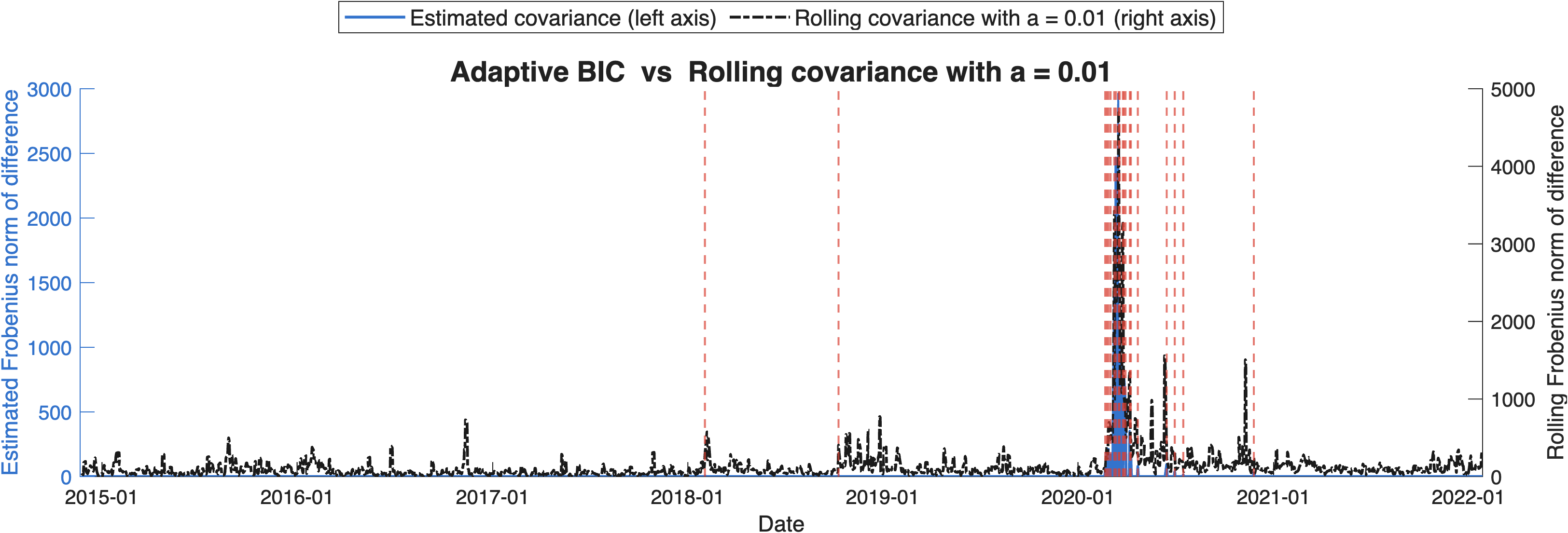}
		\caption{GFlsL: Adaptive BIC.}
		\label{fig:rd-ad-bic}
	\end{subfigure}\\[-4pt]
	\begin{subfigure}[t]{0.44\textwidth}
		\centering
		\includegraphics[width=\textwidth]{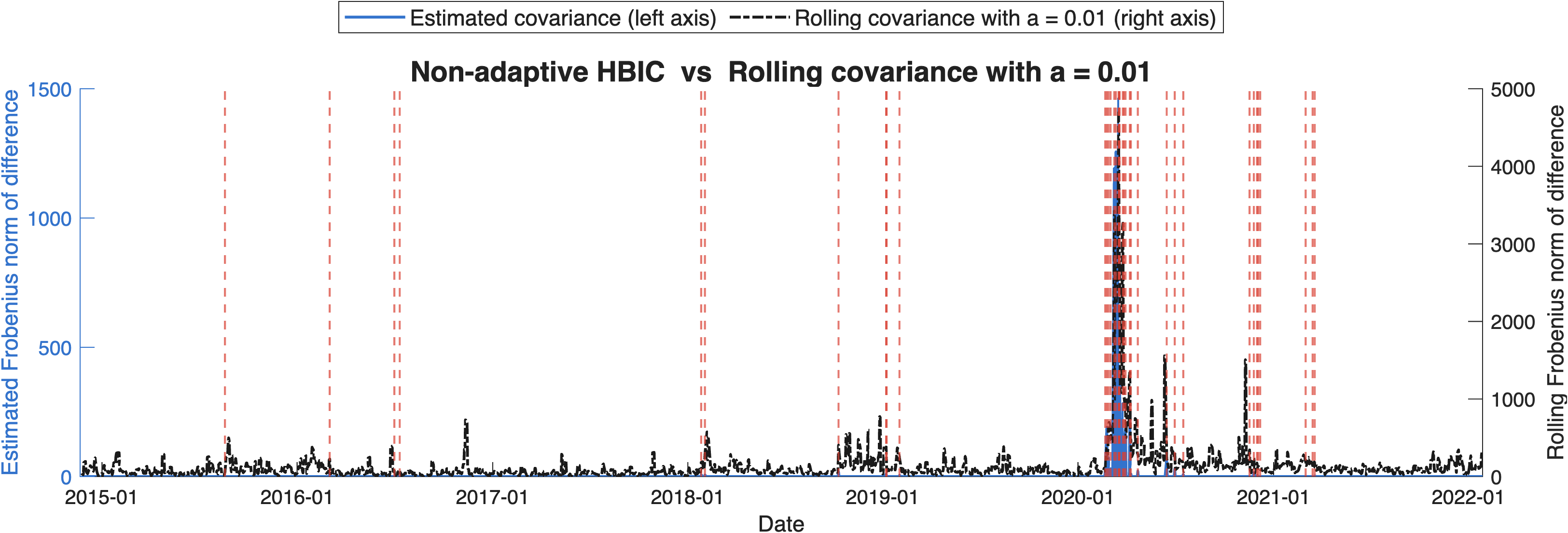}
		\caption{GFlsL: Non-adaptive HBIC.}
		\label{fig:rd-na-hbic}
	\end{subfigure}\hfill
	\begin{subfigure}[t]{0.44\textwidth}
		\centering
		\includegraphics[width=\textwidth]{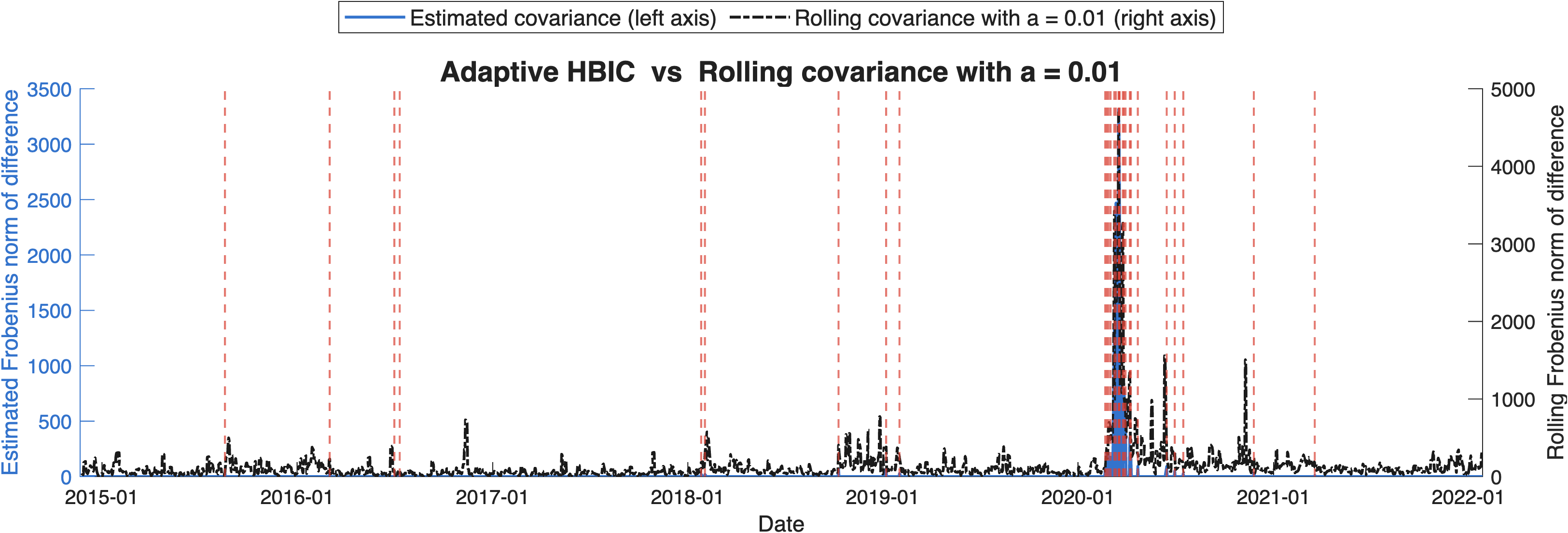}
		\caption{GFlsL: Adaptive HBIC.}
		\label{fig:rd-ad-hbic}
	\end{subfigure}\\[-4pt]
	\begin{subfigure}[t]{0.44\textwidth}
		\centering
		\includegraphics[width=\textwidth]{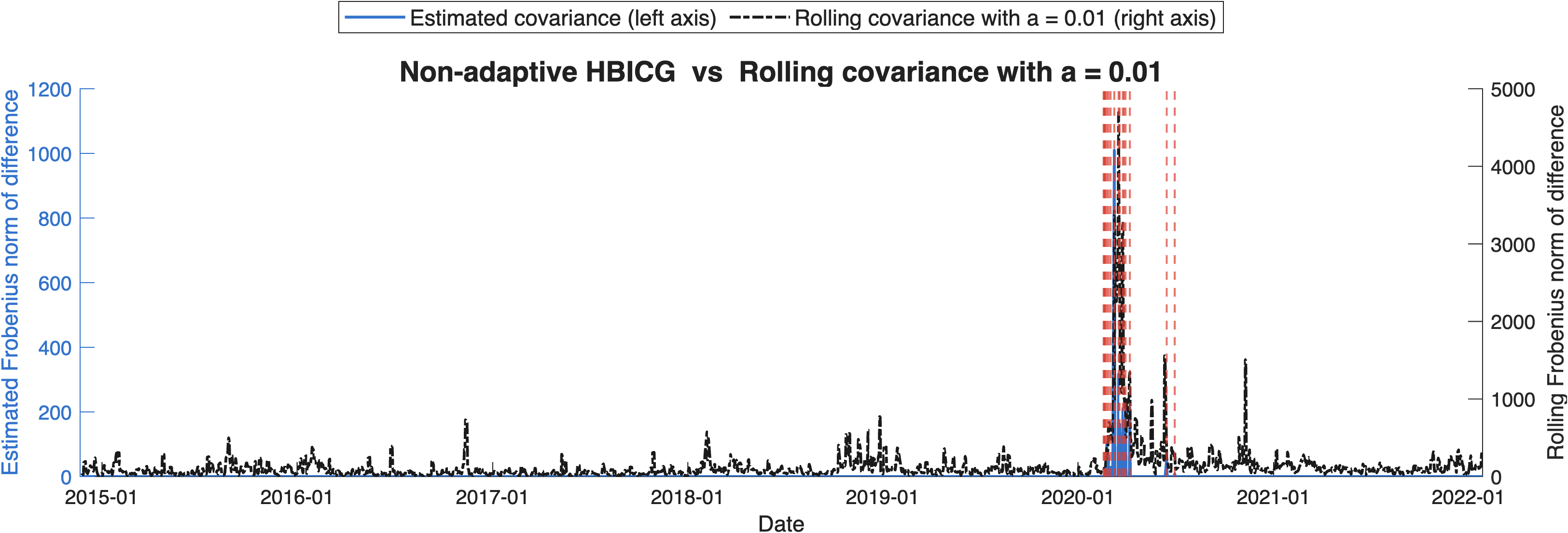}
		\caption{GFlsL: Non-adaptive HBICG.}
		\label{fig:rd-na-hbicg}
	\end{subfigure}\hfill
	\begin{subfigure}[t]{0.44\textwidth}
		\centering
		\includegraphics[width=\textwidth]{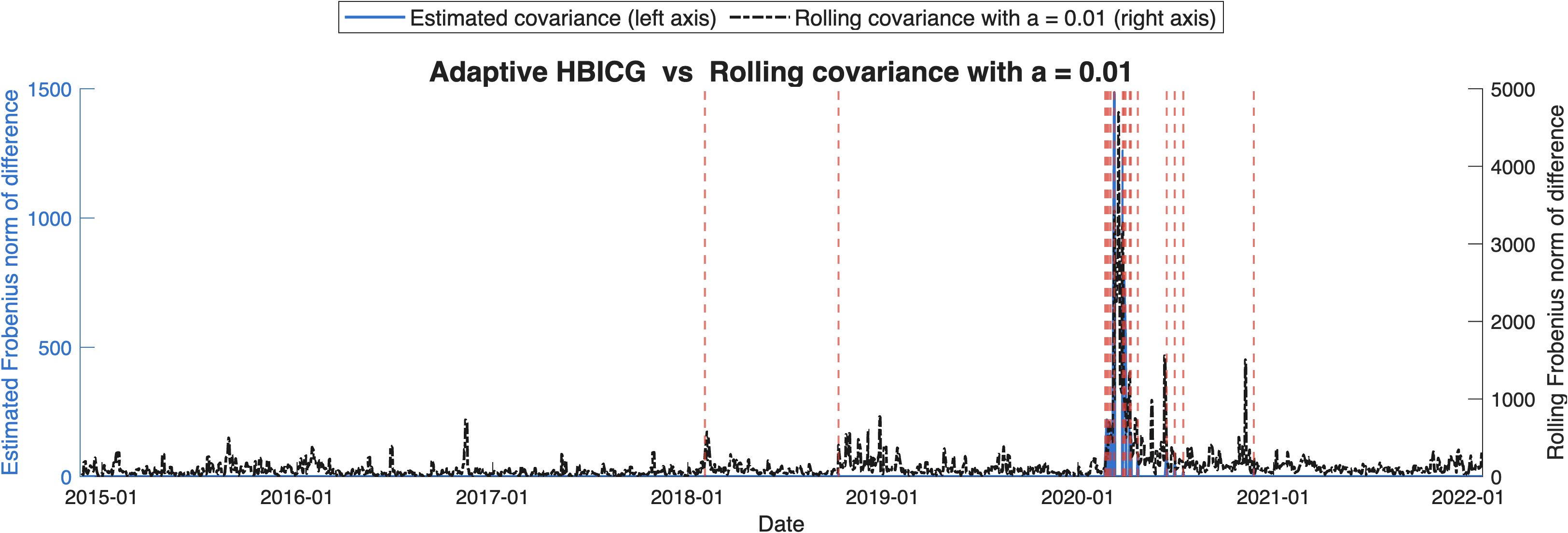}
		\caption{GFlsL: Adaptive HBICG.}
		\label{fig:rd-ad-hbicg}
	\end{subfigure}\\[-4pt]
	\begin{subfigure}[t]{0.44\textwidth}
		\centering
		\includegraphics[width=\textwidth]{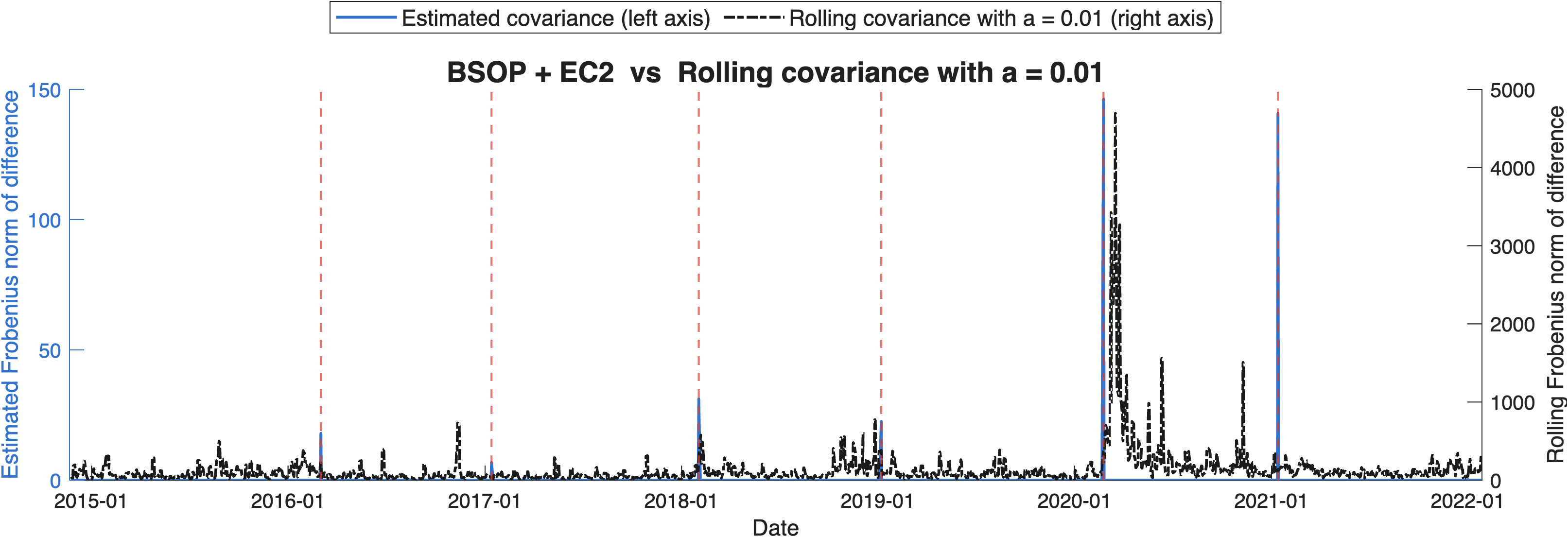}
		\caption{BSOP + EC2.}
		\label{fig:rd-bsop-ec2}
	\end{subfigure}\hfill
	\begin{subfigure}[t]{0.44\textwidth}
		\centering
		\includegraphics[width=\textwidth]{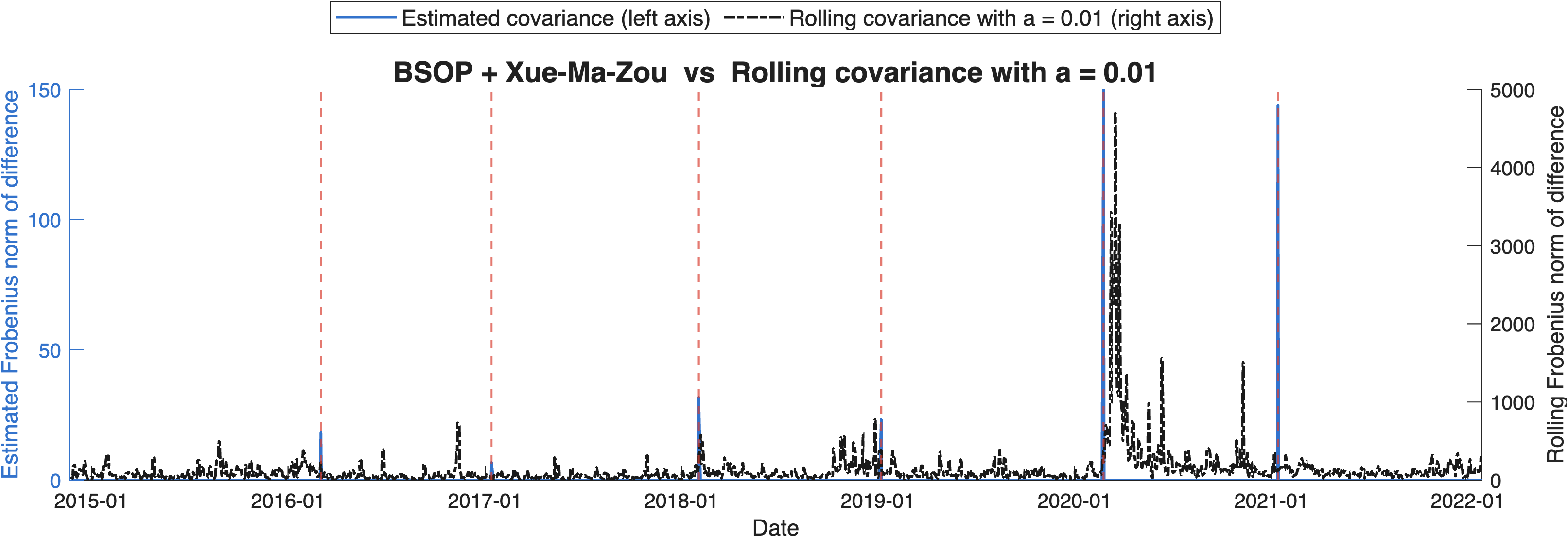}
		\caption{BSOP + Xue--Ma--Zou.}
		\label{fig:rd-bsop-xmz}
	\end{subfigure}\\[-4pt]
	\begin{subfigure}[t]{0.44\textwidth}
		\centering
		\includegraphics[width=\textwidth]{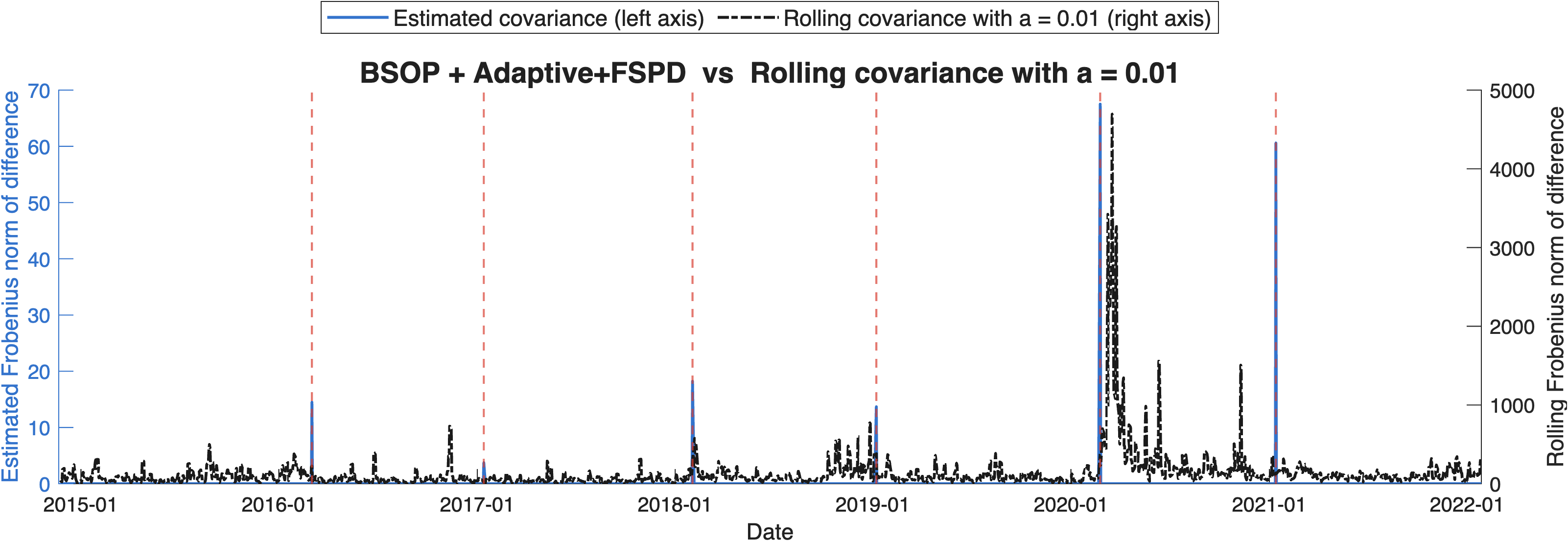}
		\caption{BSOP + AT+.}
		\label{fig:rd-bsop-at}
	\end{subfigure}\hfill
	\begin{subfigure}[t]{0.44\textwidth}
		\centering
		\includegraphics[width=\textwidth]{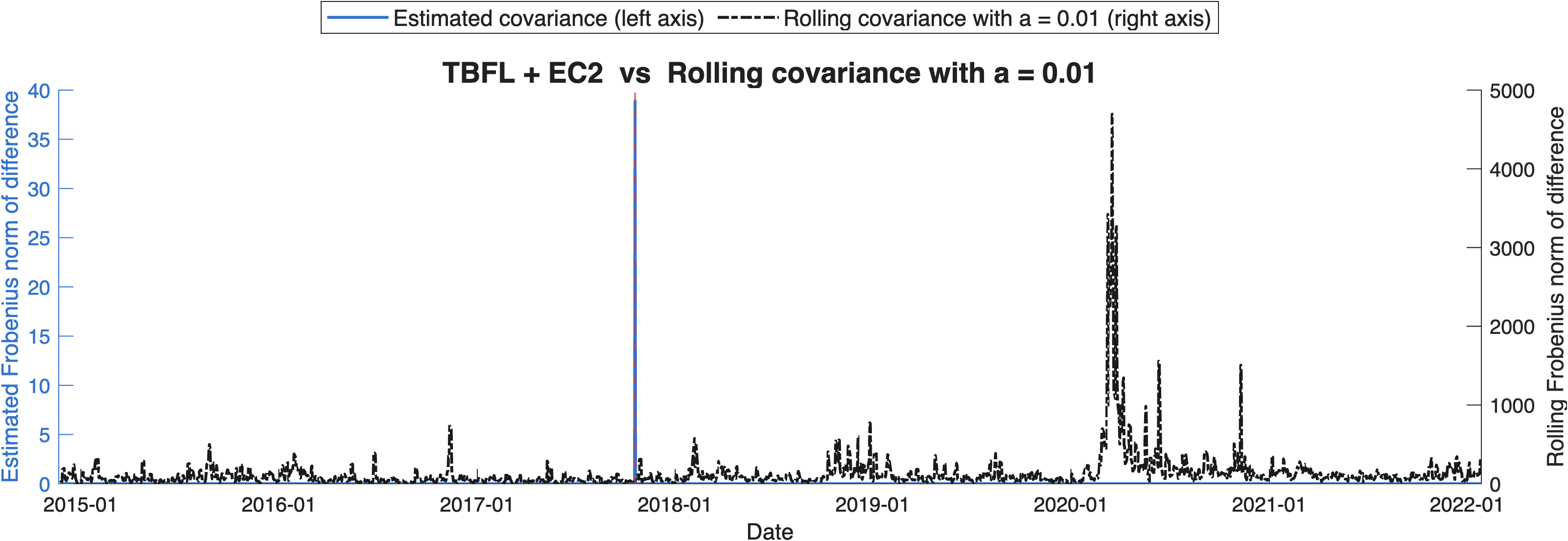}
		\caption{TBFL + EC2.}
		\label{fig:rd-tbfl-ec2}
	\end{subfigure}\\[-4pt]
	\begin{subfigure}[t]{0.44\textwidth}
		\centering
		\includegraphics[width=\textwidth]{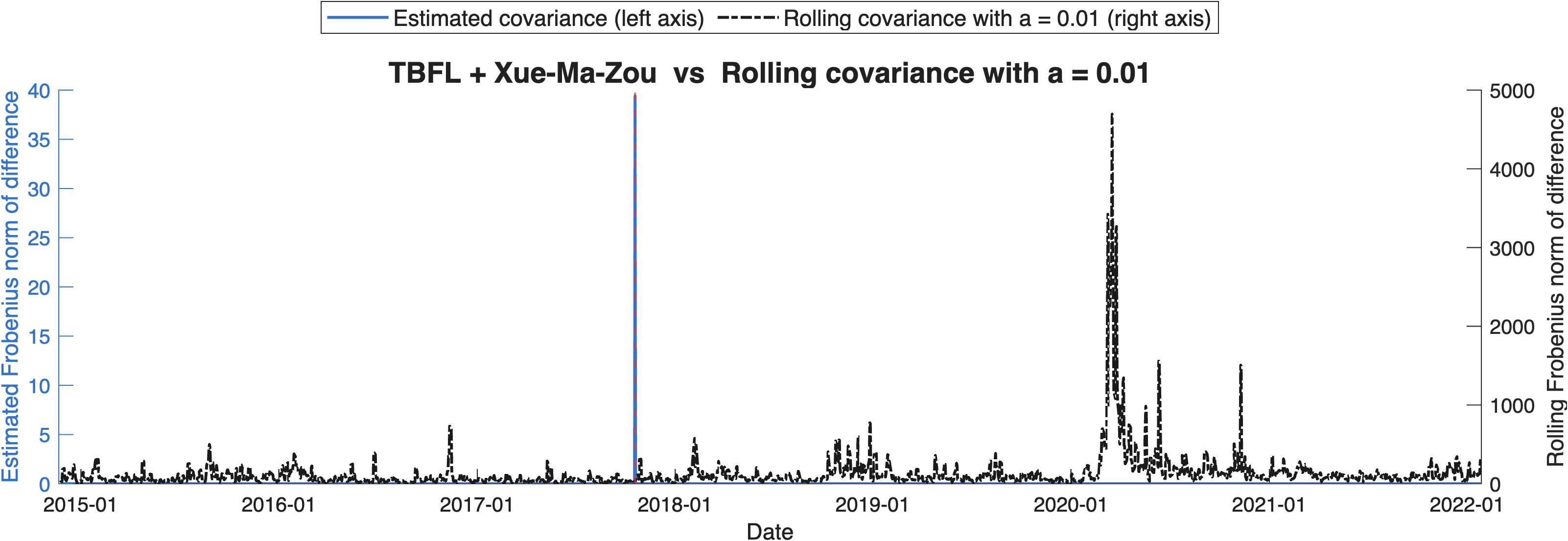}
		\caption{TBFL + Xue--Ma--Zou.}
		\label{fig:rd-tbfl-xmz}
	\end{subfigure}\hfill
	\begin{subfigure}[t]{0.44\textwidth}
		\centering
		\includegraphics[width=\textwidth]{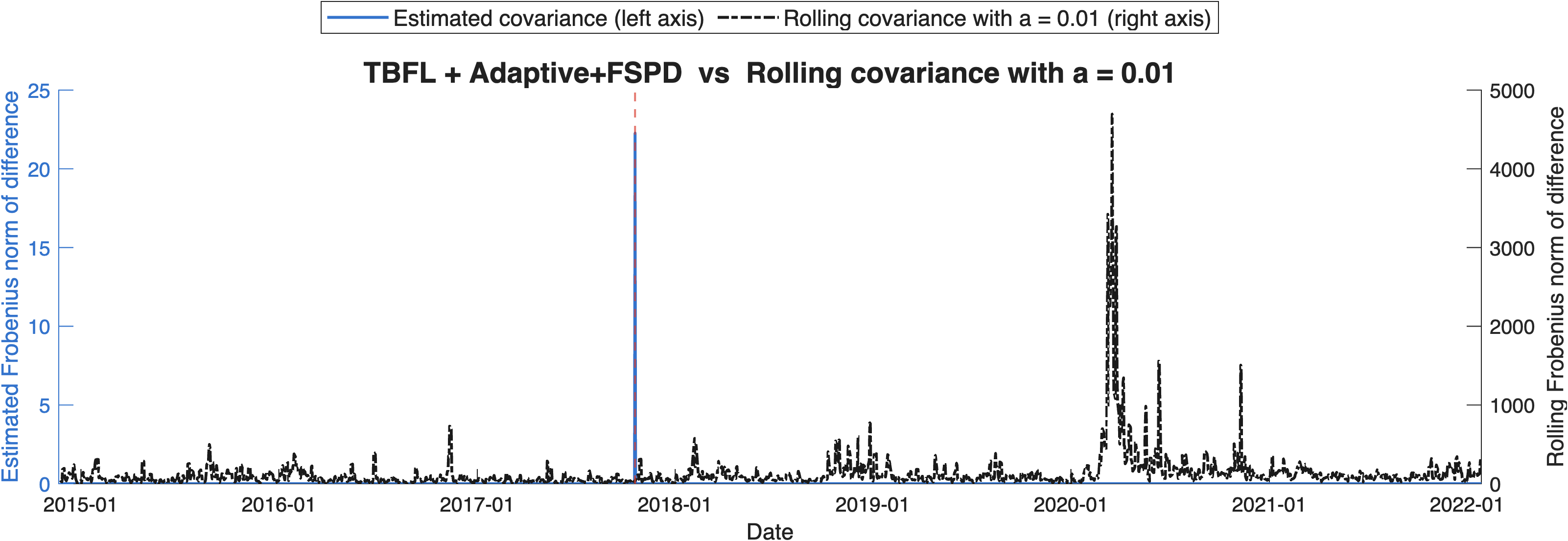}
		\caption{TBFL + AT+.}
		\label{fig:rd-tbfl-at}
	\end{subfigure}\\[-4pt]
	\hspace*{\fill}%
	\begin{subfigure}[t]{0.44\textwidth}
		\centering
		\includegraphics[width=\textwidth]{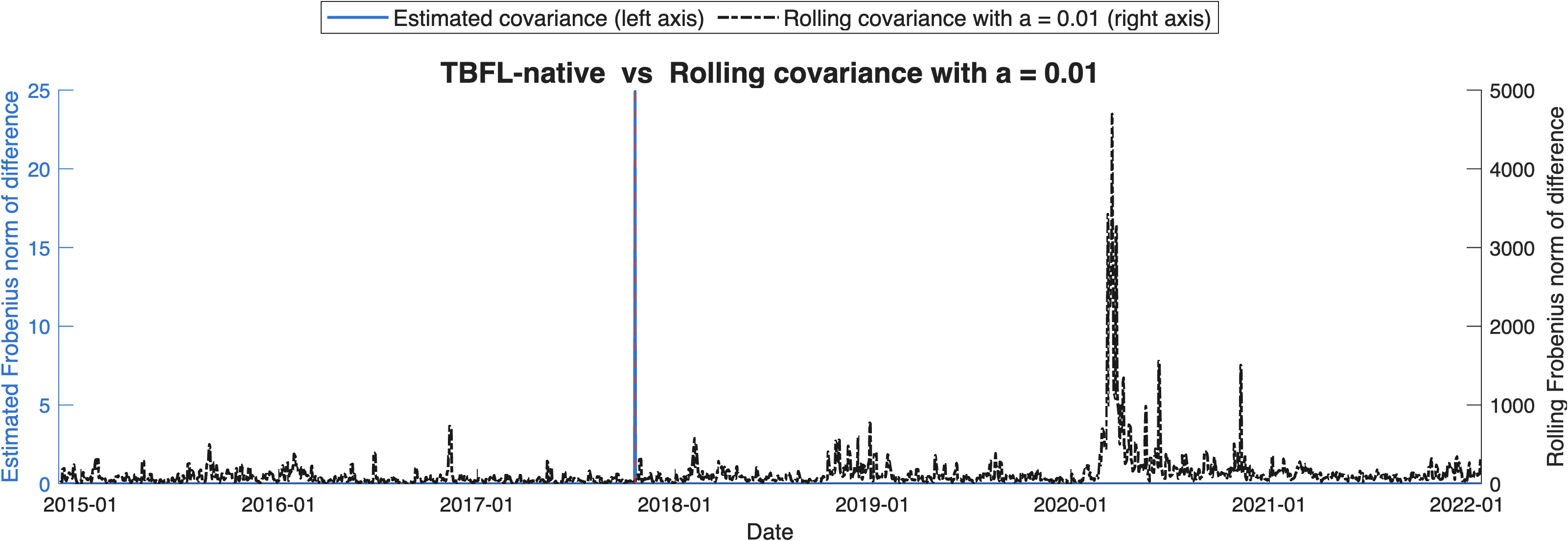}
		\caption{TBFL native.}
		\label{fig:rd-tbfl-native}
	\end{subfigure}%
	\hspace*{\fill}
	\caption{Each subfigure displays the Frobenius norm of consecutive covariance differences for the estimated path (left axis) and the rolling blended proxy (right axis).
	Vertical dashed lines indicate detected change points.
	Figures~\subref{fig:rd-na-hbic}--\subref{fig:rd-ad-hbicg}: GFlsL estimators; Figures~\subref{fig:rd-bsop-ec2}--\subref{fig:rd-tbfl-native}: two-step competitors.}
	\label{fig:realdata-diagnostics}
\end{figure}

\section{Concluding remarks}

We proposed a filtering procedure for change-point detection in variance-covariance matrix. We showed that the estimated change points are sufficiently close to the true change points and established the consistency of the estimated covariance matrix in each regime. We also proposed an ADMM procedure to solve the optimization problem with change points. \\
Alternatively to adaptive regularization requiring a two-step procedure, non-convex penalization with fusion may be considered, which would require different implementation strategies. Moreover, it would be interesting to integrate jumps in both the first order and the second order moments. We shall leave these topics for future research.

\vspace*{0.2cm}

\noindent\textbf{Acknowledgements}
Benjamin Poignard was supported by JSPS KAKENHI (25K16617) and RIKEN.
\appendix
\numberwithin{equation}{section}

\makeatletter 
% "activate" the preparatory code, but for section-level headers only
\newcommand{\section@cntformat}{Appendix \thesection\ }
\makeatother

\section{Intermediary results}\label{appendix:intermediary}

\begin{lemma}\label{bound_gradient}
Suppose Assumption \ref{assumption_dgp}, Assumption \ref{assumption_regularity} and Assumption \ref{assumption_rates}(i) are satisfied. For a sequence $\delta_T \rightarrow 0$ as $T\rightarrow\infty$, then $\normalfont\underset{r-s \geq T\delta_T}{\underset{1 \leq s<r\leq T+1}{\sup}} \left\|\frac{1}{\sqrt{r-s}}\overset{r-1}{\underset{t=s}{\sum}}\big(X_tX^\top_t-\text{Var}(X_t)\big)\right\|_{\max} = O_p(\sqrt{\log(pT)})$.
\end{lemma}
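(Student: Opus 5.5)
The plan is a union bound over all index pairs $(s,r)$ and all entries $(k,l)$, combined with an exponential inequality for sums of strongly mixing variables with Weibull-type tails (Merlevède, Peligrad and Rio, 2011). For fixed $1\le k,l\le p$, set $Z^{kl}_t \coloneqq X_{k,t}X_{l,t}-\Eb[X_{k,t}X_{l,t}]$. Then $(Z^{kl}_t)_t$ is centered and, as a measurable function of $X_t$, inherits the geometric $\alpha$-mixing of Assumption \ref{assumption_dgp}(i). By Assumption \ref{assumption_dgp}(ii) it satisfies $\sup_t\Pb(|Z^{kl}_t|>s)\le \exp(1-(s/b')^{\gamma})$ for some $b'>0$; centering only changes constants, since $\Eb|X_{k,t}X_{l,t}|$ is bounded uniformly by the same tail condition, or by Assumption \ref{assumption_regularity}.

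The Merlevède--Peligrad--Rio inequality applies with mixing exponent $\gamma_1=1$ and tail exponent $\gamma_2=\gamma$. Let $\tilde\gamma$ be defined by $1/\tilde\gamma = 1+1/\gamma$. For any block length $n=r-s\ge 1$ and $x>0$ it gives
\[
\Pb\Big(\Big|\sum_{t=s}^{r-1}Z^{kl}_t\Big|\ge x\Big)\le n\exp\Big(-\frac{x^{\tilde\gamma}}{C_1}\Big)+\exp\Big(-\frac{x^2}{C_2(1+nV)}\Big)+\exp\Big(-\frac{x^2}{C_3 n}\exp\Big(\frac{x^{\tilde\gamma(1-\tilde\gamma)}}{C_4(\log x)^{\tilde\gamma}}\Big)\Big),
\]
where $V$ is a uniform bound on the variance proxy and is finite under our assumptions. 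I will take $x=M\sqrt{n\log(pT)}$ with $M$ large. The second term is then at most $\exp(-c M^2\log(pT))$, and the third term is dominated by the second.

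The main obstacle is the first, heavy-tail, term, which equals $n\exp(-(M\sqrt{n\log(pT)})^{\tilde\gamma}/C_1)$. It must also be $O((pT)^{-c'M})$-small, uniformly over $n\ge T\delta_T$. This is where Assumption \ref{assumption_rates}(i) enters. Since $\tilde\gamma=\gamma/(1+\gamma)$, we need $(n\log(pT))^{\tilde\gamma/2}\gtrsim \log(pT)$, i.e. $n\gtrsim \log(pT)^{2/\tilde\gamma-1}=\log(pT)^{(2+\gamma)/\gamma}$. That is exactly the condition $T\delta_T\ge c_v\log(pT)^{(2+\gamma)/\gamma}$. With $n\ge T\delta_T$ and $M$ large relative to $c_v$ and $C_1$, the first term is at most $T\exp(-c''M^{\tilde\gamma}\log(pT))$. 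I will check the exponent arithmetic carefully at this step, possibly splitting $x$ into the two regimes of the inequality.

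Finally I sum over the at most $p^2$ entries and at most $(T+1)^2$ pairs $(s,r)$ with $r-s\ge T\delta_T$. This gives
\[
\Pb\Big(\sup_{s,r}\Big\|\frac{1}{\sqrt{r-s}}\sum_{t=s}^{r-1}\big(X_tX_t^\top-\mathrm{Var}(X_t)\big)\Big\|_{\max}>M\sqrt{\log(pT)}\Big)\le p^2(T+1)^3\big[(pT)^{-cM^2}+(pT)^{-c''M^{\tilde\gamma}}\big],
\]
which tends to zero for $M$ large enough. This yields the claimed $O_p(\sqrt{\log(pT)})$ bound. Note that $\mathrm{Var}(X_t)=\Eb[X_tX_t^\top]$ because $X_t$ is centered, and that piecewise-constant variances across true blocks cause no difficulty, since the inequality only requires centered, uniformly tailed summands.
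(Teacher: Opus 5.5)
Your proposal is correct and follows essentially the paper's route: a per-entry, per-window Merlev\`ede--Peligrad--Rio exponential inequality (which the paper invokes through Lemma A.2 of \cite{Qian2016}), a union bound over the $p^2$ entries and $O(T^2)$ windows, and Assumption~\ref{assumption_rates}(i) to ensure $(n\log(pT))^{\gamma/(2(1+\gamma))}\gtrsim\log(pT)$ in the heavy-tail term. Your sub-Gaussian term $\exp(-cM^2\log(pT))$ is the correct form (the paper's $\exp(-C(r-s)\log(pT)/C_2)$ carries a spurious factor $r-s$), and it still suffices once $M$ is large.
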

\begin{proof}
Recall that $\Sigma^*_{j}$ is the true variance-covariance of $X_t$, with $t \in \Bc^*_j$. Now take any $s \leq t \leq r-1$, with $s,r \in \{1,\ldots,T\}$ such that $r-s\geq T\delta_T$. By Assumption \ref*{assumption_dgp}, we can apply Lemma A.2 of \cite{Qian2016} on $\zeta_{kl,t}\coloneqq  X_{k,t}X_{l,t}$. Note that the latter result follows from Theorem 1 of \cite{Merlevede2011}, which states the mixing condition $\alpha(t) \leq \exp(-c_1t^{\gamma_1})$ for some $c_1,\gamma_1>0$; then for $c_{\alpha}=1,\gamma_1=1$, we may take $\rho = \exp(-2c_1)$, which allows us to apply Lemma A.2 of \cite{Qian2016}.
Thus, $\forall 1 \leq k,l \leq p$, there exist constants $C_1,C_2$ such that, for any $C>0$ large enough, with $r>s$:
{\footnotesize{\begin{align*}
\Pb\left(\!\big|\frac{1}{\sqrt{r\!-\!s}}\overset{r\!-\!1}{\underset{t=s}{\sum}}\big(X_{k,t}X_{l,t}\!-\!\text{Var}(X_t)_{kl}\big)\big|\!\geq\!C\sqrt{\log(pT)}\!\right) \leq (T\!+\!1) \exp\!\Big(\!-\,\frac{(C(r\!-\!s)\log(pT))^{\frac{\gamma}{2(1+\gamma)}}}{C_1}\!\Big) \!+\! \exp\!\Big(\!\!-\,\frac{C(r\!-\!s)\log(pT)}{C_2}\!\Big).
\end{align*}}}
Applying the previous inequality, by Bonferroni's inequality:
\begin{align*}
  & \Pb\left(\underset{r-s \geq T\delta_T}{\underset{1 \leq s<r\leq T+1}{\sup}}\|\frac{1}{\sqrt{r-s}}\overset{r-1}{\underset{t=s}{\sum}}\big(X_tX^\top_t-\text{Var}(X_t)\big)\|_{\max}\geq C\sqrt{\log(pT)}\right) \\
  \leq {} & T^2 \underset{r-s \geq T\delta_T}{\underset{1 \leq s<r\leq T+1}{\sup}} \Pb\Big(\underset{1 \leq k,l \leq p}{\max}\,|\frac{1}{\sqrt{r-s}}\overset{r-1}{\underset{t=s}{\sum}}\big(X_tX^\top_t-\text{Var}(X_t)\big)_{kl}|\geq C\sqrt{\log(pT)}\Big) \\
  \leq {} & T^2 \underset{r-s \geq T\delta_T}{\underset{1 \leq s<r\leq T+1}{\sup}} p^2 \Big\{ (T+1) \, \exp\Big(-\frac{(C(r-s)\log(pT))^{\frac{\gamma}{2(1+\gamma)}}}{C_1}\Big) + \exp\Big(-\frac{C(r-s)\log(pT)}{C_2}\Big)\Big\} \\
  \leq {} & \exp\Big(-\frac{(CT\delta_T\log(pT))^{\frac{\gamma}{2(1+\gamma)}}}{C_1}+4\log(pT)\Big) + \exp\Big(-\frac{CT\delta_T\log(pT)}{C_2}+2\log(pT)\Big), %\rightarrow 0 \; \text{as} \; T \rightarrow \infty,
\end{align*}
which goes to $0$ as $T \rightarrow \infty$ by Assumption \ref*{assumption_rates}(i) implying $(T\delta_T\log(pT))^{(\gamma/(2(1+\gamma)))} \propto \log(pT)$.
\end{proof}

\begin{lemma}\label{optimality_cond}
Consider problem (\ref{stat_crit_1}). Define $\Gamma_t = \Theta_{t}-\Theta_{t-1}$, $ t \geq 2$ and $\Gamma_1 = \Theta_1$. The GFlsL estimator $\{\widetilde\Theta_t\}^T_{t=1}$ satisfies the conditions $\forall t \in \{1,\ldots,T\}, \; \frac{1}{T}\overset{T}{\underset{r=t}{\sum}}\Big(\widetilde{\Theta}_r-X_rX^\top_r\Big) + \lambda \widetilde{E}_{t}=\mathbf{0}_{p \times p}$,
where $\widetilde{E}_{t}$ is the sub-gradient matrix defined by $\widetilde{E}_{1}=\mathbf{0}_{p \times p}$ and for $t=2,\ldots,T$, $\widetilde{E}_{t}$ satisfies
\begin{equation*}
 \widetilde{E}_{t} =\frac{\widetilde{\Gamma}_{t}}{\|\widetilde{\Gamma}_{t}\|_F}\;\; \text{if} \;\; \widetilde{\Gamma}_{t} \neq \mathbf{0}_{p \times p},\;\; \text{and} \;\;
\|\widetilde{E}_{t}\|_F \leq 1 \;\; \text{if} \;\; \|\widetilde{\Gamma}_{t}\|_F=0.
\end{equation*}
\end{lemma}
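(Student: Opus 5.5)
We will derive the stated conditions from the first-order (KKT) optimality conditions of the convex problem (\ref{stat_crit_1}), after reparametrizing it in terms of the increments $\Gamma_t$.

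\textbf{Reparametrization.} Using $\Theta_t=\sum_{r\le t}\Gamma_r$, the criterion becomes a function of $(\Gamma_1,\ldots,\Gamma_T)$:
\begin{equation*}
F(\Gamma)=\frac{1}{2T}\sum_{t=1}^T\Big\|\sum_{r\le t}\Gamma_r-X_tX_t^\top\Big\|_F^2-\frac{1}{2T}\sum_{t=1}^T\|X_tX_t^\top\|_F^2+\lambda\sum_{t=2}^T\|\Gamma_t\|_F .
\end{equation*}
This uses the identity $\mathrm{tr}(\Theta^\top\Theta-2X_tX_t^\top\Theta)=\|\Theta-X_tX_t^\top\|_F^2-\|X_tX_t^\top\|_F^2$, together with the symmetry of $\Theta_t$ and $X_tX_t^\top$. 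The map $\Theta\mapsto\Gamma$ is a linear bijection, so minimizers correspond one to one.

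\textbf{Handling the constraint.} The constraint set $\{\Theta_t\succ 0\}$ is open. Hence at a minimizer lying in this set, the normal cone is trivial and only the unconstrained subdifferential matters. The function $F$ is the sum of a smooth strictly convex quadratic and a separable sum of Frobenius norms. Therefore $0\in\partial F(\widetilde\Gamma)$ is necessary and sufficient, and the subdifferential splits blockwise in $\Gamma_t$.

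\textbf{Computing the subgradient in each block.} The partial derivative of the quadratic part with respect to $\Gamma_t$ collects every term $r\ge t$ in which $\Gamma_t$ appears:
\begin{equation*}
\frac1T\sum_{r=t}^T\big(\widetilde\Theta_r-X_rX_r^\top\big).
\end{equation*}
The subdifferential of $\lambda\|\Gamma_t\|_F$ for $t\ge 2$ is $\lambda\widetilde E_t$. Here $\widetilde E_t=\widetilde\Gamma_t/\|\widetilde\Gamma_t\|_F$ when $\widetilde\Gamma_t\neq 0$, and $\widetilde E_t$ is any matrix in the unit Frobenius ball otherwise. The block $\Gamma_1$ is unpenalized, so $\widetilde E_1=\mathbf{0}_{p\times p}$. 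Setting each block of the subgradient to zero gives exactly the stated system.

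\textbf{Main obstacle.} The only delicate point is the positive-definiteness constraint. Because it is open, a minimizer might not exist, or could escape to the boundary. I would simply invoke the convention implicit in (\ref{stat_crit_1}) that $\widetilde\Theta_t$ is an interior minimizer. Alternatively, I would note that if the constraint is inactive, the conditions are the unconstrained KKT conditions. The symmetry of all matrices ensures that the gradient expression needs no $\Sym(\cdot)$ correction.
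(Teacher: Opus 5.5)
Your proposal is correct and follows essentially the same route as the paper: reparametrize in the increments $\Gamma_t$, note that $\Gamma_t$ enters every $\Theta_r$ with $r\ge t$, and set each block of the subdifferential to zero, with $\widetilde E_1=\mathbf{0}_{p\times p}$ because $\Gamma_1$ is unpenalized. Your treatment of the constraint is also right, and it supplies a justification the paper leaves implicit. It is not really an obstacle: the lemma presupposes that a minimizer exists, such a minimizer automatically lies in the open set $\{\Theta_t\succ 0\}$, so it is a local and hence (by convexity) global unconstrained minimizer.
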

\begin{proof}
Defining $\Gamma_t = \Theta_{t}-\Theta_{t-1}$, $ t \geq 2$ and $\Gamma_1 = \Theta_1$, the problem stated in (\ref{stat_crit_1}) can be recast as a minimization of the function
\begin{equation*}
\Gb_{\lambda}(\{\Theta_t\}_{t=1}^T,\!\mathcal{X}_T)\!=\!\frac{1}{2T}\overset{T}{\underset{t=1}{\sum}}\text{tr}(\Big(\overset{t}{\underset{s=1}{\sum}}\Gamma_s\Big)^\top\Big(\overset{t}{\underset{s=1}{\sum}}\Gamma_s\Big)- (X_tX^\top_t)\Big(\overset{t}{\underset{s=1}{\sum}}\Gamma_s\Big) - \Big(\overset{t}{\underset{s=1}{\sum}}\Gamma_s\Big)^\top (X_tX^\top_t)) + \lambda\overset{T}{\underset{t=2}{\sum}} \|\Gamma_t\|_F.
\end{equation*}
Invoking subdifferential calculus, a necessary and sufficient condition for $(\widetilde\Gamma_t)_{1\leq t \leq T}$ to minimize $\Gb_{\lambda}(\cdot,\mathcal{X}_T)$ is that for all $t = 1,\ldots,T$, $\mathbf{0}_{p \times p} \in \Rb^{p \times p}$ belongs to the subdifferential of $\Gb_{\lambda}(\cdot,\mathcal{X}_T)$ with respect to $(\Gamma_t)_{1\leq t \leq T}$ at $(\widetilde\Gamma_t)_{1\leq t \leq T}$, that is
$\frac{1}{T}\overset{T}{\underset{r=t}{\sum}}\Big(\Big(\overset{r}{\underset{s=1}{\sum}}\widetilde\Gamma_s\Big)-X_rX^\top_r\Big) + \lambda \widetilde{E}_{t}=\mathbf{0}_{p \times p}$, with the subgradient matrices defined as: $\widetilde{E}_{1}=\mathbf{0}_{p \times p}$ and
$\widetilde{E}_{t} =\frac{\widetilde{\Gamma}_{t}}{\|\widetilde{\Gamma}_{t}\|_F}$ if $\widetilde{\Gamma}_{t} \neq \mathbf{0}_{p \times p}$, and $\|\widetilde{E}_{t}\|_F \leq 1$ if $\|\widetilde{\Gamma}_{t}\|_F=0$.
Now if $t=\widetilde{T}_j$ for $j \in \{1,\ldots,\widetilde{m}\}$ is one of the estimated change dates, then $\widetilde{\Gamma}_t \neq \mathbf{0}_{p \times p}$, and $\frac{1}{T}\overset{T}{\underset{r=\widetilde{T}_j}{\sum}}\Big(\widetilde{\Theta}_r - X_rX^\top_r \Big) +  \lambda\frac{\widetilde{\Gamma}_{\widetilde{T}_j}}{\|\widetilde{\Gamma}_{\widetilde{T}_j}\|_F}=\mathbf{0}_{p \times p}$,
since the change points cannot occur at $t=1$ and $\overset{r}{\underset{s=1}{\sum}}\Gamma_s= \widetilde{\Theta}_r$. When $t=1$, then the first order condition with respect to $\Gamma_t$ yields $\frac{1}{T}\overset{T}{\underset{r=1}{\sum}}\Big(\Big(\overset{r}{\underset{s=1}{\sum}}\widetilde\Gamma_s\Big)-X_rX^\top_r\Big) =\mathbf{0}_{p \times p}$,
so that $\frac{1}{T}\|\overset{T}{\underset{r=1}{\sum}}\Big(\Big(\overset{r}{\underset{s=1}{\sum}}\widetilde\Gamma_s\Big)-X_rX^\top_r\Big)\|_F\leq \lambda$.
\end{proof}

\begin{lemma}\label{optimality_cond_adaptive}
Consider problem (\ref{stat_crit_adaptive}). Define $\Gamma_t = \Theta_{t}-\Theta_{t-1}$, $ t \geq 2$ and $\Gamma_1 = \Theta_1$. The adaptive GFlsL estimator $\{\widehat\Theta_t\}^T_{t=1}$ satisfies the conditions
\begin{equation*}
\forall t \in \{1,\ldots,T\}, \; \frac{1}{T}\overset{T}{\underset{r=t}{\sum}}\Big(\widehat{\Theta}_r-X_rX^\top_r\Big) + \lambda_1\overset{T}{\underset{r=t}{\sum}} \underset{u\neq v}{\sum}\xi_{uv,1r}\widehat{E}_{uv,1r} + \lambda_2\xi_{2t}\widehat{E}_{2t}=\mathbf{0}_{p \times p},
\end{equation*}
where $\widehat{E}_{1t}, \widehat{E}_{2t}$ are the sub-gradient matrices defined by
\begin{equation*}
\forall u\neq v, \; \widehat{E}_{uv,1t} = \begin{cases}
\normalfont\text{sgn}\left(\overset{t}{\underset{s=1}{\sum}}\widehat{\Gamma}_{uv,s}\right)& \text{if} \; \overset{t}{\underset{s=1}{\sum}}\widehat{\Gamma}_{uv,s} \neq 0,\\
\in [-1,1] & \text{otherwise},
\end{cases}
\end{equation*}
and $\widehat{E}_{21}=\mathbf{0}_{p \times p}$ and for $t=2,\ldots,T$, $\widehat{E}_{2t}$ satisfies $\widehat{E}_{2t} =\frac{\widehat{\Gamma}_{t}}{\|\widehat{\Gamma}_{t}\|_F}$ if $\widehat{\Gamma}_{t} \neq \mathbf{0}_{p \times p}$, and $\|\widehat{E}_{2t}\|_F \leq 1$ if $\|\widehat{\Gamma}_{t}\|_F=0$.
\end{lemma}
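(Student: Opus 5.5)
We will follow the proof of Lemma \ref{optimality_cond} and reparametrize problem (\ref{stat_crit_adaptive}) in terms of the increments $\Gamma_1=\Theta_1$ and $\Gamma_t=\Theta_t-\Theta_{t-1}$, $t\geq 2$. Then $\Theta_r=\sum_{s=1}^r\Gamma_s$, and the objective becomes
\begin{equation*}
\Gb(\{\Gamma_t\})=\frac{1}{2T}\sum_{r=1}^T\text{tr}\Big(\Theta_r^\top\Theta_r-(X_rX_r^\top)^\top\Theta_r-\Theta_r^\top X_rX_r^\top\Big)+\lambda_1\sum_{r=1}^T\sum_{u\neq v}\xi_{uv,1r}\Big|\sum_{s=1}^r\Gamma_{uv,s}\Big|+\lambda_2\sum_{t=2}^T\xi_{2t}\|\Gamma_t\|_F .
\end{equation*}
The map $\{\Theta_t\}\mapsto\{\Gamma_t\}$ is linear and bijective. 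The loss is convex and differentiable, each penalty is convex, and the constraint $\Theta_t\succ 0$ is open. So at the minimizer the constraint is inactive, and a necessary and sufficient condition is $\mathbf{0}\in\partial_{\Gamma_t}\Gb$ for every $t$. Since all three terms are finite convex functions on the whole space, the subdifferential sum rule applies without any qualification issue.

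The computation proceeds term by term, for a fixed $t$.

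\textbf{Loss.} Since $\Theta_r$ depends on $\Gamma_t$ exactly when $r\geq t$, the chain rule gives the gradient $\frac{1}{T}\sum_{r=t}^T(\Theta_r-X_rX_r^\top)$. The factor $1/(2T)$ cancels against the factor $2$ coming from the two symmetric cross terms, using that $X_rX_r^\top$ is symmetric.

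\textbf{LASSO term.} The $(u,v)$ entry of $\Gamma_t$ enters $|\Theta_{uv,r}|$ for every $r\geq t$. The subdifferential of $|\cdot|$ at $\widehat\Theta_{uv,r}=\sum_{s\leq r}\widehat\Gamma_{uv,s}$ is $\text{sgn}$ of that quantity when it is nonzero, and $[-1,1]$ otherwise. This gives the matrix $\lambda_1\sum_{r=t}^T\sum_{u\neq v}\xi_{uv,1r}\widehat E_{uv,1r}$, understood entrywise as placing $\xi_{uv,1r}\widehat E_{uv,1r}$ at position $(u,v)$ and zero on the diagonal.

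\textbf{Fusion term.} The term $\lambda_2\xi_{2t}\|\Gamma_t\|_F$ involves only $\Gamma_t$ for $t\geq2$. Its subdifferential is $\lambda_2\xi_{2t}\widehat\Gamma_t/\|\widehat\Gamma_t\|_F$ when $\widehat\Gamma_t\neq\mathbf{0}$, and $\lambda_2\xi_{2t}$ times the closed unit Frobenius ball otherwise. For $t=1$ there is no fusion term, which gives $\widehat E_{21}=\mathbf{0}$.

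Summing the three contributions yields the stated identity. As in Lemma \ref{optimality_cond}, we then note that at an estimated change point $t=\widehat T_j$ we have $\widehat\Gamma_t\neq\mathbf{0}$, so $\widehat E_{2t}$ is the normalized difference.

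The main care point is bookkeeping rather than difficulty. The sum over $r\geq t$ of the LASSO subgradients must be taken with the subgradient selections $\widehat E_{uv,1r}$ chosen consistently across all $t$. This works because the $\widehat E_{uv,1r}$ belong to the underlying function of $\Theta_r$, not to $\Gamma_t$. Accordingly, we will state the result as existence of a single family $\{\widehat E_{uv,1r}\}_r$ together with $\{\widehat E_{2t}\}_t$ satisfying all $T$ equations simultaneously. This follows from applying the subdifferential chain rule $\partial(h\circ A)=A^\top\partial h(A\cdot)$ to the linear map $A:\{\Gamma_t\}\mapsto\{\Theta_r\}$, with $h$ separable across $r$.
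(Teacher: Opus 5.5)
Your proposal is correct and follows the same route as the paper. Both reparametrize by the increments $\Gamma_t$, recast (\ref{stat_crit_adaptive}) as a convex problem in $\{\Gamma_t\}$, and read off the stated identity term by term from $\mathbf{0}$ lying in the subdifferential. Your closing paragraph also tightens the paper's wording, which matters because the LASSO term is non-separable in the $\Gamma_t$. Taking the condition on the joint subdifferential via $\partial(h\circ A)=A^\top\partial h(A\,\cdot)$ yields one common family $\{\widehat{E}_{uv,1r}\}_r$ for all $T$ equations. Separate blockwise conditions ``$\mathbf{0}\in\partial_{\Gamma_t}$'' with $t$-dependent selections would be weaker, and they would not be sufficient.
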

\begin{proof}
Defining $\Gamma_t = \Theta_{t}-\Theta_{t-1}$, $ t \geq 2$ and $\Gamma_1 = \Theta_1$, the problem stated in (\ref{stat_crit_adaptive}) can be recast as a minimization of the function
\begin{align*}
\Gb_{\lambda_1,\lambda_2}(\{\Theta_t\}_{t=1}^T,\!\mathcal{X}_T)
=&\frac{1}{2T}\overset{T}{\underset{t=1}{\sum}}\text{tr}\left(\Big(\overset{t}{\underset{s=1}{\sum}}\Gamma_s\Big)^\top\Big(\overset{t}{\underset{s=1}{\sum}}\Gamma_s\Big)- (X_tX^\top_t)\Big(\overset{t}{\underset{s=1}{\sum}}\Gamma_s\Big) - \Big(\overset{t}{\underset{s=1}{\sum}}\Gamma_s\Big)^\top (X_tX^\top_t)\right) \\
& + \lambda_1\overset{T}{\underset{t=1}{\sum}}\underset{u\neq v}{\sum}\xi_{uv,1t}|\overset{t}{\underset{s=1}{\sum}}\Gamma_{uv,s}| + \lambda_2\overset{T}{\underset{t=2}{\sum}} \xi_{2t}\|\Gamma_t\|_F.
\end{align*}
Invoking subdifferential calculus, a necessary and sufficient condition for $(\widehat\Gamma_t)_{1\leq t \leq T}$ to minimize $\Gb_{\lambda_1,\lambda_2}(\cdot,\mathcal{X}_T)$ is that for all $t = 1,\ldots,T$, $\mathbf{0}_{p \times p} \in \Rb^{p \times p}$ belongs to the subdifferential of $\Gb_{\lambda_1, \lambda_2}(\cdot,\mathcal{X}_T)$ with respect to $(\Gamma_t)_{1\leq t \leq T}$ at $(\widehat\Gamma_t)_{1\leq t \leq T}$, that is
\begin{equation*}
\frac{1}{T}\overset{T}{\underset{r=t}{\sum}}\Big(\Big(\overset{r}{\underset{s=1}{\sum}}\widehat\Gamma_s\Big)-X_rX^\top_r\Big) + \lambda_1\overset{T}{\underset{r=t}{\sum}} \underset{u\neq v}{\sum}\xi_{uv,1r}\widehat{E}_{uv,1r} + \lambda_2\xi_{2t}\widehat{E}_{2t}=\mathbf{0}_{p \times p},
\end{equation*}
with the subgradient matrices defined as: $\widehat{E}_{21}=\mathbf{0}_{p \times p}$ and
$\widehat{E}_{2t} =\frac{\widehat{\Gamma}_{t}}{\|\widehat{\Gamma}_{t}\|_F}\; \text{if} \; \widehat{\Gamma}_{t} \neq \mathbf{0}_{p \times p},\;\; \text{and} \;\;
\|\widehat{E}_{2t}\|_F \leq 1 \; \text{if} \; \|\widehat{\Gamma}_{t}\|_F=0$, and
\begin{equation*}
\forall u\neq v, \; \widehat{E}_{uv,1t} = \begin{cases}
\normalfont\text{sgn}\Bigg(\overset{t}{\underset{s=1}{\sum}}\widehat{\Gamma}_{uv,s}\Bigg)& \text{if} \; \overset{t}{\underset{s=1}{\sum}}\widehat{\Gamma}_{uv,s} \neq 0,\\
\in [-1,1] & \text{otherwise}.
\end{cases}
\end{equation*}
Now if $t=\widehat{T}_j$ for $j \in \{1,\ldots,\widehat{m}\}$ is one of the estimated change-point locations, then $\widehat{\Gamma}_t \neq \mathbf{0}_{p \times p}$, and $\frac{1}{T}\overset{T}{\underset{r=\widehat{T}_j}{\sum}}\Big(\widehat{\Theta}_r - X_rX^\top_r \Big) +  \lambda_1\overset{T}{\underset{r=\widehat{T}_j}{\sum}} \underset{u \neq v}{\sum}\xi_{uv,1r}\widehat{E}_{uv,1r} + \lambda_2\xi_{2\widehat{T}_j}\frac{\widehat{\Gamma}_{\widehat{T}_j}}{\|\widehat{\Gamma}_{\widehat{T}_j}\|_F}=\mathbf{0}_{p \times p}$,
since the change points cannot occur at $t=1$ and $\overset{r}{\underset{s=1}{\sum}}\Gamma_s= \widehat{\Theta}_r$. When $t=1$, then the first order condition with respect to $\Gamma_t$ yields $\frac{1}{T}\overset{T}{\underset{r=1}{\sum}}\Big(\Big(\overset{r}{\underset{s=1}{\sum}}\widehat\Gamma_s\Big)-X_rX^\top_r\Big) +\lambda_1\overset{T}{\underset{r=1}{\sum}} \underset{u \neq v}{\sum}\xi_{uv,1r}\widehat{E}_{uv,1r}=\mathbf{0}_{p \times p}$,
so that 
$\frac{1}{T}\Bigg\|\overset{T}{\underset{r=1}{\sum}}\Big(\Big(\overset{r}{\underset{s=1}{\sum}}\widehat\Gamma_s\Big)-X_rX^\top_r\Big)+\lambda_1\overset{T}{\underset{r=1}{\sum}} \underset{u \neq v}{\sum}\xi_{uv,1r} \widehat{E}_{1r}\Bigg\|_F\leq \lambda_2\xi_{21}$.
\end{proof}

\begin{lemma}
\label{lemma_unique_inactive_pattern}
Define the estimated partition 
\(\widehat{\Tc}_{\widehat m}\) and consider the associated block-wise criterion
deduced from (\ref{stat_crit_adaptive}). Let $\overline\Sigma=(\overline\Sigma_1,\ldots,\overline\Sigma_{\widehat m+1})$
be a minimizer of the restricted problem over
$\mathcal O_{\widehat m+1}
=
\left\{
(\Sigma_1,\ldots,\Sigma_{\widehat m+1}):
\Sigma_{uv,j}=0
\ \text{for all } (u,v)\notin \Sc_j^*
\right\}$.
Assume that there exist subgradient matrices
$\overline E_{1t}$ and $\overline E_{2,\widehat T_j}$ satisfying the block-wise
KKT conditions and such that, for every $j$ and every
$(u,v)\notin \Sc_j^*$, $\left|
\frac{
\sum_{t\in\widehat{\Bc}_j}
\xi_{uv,1t}\overline E_{uv,1t}
}{
\sum_{t\in\widehat{\Bc}_j}
\xi_{uv,1t}
}
\right|
<1$.
Then every minimizer
$\widehat\Sigma=(\widehat\Sigma_1,\ldots,\widehat\Sigma_{\widehat{m}+1})$
of the block-wise criterion with partition $\widehat{\Tc}_{\widehat{m}}$
satisfies $\widehat\Sigma_{uv,j}=0,
\;
(u,v)\notin \Sc_j^*,\; j=1,\ldots,\widehat m+1$.
\end{lemma}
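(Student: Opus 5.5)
Fix the estimated partition $\widehat{\Tc}_{\widehat m}$ and work with the block-wise criterion obtained from (\ref{stat_crit_adaptive}) by imposing $\Theta_t=\Sigma_j$ for $t\in\widehat{\Bc}_j$:
\begin{align*}
F(\Sigma)={}&\frac{1}{2T}\sum_{j}\sum_{t\in\widehat{\Bc}_j}\|X_tX_t^\top-\Sigma_j\|_F^2+\lambda_1\sum_j\sum_{u\neq v}\Big(\sum_{t\in\widehat{\Bc}_j}\xi_{uv,1t}\Big)|\Sigma_{uv,j}|\\
&+\lambda_2\sum_{j=1}^{\widehat m}\xi_{2\widehat T_j}\|\Sigma_{j+1}-\Sigma_j\|_F .
\end{align*}
We use the standard primal–dual witness / strict dual feasibility argument, in the spirit of the uniqueness results for the LASSO. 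The smooth part of $F$ is strictly convex in $\Sigma$: it equals $\frac{1}{2T}\sum_j|\widehat{\Bc}_j|\,\|\Sigma_j\|_F^2$ plus linear terms, and $|\widehat{\Bc}_j|\geq1$. The penalties are convex. Hence $F$ is strictly convex, so its minimizer $\widehat\Sigma$ is in fact unique. Any argument that shows $\overline\Sigma$ is a global minimizer therefore already gives $\widehat\Sigma=\overline\Sigma\in\mathcal O_{\widehat m+1}$.

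\textbf{Step 1 (global KKT).} Build a full subgradient certificate at $\overline\Sigma$. By assumption, $\overline\Sigma$ and the matrices $\overline E_{1t},\overline E_{2,\widehat T_j}$ satisfy the block-wise KKT equations. These are the block sums of the conditions of Lemma \ref{optimality_cond_adaptive}. On entries $(u,v)\in\Sc^*_j$ the equations hold because $\overline\Sigma$ minimizes the restricted problem. On entries $(u,v)\notin\Sc^*_j$ the aggregated $\ell_1$ subgradient of block $j$ is $\lambda_1\big(\sum_{t\in\widehat{\Bc}_j}\xi_{uv,1t}\big)z_{uv,j}$ with
$$z_{uv,j}=\frac{\sum_{t}\xi_{uv,1t}\overline E_{uv,1t}}{\sum_t\xi_{uv,1t}}.$$
The hypothesis $|z_{uv,j}|<1$ means $z_{uv,j}$ is a valid subgradient of $|\cdot|$ at $0=\overline\Sigma_{uv,j}$. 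Consequently $0\in\partial F(\overline\Sigma)$ in the full, unrestricted space, and $\overline\Sigma$ is a global minimizer of $F$.

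\textbf{Step 2 (every minimizer vanishes off the support).} This is stated without relying on strict convexity, so that it remains valid if one prefers not to use it. Let $\widehat\Sigma$ be any minimizer. Convexity gives $F(\widehat\Sigma)=F(\overline\Sigma)$. Let $G=\overline\Sigma-\widehat\Sigma$ denote the common gradient of the smooth part. Since the quadratic part is strictly convex in each $\Sigma_j$, the minimizers share this gradient; with the strict convexity noted above they in fact coincide. Write the optimality identity in the form
$$F(\widehat\Sigma)-F(\overline\Sigma)\geq \sum_{j,(u,v)\notin\Sc^*_j}\lambda_1\Big(\sum_t\xi_{uv,1t}\Big)\big(1-|z_{uv,j}|\big)|\widehat\Sigma_{uv,j}|\ \ge 0 .$$
This follows from the subgradient inequality for each penalty together with $|a|-z a\geq(1-|z|)|a|$. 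Equality of objective values forces the right-hand side to vanish. Since $1-|z_{uv,j}|>0$ and the weights $\xi_{uv,1t}>0$, we get $\widehat\Sigma_{uv,j}=0$ for every $(u,v)\notin\Sc^*_j$.

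The main point needing care is Step 1. One must check that the block-wise KKT system for the restricted problem, completed by the $z_{uv,j}$ on the off-support entries, really is the full subdifferential inclusion for $F$. This requires two things. First, the fused term's subgradients $\overline E_{2,\widehat T_j}$ enter identically in both problems, because that term does not distinguish support from non-support entries. Second, the telescoped sums over $r\geq t$ in Lemma \ref{optimality_cond_adaptive} reduce, on each block, to the per-block equations. I would verify this by differencing the conditions of Lemma \ref{optimality_cond_adaptive} at $t=\widehat T_{j-1}$ and $t=\widehat T_j$.
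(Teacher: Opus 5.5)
Your proof is correct and rests on the same strict-dual-feasibility (primal--dual witness) idea as the paper, but it closes a step the paper leaves implicit. The paper argues by contradiction: a minimizer $\widehat\Sigma$ with $\widehat\Sigma_{uv,j}\neq0$ off $\Sc^*_j$ would need an $\ell_1$ subgradient of modulus one there, which it says contradicts $|z_{uv,j}|<1$. But $z_{uv,j}$ is a subgradient at $\overline\Sigma$, not at $\widehat\Sigma$. So the contradiction only goes through once one knows the certificate (equivalently, the gradient of the quadratic part) is the same at every minimizer, and the paper never justifies this.

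You supply the missing link in two ways. First, completing the certificate gives $0\in\partial F(\overline\Sigma)$, and strict convexity of $F$ (its quadratic part is $\frac{1}{2T}\sum_j|\widehat{\Bc}_j|\,\|\Sigma_j\|_F^2$ plus linear terms) then forces $\widehat\Sigma=\overline\Sigma$. Along this route $|z_{uv,j}|\le 1$ would already suffice, and you get the stronger conclusion $\widehat\Sigma=\overline\Sigma$. That is precisely what the proof of Theorem \ref{theorem_recovery} later invokes when it says $\widehat\Sigma$ coincides with the restricted minimizer. Second, your Step 2 inequality $F(\Sigma)-F(\overline\Sigma)\ge\sum_{j,(u,v)\notin\Sc^*_j}\lambda_1\big(\sum_t\xi_{uv,1t}\big)(1-|z_{uv,j}|)\,|\Sigma_{uv,j}|$ holds for every $\Sigma$ and is the robust variant. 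It needs only convexity of the quadratic part and $F(\widehat\Sigma)\le F(\overline\Sigma)$. This is where the strict inequality $|z_{uv,j}|<1$, together with $\lambda_1>0$ and $\xi_{uv,1t}>0$, is genuinely used, and it would survive a loss that is not strictly convex.

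Some clean-ups:
\begin{itemize}
\item The sentence calling $G=\overline\Sigma-\widehat\Sigma$ ``the common gradient of the smooth part'' is wrong and unused; that gradient is $T^{-1}\big(|\widehat{\Bc}_j|\Sigma_j-\sum_{t\in\widehat{\Bc}_j}X_tX_t^\top\big)_j$. Drop the sentence.
\item No telescoping of Lemma \ref{optimality_cond_adaptive} is needed. The lemma concerns the block-wise criterion $F$ itself, whose KKT system is assumed directly.
\item Both you and the paper tacitly read ``$(u,v)\notin\Sc^*_j$'' as off-diagonal pairs only. Diagonal entries carry no $\ell_1$ weight and should not be forced to zero in $\mathcal O_{\widehat m+1}$.
\end{itemize}
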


\begin{proof}
Let $\overline{\Sigma}$ be the restricted minimizer. By assumption, there exist
subgradient matrices $\overline E_{1t}$ and $\overline E_{2,\widehat T_j}$
satisfying the block-wise KKT conditions. Moreover, for every inactive
coordinate $(u,v)\notin\Sc_j^*$,
$\left|
\frac{
\sum_{t\in\widehat{\Bc}_j}
\xi_{uv,1t}\overline E_{uv,1t}
}{
\sum_{t\in\widehat{\Bc}_j}
\xi_{uv,1t}
}
\right|<1$.
So the KKT conditions are satisfied at $\overline{\Sigma}$ with strict dual
feasibility on the inactive coordinates. Assume that there exists a minimizer $\widehat{\Sigma}$ of the unrestricted
block-wise criterion such that, for some $(u,v)\notin\Sc_j^*$,
$\widehat{\Sigma}_{uv,j}\neq0$. Then the subgradient of the corresponding
$\ell_1$ penalty at this coordinate must have absolute value one. This
contradicts the strict dual feasibility condition above. Therefore every
unrestricted minimizer satisfies $\widehat\Sigma_{uv,j}=0$, for $(u,v)\notin\Sc_j^*$,
for all $j=1,\ldots,\widehat{m}+1$.
\end{proof}

\begin{proposition}[{Dual problem for~\eqref{eq:opt-prob-two-stage}}]\label{prop:dual}
	The dual problem of \eqref{eq:opt-prob-two-stage} is
		\begin{align}
			\max_{\mathbf{Y}} \quad & \left\{ \sum_{t=1}^T \left[-\frac{T}{2} \| W_t \|_F^2 - \left\langle W_t, X_tX_t^{\top} \right\rangle + \epsilon \tr\left(Z_{t+1} - Z_t + W_t - Y_{t, \mathrm{off}}\right) \right] \right\} \notag\\
			\text{s. t.} \quad      & Z_1 = Z_{T+1} = \mathbf{0}_{p \times p}; \notag\\
			                        & Z_{t+1} - Z_t + W_t - Y_{t, \mathrm{off}} \succeq 0 \,\,\,\, \forall t = 1, \dots ,T; \notag\\
			                        & \| Z_t \|_F \leq \lambda_2 \xi_{2t} \,\,\,\, \forall t = 2, \dots, T; \notag\\
			                        & | Y_{uv, t} | \leq \lambda_1 \xi_{uv, 1t} \,\,\,\, \forall t = 1, \dots , T, u, v = 1, \dots , p \text{ with } u \neq v, \label{eq:dual-prob}
		\end{align}
		where \( \mathbf{Y} \coloneqq \left\{ \{W_t\}_{t=1}^T, \{Y_{t, \mathrm{off}}\}_{t=1}^T, \{Z_t\}_{t=2}^T \right\} \) is the dual variable with \( W_t \in \mathcal{S}^p \), \( Y_{t, \mathrm{off}} \in \mathcal{S}_{\mathrm{off}}^p \), \( Z_t \in \mathcal{S}^p \) for all \( t \), and \( W_t \), \( Y_{t, \mathrm{off}} \), and \( Z_t \) are associated with the constraints \( U_t = \Theta_t - X_tX_t^{\top} \), \( \Upsilon_{t, \mathrm{off}} = \Theta_{t, \mathrm{off}} \), and \( D_t = \Theta_t - \Theta_{t-1} \), respectively.
		Moreover, the optimal values of \eqref{eq:opt-prob-two-stage} and \eqref{eq:dual-prob} are the same.
\end{proposition}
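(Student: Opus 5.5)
We derive the dual by Fenchel--Rockafellar/Lagrangian duality, reformulating the primal with auxiliary variables so that each block of the objective has a simple conjugate.

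We rewrite \eqref{eq:opt-prob-two-stage} as a minimization over $\{\Theta_t, U_t, \Upsilon_{t,\mathrm{off}}, D_t\}$ with $\Theta_t \succeq \epsilon I_p$. The objective is
\[
\sum_t \tfrac{1}{2T}\|U_t\|_F^2+\lambda_1\sum_t\sum_{u\neq v}\xi_{uv,1t}|\Upsilon_{uv,t}|+\lambda_2\sum_{t\ge2}\xi_{2t}\|D_t\|_F,
\]
subject to the linear constraints $U_t=\Theta_t-X_tX_t^\top$, $\Upsilon_{t,\mathrm{off}}=\Theta_{t,\mathrm{off}}$ and $D_t=\Theta_t-\Theta_{t-1}$. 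We attach multipliers $W_t$, $Y_{t,\mathrm{off}}$ and $Z_t$ to these constraints, respectively, and form the Lagrangian
\[
\mathcal{L}=f+\sum_t\langle W_t,\Theta_t-X_tX_t^\top-U_t\rangle-\sum_t\langle Y_{t,\mathrm{off}},\Theta_{t,\mathrm{off}}-\Upsilon_{t,\mathrm{off}}\rangle-\sum_{t\ge2}\langle Z_t,\Theta_t-\Theta_{t-1}-D_t\rangle .
\]
The sign conventions are chosen so that the final coefficient of $\Theta_t$ is $\Delta_t=Z_{t+1}-Z_t+W_t-Y_{t,\mathrm{off}}$. To see this, we collect the $\Theta$-terms by summation by parts, using $Z_1=Z_{T+1}=\mathbf{0}$: the term $-\sum_t\langle Z_t,\Theta_t-\Theta_{t-1}\rangle$ becomes $\sum_t\langle Z_{t+1}-Z_t,\Theta_t\rangle$. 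Since $Y_{t,\mathrm{off}}$ has zero diagonal, $\langle Y_{t,\mathrm{off}},\Theta_{t,\mathrm{off}}\rangle=\langle Y_{t,\mathrm{off}},\Theta_t\rangle$.

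Next we minimize the Lagrangian block by block.
\begin{itemize}
\item[(a)] $U_t$-block: $\min_U\{\tfrac{1}{2T}\|U\|_F^2-\langle W_t,U\rangle\}=-\tfrac{T}{2}\|W_t\|_F^2$, attained at $U=TW_t$. This gives the relation $W_t=U_t/T$ used in the termination criterion.
\item[(b)] $\Upsilon$-block: the infimum of $\lambda_1\xi_{uv,1t}|\Upsilon|+Y_{uv,t}\Upsilon$ is $0$ if $|Y_{uv,t}|\le\lambda_1\xi_{uv,1t}$ and $-\infty$ otherwise.
\item[(c)] $D$-block: similarly, by Cauchy--Schwarz the infimum is $0$ if $\|Z_t\|_F\le\lambda_2\xi_{2t}$ and $-\infty$ otherwise.
\item[(d)] $\Theta$-block: $\min_{\Theta\succeq\epsilon I_p}\langle\Delta_t,\Theta\rangle$. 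Writing $\Theta=\epsilon I_p+S$ with $S\succeq0$, this equals $\epsilon\,\mathrm{tr}(\Delta_t)$ when $\Delta_t\succeq0$ and $-\infty$ otherwise, by self-duality of the PSD cone.
\item[(e)] The constant term is $-\langle W_t,X_tX_t^\top\rangle$.
\end{itemize}
Summing these pieces yields exactly the objective and constraints of \eqref{eq:dual-prob}.

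The main point requiring care is strong duality, i.e. equality of the optimal values. We plan to invoke Slater's condition. The primal is convex with finite-valued objective terms, and the only non-affine constraint is $\Theta_t\succeq\epsilon I_p$, which is strictly feasible: take $\Theta_t=(\epsilon+1)I_p$ with the auxiliary variables defined by the equality constraints. The primal value is finite, since the objective is bounded below by $0$, and the feasible set is nonempty. The quadratic term is coercive, so the primal optimum is attained. Hence there is zero duality gap and the dual optimum is attained (e.g.\ Rockafellar, Thm.~28.2, or Bauschke--Combettes). The only bookkeeping risk is the sign of each multiplier, so we will double-check the summation-by-parts boundary terms against the stated $\Delta_t$.
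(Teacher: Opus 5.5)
Your proposal is correct and follows essentially the same route as the paper: the same auxiliary variable $U_t$, the same Lagrangian and sign conventions, and the same block-by-block minimization. This gives $-\tfrac{T}{2}\|W_t\|_F^2$, the two norm-ball indicator constraints, and $\epsilon\,\mathrm{tr}(\Delta_t)$ under $\Delta_t\succeq 0$, followed by strong duality via a Slater point. The only differences are cosmetic: you take $\Theta_t=(\epsilon+1)I_p$ and cite Rockafellar's Theorem~28.2, whereas the paper uses $\Theta_t=2\epsilon I_p$ and Theorem~31.1.
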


\begin{proof}
		Introduce the auxiliary variable \( U_t = \Theta_t - X_tX_t^{\top} \) and rewrite \eqref{eq:opt-prob-two-stage} as
			\begin{align}
				\min_{\mathbf{X}} \quad & \sum_{t=1}^T \left[ \frac{1}{2T} \| U_t \|_F^2 + \delta_{\cdot \succeq \epsilon I_p} (\Theta_t) \right] + \lambda_1 \sum_{t=1}^T \sum_{u\neq v} \xi_{uv, 1t} | \Upsilon_{uv, t} | + \lambda_2 \sum_{t=2}^T \xi_{2t} \| D_t \|_F \notag\\
				\text{s.t.} \quad & U_t = \Theta_t - X_tX_t^{\top}, \quad \Upsilon_{t, \mathrm{off}} = \Theta_{t, \mathrm{off}} \quad \forall t = 1, \dots, T, \notag\\
				                  & D_t = \Theta_t - \Theta_{t-1} \quad \forall t = 2, \dots, T, \label{eq:cons-opt-prob}
			\end{align}
		where \( \mathbf{X} \coloneqq \{ \{\Theta_t\}, \{U_t\}, \{\Upsilon_{t, \mathrm{off}}\}, \{D_t\} \} \), and \( \delta_{\cdot \succeq \epsilon I_p} \) is the indicator function of \( \{S \in \mathcal{S}^p \mid S \succeq \epsilon I_p\} \).
		The associated Lagrangian is
		\begin{align*}
			& L(\mathbf{X}; \mathbf{Y}) =  \sum_{t=1}^T \left[ \frac{1}{2T} \| U_t \|_F^2 + \delta_{\cdot \succeq \epsilon I_p} (\Theta_t) \right] + \lambda_1 \sum_{t=1}^T \sum_{u\neq v} \xi_{uv, 1t} | \Upsilon_{uv, t} | + \lambda_2 \sum_{t=2}^T \xi_{2t} \| D_t \|_F \\
			                        & - \sum_{t=1}^T \langle W_t, U_t - \Theta_t + X_tX_t^{\top} \rangle - \sum_{t=1}^T \langle Y_{t, \mathrm{off}}, \Theta_{t, \mathrm{off}} - \Upsilon_{t, \mathrm{off}} \rangle - \sum_{t=2}^T \langle Z_t, \Theta_t - \Theta_{t-1} - D_t \rangle.
		\end{align*}
		Set \( Z_1 = Z_{T+1} = \mathbf{0}_{p \times p} \) and \( \Delta_t \coloneqq Z_{t+1} - Z_t + W_t - Y_{t, \mathrm{off}} \). Then the dual function \( g(\mathbf{Y}) \coloneqq \inf_{\mathbf{X}} L(\mathbf{X}; \mathbf{Y}) \) is obtained by minimizing successively over the four primal blocks. For each \( t \),
		$\inf_{U_t \in \mathcal{S}^p} \left\{ \frac{1}{2T} \| U_t \|_F^2 - \langle W_t, U_t \rangle \right\} = - \frac{T}{2} \| W_t \|_F^2$.
		For each \( t \) and each off-diagonal pair \( u \neq v \),
		\[
			\inf_{\Upsilon_{uv,t} \in \mathbb{R}} \left\{ \lambda_1 \xi_{uv,1t} |\Upsilon_{uv,t}| + Y_{uv,t}\Upsilon_{uv,t} \right\} =
			\begin{cases}
				0,      & \text{if } |Y_{uv,t}| \leq \lambda_1 \xi_{uv,1t}, \\
				-\infty, & \text{otherwise}.
			\end{cases}
		\]
		Similarly, for each \( t = 2, \dots, T \),
		\[
			\inf_{D_t \in \mathcal{S}^p} \left\{ \lambda_2 \xi_{2t} \| D_t \|_F + \langle Z_t, D_t \rangle \right\} =
			\begin{cases}
				0,      & \text{if } \| Z_t \|_F \leq \lambda_2 \xi_{2t}, \\
				-\infty, & \text{otherwise}.
			\end{cases}
		\]
		Finally, for each \( t \),
		\[
			\inf_{\Theta_t \succeq \epsilon I_p} \langle \Delta_t, \Theta_t \rangle =
			\begin{cases}
				\epsilon \, \mathrm{tr}(\Delta_t), & \text{if } \Delta_t \succeq 0, \\
				-\infty,                            & \text{otherwise},
			\end{cases}
		\]
		where the first case follows by writing \( \Theta_t = \epsilon I_p + S_t \) with \( S_t \succeq 0 \). Collecting these terms yields
		\[
			g(\mathbf{Y}) = \sum_{t=1}^T \left[ -\frac{T}{2} \| W_t \|_F^2 - \left\langle W_t, X_tX_t^{\top} \right\rangle + \epsilon \, \mathrm{tr}(\Delta_t) \right]
		\]
		under the constraints \( \Delta_t \succeq 0 \), \( \| Z_t \|_F \leq \lambda_2 \xi_{2t} \), and \( |Y_{uv,t}| \leq \lambda_1 \xi_{uv,1t} \), and \( g(\mathbf{Y}) = -\infty \) otherwise. Since \( \Delta_t = Z_{t+1} - Z_t + W_t - Y_{t, \mathrm{off}} \), this is exactly the dual problem~\eqref{eq:dual-prob}.
		Strong duality follows from~\cite[Theorem 31.1]{R70}, because \eqref{eq:cons-opt-prob} satisfies Slater's condition; for instance, one may take \( \Theta_t = 2 \epsilon I_p \), \( U_t = 2 \epsilon I_p - X_tX_t^{\top} \), \( \Upsilon_{t, \mathrm{off}} = \mathbf{0}_{p \times p} \), and \( D_t = \mathbf{0}_{p \times p} \) for all \( t \).
\end{proof}

\section{Proofs}\label{appendix:proofs}

\subsection{Proof of Theorem \ref{theorem_consistency}}

\textbf{\emph{Proof of point (i).}}

The proof builds upon the works of \cite{Harchaoui2010}, Proposition 3, \cite{Qian2016}, Theorem 3.1 and \cite{Gibberd2021}, Theorem 1. We define: $A_{T,j} = \big\{|\widetilde{T}_j-T^*_j| \geq T\delta_T\big\}, \;\; C_T = \big\{\underset{1\leq j \leq m^*}{\max}|\widetilde{T}_j-T^*_j|<\mathcal{I}_{\min}/2\big\}$.
By union bound, $\Pb(\underset{1 \leq j \leq m^*}{\max}|\widetilde{T}_j-T^*_j|\geq T \delta_T) \leq \sum^{m^*}_{j=1}\Pb(A_{T,j})$, $m^*<\infty$. So we aim to show: \\
$(a) \; \sum^{m^*}_{j=1}\Pb(A_{T,j} \cap C_T) \rightarrow 0, \;\; (b) \; \sum^{m^*}_{j=1}\Pb(A_{T,j} \cap C^c_T) \rightarrow 0$, with $C^c_T$ the complement of $C_T$.

\noindent\emph{Proof of (a).} We show: $\sum^{m^*}_{j=1}\Pb(A^+_{T,j} \cap C_T) \rightarrow 0 \; \text{and} \; \sum^{m^*}_{j=1}\Pb(A^-_{T,j} \cap C_T) \rightarrow 0$,
where $A^+_{T,j}=\{T^*_j-\widetilde{T}_j \geq T \delta_T\}, A^-_{T,j}=\{\widetilde{T}_j-T^*_j \geq T \delta_T\}$.
We prove $\sum^{m^*}_{j=1}\Pb(A^+_{T,j} \cap C_T) \rightarrow 0$ as the other case follows in the same spirit. In light of $C_T$:
\begin{equation}\label{c_T_def}
\forall j \in \{1,\ldots,m^*\}, \; T^*_{j-1} < \widetilde{T}_j < T^*_{j+1}.
\end{equation}
By Lemma \ref{optimality_cond}, with $t = T^*_j$ and $t=\widetilde{T}_j$, in $\text{vec}(\cdot)$ form:
\begin{align*}
\lefteqn{\frac{1}{T}\overset{T}{\underset{r=\widetilde{T}_j}{\sum}}\text{vec}\big(\widetilde{\Theta}_r-X_rX^\top_r\big)+\text{vec}\left(\lambda \frac{\widetilde{\Gamma}_{\widetilde{T}_j}}{\|\widetilde{\Gamma}_{\widetilde{T}_j}\|_F}\right) }\\
& = \frac{1}{T}\overset{T}{\underset{r=\widetilde{T}_j}{\sum}}\Big(\text{vec}\big(\Theta^*_r-X_rX^\top_r\big)+(I_p\otimes I_p)\text{vec}\big(\widetilde{\Theta}_r-\Theta^*_r\big)\Big)+\text{vec}\big(\lambda \frac{\widetilde{\Gamma}_{\widetilde{T}_j}}{\|\widetilde{\Gamma}_{\widetilde{T}_j}\|_F}\big)= \mathbf{0}_{p^2 \times 1},
\end{align*}
and $\left\|\frac{1}{T}\!\overset{T}{\underset{r=T^*_j}{\sum}}
\Big(\text{vec}\big(\Theta^*_r-X_rX^\top_r\big)+(I_p\otimes I_p)\text{vec}\big(\widetilde{\Theta}_r-\Theta^*_r\big)\Big)\right\|_2\leq\lambda$.
Therefore, under $T^*_j>\widetilde{T}_j$, taking the differences, by the triangle inequality, we obtain:
\begin{align}
2\lambda & \geq \left\|\!\frac{1}{T}\!\overset{T^*_j-1}{\underset{r=\widetilde{T}_j}{\sum}}\Big(\text{vec}\big(\Theta^*_r-X_rX^\top_r\big)+(I_p\otimes I_p)\text{vec}\big(\widetilde{\Theta}_r-\Theta^*_r\big)\Big)\right\|_2 \nonumber\\
& \geq \left\|\!\frac{1}{T}\!\overset{T^*_j-1}{\underset{r=\widetilde{T}_j}{\sum}}\Big(\text{vec}\big(\Sigma^*_j-X_rX^\top_r\big)+(I_p\otimes I_p)\text{vec}\big(\widetilde{\Sigma}_{j+1}-\Sigma^*_j\big)\Big)\right\|_2 \nonumber\\
& \geq \left\|\!\frac{1}{T}\!\overset{T^*_j-1}{\underset{r=\widetilde{T}_j}{\sum}}\Big((I_p\otimes I_p)\text{vec}\big(\Sigma_{j+1}^{*}-\Sigma^*_j\big)\Big)\|_2 - \|\!\frac{1}{T}\!\overset{T^*_j-1}{\underset{r=\widetilde{T}_j}{\sum}}\Big((I_p\otimes I_p)\text{vec}\big(\widetilde{\Sigma}_{j+1}-\Sigma^*_{j+1}\big)\Big)\right\|_2 \nonumber\\&\quad  - \left\|\!\frac{1}{T}\!\overset{T^*_j-1}{\underset{r=\widetilde{T}_j}{\sum}}\Big(\text{vec}\big(\Sigma^*_{j}-X_rX^\top_r\big)\Big)\right\|_2 := R_{Tj,1}+R_{Tj,2}+R_{Tj,3}. \label{bound_optimality}
\end{align}
where the first equality holds since $\widetilde{\Theta}_r = \widetilde{\Sigma}_{j+1}$ and $\Theta^*_r=\Sigma^*_j$ for $r \in [\widetilde{T}_j,T^*_j-1]$ by (\ref{c_T_def}).
Let the event: $\overline{R}_{Tj} = \{2 \lambda \geq \frac{1}{3}R_{Tj,1}\}\cup \{R_{Tj,2}\geq \frac{1}{3}R_{Tj,1}\}\cup \{R_{Tj,3}\geq \frac{1}{3}R_{Tj,1}\}$.
Since inequality (\ref{bound_optimality}) holds with probability one, then $\Pb(\overline{R}_{Tj})=1$. Therefore, we have:
\begin{align*}
  \Pb(A^+_{T,j}\cap C_T) \leq {} & \Pb(A^+_{T,j}\cap C_T\cap \{2 \lambda\geq \tfrac{1}{3}R_{Tj,1}\}) + \Pb(A^+_{T,j}\cap C_T\cap \{R_{Tj,2}\geq \tfrac{1}{3}R_{Tj,1}\})\\
  &+ \Pb(A^+_{T,j}\cap C_T\cap \{R_{Tj,3}\geq \tfrac{1}{3}R_{Tj,1}\}) =:AC_{j,1}+AC_{j,2}+AC_{j,3}.
\end{align*}
Let us first bound $\sum^{m^*}_{j=1}AC_{j,1}$. Since $\|AB\|_F \geq \lambda_{\min}(A^\top A)^{1/2}\|B\|_F$, for $1 \leq j \leq m^*$:
\begin{align*}
		     AC_{j,1} \leq {} & \Pb(A^+_{T,j}\cap \{2 \lambda \geq \tfrac{1}{3}R_{Tj,1}\}) \\
		\leq {} & \Pb\left(\left\|\frac{1}{T^*_j-\widetilde{T}_j}\overset{T^*_j-1}{\underset{r=\widetilde{T}_j}{\sum}} \text{vec}(\Sigma^*_{j+1}-\Sigma^*_j)\right\|_2 \leq \frac{3T}{T^*_j-\widetilde{T}_j} 2 \lambda,T^*_j-\widetilde{T}_j \geq T\delta_T\right) \\
		\leq {} & \Pb\Big(\|\Sigma^*_{j+1}-\Sigma^*_j\|_F \leq \frac{6T\lambda}{T^*_j-\widetilde{T}_j},T^*_j-\widetilde{T}_j \geq T\delta_T\Big) 
		\leq \Pb\Big(1\leq \frac{6\lambda}{\eta_{\min}\delta_T},T^*_j-\widetilde{T}_j \geq T\delta_T\Big).
\end{align*}
with $\eta_{\min}=\underset{1 \leq j \leq m^*}{\min}\|\Sigma^*_{j+1}-\Sigma^*_j\|_F$. By $\lambda/(\eta_{\min}\delta_T)\rightarrow 0$ in Assumption \ref{assumption_rates}(iii), we deduce $\sum^{m^*}_{j=1}AC_{j,1} \rightarrow 0$. We now bound $\sum^{m^*}_{j=1}AC_{j,2}$. For any $j=1,\ldots,m^*$:
\begin{align*}
  AC_{j, 2} = {} & \Pb\Big(\!A^+_{T,j} \!\cap\! C_T \!\cap\! \Big\{\!\left\|\!\frac{1}{T^*_j\!-\!\widetilde{T}_j}\!\overset{T^*_j\!-\!1}{\underset{r\!=\!\widetilde{T}_j}{\sum}}\!(I_p \otimes I_p)\text{vec}(\widetilde{\Sigma}_{j+1}\!-\!\Sigma^*_{j+1})\!\right\|_2 \!\geq\! \frac{1}{3}\!\left\|\!\frac{1}{T^*_j\!-\!\widetilde{T}_j}\!\overset{T^*_j\!-\!1}{\underset{r\!=\!\widetilde{T}_j}{\sum}}\! \!(I_p \otimes I_p)\text{vec}(\Sigma^*_{j+1}\!-\!\Sigma^*_j)\!\right\|_2\!\Big\}\!\Big) \\
  \leq {} & \Pb\Big(A^+_{T,j}\cap C_T \cap \Big\{ \|\widetilde{\Sigma}_{j+1}-\Sigma^*_{j+1}\|_F \geq \frac{1}{3}\|\Sigma^*_{j+1}-\Sigma^*_j\|_F\}\Big).
\end{align*}
We now need to evaluate the bound for $\|\widetilde{\Sigma}_{j+1}-\Sigma^*_{j+1}\|_F$. To do so, we rely on the KKT conditions of Lemma \ref{optimality_cond}. We have $\widetilde{\Theta}_t = \widetilde{\Sigma}_{j+1}$ when $t \in [T^*_j,(T^*_j+T^*_{j+1})/2-1]$ as $\widetilde{T}_j<T^*_j$ given $A^+_{T,j}$ and $\widetilde{T}_{j+1}>(T^*_j+T^*_{j+1})/2$ given $C_T$. Therefore, by Lemma \ref{optimality_cond} with $l = (T^*_j+T^*_{j+1})/2$ and $l=T^*_j$, following the steps to obtain inequality (\ref{bound_optimality}), we get
\begin{align*}
2 \lambda 
& \geq\left\|\frac{1}{T}\overset{(T^*_j+T^*_{j+1})/2-1}{\underset{r=T^*_j}{\sum}} \text{vec}(\widetilde{\Sigma}_{j+1}-\Sigma^*_{j+1})\right\|_2 - \left\|\frac{1}{T}\sum^{(T^*_j+T^*_{j+1})/2-1}_{r=T^*_j}\big(\Sigma^*_{j} - X_rX^\top_r\big)\right\|_F.
\end{align*}
Therefore, conditional on $C_T$: $\|\widetilde{\Sigma}_{j+1}-\Sigma^*_{j+1}\|_F
\leq \frac{2 T\lambda}{T^*_{j+1}-T^*_{j}}  + \left\|\frac{2}{T^*_{j+1}-T^*_{j}}\sum^{(T^*_j+T^*_{j+1})/2-1}_{r=T^*_j}\big(\Sigma^*_{j}- X_rX^\top_r\big)\right\|_F$.
We deduce
\begin{align}
\lefteqn{\sum^{m^*}_{j=1} \Pb\Big(\Big\{\|\widetilde{\Sigma}_{j+1}-\Sigma^*_{j+1}\|_F \geq \|\Sigma^*_{j+1}-\Sigma^*_j\|_F/3\Big\}\cap C_T\Big)}\nonumber\\
& \leq \sum^{m^*}_{j=1} \Pb\Big(\frac{2 T\lambda}{T^*_{j+1}-T^*_{j}} \geq \|\Sigma^*_{j+1}-\Sigma^*_j\|_F/6\Big)\nonumber\\
&\quad  +  \sum^{m^*}_{j=1} \Pb\left(\left\|\frac{2}{T^*_{j+1}-T^*_{j}}\sum^{(T^*_j+T^*_{j+1})/2-1}_{r=T^*_j}\big(\Sigma^*_j- X_rX^\top_r\big)\right\|_F \geq \|\Sigma^*_{j+1}-\Sigma^*_j\|_F/6\right)\label{control_sum}.
\end{align}
The first term tends to zero since $T\lambda/(\mathcal{I}_{\min}\eta_{\min}) \rightarrow 0$ by Assumption \ref{assumption_rates}(ii) and (iii). As for the second term, for $C>0$ finite, applying Lemma \ref{bound_gradient}, we deduce that for any $j$:
\begin{equation*}
\Pb\left(\left\|\frac{1}{(T^*_{j+1}-T^*_{j})/2}\sum^{(T^*_j+T^*_{j+1})/2-1}_{r=T^*_j}\Big(\Sigma^*_{j} - X_rX^\top_r\Big)\right\|_F \geq  \eta_{\min}/6\right) \rightarrow 0,
\end{equation*}
since $(\eta_{\min}\mathcal{I}^{1/2}_{\min})^{-1}p\sqrt{\log(pT)}\rightarrow 0$. So $\sum^{m^*}_{j=1}AC_{j,2} \rightarrow 0$. We now consider $\sum^{m^*}_{j=1}AC_{j,3}$. Applying the same reasoning to show the convergence of the second summation on the right-hand side of (\ref{control_sum}), we get
$\left\|\frac{1}{T^*_{j}-\widetilde{T}_{j}}\sum^{T^*_j-1}_{r=\widetilde{T}_j}\big(\Sigma^*_{j} - X_rX^\top_r\big)\right\|_F = O_p\left(p\sqrt{\frac{\log(pT)}{T\delta_T}}\right) = o_p(\eta_{\min})$,
when $T^*_j-\widetilde{T}_j\geq T\delta_T$,
and
\begin{align*}
\lefteqn{\sum^{m^*}_{j=1}AC_{j,3} \leq \Pb(A^+_{T,j}\cap \{R_{Tj,3}\geq \frac{1}{3}R_{Tj,1}\})} \\
& \leq \sum^{m^*}_{j=1}\!\Pb\left(A^+_{T,j} \!\cap\! \left\{\left\|\frac{1}{T^*_{j}\!-\!\widetilde{T}_{j}}\sum^{T^*_j-1}_{r=\widetilde{T}_j}\!\big(\Sigma^*_{j} -X_rX^\top_r\big)\right\|_F\!\geq\! \frac{1}{3}\left\|\frac{1}{T^*_j-\widetilde{T}_j}\overset{T^*_j-1}{\underset{r=\widetilde{T}_j}{\sum}}\! \text{vec}(\Sigma^*_{j+1}\!-\!\Sigma^*_j)\right\|_2\right\}\right)\\
& \leq  \sum^{m^*}_{j=1} \Pb\left(A^+_{T,j} \cap \left\{\left\|\frac{1}{T^*_{j}-\widetilde{T}_{j}}\sum^{T^*_j-1}_{r=\widetilde{T}_j} \big(\Sigma^*_{j}- X_rX^\top_r\big)\right\|_F\geq \frac{1}{3}\eta_{\min}\right\}\right).
\end{align*}
Since $T\delta_T \leq T^*_j-\widetilde{T}_j$, then under $(\sqrt{T\delta_T}\eta_{\min})^{-1}\,p\sqrt{ T\log(pT)} \rightarrow 0$, we deduce $\sum^{m^*}_{j=1}AC_{j,3} \rightarrow 0$. Consequently, we proved $\sum^{m^*}_{j=1}\Pb(A_{T,j} \cap C_T) \rightarrow 0$.

\noindent\emph{Proof of (b).} We prove (b) by showing $\sum^{m^*}_{j=1}\Pb(A^+_{T,j} \cap C^c_T) \rightarrow 0$ and $\sum^{m^*}_{j=1}\Pb(A^-_{T,j} \cap C^c_T) \rightarrow 0$. As in the proof of (a), we simply show $\sum^{m^*}_{j=1}\Pb(A^+_{T,j} \cap C^c_T) \rightarrow 0$. To do so, we define:
\begin{align*}
D^{(l)}_T&\coloneqq \big\{\exists j \in \{1,\ldots,m^*\}, \widetilde{T}_j \leq T^*_{j-1}\big\} \cap C^c_T, \; 
D^{(m)}_T\coloneqq \big\{\forall j \in \{1,\ldots,m^*\}, T^*_{j-1}<\widetilde{T}_j < T^*_{j+1}\big\} \cap C^c_T,\\
D^{(r)}_T&\coloneqq \big\{\exists j \in \{1,\ldots,m^*\}, \widetilde{T}_j \geq T^*_{j+1}\big\} \cap C^c_T,
\end{align*}
where $C^c_T = \{\underset{1 \leq j \leq m^*}{\max}|\widetilde{T}_j-T^*_j|\geq \mathcal{I}_{\min}/2\}$. Then, we have:
\begin{equation*}
\sum^{m^*}_{j=1}\Pb(A^+_{T,j} \cap C^c_T) = \sum^{m^*}_{j=1}\Big[\Pb(A^+_{T,j} \cap D^{(l)}_T)+\Pb(A^+_{T,j} \cap D^{(m)}_T) +\Pb(A^+_{T,j} \cap D^{(r)}_T)\Big].
\end{equation*}
We first bound $\sum^{m^*}_{j=1}\Pb(A^+_{T,j} \cap D^{(m)}_T)$. For any $j$:
\begin{align*}
\Pb(A^+_{T,j} \cap D^{(m)}_T)
& \leq \Pb(A^+_{t,j} \cap \big\{\widetilde{T}_{j+1}-T^*_j \geq \frac{1}{2}\mathcal{I}_{\min}\big\}\cap D^{(m)}_T) + \Pb(A^+_{t,j} \cap \big\{\widetilde{T}_{j+1}-T^*_j < \frac{1}{2}\mathcal{I}_{\min}\big\}\cap D^{(m)}_T)\\
& \leq \Pb(A^+_{t,j} \cap \big\{\widetilde{T}_{j+1}-T^*_j \geq \frac{1}{2}\mathcal{I}_{\min}\big\}\cap D^{(m)}_T) + \Pb(A^+_{t,j} \cap \big\{T^*_{j+1}-\widetilde{T}_{j+1} \geq \frac{1}{2}\mathcal{I}_{\min}\big\}\cap D^{(m)}_T),
\end{align*}
since $0 \leq \widetilde{T}_{j+1}-T^*_j \leq \mathcal{I}_{\min}/2$ implies $T^*_{j+1}-\widetilde{T}_{j+1}=(T^*_{j+1}-T^*_j)-(\widetilde{T}_{j+1}-T^*_j)\geq \mathcal{I}_{\min}-\mathcal{I}_{\min}/2 = \mathcal{I}_{\min}/2$. Moreover, since
\begin{equation*}
\resizebox{\linewidth}{!}{$\displaystyle\Big\{A^+_{t,j} \cap \big\{T^*_{j+1}-\widetilde{T}_{j+1} \geq \frac{1}{2}\mathcal{I}_{\min}\big\}\cap D^{(m)}_T\Big\} \subset \overset{m^*-1}{\underset{k=j+1}{\cup}}\Big[ \big\{T^*_k-\widetilde{T}_k\geq \mathcal{I}_{\min}/2\big\} \cap \big\{\widetilde{T}_{k+1}-T^*_k \geq \mathcal{I}_{\min}/2\big\}\cap D^{(m)}_T\Big],$}
\end{equation*}
we deduce:
\begin{align}\label{control_Dm}
\lefteqn{\sum^{m^*}_{j=1}\Pb(A^+_{T,j} \cap D^{(m)}_T)  \leq \sum^{m^*}_{j=1} \Pb(A^+_{t,j} \cap \big\{\widetilde{T}_{j+1}-T^*_j \geq \frac{1}{2}\mathcal{I}_{\min}\big\}\cap D^{(m)}_T)  } \notag\\
&\quad  + \sum^{m^*}_{j=1} \sum^{m^*-1}_{k=j+1}  \Pb\Big(\big\{T^*_k-\widetilde{T}_k\geq \mathcal{I}_{\min}/2\big\} \cap \big\{\widetilde{T}_{k+1}-T^*_k \geq \mathcal{I}_{\min}/2\big\}\cap D^{(m)}_T\Big).
\end{align}
Let us treat the first term. By Lemma \ref{optimality_cond} with $t=\widetilde{T}_j$ and $t=T^*_j$, we obtain:
\begin{equation*}
\frac{1}{T}\overset{T}{\underset{r=\widetilde{T}_j}{\sum}}\Big(\text{vec}\big(\Theta^*_r-X_rX^\top_r\big)+(I_p \otimes I_p)\text{vec}\big(\widetilde{\Theta}_r-\Theta^*_r\big) \Big)= \lambda \text{vec}\left(\frac{\widetilde{\Gamma}_{\widetilde{T}_j}}{\|\widetilde{\Gamma}_{\widetilde{T}_j}\|_F}\right), 
\end{equation*}
and $\left\|\frac{1}{T}\overset{T}{\underset{r=T^*_j}{\sum}}\Big(\text{vec}\big(\Theta^*_r-X_rX^\top_r\big)+(I_p \otimes I_p)\text{vec}\big(\widetilde{\Theta}_r-\Theta^*_r\big) \Big)\right\|_2 \leq \lambda$.
We deduce
\begin{align*}
\lefteqn{\|\widetilde{\Sigma}_{j+1}-\Sigma^*_j\|_F - \left\|\frac{1}{T^*_j-\widetilde{T}_j}\overset{T^*_j-1}{\underset{r=\widetilde{T}_j}{\sum}}\big(\Sigma^*_{j} -X_rX^\top_r\big)\right\|_F }\\
& \leq \frac{1}{T^*_j-\widetilde{T}_j} \left\|\overset{T^*_j-1}{\underset{r=\widetilde{T}_j}{\sum}}\Big(\text{vec}\big(\Sigma^*_j-X_rX^\top_r\big)+(I_p \otimes I_p)\text{vec}\big(\widetilde{\Sigma}_{j+1}-\Sigma^*_j\big)\Big)\right\|_2\leq \frac{2\lambda T}{T^*_j-\widetilde{T}_j}.
\end{align*}
As a consequence:
\begin{equation}\label{bound_param_1}
\|\widetilde{\Sigma}_{j+1}-\Sigma^*_j\|_F \leq \frac{2\lambda T}{T^*_j-\widetilde{T}_j} + \left\|\frac{1}{T^*_j-\widetilde{T}_j}\overset{T^*_j-1}{\underset{r=\widetilde{T}_j}{\sum}}\big(\Sigma^*_{j} -X_rX^\top_r\big)\right\|_F.
\end{equation}
In the same vein, applying Lemma \ref{optimality_cond} with $t = \widetilde{T}_{j+1}$ and $t=T^*_j$, we obtain:
\begin{align}
\|\widetilde{\Sigma}_{j+1}-\Sigma^*_{j+1}\|_F  & \leq \frac{2\lambda T}{\widetilde{T}_{j+1}-T^*_j} + \left\|\frac{1}{\widetilde{T}_{j+1}-T^*_j}\overset{\widetilde{T}_{j+1}-1}{\underset{r=T^*_j}{\sum}}\big(\Sigma^*_{j} - X_rX^\top_r\big)\right\|_F.\label{bound_param_2}
\end{align}
Let the event:
\begin{align}
\lefteqn{E_{T,j} \coloneqq \Big\{\|\Sigma^*_{j+1}-\Sigma^*_j\|_F \leq 2\lambda \Big[\frac{T}{T^*_j-\widetilde{T}_j}+\frac{T}{\widetilde{T}_{j+1}-T^*_j}\Big]}\label{E_event}\\
&\quad  \left\|\frac{1}{T^*_j-\widetilde{T}_j}\overset{T^*_j-1}{\underset{r=\widetilde{T}_j}{\sum}}\big(\Sigma^*_{j} -X_rX^\top_r\big)\right\|_F + \left\|\frac{1}{\widetilde{T}_{j+1}-T^*_j}\overset{\widetilde{T}_{j+1}-1}{\underset{r=T^*_j}{\sum}}\big(\Sigma^*_{j} -X_rX^\top_r\big)\right\|_F\Big\}. \nonumber
\end{align}
Therefore, by the triangle inequality, (\ref{bound_param_1}) and (\ref{bound_param_2}) imply that the event $E_{T,j}$ holds with probability one. Hence:
{\footnotesize{\begin{align}
\lefteqn{\sum^{m^*}_{j=1} \Pb(A^+_{t,j} \cap \big\{\widetilde{T}_{j+1}-T^*_j \geq \frac{1}{2}\mathcal{I}_{\min}\big\}\cap D^{(m)}_T)}\nonumber\\
& = \sum^{m^*}_{j=1} \Pb(E_{T,j}\cap A^+_{t,j} \cap \big\{\widetilde{T}_{j+1}-T^*_j \geq \frac{1}{2}\mathcal{I}_{\min}\big\}\cap D^{(m)}_T)\nonumber\\
& \leq \sum^{m^*}_{j=1} \Pb(E_{T,j}\cap \big\{T^*_j - \widetilde{T}_j> T \delta_T\big\} \cap  \big\{\widetilde{T}_{j+1}-T^*_j \geq \frac{1}{2}\mathcal{I}_{\min}\big\}) \nonumber\\
& \leq \sum^{m^*}_{j=1} \Bigg\{\Pb(\frac{2\lambda}{\delta_T}+ \frac{4\lambda T}{\mathcal{I}_{\min}}\geq \|\Sigma^*_{j+1}-\Sigma^*_j\|_F/3) + \Pb\left(\left\{\left\|\frac{1}{T^*_j-\widetilde{T}_j}\overset{T^*_j-1}{\underset{r=\widetilde{T}_j}{\sum}}\big(\Sigma^*_{j} -X_rX^\top_r\big)\right\|_F \geq \|\Sigma^*_{j+1}-\Sigma^*_j\|_F/3\Big\}\cap \big\{T^*_j-\widetilde{T}_j > T \delta_T\right\}\right)\nonumber\\
&\quad  + \Pb\Bigg(\left\{\left\|\frac{1}{\widetilde{T}_{j+1}-T^*_j}\overset{\widetilde{T}_{j+1}-1}{\underset{r=T^*_j}{\sum}}\big(\Sigma^*_{j} -X_rX^\top_r\big)\right\|_F \geq \|\Sigma^*_{j+1}-\Sigma^*_j\|_F/3\right\}\cap \Big\{\widetilde{T}_{j+1}-T^*_j \geq \mathcal{I}_{\min}/2\Big\}\Bigg)\Bigg\}. \label{bound_A+}
\end{align}}}
The first term in (\ref{bound_A+}) tends to zero under $\lambda/(\eta_{\min}\delta_T)\rightarrow 0$, $\lambda T/(\mathcal{I}_{\min}\eta_{\min})\rightarrow 0$. Moreover, $\left\|\frac{1}{T^*_j-\widetilde{T}_j}\overset{T^*_j-1}{\underset{r=\widetilde{T}_j}{\sum}}\big(\Sigma^*_{j} -X_rX^\top_r\big)\right\|_F = O_p\left(p\sqrt{\frac{\log(pT)}{T\delta_T}}\right) = o_p(\eta_{\min})$, $\left\|\frac{1}{\widetilde{T}_{j+1}-T^*_j}\overset{\widetilde{T}_{j+1}-1}{\underset{r=T^*_j}{\sum}}\big(\Sigma^*_{j} - X_rX^\top_r\big)\right\|_F = O_p\left(p\sqrt{\frac{\log(pT)}{\mathcal{I}_{\min}}}\right) = o_p(\eta_{\min})$,
under Assumption \ref{assumption_rates}(ii)-(iii). In the same manner, we can show that the second term in (\ref{control_Dm}) tends to zero.

We now consider $\sum^{m^*}_{j=1}\Pb(A^+_{T,j} \cap D^{(l)}_T)$. The probability of the event $A^+_{T,j} \cap D^{(l)}_T$ is upper bounded by: $\Pb(D^{(l)}_T)\leq \sum^{m^*}_{j=1}2^{j-1}\Pb(\max(l \in \{1,\ldots,m^*\}:\widetilde{T}_l \leq T^*_{l-1})=j)$.
Now $\max(l \in \{1,\ldots,m^*\}:\widetilde{T}_l \leq T^*_{l-1})=j$ implies $\widetilde{T}_j\leq T^*_{j-1}$ and $\widetilde{T}_{l+1}>T^*_l$ for any $j \leq l \leq m^*$ and:
\begin{equation*}
\big\{\max(l \in \{1,\ldots,m^*\}:\widetilde{T}_l \leq T^*_{l-1})=j\big\}\subset \overset{m^*-1}{\underset{k=j}{\cup}} \Big(\big\{T^*_k-\widetilde{T}_k\geq \mathcal{I}_{\min}/2\big\}\cap\big\{\widetilde{T}_{k+1}-T^*_k\geq \mathcal{I}_{\min}/2\big\}\Big).
\end{equation*}
Therefore:
\begin{align}
\lefteqn{\sum^{m^*}_{j=1}\Pb(A^+_{T,j} \cap D^{(l)}_T)}\label{control_Dl}\\
  & \leq \resizebox{0.92\linewidth}{!}{$\displaystyle m^* \overset{m^*-1}{\underset{j=1}{\sum}}2^{j-1}\overset{m^*-1}{\underset{k=j}{\sum}} \Pb\Big(\big\{T^*_k-\widetilde{T}_k\geq \mathcal{I}_{\min}/2\big\}\cap\big\{\widetilde{T}_{k+1}-T^*_k\geq \mathcal{I}_{\min}/2\big\}\Big) + m^*2^{m^*-1}\Pb(T^*_{m^*}-\widetilde{T}_{m^*}\geq \mathcal{I}_{\min}/2).$}\nonumber
\end{align}
First, we consider the second term of the right-hand side of (\ref{control_Dl}). Let $j=m^*$ in (\ref{E_event}), then $E_{T,m^*}$ holds with probability one. Therefore:
\begin{align*}
\lefteqn{m^*2^{m^*-1}\Pb(T^*_{m^*}-\widetilde{T}_{m^*}\geq \mathcal{I}_{\min}/2)=m^*2^{m^*-1}\Pb(E_{T,m^*}\cap\big\{T^*_{m^*}-\widetilde{T}_{m^*}\geq \mathcal{I}_{\min}/2\big\})}\\
  & \leq  m^*2^{m^*-1}\Pb(\frac{2\lambda}{\delta_T} + \frac{4\lambda T}{\mathcal{I}_{\min}}\geq \|\Sigma^*_{m^*+1}-\Sigma^*_{m^*}\|_F/3) \\
  &\quad  + m^*2^{m^*-1} \Pb(\|\frac{1}{T^*_{m^*}\!-\!\widetilde{T}_{m^*}}\overset{T^*_{m^*}-1}{\underset{r=\widetilde{T}_{m^*}}{\sum}}\!\big(\Sigma^*_{m^*}- X_rX^\top_r\big)\|_F \!\geq\! \|\Sigma^*_{m^*+1}\!-\!\Sigma^*_{m^*}\|_F/3,T^*_{m^*}\!-\!\widetilde{T}_{m^*} \!\geq\! \mathcal{I}_{\min}/2) \\
  &\quad  + m^*2^{m^*-1}\Pb(\|\frac{1}{T-T^*_{m^*}}\overset{T}{\underset{r=T^*_{m^*}}{\sum}}\big(\Sigma^*_{m^*} - X_rX^\top_r\big)\|_F \geq \|\Sigma^*_{m^*+1}-\Sigma^*_{m^*}\|_F/3).
\end{align*}
Since $m^*2^{m^*-1} = O(T\log(T))$, then $\log(m^*2^{m^*-1}) = O(\log(T^{1+\eps/2}))$. So under the conditions $(\sqrt{T\delta_T}\eta_{\min})^{-1}\,p\sqrt{ \log(pT)} \rightarrow 0, (\mathcal{I}^{1/2}_{\min}\eta_{\min})^{-1}\,p\sqrt{ \log(pT)} \rightarrow 0$, the right-hand side of the previous inequality converges to zero. As for the first term of (\ref{control_Dl}), applying $j=k$ in (\ref{E_event}):
{\footnotesize{\begin{align*}
\lefteqn{m^* \overset{m^*-1}{\underset{j=1}{\sum}}2^{j-1}\overset{m^*-1}{\underset{k=j}{\sum}} \Pb\Big(\big\{T^*_k-\widetilde{T}_k\geq \mathcal{I}_{\min}/2\big\}\cap\big\{\widetilde{T}_{k+1}-T^*_k\geq \mathcal{I}_{\min}/2\big\}\Big)}\\
& \leq m^* 2^{m^*-1}\overset{m^*-1}{\underset{k=1}{\sum}} \Pb\Big(E_{T,k}\cap\big\{T^*_k-\widetilde{T}_k\geq \mathcal{I}_{\min}/2\big\}\cap\big\{\widetilde{T}_{k+1}-T^*_k\geq \mathcal{I}_{\min}/2\big\}\Big)\\
& \leq  m^* 2^{m^*-1}\overset{m^*-1}{\underset{k=1}{\sum}} \Big\{\Pb\left(\frac{2\lambda}{\delta_T} + \frac{4\lambda T}{\mathcal{I}_{\min}}\geq \|\Sigma^*_{k+1}-\Sigma^*_{k}\|_F/3\right) \\
&\quad  + \Pb\left(\left\|\frac{1}{T^*_{k}-\widetilde{T}_{k}}\overset{T^*_{k}-1}{\underset{r=\widetilde{T}_{k}}{\sum}}\big(\Sigma^*_{k} -X_rX^\top_r\big)\right\|_F \geq \|\Sigma^*_{k+1}-\Sigma^*_{k}\|_F/3,T^*_{k}-\widetilde{T}_{k} \geq \mathcal{I}_{\min}/2\right) \\
&\quad  + \Pb\left(\left\|\frac{1}{\widetilde{T}_{k+1}-T^*_{k}}\overset{\widetilde{T}_{k+1}-1}{\underset{r=T^*_{k}}{\sum}}\big(\Sigma^*_{k} -X_rX^\top_r\big)\right\|_F \geq \|\Sigma^*_{k+1}-\Sigma^*_{k}\|_F/3,\widetilde{T}_{k+1}-T^*_{k}\geq \mathcal{I}_{\min}/2\right)\Big\}.
\end{align*}}}
The right-hand side of the last inequality converges to zero under the same conditions. Finally, we can prove that $\sum^{m^*}_{j=1}\Pb(A^+_{T,j} \cap D^{(r)}_T) \rightarrow 0$.

\vskip 0.2in

\noindent\textbf{\emph{Proof of point (ii).}}\\
\noindent By point (i) and under Assumption \ref{assumption_rates}(ii), for any $j=1,\ldots,m^*$, $|\widetilde{T}_j-T^*_j|=O_p(T\delta_T)$, which is $|\widetilde{T}_j-T^*_j| = o_p(\mathcal{I}_{\min})$ under Assumption \ref{assumption_rates}(ii). Hence, $(T^*_{j-1}+T^*_j)/2 < \widetilde{T}_j < T^*_j$ or $T^*_j \leq \widetilde{T}_j < (T^*_j + T^*_{j+1})/2$ is satisfied for any $j$. Set $l = 1,\ldots,m^*$ and assume $(T^*_{l-1}+T^*_l)/2 < \widetilde{T}_l < T^*_l$ and consider two cases: (ii-a) $(T^*_{l}+T^*_{l+1})/2 < \widetilde{T}_{l+1} < T^*_{l+1}$ and (ii-b) $T^*_{l+1} \leq \widetilde{T}_{l+1}$. In case (ii-a), by Lemma \ref{optimality_cond} with change points $t = \widetilde{T}_l$ and $t = \widetilde{T}_{l+1}$:
{\footnotesize{\begin{align*}
\lefteqn{2\lambda  \geq \left\|\frac{1}{T}\overset{T}{\underset{r = \widetilde{T}_l}{\sum}} \text{vec}\big(\widetilde{\Theta}_r-X_rX^\top_r\big)-\frac{1}{T}\overset{T}{\underset{r = \widetilde{T}_{l+1}}{\sum}} \text{vec}\big(\widetilde{\Theta}_r-X_rX^\top_r\big)\right\|_2 = \left\|\frac{1}{T}\overset{\widetilde{T}_{l+1}-1}{\underset{r = \widetilde{T}_l}{\sum}} \text{vec}\big(\widetilde{\Theta}_r-X_rX^\top_r\big)\right\|_2}\\
& = \Big\|\frac{1}{T}\overset{T^*_l-1}{\underset{r = \widetilde{T}_l}{\sum}} \Big(\text{vec}\big(\Sigma^*_l-X_rX^\top_r\big) + (I_p\otimes I_p)\text{vec}\big(\widetilde{\Sigma}_{l+1}-\Sigma^*_l\big)\Big)+
\frac{1}{T}\overset{\widetilde{T}_{l+1}-1}{\underset{r = T^*_l}{\sum}} \Big(\text{vec}\big(\Sigma^*_{l+1}-X_rX^\top_r\big) + (I_p\otimes I_p)\text{vec}\big(\widetilde{\Sigma}_{l+1}-\Sigma^*_{l+1}\big)\Big)\Big\|_2\\
& \geq \left\|\frac{1}{T}\overset{\widetilde{T}_{l+1}-1}{\underset{r = T^*_l}{\sum}} \Big(\text{vec}\big(\Sigma^*_{l+1}-X_rX^\top_r\big) + (I_p\otimes I_p)\text{vec}\big(\widetilde{\Sigma}_{l+1}-\Sigma^*_{l+1}\big)\Big)\right\|_2 -\left\|\frac{1}{T}\overset{T^*_l-1}{\underset{r = \widetilde{T}_l}{\sum}} \Big(\text{vec}\big(\Sigma^*_l-X_rX^\top_r\big) + (I_p\otimes I_p)\text{vec}\big(\widetilde{\Sigma}_{l+1}-\Sigma^*_l\big)\Big)\right\|_2\\
& \geq \frac{\widetilde{T}_{l+1}-T^*_l}{T}\left\{\|\widetilde{\Sigma}_{l+1}-\Sigma^*_{l+1})\|_F - \left\|\frac{1}{\widetilde{T}_{l+1}-T^*_l}\overset{\widetilde{T}_{l+1}-1}{\underset{r = T^*_{l}}{\sum}}\big(\Sigma^*_{l+1}-X_rX^\top_r\big)\right\|_F\right\} \\
&\quad  - \frac{T^*_l-\widetilde{T}_l}{T}\left\|\frac{1}{T^*_l-\widetilde{T}_l}\overset{T^*_l-1}{\underset{r = \widetilde{T}_l}{\sum}} \Big(\text{vec}\big(\Sigma^*_l-X_rX^\top_r\big) + (I_p\otimes I_p)\text{vec}\big(\widetilde{\Sigma}_{l+1}-\Sigma^*_l\big)\Big)\right\|_2.
\end{align*}}}
Therefore, using part (i) of Theorem \ref{theorem_consistency}, and the bound on $\|\widetilde{\Sigma}_{l+1}-\Sigma^*_l\|_F$,
we obtain
\begin{align*}
2\lambda \geq \frac{\widetilde{T}_{l+1}-T^*_l}{T} \left\{\|\widetilde{\Sigma}_{l+1}-\Sigma^*_{l+1}\|_F-\left\|\frac{1}{\widetilde{T}_{l+1}-T^*_l}\overset{\widetilde{T}_{l+1}-1}{\underset{r = T^*_{l}}{\sum}}\big(\Sigma^*_{l+1}-X_rX^\top_r\big)\right\|_F\right\}- O_p\left(\frac{T^*_l-\widetilde{T}_l}{T}\right).
\end{align*}
We deduce $2\lambda
\geq \frac{\widetilde{T}_{l+1}-T^*_l}{T} \left\{\|\widetilde{\Sigma}_{l+1}-\Sigma^*_{l+1}\|_F-O_p\left( p\sqrt{\frac{\log(pT)}{I^*_{l+1}}}\right)\right\}- O_p\left(\frac{T^*_l-\widetilde{T}_l}{T}\right)$.
As a consequence, it can be deduced that
\begin{equation}\label{bound_prob_regime}
\|\widetilde{\Sigma}_{l+1}-\Sigma^*_{l+1}\|_F=O_p\left(\frac{\lambda \, T}{I^*_{l+1}}+\frac{T\delta_T}{I^*_{l+1}} + p\sqrt{ \frac{\log(pT)}{I^*_{l+1}}}\right).
\end{equation}
In case (ii-b), by Lemma \ref{optimality_cond}, with change points $t = \widetilde{T}_l$ and $t = \widetilde{T}_{l+1}$, we have
{\footnotesize{\begin{align*}
\lefteqn{2\lambda  \geq \left\|\frac{1}{T}\overset{\widetilde{T}_{l+1}-1}{\underset{r = \widetilde{T}_l}{\sum}} \text{vec}\big(\widetilde{\Theta}_r-X_rX^\top_r\big)\right\|_2}\\
& \geq \left\|\frac{1}{T}\overset{T^*_{l+1}-1}{\underset{r = T^*_l}{\sum}} \Big(\text{vec}\big(\Sigma^*_{l+1}-X_rX^\top_r\big) + (I_p\otimes I_p)\text{vec}\big(\widetilde{\Sigma}_{l+1}-\Sigma^*_{l+1}\big)\Big)\right\|_2 - \left\|\frac{1}{T}\overset{T^*_l-1}{\underset{r = \widetilde{T}_l}{\sum}} \Big(\text{vec}\big(\Sigma^*_l-X_rX^\top_r\big) + (I_p\otimes I_p)\text{vec}\big(\widetilde{\Sigma}_{l+1}-\Sigma^*_l\big)\Big)\right\|_2\\
&\quad  - \left\|\frac{1}{T}\overset{\widetilde{T}_{l+1}-1}{\underset{r = T^*_{l+1}}{\sum}} \Big(\text{vec}\big(\Sigma^*_{l+2}-X_rX^\top_r\big) + (I_p\otimes I_p)\text{vec}\big(\widetilde{\Sigma}_{l+1}-\Sigma^*_{l+2}\big)\Big)\right\|_2.
\end{align*}}}
We deduce $2\lambda
  \geq \frac{I^*_{l+1}}{T} \left\{\|\widetilde{\Sigma}_{l+1}-\Sigma^*_{l+1}\|_F-O_p\left(p \sqrt{ \frac{\log(pT)}{I^*_{l+1}}}\right)\right\} - O_p\left(\frac{T^*_l-\widetilde{T}_l}{T}\right)-O_p\left(\frac{\widetilde{T}_{l+1}-T^*_{l+1}}{T}\right)$.
Hence, (\ref{bound_prob_regime}) holds. Using similar arguments, we can show that the latter is satisfied when $T^*_l \leq \widetilde{T}_l < (T^*_l+T^*_{l+1})/2$.

\subsection{Proof of Theorem \ref{theorem_date_recov}}

Using the result of Theorem \ref{theorem_consistency},  we aim to show that:
\begin{equation}\label{proba_bound}
\Pb\big(\{h(\widetilde{\Tc}_{\widetilde{m}},\Tc^*_{m^*})> T\delta_T\}\cap \{m^* < \widetilde{m} \leq m_{\max}\}\big) \rightarrow 0 \;\; \text{as} \;\; T \rightarrow \infty.
\end{equation}
To so, we define: $L_{m,k,1} = \big\{\forall 1 \leq l \leq m, |\widetilde{T}_l - T^*_k|>T\delta_T \; \text{and} \; \widetilde{T}_l < T^*_k\big\}$,
$L_{m,k,2} = \big\{\forall 1 \leq l \leq m, |\widetilde{T}_l - T^*_k|>T\delta_T \; \text{and} \; \widetilde{T}_l > T^*_k\big\}$, 
$L_{m,k,3} = \big\{\exists 1 \leq l \leq m-1, |\widetilde{T}_l - T^*_k|>T\delta_T, |\widetilde{T}_{l+1} - T^*_k|>T\delta_T \; \text{and} \; \widetilde{T}_l < T^*_k < \widetilde{T}_{l+1}\big\}$.
The probability (\ref{proba_bound}) can be bounded as:
\begin{align*}
\lefteqn{\Pb\big(\{h(\widetilde{\Tc}_{\widetilde{m}},\Tc^*_{m^*})> T\delta_T\}\cap \{m^* < \widetilde{m} \leq m_{\max}\}\big) \leq \overset{m_{\max}}{\underset{m=m^*+1}{\sum}} \Pb\big(h(\widetilde{\Tc}_{\widetilde{m}},\Tc^*_{m^*})> T\delta_T\big)}\\
  & \leq \resizebox{0.92\linewidth}{!}{$\displaystyle\overset{m_{\max}}{\underset{m=m^*+1}{\sum}} \overset{m^*}{\underset{k=1}{\sum}}\Pb\big(\forall l \in \{1,\ldots,m\}, |\widetilde{T}_l-T^*_k|>T\delta_T\big) = \overset{m_{\max}}{\underset{m=m^*+1}{\sum}} \overset{m^*}{\underset{k=1}{\sum}} \Big[\Pb\big(L_{m,k,1}\big)+\Pb\big(L_{m,k,2}\big)+\Pb\big(L_{m,k,3}\big)\Big].$}
\end{align*}
We first focus on $\sum^{m_{\max}}_{m=m^*+1} \sum^{m^*}_{k=1}\Pb\big(L_{m,k,1}\big)$, which can be expressed as:
\begin{equation*}
\Pb\big(L_{m,k,1}\big) = \Pb\big(L_{m,k,1}\cap\{\widetilde{T}_m>T^*_{k-1}\}\big)+\Pb\big(L_{m,k,1}\cap\{\widetilde{T}_m\leq T^*_{k-1}\}\big).
\end{equation*}
By Lemma \ref{optimality_cond} with change points $t = \widetilde{T}_m$ and $t = T^*_k$, given the case $T^*_k\geq \widetilde{T}_m > T^*_{k-1}$:
\begin{align*}
\frac{1}{T}\overset{T}{\underset{r=\widetilde{T}_m}{\sum}}\text{vec}\big(\widetilde{\Theta}_r-X_rX^\top_r\big) + \text{vec}\left(\lambda\frac{\widetilde{\Gamma}_{\widetilde{T}_m}}{\|\widetilde{\Gamma}_{\widetilde{T}_m}\|_F}\right)= \mathbf{0}_{p^2 \times 1},
\end{align*}
and $\left\|\frac{1}{T}\overset{T}{\underset{r=T^*_k}{\sum}}\text{vec}\big(\widetilde{\Theta}_r)-X_rX^\top_r\big) \right\|_2 \leq  \lambda$.
Therefore, taking the differences, we get:
\begin{align*}
2 \lambda \geq \left\|\frac{1}{T}\overset{T^*_k-1}{\underset{r=\widetilde{T}_m}{\sum}}(I_p\otimes I_p)\text{vec}(\widetilde{\Sigma}_{m+1} - \Sigma^*_{k+1})+
\frac{1}{T}\overset{T^*_k-1}{\underset{r=\widetilde{T}_m}{\sum}}(I_p\otimes I_p)\text{vec}(\Sigma^*_{k+1} - \Sigma^*_{k})+\frac{1}{T}\overset{T^*_k-1}{\underset{r=\widetilde{T}_m}{\sum}}\text{vec}(\Sigma^*_k - X_rX^\top_r)\right\|_2.
\end{align*}
Therefore, the event $\Bc_T$ defined as
\begin{align*}
\lefteqn{\Bc_T \coloneqq  \Big\{\|\Sigma^*_{k+1}-\Sigma^*_k\|_F \leq \frac{2\lambda T}{T^*_k-\widetilde{T}_m}}\\[-.3cm]
  &\quad +\left\|\frac{1}{T^*_k-\widetilde{T}_m}\overset{T^*_k-1}{\underset{r=\widetilde{T}_m}{\sum}}(I_p\otimes I_p)\text{vec}(\widetilde{\Sigma}_{m+1} - \Sigma^*_{k+1})\right\|_2+\left\|\frac{1}{T^*_k-\widetilde{T}_m}\overset{T^*_k-1}{\underset{r=\widetilde{T}_m}{\sum}}\text{vec}\big(\Sigma^*_k-X_rX^\top_r\big) \right\|_2\Big] \Big\},
\end{align*}
holds with probability one. Hence, we deduce
\begin{align*}
\overset{m_{\max}}{\underset{m=m^*+1}{\sum}} \overset{m^*}{\underset{k=1}{\sum}} \Pb\big(L_{m,k,1} \cap \{\widetilde{T}_m>T^*_{k-1}\}\big) = \overset{m_{\max}}{\underset{m=m^*+1}{\sum}} \overset{m^*}{\underset{k=1}{\sum}} \Pb\big(\Bc_T \cap L_{m,k,1} \cap \{\widetilde{T}_m>T^*_{k-1}\}\big) \leq M_{1,1}+M_{1,2}+M_{1,3},
\end{align*}
with $M_{1,1} \coloneqq \overset{m_{\max}}{\underset{m=m^*+1}{\sum}} \overset{m^*}{\underset{k=1}{\sum}}\Pb\big(\|\Sigma^*_{k+1}-\Sigma^*_k\|_F/3\leq 2\lambda \delta^{-1}_T\big)$, and
$$
 M_{1,2} \!\coloneqq\! \overset{m_{\max}}{\underset{m=m^*+1}{\sum}} \overset{m^*}{\underset{k=1}{\sum}} \Pb\left(T^*_k\!-\!\widetilde{T}_m\!>\!T\delta_T,\|\Sigma^*_{k+1}\!-\!\Sigma^*_k\|_F/3 \!\leq\! \left\|\frac{1}{T^*_k-\widetilde{T}_m}\overset{T^*_k-1}{\underset{r=\widetilde{T}_m}{\sum}}\!(I_p\otimes I_p)\text{vec}(\widetilde{\Sigma}_{m+1} \!-\! \Sigma^*_{k+1})\right\|_2\right),
$$
\vspace{-.2cm}
$$
M_{1,3} \!\coloneqq\! \overset{m_{\max}}{\underset{m=m^*+1}{\sum}} \overset{m^*}{\underset{k=1}{\sum}} \Pb\left(T^*_k\!-\!\widetilde{T}_m\!>\!T\delta_T,\|\Sigma^*_{k+1}\!-\!\Sigma^*_k\|_F/3 \!\leq\! \left\|\frac{1}{T^*_k-\widetilde{T}_m}\overset{T^*_k-1}{\underset{r=\widetilde{T}_m}{\sum}}\!\text{vec}\big(\Sigma^*_k-X_rX^\top_r\big) \right\|_2\right).
$$
In the same vein as in the analysis of (\ref{bound_A+}), we can show that $M_{1,1}, M_{1,3} \rightarrow 0$ as $T \rightarrow \infty$. $M_{1,2}$ requires more arguments. By Lemma (\ref{optimality_cond}), with change points $t = T^*_k$ and $t = T^*_{k+1}$:
\begin{align*}
\left\|\frac{1}{T}\overset{T}{\underset{r=T^*_k}{\sum}}\text{vec}\big(\widetilde{\Sigma}_{m+1}-X_rX^\top_r) \right\|_2 \leq  \lambda, \; \text{and} \; 
\left\|\frac{1}{T}\overset{T}{\underset{r=T^*_{k+1}}{\sum}}\text{vec}\big(\widetilde{\Sigma}_{m+1}-X_rX^\top_r\big) \right\|_2 \leq  \lambda.
\end{align*}
Therefore $2\lambda\geq \left\|\frac{1}{T}\overset{T^*_{k+1}-1}{\underset{r=T^*_{k}}{\sum}}(I_p\otimes I_p)\text{vec}(\widetilde{\Sigma}_{m+1} - \Sigma^*_{k+1}) +\frac{1}{T}\overset{T^*_{k+1}-1}{\underset{r=T^*_k}{\sum}}\text{vec}\big(\Sigma^*_{k+1}-X_rX^\top_r\big)\right\|_2$, which implies
$
\|\widetilde{\Sigma}_{m+1}-\Sigma^*_{k+1}\|_F \leq \frac{2\lambda T}{T^*_{k+1}-T^*_{k}}+\left\|\frac{1}{T^*_{k+1}-T^*_{k}}\overset{T^*_{k+1}-1}{\underset{r=T^*_k}{\sum}}\text{vec}\big(\Sigma^*_{k+1}-X_rX^\top_r\big)\right\|_2$.
We deduce
{\footnotesize{\begin{align}
\lefteqn{M_{1,2} \leq \overset{m_{\max}}{\underset{m=m^*+1}{\sum}} \overset{m^*}{\underset{k=1}{\sum}} \Pb\big(\|\Sigma^*_{k+1}-\Sigma^*_k\|_F/3 \leq \|\widetilde{\Sigma}_{m+1}-\Sigma^*_{k+1}\|_F\big) } \label{proba_M2}\\
  & \leq\overset{m_{\max}}{\underset{m=m^*+1}{\sum}} \overset{m^*}{\underset{k=1}{\sum}}\Bigg[ \Pb\big( \|\Sigma^*_{k+1}-\Sigma^*_k\|_F/6 \leq \frac{2\lambda T}{\mathcal{I}_{\min}}\big)  + \Pb\left(\|\Sigma^*_{k+1}\!-\!\Sigma^*_k\|_F/6 \!\leq\! \left\|\frac{1}{T^*_{k+1}\!-\!T^*_{k}}\overset{T^*_{k+1}-1}{\underset{r=T^*_k}{\sum}}\!\text{vec}\Big(\Sigma^*_{k+1}-X_rX^\top_r\big)\right\|_2 \right)\Bigg].\nonumber
\end{align}}}
The first term in the second inequality of (\ref{proba_M2}) tends to zero under the conditions $\lambda T/(\mathcal{I}_{\min}\eta_{\min}) \rightarrow 0$. And under $(\eta_{\min}\mathcal{I}^{1/2}_{\min})^{-1}\,\sqrt{ T\log(T)}\rightarrow 0$, the second term tends to zero. Therefore, we conclude $\sum^{m_{\max}}_{m=m^*+1} \sum^{m^*}_{k=1} \Pb\big(L_{m,k,1} \cap \{\widetilde{T}_m>T^*_{k-1}\}\big) \rightarrow 0$ as $T \rightarrow \infty$. Based on similar arguments, we can show $\sum^{m_{\max}}_{m=m^*+1} \sum^{m^*}_{k=1} \Pb\big(L_{m,k,1} \cap \{\widetilde{T}_m \leq T^*_{k-1}\}\big) \rightarrow 0$ as $T \rightarrow \infty$. Therefore, $\sum^{m_{\max}}_{m=m^*+1} \sum^{m^*}_{k=1}\Pb\big(L_{m,k,1}\big) \rightarrow 0$ as $T \rightarrow \infty$. Similarly, it can be proved that $\sum^{m_{\max}}_{m=m^*+1} \sum^{m^*}_{k=1}\Pb\big(L_{m,k,2}\big) \rightarrow 0$ as $T \rightarrow \infty$. \\
We now consider $\sum^{m_{\max}}_{m=m^*+1} \sum^{m^*}_{k=1}\Pb\big(L_{m,k,3}\big)$. Define
\begin{equation*}
\begin{array}{llll}
\resizebox{\linewidth}{!}{$\displaystyle L^{(1)}_{m,k,3} \coloneqq L_{m,k,3}\cap \{T^*_{k-1}<\widetilde{T}_l < \widetilde{T}_{l+1}<T^*_{k+1}\}, L^{(2)}_{m,k,3} \coloneqq L_{m,k,3}\cap \{T^*_{k-1}<\widetilde{T}_l <T^*_{k+1}, \widetilde{T}_{l+1}\geq T^*_{k+1}\},$} &&\\
\resizebox{\linewidth}{!}{$\displaystyle L^{(3)}_{m,k,3} \coloneqq L_{m,k,3}\cap \{\widetilde{T}_l \leq T^*_{k-1},T^*_{k-1} <\widetilde{T}_{l+1}<T^*_{k+1}\}, L^{(4)}_{m,k,3} \coloneqq L_{m,k,3}\cap \{\widetilde{T}_l \leq T^*_{k-1},T^*_{k+1} <\widetilde{T}_{l+1}\}.$} &&
\end{array}
\end{equation*}
First, we consider $L^{(1)}_{m,k,3}$. By Lemma (\ref{optimality_cond}), for the change points $t = T^*_k$ and $t = \widetilde{T}_l$, we obtain
\begin{align}
2\lambda 
& \geq \left\|\frac{1}{T}\overset{T^*_{k}-1}{\underset{r=\widetilde{T}_{l}}{\sum}}(I_p\otimes I_p)\text{vec}\big(\widetilde{\Sigma}_{l+1} - \Sigma^*_{k}) +\frac{1}{T}\overset{T^*_{k}-1}{\underset{r=\widetilde{T}_{l}}{\sum}}\text{vec}\big(\Sigma^*_{k}-X_rX^\top_r)\right\|_2,\label{changepoint_1}
\end{align}
and for the change points $t = T^*_k$ and $t = \widetilde{T}_{l+1}$, we get
\begin{align}
2\lambda
& \geq \left\|\frac{1}{T}\overset{\widetilde{T}_{l+1}-1}{\underset{r=T^*_k}{\sum}}(I_p\otimes I_p)\text{vec}(\widetilde{\Sigma}_{l+1} - \Sigma^*_{k+1}) +\frac{1}{T}\overset{\widetilde{T}_{l+1}-1}{\underset{r=T^*_{k}}{\sum}}\text{vec}\big(\Sigma^*_{k+1}-X_rX^\top_r\big)\right\|_2.\label{changepoint_2}
\end{align}
Moreover, by the triangle inequality, we have
\begin{align*}
\lefteqn{\|\Sigma^*_{k+1}-\Sigma^*_k\|_F \leq \|\widetilde{\Sigma}_{l+1}-\Sigma^*_k\|_F + \|\widetilde{\Sigma}_{l+1}-\Sigma^*_{k+1}\|_F}\\
  & \leq \frac{2\lambda T}{T^*_k-\widetilde{T}_l} + \left\|\frac{1}{T^*_k-\widetilde{T}_l}\overset{T^*_{k}-1}{\underset{r=\widetilde{T}_{l}}{\sum}}\text{vec}\big(\Sigma^*_{k}-X_rX^\top_r\big)\right\|_2 +\frac{2\lambda T}{\widetilde{T}_{l+1}-T^*_k} + \left\|\frac{1}{\widetilde{T}_{l+1}-T^*_k}\overset{\widetilde{T}_{l+1}-1}{\underset{r=T^*_k}{\sum}}\text{vec}\big(\Sigma^*_{k}-X_rX^\top_r\big)\right\|_2.
\end{align*}
So $\sum^{m_{\max}}_{m=m^*+1} \sum^{m^*}_{k=1}\Pb\big(L^{(1)}_{m,k,3}\big)$ is upper bounded as follows
\begin{align*}
\lefteqn{\sum^{m_{\max}}_{m=m^*+1} \sum^{m^*}_{k=1}\Pb\big(L^{(1)}_{m,k,3}\big)
\leq \sum^{m_{\max}}_{m=m^*+1} \sum^{m^*}_{k=1}\Pb\big(\|\Sigma^*_{k+1}-\Sigma^*_{k}\|_F/3 \leq 2\lambda\delta^{-1}_T\big)}
  \\
&\quad  + \sum^{m_{\max}}_{m=m^*+1} \sum^{m^*}_{k=1}\Pb\left(\|\Sigma^*_{k+1}-\Sigma^*_{k}\|_F/3 \leq \left\|\frac{1}{T^*_k-\widetilde{T}_l}
\overset{T^*_{k}-1}{\underset{r=\widetilde{T}_{l}}{\sum}}\text{vec}\big(\Sigma^*_{k}-X_rX^\top_r)\right\|_2,T^*_{k}-\widetilde{T}_l\geq T \delta_T\right)\\
&\quad + \sum^{m_{\max}}_{m=m^*+1} \sum^{m^*}_{k=1}\Pb\left(\|\Sigma^*_{k+1}-\Sigma^*_{k}\|_F/3 \leq\left \|\frac{1}{\widetilde{T}_{l+1}-T^*_k}\overset{\widetilde{T}_{l+1}-1}{\underset{r=T^*_k}{\sum}}\text{vec}\big(\Sigma^*_{k}-X_rX^\top_r)\right\|_2, \widetilde{T}_{l+1}-T^*_{k}\geq T \delta_T\right),
\end{align*}
which tends to zero in the spirit as in (\ref{bound_A+}). For $L^{(2)}_{m,k,3}$, by Lemma \ref{optimality_cond} with change points $t = T^*_k$ and $t = \widetilde{T}_l$ to obtain (\ref{changepoint_1}) and with change points $t=T^*_{k}$, $t = T^*_{k+1}$, we get
\begin{align}
2\lambda
& \geq\left\|\frac{1}{T}\overset{T^*_{k+1}-1}{\underset{r=T^*_{k}}{\sum}}(I_p\otimes I_p)\text{vec}(\widetilde{\Sigma}_{l+1} - \Sigma^*_{k+1}) +\frac{1}{T}\overset{T^*_{k+1}-1}{\underset{r=T^*_{k}}{\sum}}\text{vec}\big(\Sigma^*_{k+1}-X_rX^\top_r\big)\right\|_2.\label{changepoint_3}
\end{align}
By the triangle inequality, we have
\begin{align*}
\lefteqn{\|\Sigma^*_{k+1}-\Sigma^*_{k}\|_F\leq \|\widetilde{\Sigma}_{l+1}-\Sigma^*_{k}\|_F+\|\widetilde{\Sigma}_{l+1}-\Sigma^*_{k+1}\|_F}\\
  & \leq  \frac{2\lambda T}{T^*_k-\widetilde{T}_l} + \left\|\frac{1}{T^*_k-\widetilde{T}_l}\overset{T^*_{k}-1}{\underset{r=\widetilde{T}_{l}}{\sum}}\text{vec}\big(\Sigma^*_{k}-X_rX^\top_r\big)\right\|_2 + \frac{2\lambda T}{T^*_{k+1}-T^*_k} + \left\|\frac{1}{T^*_{k+1}-T^*_k}\overset{T^*_{k+1}-1}{\underset{r=T^*_k}{\sum}}\text{vec}\big(\Sigma^*_{k+1}-X_rX^\top_r\big)\right\|_2.
\end{align*}
Therefore, we obtain
\begin{align*}
\lefteqn{\sum^{m_{\max}}_{m=m^*+1} \sum^{m^*}_{k=1}\Pb\big(L^{(2)}_{m,k,3}\big)
 \leq \sum^{m_{\max}}_{m=m^*+1} \sum^{m^*}_{k=1}\Pb\left(\|\Sigma^*_{k+1}-\Sigma^*_{k}\|_F/3 \leq 2\lambda \delta^{-1}_T+\frac{2\lambda T}{\mathcal{I}_{\min}}\right)}
  \\
  &\quad  + \sum^{m_{\max}}_{m=m^*+1} \sum^{m^*}_{k=1}\Pb\left(\|\Sigma^*_{k+1}-\Sigma^*_{k}\|_F/3 \leq  \left\|\frac{1}{T^*_k-\widetilde{T}_l}\overset{T^*_{k}-1}{\underset{r=\widetilde{T}_{l}}{\sum}}\text{vec}\big(\Sigma^*_{k}-X_rX^\top_r\big)\right\|_2,T^*_{k}-\widetilde{T}_l\geq T \delta_T\right) \\
  &\quad  + \sum^{m_{\max}}_{m=m^*+1} \sum^{m^*}_{k=1}\Pb\left(\|\Sigma^*_{k+1}-\Sigma^*_{k}\|_F/3 \leq \left\|\frac{1}{T^*_{k+1}-T^*_k}\overset{T^*_{k+1}-1}{\underset{r=T^*_k}{\sum}}\text{vec}\big(\Sigma^*_{k}-X_rX^\top_r\big)\right\|_2\right),
\end{align*}
which will tend to zero based on similar arguments as in the convergence of (\ref{bound_A+}). For $L^{(3)}_{m,k,3}$, by Lemma \ref{optimality_cond} with change points $t = T^*_{k-1}$ and $t = T^*_k$, we have
\begin{align}
2\lambda
& \geq \left\|\frac{1}{T}\overset{T^*_{k}-1}{\underset{r=T^*_{k-1}}{\sum}}(I_p\otimes I_p)\text{vec}\big(\widetilde{\Sigma}_{l+1} - \Sigma^*_{k}\big) +\frac{1}{T}\overset{T^*_{k}-1}{\underset{r=T^*_{k-1}}{\sum}}\text{vec}\big(\Sigma^*_{k}-X_rX^\top_r\big)\right\|_2,\label{changepoint_4}
\end{align}
and with change points $t=T^*_{k}$, $t = \widetilde{T}_{l+1}$, we get
\begin{align}
2\lambda
& \geq \left\|\frac{1}{T}\overset{\widetilde{T}_{l+1}-1}{\underset{r=T^*_{k}}{\sum}}(I_p\otimes I_p)\text{vec}\big(\widetilde{\Sigma}_{l+1} - \Sigma^*_{k+1}\big) +\frac{1}{T}\overset{\widetilde{T}_{l+1}-1}{\underset{r=T^*_{k}}{\sum}}\text{vec}\big(\Sigma^*_{k+1}-X_rX^\top_r\big)\right\|_2.\label{changepoint_5}
\end{align}
By the triangle inequality, we deduce
{\small{\begin{align*}
\lefteqn{\|\Sigma^*_{k+1}-\Sigma^*_{k}\|_F\leq \|\widetilde{\Sigma}_{l+1}-\Sigma^*_{k}\|_F+\|\widetilde{\Sigma}_{l+1}-\Sigma^*_{k+1}\|_F}\\
  & \leq \frac{2\lambda T}{T^*_k-T^*_{k-1}} + \left\|\frac{1}{T^*_k-T^*_{k-1}}\overset{T^*_{k}-1}{\underset{r=T^*_{k-1}}{\sum}}\text{vec}\big(\Sigma^*_{k}-X_rX^\top_r\big)\right\|_2 +\frac{2\lambda T}{\widetilde{T}_{l+1}-T^*_k} + \left\|\frac{1}{\widetilde{T}_{l+1}-T^*_k}\overset{\widetilde{T}_{l+1}-1}{\underset{r=T^*_k}{\sum}}\text{vec}\big(\Sigma^*_{k+1}-X_rX^\top_r)\right\|_2.
\end{align*}}}
We deduce
\begin{align*}
\lefteqn{\sum^{m_{\max}}_{m=m^*+1} \sum^{m^*}_{k=1}\Pb\big(L^{(3)}_{m,k,3}\big)\leq \sum^{m_{\max}}_{m=m^*+1} \sum^{m^*}_{k=1}\Pb\left(\|\Sigma^*_{k+1}-\Sigma^*_{k}\|_F/3 \leq \frac{2\lambda T}{\mathcal{I}_{\min}}+ 2\lambda \delta^{-1}_T\right)}
  \\
  &\quad  + \sum^{m_{\max}}_{m=m^*+1} \sum^{m^*}_{k=1}\Pb\left(\|\Sigma^*_{k+1}-\Sigma^*_{k}\|_F/3 \leq \left\|\frac{1}{T^*_k-T^*_{k-1}}\overset{T^*_{k}-1}{\underset{r=T^*_{k-1}}{\sum}}\text{vec}\big(\Sigma^*_{k}-X_rX^\top_r)\right\|_2\right) \\
  &\quad  + \sum^{m_{\max}}_{m=m^*+1} \sum^{m^*}_{k=1}\Pb\left(\|\Sigma^*_{k+1}-\Sigma^*_{k}\|_F/3 \leq \left\|\frac{1}{\widetilde{T}_{l+1}-T^*_{k+1}}\overset{\widetilde{T}_{l+1}-1}{\underset{r=T^*_k}{\sum}}\text{vec}\big(\Sigma^*_{k+1}-X_rX^\top_r)\right\|_2, \widetilde{T}_{l+1}-T^*_k\geq T\delta_T\right),
\end{align*}
which tends to zero based on the same arguments as in the convergence of (\ref{bound_A+}). Finally, to analyze $L^{(4)}_{m,k,3}$, applying Lemma \ref{optimality_cond} with $t = T^*_{k-1},t = T^*_k$ to obtain (\ref{changepoint_4}) and with $t = T^*_{k},t = T^*_{k+1}$ to obtain (\ref{changepoint_3}). By the triangle inequality, we have
{\footnotesize{\begin{align*}
\lefteqn{\|\Sigma^*_{k+1}-\Sigma^*_{k}\|_F\leq \|\widetilde{\Sigma}_{l+1}-\Sigma^*_{k}\|_F+\|\widetilde{\Sigma}_{l+1}-\Sigma^*_{k+1}\|_F}\\
& \leq\frac{2\lambda T}{T^*_k-T^*_{k-1}} + \left\|\frac{1}{T^*_k-T^*_{k-1}}\overset{T^*_{k}-1}{\underset{r=T^*_{k-1}}{\sum}}\text{vec}\big(\Sigma^*_{k}-X_rX^\top_r\big)\right\|_2 + \frac{2\lambda T}{T^*_{k+1}-T^*_k} + \left\|\frac{1}{T^*_{k+1}-T^*_k}\overset{T^*_{k+1}-1}{\underset{r=T^*_k}{\sum}}\text{vec}\big(\Sigma^*_{k+1}-X_rX^\top_r\big)\right\|_2,
\end{align*}}}
we deduce
\begin{align*}
\sum^{m_{\max}}_{m=m^*+1} \sum^{m^*}_{k=1}\Pb\big(L^{(4)}_{m,k,3}\big)
& \leq  \sum^{m_{\max}}_{m=m^*+1} \sum^{m^*}_{k=1}\Pb\left(\|\Sigma^*_{k+1}-\Sigma^*_{k}\|_F/3 \leq \frac{2\lambda T}{\mathcal{I}_{\min}}\right)\\
& + \sum^{m_{\max}}_{m=m^*+1} \sum^{m^*}_{k=1}\Pb\left(\|\Sigma^*_{k+1}-\Sigma^*_{k}\|_F/3 \leq  \left\|\frac{1}{T^*_k-T^*_{k-1}}\overset{T^*_{k}-1}{\underset{r=T^*_{k-1}}{\sum}}\text{vec}\big(\Sigma^*_{k}-X_rX^\top_r\big)\right\|_2\right)\\
& + \sum^{m_{\max}}_{m=m^*+1} \sum^{m^*}_{k=1}\Pb\left(\|\Sigma^*_{k+1}-\Sigma^*_{k}\|_F/3 \leq \left\|\frac{1}{T^*_{k+1}-T^*_k}\overset{T^*_{k+1}-1}{\underset{r=T^*_k}{\sum}}\text{vec}\big(\Sigma^*_{k}-X_rX^\top_r\big)\right\|_2\right),
\end{align*}
which also tends to zero, as in the proof of  the convergence to zero of (\ref{bound_A+}). We conclude that $\Pb\big(\{h(\widetilde{\Tc}_{\widetilde{m}},\Tc^*_{m^*})> T\delta_T\}\cap \{m^* < \widetilde{m} \leq m_{\max}\}\big) \rightarrow 0$ as $T \rightarrow \infty$.

\subsection{Proof of Theorem \ref{theorem_understimation_neg}}

The proof follows the proof of Theorem 3.3 of \cite{Qian2016}. Let $\breve{m}$, $\breve{\Tc}_{\breve{m}} = \{\breve{T}_1,\ldots,\breve{T}_{\breve{m}}\}$ and $\underline{\widetilde{\Sigma}}_{\breve{m}}(\breve{\Tc}_{\breve{m}})=(\widetilde{\Sigma}_1(\breve{\Tc}_{\breve{m}}),\ldots,\widetilde{\Sigma}_{\breve{m}}(\breve{\Tc}_{\breve{m}}), \widetilde{\Sigma}_{\breve{m}+1}(\breve{\Tc}_{\breve{m}}))$ be the hypothesized Group Fused LASSO estimator of the number of change points, set of change points and set of estimated variance-covariance matrices, respectively. We define $\underline{\widetilde{\Sigma}}_{\Tc_m} = \arg\min_{\underline{\Sigma}_m} \; \Lb(\underline{\Sigma}_m,\mathcal{X}_T,\Tc_m)$, 
with \\$\Lb(\underline{\Sigma}_m,\mathcal{X}_T,\Tc_m) = \frac{1}{2T}\overset{m+1}{\underset{j=1}{\sum}} \overset{T_j-1}{\underset{t=T_{j-1}}{\sum}}\text{tr}\big(\Sigma^\top_j\Sigma_j-(X_tX^\top_t)^\top\Sigma_j-\Sigma^\top_j(X_tX^\top_t)\big)$,
as the estimator of $\underline{\Sigma}_m = (\Sigma_1,\ldots,\Sigma_{m+1})$ for the given set of change-point locations $\Tc_m$. Define $\Lb_\lambda(\underline{\Sigma}_m,\mathcal{X}_T,\Tc_m) = \Lb(\underline{\Sigma}_m,\mathcal{X}_T,\Tc_m) + \lambda \sum^m_{j=1}\|\Sigma_{j+1}-\Sigma_j\|_F$.
We aim to show $\Pb(\Lb_\lambda(\underline{\widetilde{\Sigma}}_{\breve{m}}(\breve{\Tc}_{\breve{m}}),\mathcal{X}_T,\breve{\Tc}_{\breve{m}})>\Lb_\lambda(\underline{\widetilde{\Sigma}}_{m^*}(\widetilde{\Tc}_{m^*}),\mathcal{X}_T,\widetilde{\Tc}_{m^*})) \rightarrow 1$.
Under Assumption-\ref{assumption_rates}(v), we have
\begin{align*}
\lefteqn{\frac{T}{\Ic_{\min}\eta^2_{\min}}\left[\Lb_\lambda(\underline{\widetilde{\Sigma}}_{\breve{m}}(\breve{\Tc}_{\breve{m}}),\mathcal{X}_T,\breve{\Tc}_{\breve{m}})-\Lb_\lambda(\underline{\widetilde{\Sigma}}_{m^*}(\widetilde{\Tc}_{m^*}),\mathcal{X}_T,\widetilde{\Tc}_{m^*})\right]}\\
& = \frac{T}{\Ic_{\min}\eta^2_{\min}}\Bigg[\frac{1}{2T}\overset{\breve{m}+1}{\underset{j=1}{\sum}}\overset{\breve{T}_j-1}{\underset{t=\breve{T}_{j-1}}{\sum}}\text{tr}\left(\widetilde{\Sigma}_j(\breve{\Tc}_{\breve{m}})^\top\widetilde{\Sigma}_j(\breve{\Tc}_{\breve{m}})-(X_tX^\top_t)^\top\widetilde{\Sigma}_j(\breve{\Tc}_{\breve{m}})-\widetilde{\Sigma}_j(\breve{\Tc}_{\breve{m}})^\top(X_tX^\top_t)\right) \\
&\quad  - \frac{1}{2T}\overset{m^*+1}{\underset{j=1}{\sum}}\overset{\widetilde{T}_j-1}{\underset{t=\widetilde{T}_{j-1}}{\sum}}\text{tr}\left(\widetilde{\Sigma}_j(\widetilde{\Tc}_{m^*})^\top\widetilde{\Sigma}_j(\widetilde{\Tc}_{m^*})- (X_tX^\top_t)^\top\widetilde{\Sigma}_j(\widetilde{\Tc}_{m^*})-\widetilde{\Sigma}_j(\widetilde{\Tc}_{m^*})^\top(X_tX^\top_t)\right)\Bigg]\\
&\quad  + \frac{\lambda T}{\Ic_{\min}\eta^2_{\min}}\left[\overset{\breve{m}}{\underset{j=1}{\sum}} \|\widetilde{\Sigma}_{j+1}(\breve{\Tc}_{\breve{m}})-\widetilde{\Sigma}_j(\breve{\Tc}_{\breve{m}})\|_F-\overset{m^*}{\underset{j=1}{\sum}} \|\widetilde{\Sigma}_{j+1}(\widetilde{\Tc}_{m^*})-\widetilde{\Sigma}_j(\widetilde{\Tc}_{m^*})\|_F\right]\\
& \geq \frac{T}{\Ic_{\min}\eta^2_{\min}}\left[\Lb(\underline{\widetilde{\Sigma}}_{\breve{m}}(\breve{\Tc}_{\breve{m}}),\mathcal{X}_T,\breve{\Tc}_{\breve{m}})-\Lb(\underline{\widetilde{\Sigma}}_{m^*}(\widetilde{\Tc}_{m^*}),\mathcal{X}_T,\widetilde{\Tc}_{m^*})\right]+o_p(1).
\end{align*}
It is sufficient to show
\begin{equation}\label{conv_obj}
\Pb\left(\underset{0 \leq m < m^*}{\inf}\;\frac{T}{\Ic_{\min}\eta^2_{\min}}\left[\Lb(\underline{\widetilde{\Sigma}}_{m}(\Tc_{m}),\mathcal{X}_T,\Tc_{m})-\Lb(\underline{\widetilde{\Sigma}}_{m^*}(\widetilde{\Tc}_{m}),\mathcal{X}_T,\widetilde{\Tc}_{m})\right]>c+o_p(1)\right) \rightarrow 1,
\end{equation}
for some $c >0$, with $\Tc_m =\{T_1,\ldots,T_{m}\}$, $1<T_1<\ldots<T_m<T$ an arbitrary $m$-dimensional set of potential change-point locations. To prove (\ref{conv_obj}), we show:
\begin{equation}\label{conv_obj_1}
\frac{T}{\Ic_{\min}\eta^2_{\min}}\Big[\Lb(\underline{\widetilde{\Sigma}}_{m^*}(\widetilde{\Tc}_{m^*}),\mathcal{X}_T,\widetilde{\Tc}_{m^*})-\Lb(\underline{\Sigma}^*_{m^*}(\Tc^*_{m^*}),\mathcal{X}_T,\Tc^*_{m^*})\Big] = o_p(1),
\end{equation}
and
\begin{equation}\label{conv_obj_2}
\Pb\left(\underset{0 \leq m < m^*}{\inf}\;\frac{T}{\Ic_{\min}\eta^2_{\min}}\Big[\Lb(\underline{\widetilde{\Sigma}}_{m}(\Tc_{m}),\mathcal{X}_T,\Tc_{m})-\Lb(\underline{\Sigma}^*_{m^*}(\Tc^*_{m^*}),\mathcal{X}_T,\Tc^*_{m^*})\Big]>c+o_p(1)\right) \rightarrow 1,    
\end{equation}
where $\Lb(\underline{\Sigma}^*_{m^*}(\Tc^*_{m^*}),\mathcal{X}_T,\Tc^*_{m^*}) = \frac{1}{2T}\overset{m^*+1}{\underset{j=1}{\sum}} \overset{T^*_j-1}{\underset{t=T^*_{j-1}}{\sum}}\text{tr}\left(\Sigma^{*\top}_j\Sigma^*_j-(X_tX^\top_t)^\top\Sigma^*_j-\Sigma^{*\top}_j(X_tX^\top_t)\right)$. We fist show (\ref{conv_obj_1}). We have $\Lb(\underline{\widetilde{\Sigma}}_{m^*}(\widetilde{\Tc}_{m^*}),\mathcal{X}_T,\widetilde{\Tc}_{m^*})-\Lb(\underline{\Sigma}^*_{m^*}(\Tc^*_{m^*}),\mathcal{X}_T,\Tc^*_{m^*}) =: \overset{m^*+1}{\underset{j=1}{\sum}} \Lb_j$.
The analysis of $\Lb_j$ depends on four cases:
(a) $\widetilde{T}_{j-1} < T^*_{j-1}$ and $\widetilde{T}_j<T^*_j$; (b) $\widetilde{T}_{j-1} < T^*_{j-1}$ and $\widetilde{T}_j\geq T^*_j$; (c) $\widetilde{T}_{j-1} \geq T^*_{j-1}$ and $\widetilde{T}_j<T^*_j$; (d) $\widetilde{T}_{j-1} \geq T^*_{j-1}$ and $\widetilde{T}_j \geq T^*_j$. In case (a), we have
\begin{align*}
\lefteqn{\Lb_j = \frac{1}{2T} \overset{T^*_j-1}{\underset{t=T^*_{j-1}}{\sum}} \text{tr}\left(\widetilde{\Sigma}^{\top}_j\widetilde{\Sigma}_j-(X_tX^\top_t)^\top\widetilde{\Sigma}_j-\widetilde{\Sigma}^{\top}_j(X_tX^\top_t)-\big[\Sigma^{*\top}_j\Sigma^*_j-(X_tX^\top_t)^\top\Sigma^*_j-\Sigma^{*\top}_j(X_tX^\top_t)\big]\right)}\\
&\quad  + \frac{1}{2T} \overset{T^*_{j-1}}{\underset{t=\widetilde{T}_{j-1}}{\sum}} \text{tr}\left(\widetilde{\Sigma}^{\top}_j\widetilde{\Sigma}_j-(X_tX^\top_t)^\top\widetilde{\Sigma}_j-\widetilde{\Sigma}^{\top}_j(X_tX^\top_t)-\big[\Sigma^{*\top}_j\Sigma^*_j-(X_tX^\top_t)^\top\Sigma^*_j-\Sigma^{*\top}_j(X_tX^\top_t)\big]\right)\\
&\quad  - \frac{1}{2T} \overset{T^*_j-1}{\underset{t=\widetilde{T}_{j}}{\sum}} \text{tr}\left(\widetilde{\Sigma}^{\top}_j\widetilde{\Sigma}_j-(X_tX^\top_t)^\top\widetilde{\Sigma}_j-\widetilde{\Sigma}^{\top}_j(X_tX^\top_t)-\big[\Sigma^{*\top}_j\Sigma^*_j-(X_tX^\top_t)^\top\Sigma^*_j-\Sigma^{*\top}_j(X_tX^\top_t)\big]\right)\\
& =: \Lb^{(1)}_j+\Lb^{(2)}_j-\Lb^{(3)}_j.
\end{align*}
Since $\Lb^{(1)}_j = \text{vec}\big(\widetilde{\Sigma}_j-\Sigma^*_j\big)^\top\frac{1}{T}\overset{T^*_j-1}{\underset{t=T^*_{j-1}}{\sum}} \text{vec}\big(\Sigma^*_j-X_tX^\top_t\big)+ \text{vec}\big(\widetilde{\Sigma}_j-\Sigma^*_j\big)^\top\frac{1}{T}\overset{T^*_j-1}{\underset{t=T^*_{j-1}}{\sum}} (I_p \otimes I_p) \text{vec}\big(\widetilde{\Sigma}_j-\Sigma^*_j\big)$, then, uniformly in $j$: $|\Lb^{(1)}_j| \leq \|\widetilde{\Sigma}_j-\Sigma^*_j\|_FO_p\left(p\sqrt{\frac{\log(p\Ic_{\min})}{T}}\right) +    \|\widetilde{\Sigma}_j-\Sigma^*_j\|^2_F T^{-1}\max_j(\Ic^*_j)$.
As for $\Lb^{(2)}_j$, we have
\begin{align*}
\begin{array}{llll}
\Lb^{(2)}_j 
 & =\text{vec}\big(\widetilde{\Sigma}_j-\Sigma^*_j\big)^\top\frac{1}{T}\overset{T^*_j-1}{\underset{t=\widetilde{T}_{j-1}}{\sum}} \text{vec}\big(\Sigma^*_j-X_tX^\top_t\big)+ \text{vec}\big(\widetilde{\Sigma}_j-\Sigma^*_j\big)^\top\frac{1}{T}\overset{T^*_j-1}{\underset{t=\widetilde{T}_{j-1}}{\sum}} (I_p \otimes I_p) \text{vec}\big(\widetilde{\Sigma}_j-\Sigma^*_j\big)\\
& =: \Lb^{(2,1)}_j + \Lb^{(2,2)}_j. 
 \end{array}
\end{align*}
By Theorem \ref{theorem_consistency}(i) and Assumption \ref{assumption_dgp}:
\begin{align*}
\lefteqn{|\Lb^{(2,1)}_j|\leq\|\widetilde{\Sigma}_j-\Sigma^*_j\|_F \frac{1}{T}\overset{T^*_j-1}{\underset{t=\widetilde{T}_{j-1}}{\sum}} \|\Sigma^*_j-X_tX^\top_t\|_F}\\
& \leq \delta_T\|\widetilde{\Sigma}_j-\Sigma^*_j\|_F p\frac{1}{T\delta_T}\overset{T^*_j-1}{\underset{t=T^*_{j}-T\delta_T}{\sum}} \|\Sigma^*_j-X_tX^\top_t\|_{\max} = \delta_T\|\widetilde{\Sigma}_j-\Sigma^*_j\|_FO_p\left(p\sqrt{\frac{\log(pT)}{\delta_T T}}\right), 
\end{align*}
and $|\Lb^{(2,2)}_j|\leq \delta_T\|\widetilde{\Sigma}_j-\Sigma^*_j\|^2_F$. Therefore, $\Lb^{(2)}_j = \delta_T\big(\|\widetilde{\Sigma}_j-\Sigma^*_j\|_FO_p\left(p\sqrt{\frac{\log(pT)}{\delta_T T}}\right)+\|\widetilde{\Sigma}_j-\Sigma^*_j\|^2_F\big)$ uniformly in $j$. In the same manner, $\Lb^{(3)}_j = \delta_T\big(\|\widetilde{\Sigma}_j-\Sigma^*_j\|_FO_p\left(p\sqrt{\frac{\log(pT)}{\delta_T T}}\right)+\|\widetilde{\Sigma}_j-\Sigma^*_j\|^2_F\big)$. We deduce 
$\Lb_j = \left[\big(p\sqrt{\frac{\log(pT)}{T}}+\delta_T\big)\|\widetilde{\Sigma}_j-\Sigma^*_j\|_F+\|\widetilde{\Sigma}_j-\Sigma^*_j\|^2_F\right]O_p(1)$.
It can be shown that the same result holds for cases (b)-(d). By Theorem \ref{theorem_consistency}(ii) and Assumption \ref{assumption_rates}(iv)-(v), $\|\widetilde{\Sigma}_j-\Sigma^*_j\|_F=O_p(p\sqrt{\frac{\log(pT)}{\Ic_{\min}}})$, $p\sqrt{\log(p\Ic_{\min})/T}+\delta_T=O(p\sqrt{\log(p\Ic_{\min})/T})$. By Assumption \ref{assumption_rates}(v), we deduce (\ref{conv_obj_1}):
\begin{equation*}
\frac{T}{\Ic_{\min}\eta^2_{\min}}\overset{m^*+1}{\underset{j=1}{\sum}}\Lb_j = \frac{T(m^*+1)}{\Ic_{\min}\eta^2_{\min}}\left[O_p\left(p\sqrt{\frac{\log(pT)}{T}}\right)\,O_p\left(p\sqrt{\frac{\log(pT)}{\Ic_{\min}}}\right)+O_p\left(p^2\frac{\log(pT)}{\Ic_{\min}}\right)\right]=o_p(1).
\end{equation*}
We now show (\ref{conv_obj_2}). Hereafter, We assume that $m^*=1$ and $\widetilde{T}_1<T^*_1$ because the other cases can be studied in the same manner. Then $m=0$ and $\Tc_0$ is empty. Therefore, $\underline{\widetilde{\Sigma}}_{\Tc_0}$ is the minimum of the squared Frobenius distance $\|\widehat{S}-\Sigma\|^2_F$ denoted by $\widetilde{\Sigma}$, that is
 $\underline{\widetilde{\Sigma}}_{\Tc_0} = \widetilde{\Sigma} = \frac{1}{T}\overset{T}{\underset{t=1}{\sum}}X_tX^\top_t$. Therefore, with $m_0=1$, we have
\begin{equation*}
\widetilde{\Sigma} = \frac{1}{T}\overset{T^*_1-1}{\underset{t=1}{\sum}}\big(X_tX^\top_t-\Sigma^*_1\big)+\frac{I^*_1}{T}\Sigma^*_1+\frac{1}{T}\overset{T}{\underset{t=T^*_1}{\sum}}\big(X_tX^\top_t-\Sigma^*_1\big)+\frac{I^*_2}{T}\Sigma^*_2 = \Sigma^*_T + O_p\left(p\sqrt{\frac{\log(pT)}{T}}\right),
\end{equation*}
with $\Sigma^*_T = \frac{I^*_1}{T}\Sigma^*_1+\frac{I^*_2}{T}\Sigma^*_2=O_p(1)$. We then have $\Sigma^*_T-\Sigma^*_1= \frac{I^*_2}{T}(\Sigma^*_2-\Sigma^*_1)$, $\Sigma^*_T-\Sigma^*_2= -\frac{I^*_1}{T}(\Sigma^*_2-\Sigma^*_1)$. We obtain
\begin{align*}
\lefteqn{\Lb(\underline{\widetilde{\Sigma}}_{m^*}(\widetilde{\Tc}_{m^*}),\mathcal{X}_T,\widetilde{\Tc}_{m^*})-\Lb(\underline{\Sigma}^*_{m^*}(\Tc^*_{m^*}),\mathcal{X}_T,\Tc^*_{m^*}) }\\
& = \text{vec}(\widetilde{\Sigma}-\Sigma^*_1)^\top\frac{1}{T}\overset{T^*_1-1}{\underset{t=1}{\sum}}(I_p\otimes I_p)\text{vec}(\widetilde{\Sigma}-\Sigma^*_1) + \text{vec}(\widetilde{\Sigma}-\Sigma^*_1)^\top\frac{1}{T}\overset{T^*_1-1}{\underset{t=1}{\sum}}\text{vec}(\Sigma^*_1-X_tX^\top_t)  \\
&\quad  + \text{vec}(\widetilde{\Sigma}-\Sigma^*_2)^\top\frac{1}{T}\overset{T}{\underset{t=T^*_1}{\sum}}(I_p\otimes I_p)\text{vec}(\widetilde{\Sigma}-\Sigma^*_2) + \text{vec}(\widetilde{\Sigma}-\Sigma^*_2)^\top\frac{1}{T}\overset{T}{\underset{t=T^*_1}{\sum}}\text{vec}(\Sigma^*_2-X_tX^\top_t)\\
& = c_T + O_p\left(p\sqrt{\frac{\log(pT)}{T}}\Big(p\sqrt{\frac{\log(pT)}{T}}+\|\Sigma^*_2-\Sigma^*_1\|_F\Big)\right) = c_T+O_p\left(p\sqrt{\frac{\log(pT)}{T}}\right),
\end{align*}
where 
\begin{align*}
\lefteqn{c_T = \frac{I^*_1}{T}\text{vec}(\Sigma^*_T-\Sigma^*_1)^\top\text{vec}(\Sigma^*_T-\Sigma^*_1) + \frac{I^*_2}{T}\text{vec}(\Sigma^*_T-\Sigma^*_2)^\top\text{vec}(\Sigma^*_T-\Sigma^*_2)}\\
& = \frac{I^*_1I^*_2}{T}\left[\frac{I^*_2}{T}\text{vec}(\Sigma^*_2-\Sigma^*_1)^\top\text{vec}(\Sigma^*_2-\Sigma^*_1)+\frac{I^*_1}{T}\text{vec}(\Sigma^*_2-\Sigma^*_1)^\top\text{vec}(\Sigma^*_2-\Sigma^*_1)\right]\geq c\frac{\Ic_{\min}\eta^2_{\min}}{T},
\end{align*}
for some constant $c>0$. Therefore, 
\begin{equation*}
\frac{T}{\Ic_{\min}\eta^2_{\min}}\left[\Lb(\underline{\widetilde{\Sigma}}_{m}(\Tc_{m}),\mathcal{X}_T,\Tc_{m})-\Lb(\underline{\Sigma}^*_{m^*}(\Tc^*_{m^*}),\mathcal{X}_T,\Tc^*_{m^*})\right]\geq c+\frac{T}{\Ic_{\min}\eta^2_{\min}}O_p\left(p\sqrt{\frac{\log(pT)}{T}}\right)=c+o_p(1),
\end{equation*}
by Assumption \ref{assumption_rates}(v). This proves the case $m^*=1$. By similar arguments, we can show the general case $m^*\geq 2$.

\subsection{Proof of Theorem \ref{theorem_consistency_adaptive}}

\textbf{\emph{Proof of point (i).}}\\
We define: $A_{T,j} = \big\{|\widehat{T}_j-T^*_j| \geq T\delta_T\big\}, \;\; C_T = \big\{\underset{1\leq j \leq m^*}{\max}|\widehat{T}_j-T^*_j|<\mathcal{I}_{\min}/2\big\}$.
By union bound, $\Pb(\underset{1 \leq j \leq m^*}{\max}|\widehat{T}_j-T^*_j|\geq T \delta_T) \leq \sum^{m^*}_{j=1}\Pb(A_{T,j})$, $m^*<\infty$. So we aim to show: \\
$(a) \; \sum^{m^*}_{j=1}\Pb(A_{T,j} \cap C_T) \rightarrow 0, \;\; (b) \; \sum^{m^*}_{j=1}\Pb(A_{T,j} \cap C^c_T) \rightarrow 0$,
with $C^c_T$ the complement of $C_T$.

\noindent\emph{Proof of (a).} We show: $\sum^{m^*}_{j=1}\Pb(A^+_{T,j} \cap C_T) \rightarrow 0 \; \text{and} \; \sum^{m^*}_{j=1}\Pb(A^-_{T,j} \cap C_T) \rightarrow 0$,
where $A^+_{T,j}=\{T^*_j-\widehat{T}_j \geq T \delta_T\}, A^-_{T,j}=\{\widehat{T}_j-T^*_j \geq T \delta_T\}$.
We prove $\sum^{m^*}_{j=1}\Pb(A^+_{T,j} \cap C_T) \rightarrow 0$ as the other case follows in the same spirit. In light of $C_T$:
\begin{equation}\label{c_T_def_adaptive}
\forall j \in \{1,\ldots,m^*\}, \; T^*_{j-1} < \widehat{T}_j < T^*_{j+1}.
\end{equation}
By Lemma \ref{optimality_cond_adaptive}, with $t = T^*_j$ and $t=\widehat{T}_j$, in $\text{vec}(\cdot)$ form:
\begin{align*}
\frac{1}{T}\overset{T}{\underset{r=\widehat{T}_j}{\sum}}\left(\text{vec}\big(\Theta^*_r-X_rX^\top_r\big)+(I_p\otimes I_p)\text{vec}\big(\widehat{\Theta}_r-\Theta^*_r\big)\right)+\text{vec}\left(\lambda_1\overset{T}{\underset{r=\widehat{T}_j}{\sum}} \underset{u \neq v}{\sum} \xi_{uv,1r}\widehat{E}_{uv,1r}+ \lambda_2 \xi_{2\widehat{T}_j}\frac{\widehat{\Gamma}_{\widehat{T}_j}}{\|\widehat{\Gamma}_{\widehat{T}_j}\|_F}\right)= \mathbf{0}_{p^2 \times 1},
\end{align*}
and $\left\|\frac{1}{T}\!\overset{T}{\underset{r=T^*_j}{\sum}}
\Big(\text{vec}\big(\Theta^*_r-X_rX^\top_r\big)+(I_p\otimes I_p)\text{vec}\big(\widehat{\Theta}_r-\Theta^*_r\big)\Big)+\lambda_1\!\text{vec}\left(\!\overset{T}{\underset{r=T^*_j}{\sum}}\underset{u\neq v}{\sum}\xi_{uv,1r}\!\widehat{E}_{uv,1r}\right)\right\|_2\leq\lambda_2\xi_{2T^*_j}$.
Therefore, under $T^*_j>\widehat{T}_j$, taking the differences, by the triangle inequality, we obtain:
\begin{align*}
2\lambda_2(\xi_{2\widehat{T}_j}+\xi_{2T^*_j})\geq \left\|\!\frac{1}{T}\!\overset{T^*_j-1}{\underset{r=\widehat{T}_j}{\sum}}\Big(\text{vec}\big(\Theta^*_r-X_rX^\top_r\big)+(I_p\otimes I_p)\text{vec}\big(\widehat{\Theta}_r-\Theta^*_r\big)+\lambda_1\!\text{vec}\left(\!\overset{T^*_j-1}{\underset{r=T^*_j}{\sum}}\underset{u\neq v}{\sum}\xi_{uv,1r}\!\widehat{E}_{uv,1r}\right)\Big)\right\|_2 
\end{align*}
Each component of $\lambda_1\sum^{T^*_j-1}_{r=\widehat{T}_{j}}\underset{u \neq v}{\sum}\xi_{uv,1r}\widehat{E}_{uv,1r}$ is bounded by $\pm \lambda_1\big(\underset{\widehat{T}_j \leq r \leq T^*_j}{\sup}\underset{1 \leq u \neq v \leq p}{\max}\xi_{uv,1r}\big)(T^*_j-\widehat{T}_j)$, we deduce
\begin{align}
& 2\lambda_2(\xi_{2\widehat{T}_j}+\xi_{2T^*_j})+\lambda_1\sqrt{p(p-1)}\big(\sup_{\widehat{T}_j \leq r \leq T^*_j}\underset{1 \leq u \neq v \leq p}{\max}\xi_{uv,1r}\big)(T^*_j-T^*_{j-1})\nonumber\\
 \geq & \left\|\!\frac{1}{T}\!\overset{T^*_j-1}{\underset{r=\widehat{T}_j}{\sum}}\Big(\text{vec}\big(\Sigma^*_j-X_rX^\top_r\big)+(I_p\otimes I_p)\text{vec}\big(\widehat{\Sigma}_{j+1}-\Sigma^*_j\big)\Big)\right\|_2 \nonumber\\
 \geq & \left\|\!\frac{1}{T}\!\overset{T^*_j-1}{\underset{r=\widehat{T}_j}{\sum}}\Big((I_p\otimes I_p)\text{vec}\big(\Sigma_{j+1}-\Sigma^*_j\big)\Big)\right\|_2 - \left\|\!\frac{1}{T}\!\overset{T^*_j-1}{\underset{r=\widehat{T}_j}{\sum}}\Big((I_p\otimes I_p)\text{vec}\big(\widehat{\Sigma}_{j+1}^{*}-\Sigma^*_{j+1}\big)\Big)\right\|_2 \nonumber\\&\quad  - \left\|\!\frac{1}{T}\!\overset{T^*_j-1}{\underset{r=\widehat{T}_j}{\sum}}\text{vec}\big(\Sigma^*_{j}-X_rX^\top_r\big)\right\|_2 := R_{Tj,1}+R_{Tj,2}+R_{Tj,3}. \label{bound_optimality_adaptive}
\end{align}
where the first equality holds since $\widehat{\Theta}_r = \widehat{\Sigma}_{j+1}$ and $\Theta^*_r=\Sigma^*_j$ for $r \in [\widehat{T}_j,T^*_j-1]$ by (\ref{c_T_def_adaptive}).
Let: $\overline{R}_{Tj} = \{2\lambda_2(\xi_{2\widehat{T}_j}+\xi_{2T^*_j})+\lambda_1\sqrt{p(p-1)}\big(\sup_{\widehat{T}_j \leq r \leq T^*_j}\underset{1 \leq u \neq v \leq p}{\max}\xi_{uv,1r}\big)(T^*_j-T^*_{j-1}) \geq \frac{1}{3}R_{Tj,1}\}\cup \{R_{Tj,2}\geq \frac{1}{3}R_{Tj,1}\}\cup \{R_{Tj,3}\geq \frac{1}{3}R_{Tj,1}\}$.
Since inequality (\ref{bound_optimality_adaptive}) holds with probability one, then $\Pb(\overline{R}_{Tj})=1$. Therefore, we have:
\begin{align*}
  \Pb(A^+_{T,j}\cap C_T) \leq {} & \Pb(A^+_{T,j}\cap C_T\cap \{2 \lambda\geq \tfrac{1}{3}R_{Tj,1}\}) \\
                                 & + \Pb(A^+_{T,j}\cap C_T\cap \{R_{Tj,2}\geq \tfrac{1}{3}R_{Tj,1}\}) + \Pb(A^+_{T,j}\cap C_T\cap \{R_{Tj,3}\geq \tfrac{1}{3}R_{Tj,1}\}) \\
  \eqqcolon {}                   & AC_{j,1}+AC_{j,2}+AC_{j,3}.
\end{align*}
Let us first bound $\sum^{m^*}_{j=1}AC_{j,1}$. Since $\|AB\|_F \geq \lambda_{\min}(A^\top A)^{1/2}\|B\|_F$, for $1 \leq j \leq m^*$:
\begin{align*}
		     AC_{j,1} \leq {} & \Pb(A^+_{T,j}\cap \{2 \lambda \geq \tfrac{1}{3}R_{Tj,1}\}) \\
		\leq {} & \Pb\Big(\left\|\frac{1}{T^*_j-\widehat{T}_j}\overset{T^*_j-1}{\underset{r=\widehat{T}_j}{\sum}} \text{vec}(\Sigma^*_{j+1}-\Sigma^*_j)\right\|_2 \leq \frac{3T}{T^*_j-\widehat{T}_j} \big[2 \lambda_2(\xi_{2\widehat{T}_j}+\xi_{2T^*_j})\\
        {} & +\lambda_1\sqrt{p(p-1)}\big(\sup_{\widehat{T}_j \leq r \leq T^*_j}\underset{1 \leq u \neq v \leq p}{\max}\xi_{uv,1r}\big)(T^*_j-T^*_{j-1})\big],T^*_j-\widehat{T}_j \geq T\delta_T\Big) \\
		\leq {} & \Pb\Big(\|\Sigma^*_{j+1}-\Sigma^*_j\|_F \leq \frac{6T\lambda_2}{T^*_j-\widehat{T}_j}(\xi_{2\widehat{T}_j}+\xi_{2T^*_j})+3T\lambda_1\sqrt{p(p-1)}\big(\sup_{\widehat{T}_j \leq r \leq T^*_j}\underset{1 \leq u \neq v \leq p}{\max}\xi_{uv,1r}\big),T^*_j-\widehat{T}_j \geq T\delta_T\Big) \\
		\leq {} & \Pb\Big(1\leq \frac{6\lambda_2}{\eta_{\min}\delta_T}(\xi_{2\widehat{T}_j}+\xi_{2T^*_j})+\frac{3T\lambda_1}{\eta_{\min}}\sqrt{p(p-1)}\big(\sup_{\widehat{T}_j \leq r \leq T^*_j}\underset{1 \leq u \neq v \leq p}{\max}\xi_{uv,1r}\big),T^*_j-\widehat{T}_j \geq T\delta_T\Big).
\end{align*}
Under the conditions of Theorem \ref{theorem_consistency_adaptive}, under the conditions $\lambda_2\max(\alpha_{p,T},a_T)^{-\mu_2}/(\eta_{\min}\delta_T)\rightarrow 0$, $T\lambda_1\sqrt{p(p-1)}\max(\alpha_{p,T},a_T)^{-\mu_1}/\eta_{\min} \rightarrow 0$, we deduce $\sum^{m^*}_{j=1}AC_{j,1} \rightarrow 0$. We now bound $\sum^{m^*}_{j=1}AC_{j,2}$. For any $j=1,\ldots,m^*$:
\begin{align*}
  AC_{j, 2} = {} & \Pb\left(\!A^+_{T,j} \!\cap\! C_T \!\cap\! \left\{\!\left\|\!\frac{1}{T^*_j\!-\!\widehat{T}_j}\!\overset{T^*_j\!-\!1}{\underset{r\!=\!\widehat{T}_j}{\sum}}\!(I_p \otimes I_p)\text{vec}(\widehat{\Sigma}_{j+1}\!-\!\Sigma^*_{j+1})\!\right\|_2 \!\geq\! \frac{1}{3}\!\left\|\!\frac{1}{T^*_j\!-\!\widehat{T}_j}\!\overset{T^*_j\!-\!1}{\underset{r\!=\!\widehat{T}_j}{\sum}}\! \!(I_p \otimes I_p)\text{vec}(\Sigma^*_{j+1}\!-\!\Sigma^*_j)\!\right\|_2\!\right\}\!\right) \\
  \leq {} & \Pb\left(A^+_{T,j}\cap C_T \cap \Big\{ \|\widehat{\Sigma}_{j+1}-\Sigma^*_{j+1}\|_F \geq \frac{1}{3}\|\Sigma^*_{j+1}-\Sigma^*_j\|_F\}\right).
\end{align*}
We now need to evaluate the bound for $\|\widehat{\Sigma}_{j+1}-\Sigma^*_{j+1}\|_F$. To do so, we rely on the KKT conditions of Lemma \ref{optimality_cond_adaptive}. We have $\widehat{\Theta}_t = \widehat{\Sigma}_{j+1}$ when $t \in [T^*_j,(T^*_j+T^*_{j+1})/2-1]$ as $\widehat{T}_j<T^*_j$ given $A^+_{T,j}$ and $\widehat{T}_{j+1}>(T^*_j+T^*_{j+1})/2$ given $C_T$. Therefore, by Lemma \ref{optimality_cond_adaptive} with $l = (T^*_j+T^*_{j+1})/2$ and $l=T^*_j$, following the steps to obtain inequality (\ref{bound_optimality_adaptive}), we get
{\small{\begin{align*}
& 2 \lambda_2(\xi_{2(T^*_j+T^*_{j+1})/2-1}+\xi_{2T^*_j}) + \lambda_1\sqrt{p(p-1)}\big(\sup_{T^*_j \leq r \leq (T^*_j+T^*_{j+1})/2-1}\underset{1 \leq u \neq v \leq p}{\max}\xi_{uv,1r}\big)((T^*_j+T^*_{j+1})/2-T^*_j) \\
\geq & \left\|\frac{1}{T}\overset{(T^*_j+T^*_{j+1})/2-1}{\underset{r=T^*_j}{\sum}} \text{vec}(\widehat{\Sigma}_{j+1}-\Sigma^*_{j+1})\right\|_2 - \left\|\frac{1}{T}\sum^{(T^*_j+T^*_{j+1})/2-1}_{r=T^*_j}\big(\Sigma^*_{j} - X_rX^\top_r\big)\right\|_F.
\end{align*}}}
Therefore, conditional on $C_T$:
{\footnotesize{\begin{align*}
\lefteqn{\|\widehat{\Sigma}_{j+1}-\Sigma^*_{j+1}\|_F \leq \frac{2 T\lambda_2}{T^*_{j+1}-T^*_{j}}(\xi_{2(T^*_j+T^*_{j+1})/2-1}+\xi_{2T^*_j}) }\\
&\quad   +
\frac{T\lambda_1\sqrt{p(p-1)}}{T^*_{j+1}-T^*_{j}}\big(\sup_{T^*_j \leq r \leq (T^*_j+T^*_{j+1})/2-1}\underset{1 \leq u \neq v \leq p}{\max}\xi_{uv,1r}\big)((T^*_j+T^*_{j+1})/2-T^*_j)+\left\|\frac{2}{T^*_{j+1}-T^*_{j}}\sum^{(T^*_j+T^*_{j+1})/2-1}_{r=T^*_j}\big(\Sigma^*_{j}- X_rX^\top_r\big)\right\|_F.
\end{align*}}}
We deduce
{\footnotesize{\begin{align}
\lefteqn{\sum^{m^*}_{j=1} \Pb\Big(\Big\{\|\widehat{\Sigma}_{j+1}-\Sigma^*_{j+1}\|_F \geq \|\Sigma^*_{j+1}-\Sigma^*_j\|_F/3\Big\}\cap C_T\Big) \leq \sum^{m^*}_{j=1} \Pb\Big(\frac{2 T\lambda_2}{T^*_{j+1}-T^*_{j}}(\xi_{2(T^*_j+T^*_{j+1})/2-1}+\xi_{2T^*_j}) \nonumber}\\
&\quad  +
\frac{T\lambda_1\sqrt{p(p-1)}}{T^*_{j+1}-T^*_{j}}\big(\sup_{T^*_j \leq r \leq (T^*_j+T^*_{j+1})/2-1}\underset{1 \leq u \neq v \leq p}{\max}\xi_{uv,1r}\big)((T^*_j+T^*_{j+1})/2-T^*_j) \geq \|\Sigma^*_{j+1}-\Sigma^*_j\|_F/6\Big)\nonumber\\
&\quad  +  \sum^{m^*}_{j=1} \Pb\Bigg(\left\|\frac{2}{T^*_{j+1}-T^*_{j}}\sum^{(T^*_j+T^*_{j+1})/2-1}_{r=T^*_j}\big(\Sigma^*_j- X_rX^\top_r\big)\right\|_F \geq \|\Sigma^*_{j+1}-\Sigma^*_j\|_F/6\Bigg)\nonumber\\
& \leq \sum^{m^*}_{j=1} \Pb\Big(\frac{2 T\lambda_2}{\Ic_{\min}}(\xi_{2(T^*_j+T^*_{j+1})/2-1}+\xi_{2T^*_j})+ T\lambda_1\sqrt{p(p-1)}\big(\sup_{T^*_j \leq r \leq (T^*_j+T^*_{j+1})/2-1}\underset{1 \leq u \neq v \leq p}{\max}\xi_{uv,1r}\big) \geq \|\Sigma^*_{j+1}-\Sigma^*_j\|_F/6\Big)\nonumber\\
&\quad  + \sum^{m^*}_{j=1} \Pb\left(\left\|\frac{1}{(T^*_{j+1}-T^*_{j})/2}\sum^{(T^*_j+T^*_{j+1})/2-1}_{r=T^*_j}\big(\Sigma^*_j- X_rX^\top_r\big)\right\|_F \geq \|\Sigma^*_{j+1}-\Sigma^*_j\|_F/6\right)
\label{control_sum_adaptive}.
\end{align}}}
Since $T\lambda_2\max(\alpha_{p,T},a_T)^{-\mu_2}/(\mathcal{I}_{\min}\eta_{\min}) \rightarrow 0$, $T\lambda_1\sqrt{p(p-1)} \max(\alpha_{p,T},a_T)^{-\mu_1}/\eta_{\min} \rightarrow 0$, the first term tends to zero as \(T \to \infty\). As for the second term, for $C>0$ finite, applying Lemma \ref{bound_gradient}, we deduce that for any $j$: $\Pb\left(\left\|\frac{1}{(T^*_{j+1}-T^*_{j})/2}\sum^{(T^*_j+T^*_{j+1})/2-1}_{r=T^*_j}\Big(\Sigma^*_{j} - X_rX^\top_r\Big)\right\|_F \geq  \eta_{\min}/6\right) \rightarrow 0$,
since $(\eta_{\min}\mathcal{I}^{1/2}_{\min})^{-1}p\sqrt{\log(pT)}\rightarrow 0$. So $\sum^{m^*}_{j=1}AC_{j,2} \rightarrow 0$. We now consider $\sum^{m^*}_{j=1}AC_{j,3}$. Applying the same reasoning to show the convergence of the second summation on the right-hand side of (\ref{control_sum_adaptive}), we get $\left\|\frac{1}{T^*_{j}-\widehat{T}_{j}}\sum^{T^*_j-1}_{r=\widehat{T}_j}\big(\Sigma^*_{j} - X_rX^\top_r\big)\right\|_F = O_p\left(p\sqrt{\frac{\log(pT)}{T\delta_T}}\right) = o_p(\eta_{\min})$,
when $T^*_j-\widehat{T}_j\geq T\delta_T$,
and
\begin{align*}
\lefteqn{\sum^{m^*}_{j=1}AC_{j,3} \leq \Pb(A^+_{T,j}\cap \{R_{Tj,3}\geq \frac{1}{3}R_{Tj,1}\})} \\
& \leq \sum^{m^*}_{j=1}\!\Pb\left(A^+_{T,j} \!\cap\! \left\{\|\frac{1}{T^*_{j}\!-\!\widehat{T}_{j}}\sum^{T^*_j-1}_{r=\widehat{T}_j}\!\big(\Sigma^*_{j} -X_rX^\top_r\big)\|_F\!\geq\! \frac{1}{3}\left\|\frac{1}{T^*_j-\widehat{T}_j}\overset{T^*_j-1}{\underset{r=\widehat{T}_j}{\sum}}\! \text{vec}(\Sigma^*_{j+1}\!-\!\Sigma^*_j)\right\|_2\right\}\right)\\
& \leq  \sum^{m^*}_{j=1} \Pb\left(A^+_{T,j} \cap \left\{\left\|\frac{1}{T^*_{j}-\widehat{T}_{j}}\sum^{T^*_j-1}_{r=\widehat{T}_j} \big(\Sigma^*_{j}- X_rX^\top_r\big)\right\|_F\geq \frac{1}{3}\eta_{\min}\right\}\right).
\end{align*}
Since $T\delta_T \leq T^*_j-\widehat{T}_j$, then under $(\sqrt{T\delta_T}\eta_{\min})^{-1}\,p\sqrt{ \log(pT)} \rightarrow 0$, we deduce $\sum^{m^*}_{j=1}AC_{j,3} \rightarrow 0$. Consequently, we proved $\sum^{m^*}_{j=1}\Pb(A_{T,j} \cap C_T) \rightarrow 0$.

\noindent\emph{Proof of (b).} We prove (b) by showing $\sum^{m^*}_{j=1}\Pb(A^+_{T,j} \cap C^c_T) \rightarrow 0$ and $\sum^{m^*}_{j=1}\Pb(A^-_{T,j} \cap C^c_T) \rightarrow 0$. As in the proof of (a), we simply show $\sum^{m^*}_{j=1}\Pb(A^+_{T,j} \cap C^c_T) \rightarrow 0$. To do so, we define:
\begin{align*}
D^{(l)}_T&\coloneqq \big\{\exists j \in \{1,\ldots,m^*\}, \widehat{T}_j \leq T^*_{j-1}\big\} \cap C^c_T,\;
D^{(m)}_T\coloneqq \big\{\forall j \in \{1,\ldots,m^*\}, T^*_{j-1}<\widehat{T}_j < T^*_{j+1}\big\} \cap C^c_T,\\
D^{(r)}_T&\coloneqq \big\{\exists j \in \{1,\ldots,m^*\}, \widehat{T}_j \geq T^*_{j+1}\big\} \cap C^c_T,
\end{align*}
where $C^c_T = \{\underset{1 \leq j \leq m^*}{\max}|\widehat{T}_j-T^*_j|\geq \mathcal{I}_{\min}/2\}$. Then, we have:
\begin{equation*}
\sum^{m^*}_{j=1}\Pb(A^+_{T,j} \cap C^c_T) = \sum^{m^*}_{j=1}\Big[\Pb(A^+_{T,j} \cap D^{(l)}_T)+\Pb(A^+_{T,j} \cap D^{(m)}_T) +\Pb(A^+_{T,j} \cap D^{(r)}_T)\Big].
\end{equation*}
We first bound $\sum^{m^*}_{j=1}\Pb(A^+_{T,j} \cap D^{(m)}_T)$. For any $j$:
\begin{align*}
\Pb(A^+_{T,j} \cap D^{(m)}_T)& \leq \Pb(A^+_{t,j} \cap \big\{\widehat{T}_{j+1}-T^*_j \geq \frac{1}{2}\mathcal{I}_{\min}\big\}\cap D^{(m)}_T) + \Pb(A^+_{t,j} \cap \big\{\widehat{T}_{j+1}-T^*_j < \frac{1}{2}\mathcal{I}_{\min}\big\}\cap D^{(m)}_T)\\
& \leq \Pb(A^+_{t,j} \cap \big\{\widehat{T}_{j+1}-T^*_j \geq \frac{1}{2}\mathcal{I}_{\min}\big\}\cap D^{(m)}_T) + \Pb(A^+_{t,j} \cap \big\{T^*_{j+1}-\widehat{T}_{j+1} \geq \frac{1}{2}\mathcal{I}_{\min}\big\}\cap D^{(m)}_T),
\end{align*}
since $0 \leq \widehat{T}_{j+1}-T^*_j \leq \mathcal{I}_{\min}/2$ implies $T^*_{j+1}-\widehat{T}_{j+1}=(T^*_{j+1}-T^*_j)-(\widehat{T}_{j+1}-T^*_j)\geq \mathcal{I}_{\min}-\mathcal{I}_{\min}/2 = \mathcal{I}_{\min}/2$. Moreover, since
\begin{equation*}
\resizebox{\linewidth}{!}{$\displaystyle\Big\{A^+_{t,j} \cap \big\{T^*_{j+1}-\widehat{T}_{j+1} \geq \frac{1}{2}\mathcal{I}_{\min}\big\}\cap D^{(m)}_T\Big\} \subset \overset{m^*-1}{\underset{k=j+1}{\cup}}\Big[ \big\{T^*_k-\widehat{T}_k\geq \mathcal{I}_{\min}/2\big\} \cap \big\{\widehat{T}_{k+1}-T^*_k \geq \mathcal{I}_{\min}/2\big\}\cap D^{(m)}_T\Big],$}
\end{equation*}
we deduce:
\begin{align}\label{control_Dm_adaptive}
\lefteqn{\sum^{m^*}_{j=1}\Pb(A^+_{T,j} \cap D^{(m)}_T)  \leq \sum^{m^*}_{j=1} \Pb(A^+_{t,j} \cap \big\{\widehat{T}_{j+1}-T^*_j \geq \frac{1}{2}\mathcal{I}_{\min}\big\}\cap D^{(m)}_T)  } \notag\\
&\quad  + \sum^{m^*}_{j=1} \sum^{m^*-1}_{k=j+1}  \Pb\Big(\big\{T^*_k-\widehat{T}_k\geq \mathcal{I}_{\min}/2\big\} \cap \big\{\widehat{T}_{k+1}-T^*_k \geq \mathcal{I}_{\min}/2\big\}\cap D^{(m)}_T\Big).
\end{align}
Let us treat the first term. By Lemma \ref{optimality_cond_adaptive} with $t=\widehat{T}_j$ and $t=T^*_j$, we obtain:
\begin{equation*}
\frac{1}{T}\overset{T}{\underset{r=\widehat{T}_j}{\sum}}\Big(\text{vec}\big(\Theta^*_r-X_rX^\top_r\big)+(I_p \otimes I_p)\text{vec}\big(\widehat{\Theta}_r-\Theta^*_r\big) \Big) + \lambda_1\text{vec}\left(\overset{T}{\underset{r=\widehat{T}_j}{\sum}}\underset{u \neq v}{\sum}\xi_{uv,1r}\widehat{E}_{uv,1r}\right)= \lambda_2 \xi_{2\widehat{T}_j}\text{vec}\left(\frac{\widehat{\Gamma}_{\widehat{T}_j}}{\|\widehat{\Gamma}_{\widehat{T}_j}\|_F}\right), 
\end{equation*}
and $\left\|\frac{1}{T}\overset{T}{\underset{r=T^*_j}{\sum}}\Big(\text{vec}\big(\Theta^*_r-X_rX^\top_r\big)+(I_p \otimes I_p)\text{vec}\big(\widehat{\Theta}_r-\Theta^*_r\big) \Big)+\lambda_1\text{vec}\left(\overset{T}{\underset{r=T^*_j}{\sum}}\underset{u \neq v}{\sum}\xi_{uv,1r}\widehat{E}_{uv,1r}\right)\right\|_2 \leq \lambda_2 \xi_{2T^*_j}$.
We deduce
{\footnotesize{\begin{align*}
\|\widehat{\Sigma}_{j+1}-\Sigma^*_j\|_F - \left\|\frac{1}{T^*_j-\widehat{T}_j}\overset{T^*_j-1}{\underset{r=\widehat{T}_j}{\sum}}\big(\Sigma^*_{j} -X_rX^\top_r\big)\right\|_F 
& \leq  \frac{1}{T^*_j-\widehat{T}_j} \left\|\overset{T^*_j-1}{\underset{r=\widehat{T}_j}{\sum}}\Big(\text{vec}\big(\Sigma^*_j-X_rX^\top_r\big)+(I_p \otimes I_p)\text{vec}\big(\widehat{\Sigma}_{j+1}-\Sigma^*_j\big)\Big)\right\|_2\\
& \leq \frac{2\lambda_2 T}{T^*_j-\widehat{T}_j}(\xi_{2T^*_j}+\xi_{2\widehat{T}_j})+\lambda_1T \sqrt{p(p-1)}\big(\sup_{\widehat{T}_j \leq r \leq T^*_j-1}\underset{1 \leq u \neq v \leq p}{\max}\xi_{uv,1r}\big).
\end{align*}}}
As a consequence:
{\small{\begin{align}
\lefteqn{\|\widehat{\Sigma}_{j+1}-\Sigma^*_j\|_F \leq \frac{2\lambda_2 T}{T^*_j-\widehat{T}_j}(\xi_{2T^*_j}+\xi_{2\widehat{T}_j})}\nonumber\\
&\quad  +\lambda_1T \sqrt{p(p-1)}\big(\sup_{\widehat{T}_j \leq r \leq T^*_j-1}\underset{1 \leq u \neq v \leq p}{\max}\xi_{uv,1r}\big) + \left\|\frac{1}{T^*_j-\widehat{T}_j}\overset{T^*_j-1}{\underset{r=\widehat{T}_j}{\sum}}\big(\Sigma^*_{j} -X_rX^\top_r\big)\right\|_F.\label{bound_param_1_adaptive}
\end{align}}}
In the same vein, applying Lemma \ref{optimality_cond_adaptive} with $t = \widehat{T}_{j+1}$ and $t=T^*_j$, we obtain:
\begin{align}
\lefteqn{\|\widehat{\Sigma}_{j+1}-\Sigma^*_{j+1}\|_F  \leq \frac{2\lambda_2 T}{\widehat{T}_{j+1}-T^*_j}(\xi_{2\widehat{T}_{j+1}}+\xi_{2T^*_j}) }\label{bound_param_2_adaptive}\\
&\quad  +\lambda_1T \sqrt{p(p-1)}\big(\sup_{T^*_j \leq r \leq \widehat{T}_{j+1}-1}\underset{1 \leq u \neq v \leq p}{\max}\xi_{uv,1r}\big) + \left\|\frac{1}{\widehat{T}_{j+1}-T^*_j}\overset{\widehat{T}_{j+1}-1}{\underset{r=T^*_j}{\sum}}\big(\Sigma^*_{j} - X_rX^\top_r\big)\right\|_F.\nonumber
\end{align}
Let the event:
\begin{align}
\lefteqn{E_{T,j} \coloneqq \Bigg\{\|\Sigma^*_{j+1}-\Sigma^*_j\|_F \leq 2\lambda_2 T\Big[\frac{\xi_{2T^*_j}+\xi_{2\widehat{T}_j}}{T^*_j-\widehat{T}_j} + \frac{\xi_{2\widehat{T}_{j+1}}+\xi_{2T^*_j}}{\widehat{T}_{j+1}-T^*_j} \Big]}\label{E_event_adaptive}\\
&\quad  + \lambda_1T \sqrt{p(p-1)}\Big[\big(\sup_{\widehat{T}_j \leq r \leq T^*_j-1}\underset{1 \leq u \neq v \leq p}{\max}\xi_{uv,1r}\big) + \big(\sup_{T^*_j \leq r \leq \widehat{T}_{j+1}-1}\underset{1 \leq u \neq v \leq p}{\max}\xi_{uv,1r}\big)\Big]\nonumber\\
&\quad  \left\|\frac{1}{T^*_j-\widehat{T}_j}\overset{T^*_j-1}{\underset{r=\widehat{T}_j}{\sum}}\big(\Sigma^*_{j} -X_rX^\top_r\big)\right\|_F + \left\|\frac{1}{\widehat{T}_{j+1}-T^*_j}\overset{\widehat{T}_{j+1}-1}{\underset{r=T^*_j}{\sum}}\big(\Sigma^*_{j} -X_rX^\top_r\big)\right\|_F\Bigg\}. \nonumber
\end{align}
Therefore, by the triangle inequality, (\ref{bound_param_1_adaptive}) and (\ref{bound_param_2_adaptive}) imply that the event $E_{T,j}$ holds with probability one. Hence:
{\footnotesize{\begin{align}
\lefteqn{\sum^{m^*}_{j=1} \Pb(A^+_{t,j} \cap \big\{\widehat{T}_{j+1}-T^*_j \geq \frac{1}{2}\mathcal{I}_{\min}\big\}\cap D^{(m)}_T)
 \leq \sum^{m^*}_{j=1} \Pb(E_{T,j}\cap \big\{T^*_j - \widehat{T}_j> T \delta_T\big\} \cap  \big\{\widehat{T}_{j+1}-T^*_j \geq \frac{1}{2}\mathcal{I}_{\min}\big\})}\nonumber\\
& \leq \resizebox{0.98\linewidth}{!}{$\displaystyle\sum^{m^*}_{j=1} \Pb\Big(2\lambda_2\frac{\xi_{2T^*_j}+\xi_{2\widehat{T}_j}}{\delta_T} + 4\lambda_2T\frac{\xi_{2\widehat{T}_{j+1}}+\xi_{2T^*_j}}{\Ic_{\min}} + \lambda_1T \sqrt{p(p-1)}\Big[\big(\sup_{\widehat{T}_j \leq r \leq T^*_j-1}\underset{u \neq v }{\max}\,\xi_{uv,1r}\big) + \big(\sup_{T^*_j \leq r \leq \widehat{T}_{j+1}-1}\underset{ u \neq v}{\max}\,\xi_{uv,1r}\big)\Big]\geq \|\Sigma^*_{j+1}-\Sigma^*_j\|_F/3\Big) $}\nonumber\\
&\quad  + \sum^{m^*}_{j=1} \Pb\left(\Big\{\|\frac{1}{T^*_j-\widehat{T}_j}\overset{T^*_j-1}{\underset{r=\widehat{T}_j}{\sum}}\big(\Sigma^*_{j} -X_rX^\top_r\big)\|_F \geq \|\Sigma^*_{j+1}-\Sigma^*_j\|_F/3\Big\}\cap \big\{T^*_j-\widehat{T}_j > T \delta_T\big\}\right)\nonumber\\
&\quad  + \sum^{m^*}_{j=1} \Pb\left(\left\{\left\|\frac{1}{\widehat{T}_{j+1}-T^*_j}\overset{\widehat{T}_{j+1}-1}{\underset{r=T^*_j}{\sum}}\big(\Sigma^*_{j} -X_rX^\top_r\big)\right\|_F \geq \|\Sigma^*_{j+1}-\Sigma^*_j\|_F/3\right\}\cap \big\{\widehat{T}_{j+1}-T^*_j \geq \mathcal{I}_{\min}/2\big\}\right). \label{bound_A+_adaptive}
\end{align}}}
Under $\lambda_2/(\max(\alpha_{p,T},a_T)^{\mu_2}\eta_{\min}\delta_T)\rightarrow 0$, $\lambda_2  T/(\max(\alpha_{p,T},a_T)^{\mu_2}\mathcal{I}_{\min}\eta_{\min})\rightarrow 0$, and under the condition $\lambda_1T\sqrt{p(p-1)}/(\max(\alpha_{p,T},a_T)^{\mu_1}\eta_{\min}) \rightarrow 0$, the first term in (\ref{bound_A+_adaptive}) tends to zero. Moreover, 
$\left\|\frac{1}{T^*_j-\widehat{T}_j}\overset{T^*_j-1}{\underset{r=\widehat{T}_j}{\sum}}\big(\Sigma^*_{j} -X_rX^\top_r\big)\right\|_F = O_p\left(p\sqrt{\frac{\log(pT)}{T\delta_T}}\right) = o_p(\eta_{\min})$, $\left\|\frac{1}{\widehat{T}_{j+1}-T^*_j}\overset{\widehat{T}_{j+1}-1}{\underset{r=T^*_j}{\sum}}\big(\Sigma^*_{j} - X_rX^\top_r\big)\right\|_F = O_p\left(p\sqrt{\frac{\log(pT)}{\mathcal{I}_{\min}}}\right) = o_p(\eta_{\min})$.
In the same manner, we can show that the second term in (\ref{control_Dm_adaptive}) tends to zero. We now consider $\sum^{m^*}_{j=1}\Pb(A^+_{T,j} \cap D^{(l)}_T)$. The probability of the event $A^+_{T,j} \cap D^{(l)}_T$ is upper bounded by: $\Pb(D^{(l)}_T)\leq \sum^{m^*}_{j=1}2^{j-1}\Pb(\max(l \in \{1,\ldots,m^*\}:\widehat{T}_l \leq T^*_{l-1})=j)$.
Now $\max(l \in \{1,\ldots,m^*\}:\widehat{T}_l \leq T^*_{l-1})=j$ implies $\widehat{T}_j\leq T^*_{j-1}$ and $\widehat{T}_{l+1}>T^*_l$ for any $j \leq l \leq m^*$ and:
\begin{equation*}
\big\{\max(l \in \{1,\ldots,m^*\}:\widehat{T}_l \leq T^*_{l-1})=j\big\}\subset \overset{m^*-1}{\underset{k=j}{\cup}} \Big(\big\{T^*_k-\widehat{T}_k\geq \mathcal{I}_{\min}/2\big\}\cap\big\{\widehat{T}_{k+1}-T^*_k\geq \mathcal{I}_{\min}/2\big\}\Big).
\end{equation*}
Therefore:
\begin{align}
\lefteqn{\sum^{m^*}_{j=1}\Pb(A^+_{T,j} \cap D^{(l)}_T)}\label{control_Dl_adaptive}\\
  & \leq \resizebox{0.92\linewidth}{!}{$\displaystyle m^* \overset{m^*-1}{\underset{j=1}{\sum}}2^{j-1}\overset{m^*-1}{\underset{k=j}{\sum}} \Pb\Big(\big\{T^*_k-\widehat{T}_k\geq \mathcal{I}_{\min}/2\big\}\cap\big\{\widehat{T}_{k+1}-T^*_k\geq \mathcal{I}_{\min}/2\big\}\Big) + m^*2^{m^*-1}\Pb(T^*_{m^*}-\widehat{T}_{m^*}\geq \mathcal{I}_{\min}/2).$}\nonumber
\end{align}
First, we consider the second term of the right-hand side of (\ref{control_Dl_adaptive}). Let $j=m^*$ in (\ref{E_event_adaptive}), then $E_{T,m^*}$ holds with probability one. Therefore:
{\small{\begin{align*}
\lefteqn{m^*2^{m^*-1}\Pb(T^*_{m^*}-\widehat{T}_{m^*}\geq \mathcal{I}_{\min}/2)=m^*2^{m^*-1}\Pb(E_{T,m^*}\cap\big\{T^*_{m^*}-\widehat{T}_{m^*}\geq \mathcal{I}_{\min}/2\big\})}\\
  & \leq  m^*2^{m^*-1}\Pb\Bigg(\frac{2\lambda_2}{\delta_T}(\xi_{2\widehat{T}_{m^*}}+\xi_{2T^*_{m^*}}) + \frac{4\lambda_2 T}{\mathcal{I}_{\min}}(\xi_{2\widehat{T}_{m^*}} + \xi_{2T^*_{m^*}})\\
  &\quad  + \lambda_1T\sqrt{p(p-1)}\Big[\big(\sup_{\widehat{T}_{m^*} \leq r \leq T^*_{m^*}-1}\underset{1 \leq u \neq v \leq p}{\max}\xi_{uv,1r}\big) + \big(\sup_{T^*_{m^*} \leq r \leq \widehat{T}_{m^*+1}-1}\underset{1 \leq u \neq v \leq p}{\max}\xi_{uv,1r}\big)\Big]
  \geq \|\Sigma^*_{m^*+1}-\Sigma^*_{m^*}\|_F/3\Bigg) \\
  &\quad  + m^*2^{m^*-1} \Pb\left(\left\|\frac{1}{T^*_{m^*}\!-\!\widehat{T}_{m^*}}\overset{T^*_{m^*}-1}{\underset{r=\widehat{T}_{m^*}}{\sum}}\!\big(\Sigma^*_{m^*}- X_rX^\top_r\big)\right\|_F \!\geq\! \|\Sigma^*_{m^*+1}\!-\!\Sigma^*_{m^*}\|_F/3,T^*_{m^*}\!-\!\widehat{T}_{m^*} \!\geq\! \mathcal{I}_{\min}/2\right) \\
  &\quad  + m^*2^{m^*-1}\Pb\left(\left\|\frac{1}{T-T^*_{m^*}}\overset{T}{\underset{r=T^*_{m^*}}{\sum}}\big(\Sigma^*_{m^*} - X_rX^\top_r\big)\right\|_F \geq \|\Sigma^*_{m^*+1}-\Sigma^*_{m^*}\|_F/3\right).
\end{align*}}}
Since $m^*2^{m^*-1} = O(T\log(T))$, then $\log(m^*2^{m^*-1}) = O(\log(T^{1+\eps/2}))$. So under the conditions $(\sqrt{T\delta_T}\eta_{\min})^{-1}\,p\sqrt{ \log(pT)} \rightarrow 0, (\mathcal{I}^{1/2}_{\min}\eta_{\min})^{-1}\,p\sqrt{ \log(pT)} \rightarrow 0$, the right-hand side of the previous inequality converges to zero. As for the first term of (\ref{control_Dl_adaptive}), applying $j=k$ in (\ref{E_event_adaptive}):
\begin{align*}
\lefteqn{m^* \overset{m^*-1}{\underset{j=1}{\sum}}2^{j-1}\overset{m^*-1}{\underset{k=j}{\sum}} \Pb\Big(\big\{T^*_k-\widehat{T}_k\geq \mathcal{I}_{\min}/2\big\}\cap\big\{\widehat{T}_{k+1}-T^*_k\geq \mathcal{I}_{\min}/2\big\}\Big)}\\
& \leq  m^* 2^{m^*-1}\overset{m^*-1}{\underset{k=1}{\sum}} \Bigg\{\Pb\Bigg(\frac{2\lambda_2}{\delta_T}(\xi_{2\widehat{T}_{k}}+\xi_{2T^*_{k}}) + \frac{4\lambda_2 T}{\mathcal{I}_{\min}} (\xi_{2\widehat{T}_{k+1}}+\xi_{2T^*_{k}})\\
&\quad  + \lambda_1T\sqrt{p(p-1)}\Big[\big(\sup_{\widehat{T}_{k} \leq r \leq T^*_{k}-1}\underset{1 \leq u \neq v \leq p}{\max}\xi_{uv,1r}\big) + \big(\sup_{T^*_{k} \leq r \leq \widehat{T}_{k+1}-1}\underset{1 \leq u \neq v \leq p}{\max}\xi_{uv,1r}\big)\Big]\geq \|\Sigma^*_{k+1}-\Sigma^*_{k}\|_F/3\Bigg) \\
&\quad  + \Pb\left(\left\|\frac{1}{T^*_{k}-\widehat{T}_{k}}\overset{T^*_{k}-1}{\underset{r=\widehat{T}_{k}}{\sum}}\big(\Sigma^*_{k} -X_rX^\top_r\big)\right\|_F \geq \|\Sigma^*_{k+1}-\Sigma^*_{k}\|_F/3,T^*_{k}-\widehat{T}_{k} \geq \mathcal{I}_{\min}/2\right) \\
&\quad  + \Pb\left(\left\|\frac{1}{\widehat{T}_{k+1}-T^*_{k}}\overset{\widehat{T}_{k+1}-1}{\underset{r=T^*_{k}}{\sum}}\big(\Sigma^*_{k} -X_rX^\top_r\big)\right\|_F \geq \|\Sigma^*_{k+1}-\Sigma^*_{k}\|_F/3,\widehat{T}_{k+1}-T^*_{k}\geq \mathcal{I}_{\min}/2\right)\Bigg\}.
\end{align*}
The right-hand side of the last inequality converges to zero under the same conditions. Finally, we can prove that $\sum^{m^*}_{j=1}\Pb(A^+_{T,j} \cap D^{(r)}_T) \rightarrow 0$.

\vspace*{0.3cm}

\noindent\textbf{\emph{Proof of point (ii).}}\\
\noindent By point (i), for any $j=1,\ldots,m^*$, $|\widehat{T}_j-T^*_j|=O_p(T\delta_T)$, which is $|\widehat{T}_j-T^*_j| = o_p(\mathcal{I}_{\min})$. Hence, $(T^*_{j-1}+T^*_j)/2 < \widehat{T}_j < T^*_j$ or $T^*_j \leq \widehat{T}_j < (T^*_j + T^*_{j+1})/2$ is satisfied for any $j$. Set $l = 1,\ldots,m^*$ and assume $(T^*_{l-1}+T^*_l)/2 < \widehat{T}_l < T^*_l$ and consider two cases: (ii-a) $(T^*_{l}+T^*_{l+1})/2 < \widehat{T}_{l+1} < T^*_{l+1}$ and (ii-b) $T^*_{l+1} \leq \widehat{T}_{l+1}$. In case (ii-a), by Lemma \ref{optimality_cond} with change points $t = \widehat{T}_l$ and $t = \widehat{T}_{l+1}$:
\begin{align*}
2\lambda_2 (\xi_{2\widehat{T}_l}+\xi_{2\widehat{T}_{l+1}}) \geq  \left\|\frac{1}{T}\overset{\widehat{T}_{l+1}-1}{\underset{r = \widehat{T}_l}{\sum}} \text{vec}\big(\widehat{\Theta}_r-X_rX^\top_r\big)+\lambda_1\text{vec}\left(\overset{\widehat{T}_{l+1}-1}{\underset{r=\widehat{T}_l}{\sum}}\underset{u\neq v}{\sum}\xi_{uv,1r}\widehat{E}_{uv,1r}\right)\right\|_2.
\end{align*}
Therefore, we have
{\footnotesize{\begin{align*}
\lefteqn{2\lambda_2 (\xi_{2\widehat{T}_l}+\xi_{2\widehat{T}_{l+1}})+\lambda_1\sqrt{p(p-1)}\big(\sup_{\widehat{T}_l \leq r \leq \widehat{T}_{l+1}-1}\underset{1 \leq u \neq v \leq p}{\max}\xi_{uv,1r}\big) (\widehat{T}_{l+1}-\widehat{T_l})}\\
& \geq \left\|\frac{1}{T}\overset{\widehat{T}_{l+1}-1}{\underset{r = T^*_l}{\sum}} \Big(\text{vec}\big(\Sigma^*_{l+1}-X_rX^\top_r\big) + (I_p\otimes I_p)\text{vec}\big(\widehat{\Sigma}_{l+1}-\Sigma^*_{l+1}\big)\Big)\right\|_2 -\left\|\frac{1}{T}\overset{T^*_l-1}{\underset{r = \widehat{T}_l}{\sum}} \Big(\text{vec}\big(\Sigma^*_l-X_rX^\top_r\big) + (I_p\otimes I_p)\text{vec}\big(\widehat{\Sigma}_{l+1}-\Sigma^*_l\big)\Big)\right\|_2\\
& \geq \frac{\widehat{T}_{l+1}-T^*_l}{T}\Bigg\{\|\widehat{\Sigma}_{l+1}-\Sigma^*_{l+1})\|_F - \left\|\frac{1}{\widehat{T}_{l+1}-T^*_l}\overset{\widehat{T}_{l+1}-1}{\underset{r = T^*_{l}}{\sum}}\big(\Sigma^*_{l+1}-X_rX^\top_r\big)\right\|_F\Bigg\} \\
&\quad  - \frac{T^*_l-\widehat{T}_l}{T}\left\|\frac{1}{T^*_l-\widehat{T}_l}\overset{T^*_l-1}{\underset{r = \widehat{T}_l}{\sum}} \Big(\text{vec}\big(\Sigma^*_l-X_rX^\top_r\big) + (I_p\otimes I_p)\text{vec}\big(\widehat{\Sigma}_{l+1}-\Sigma^*_l\big)\Big)\right\|_2.
\end{align*}}}
Therefore, using part (i) of Theorem \ref{theorem_consistency_adaptive}, and the bound on $\|\widehat{\Sigma}_{l+1}-\Sigma^*_l\|_F$,  we obtain
\begin{align*}
\lefteqn{2\lambda_2(\xi_{2\widehat{T}_l}+\xi_{2\widehat{T}_{l+1}})+\lambda_1\sqrt{p(p-1)}\big(\sup_{\widehat{T}_l \leq r \leq \widehat{T}_{l+1}-1}\underset{1 \leq u \neq v \leq p}{\max}\xi_{uv,1r}\big) (\widehat{T}_{l+1}-\widehat{T_l})}\\
& \geq\frac{\widehat{T}_{l+1}-T^*_l}{T} \Big\{\|\widehat{\Sigma}_{l+1}-\Sigma^*_{l+1}\|_F-\left\|\frac{1}{\widehat{T}_{l+1}-T^*_l}\overset{\widehat{T}_{l+1}-1}{\underset{r = T^*_{l}}{\sum}}\big(\Sigma^*_{l+1}-X_rX^\top_r\big)\right\|_F\Big\}- O_p\left(\frac{T^*_l-\widehat{T}_l}{T}\right).
\end{align*}
We deduce
\begin{align*}
\lefteqn{2\lambda_2(\xi_{2\widehat{T}_l}+\xi_{2\widehat{T}_{l+1}})+\lambda_1\sqrt{p(p-1)}\big(\sup_{\widehat{T}_l \leq r \leq \widehat{T}_{l+1}-1}\underset{1 \leq u \neq v \leq p}{\max}\xi_{uv,1r}\big) (\widehat{T}_{l+1}-\widehat{T_l})}\\
& \geq \frac{\widehat{T}_{l+1}-T^*_l}{T} \Big\{\|\widehat{\Sigma}_{l+1}-\Sigma^*_{l+1}\|_F-O_p\left( p\sqrt{\frac{\log(pT)}{I^*_{l+1}}}\right)\Big\}- O_p\left(\frac{T^*_l-\widehat{T}_l}{T}\right).
\end{align*}
Therefore, we get
\begin{equation}\label{bound_prob_regime_adaptive}
\|\widehat{\Sigma}_{l+1}-\Sigma^*_{l+1}\|_F=O_p\left(\frac{\lambda_2 \, T}{I^*_{l+1}}\max(\alpha_{p,T},a_T)^{-\mu_2}+\lambda_1Tp(1+\frac{T\delta_T}{I_{l+1}})\max(\alpha_{p,T},a_T)^{-\mu_1}+\frac{T\delta_T}{I^*_{l+1}} + p\sqrt{ \frac{\log(pT)}{I^*_{l+1}}}\right).
\end{equation}
In case (ii-b), by Lemma \ref{optimality_cond_adaptive}, with change points $t = \widehat{T}_l$ and $t = \widehat{T}_{l+1}$, we have
{\footnotesize{\begin{align*}
\lefteqn{2\lambda_2 (\xi_{2\widehat{T}_l}+\xi_{2\widehat{T}_{l+1}})+\lambda_1\sqrt{p(p-1)}\big(\sup_{\widehat{T}_l \leq r \leq \widehat{T}_{l+1}-1}\underset{1 \leq u \neq v \leq p}{\max}\xi_{uv,1r}\big) (\widehat{T}_{l+1}-\widehat{T_l})}\\
& \geq \left\|\frac{1}{T}\overset{T^*_{l+1}-1}{\underset{r = T^*_l}{\sum}} \Big(\text{vec}\big(\Sigma^*_{l+1}-X_rX^\top_r\big) + (I_p\otimes I_p)\text{vec}\big(\widehat{\Sigma}_{l+1}-\Sigma^*_{l+1}\big)\Big)\right\|_2  - \left\|\frac{1}{T}\overset{T^*_l-1}{\underset{r = \widehat{T}_l}{\sum}} \Big(\text{vec}\big(\Sigma^*_l-X_rX^\top_r\big) + (I_p\otimes I_p)\text{vec}\big(\widehat{\Sigma}_{l+1}-\Sigma^*_l\big)\Big)\right\|_2\\
&\quad  - \left\|\frac{1}{T}\overset{\widehat{T}_{l+1}-1}{\underset{r = T^*_{l+1}}{\sum}} \Big(\text{vec}\big(\Sigma^*_{l+2}-X_rX^\top_r\big) + (I_p\otimes I_p)\text{vec}\big(\widehat{\Sigma}_{l+1}-\Sigma^*_{l+2}\big)\Big)\right\|_2.
\end{align*}}}
We deduce
\begin{align*}
\lefteqn{2\lambda_2 (\xi_{2\widehat{T}_l}+\xi_{2\widehat{T}_{l+1}})+\lambda_1\sqrt{p(p-1)}\big(\sup_{\widehat{T}_l \leq r \leq \widehat{T}_{l+1}-1}\underset{1 \leq u \neq v \leq p}{\max}\xi_{uv,1r}\big) (\widehat{T}_{l+1}-\widehat{T_l})}\\
  & \geq \frac{I^*_{l+1}}{T} \left\{\|\widehat{\Sigma}_{l+1}-\Sigma^*_{l+1}\|_F-O_p\left(p \sqrt{ \frac{\log(pT)}{I^*_{l+1}}}\right)\right\} - O_p\left(\frac{T^*_l-\widehat{T}_l}{T}\right)-O_p\left(\frac{\widehat{T}_{l+1}-T^*_{l+1}}{T}\right).
\end{align*}
Hence, (\ref{bound_prob_regime_adaptive}) holds. Using similar arguments, we can show that the latter is satisfied when $T^*_l \leq \widehat{T}_l < (T^*_l+T^*_{l+1})/2$.

\subsection{Proof of Theorem \ref{theorem_date_recov_adaptive}}

Using the result of Theorem \ref{theorem_consistency_adaptive},  we aim to show that:
\begin{equation}\label{proba_bound_adaptive}
\Pb\big(\{h(\widehat{\Tc}_{\widehat{m}},\Tc^*_{m^*})> T\delta_T\}\cap \{m^* < \widehat{m} \leq m_{\max}\}\big) \rightarrow 0 \;\; \text{as} \;\; T \rightarrow \infty.
\end{equation}
To do so, we define
$L_{m,k,1} = \big\{\forall 1 \leq l \leq m, |\widehat{T}_l - T^*_k|>T\delta_T \; \text{and} \; \widehat{T}_l < T^*_k\big\}$,
$L_{m,k,2} = \big\{\forall 1 \leq l \leq m, |\widehat{T}_l - T^*_k|>T\delta_T \; \text{and} \; \widehat{T}_l > T^*_k\big\}$, $L_{m,k,3} = \big\{\exists 1 \leq l \leq m-1, |\widehat{T}_l - T^*_k|>T\delta_T, |\widehat{T}_{l+1} - T^*_k|>T\delta_T \; \text{and} \; \widehat{T}_l < T^*_k < \widehat{T}_{l+1}\big\}$.
The probability (\ref{proba_bound_adaptive}) can be bounded as:
\begin{align*}
\lefteqn{\Pb\big(\{h(\widehat{\Tc}_{\widehat{m}},\Tc^*_{m^*})> T\delta_T\}\cap \{m^* < \widehat{m} \leq m_{\max}\}\big) \leq \overset{m_{\max}}{\underset{m=m^*+1}{\sum}} \Pb\big(h(\widehat{\Tc}_{\widehat{m}},\Tc^*_{m^*})> T\delta_T\big)}\\
  & \leq \resizebox{0.92\linewidth}{!}{$\displaystyle\overset{m_{\max}}{\underset{m=m^*+1}{\sum}} \overset{m^*}{\underset{k=1}{\sum}}\Pb\big(\forall l \in \{1,\ldots,m\}, |\widehat{T}_l-T^*_k|>T\delta_T\big) = \overset{m_{\max}}{\underset{m=m^*+1}{\sum}} \overset{m^*}{\underset{k=1}{\sum}} \Big[\Pb\big(L_{m,k,1}\big)+\Pb\big(L_{m,k,2}\big)+\Pb\big(L_{m,k,3}\big)\Big].$}
\end{align*}
We first focus on $\sum^{m_{\max}}_{m=m^*+1} \sum^{m^*}_{k=1}\Pb\big(L_{m,k,1}\big)$, which can be expressed as:
\begin{equation*}
\Pb\big(L_{m,k,1}\big) = \Pb\big(L_{m,k,1}\cap\{\widehat{T}_m>T^*_{k-1}\}\big)+\Pb\big(L_{m,k,1}\cap\{\widehat{T}_m\leq T^*_{k-1}\}\big).
\end{equation*}
By Lemma \ref{optimality_cond} with change points $t = \widehat{T}_m$ and $t = T^*_k$, given the case $T^*_k\geq \widehat{T}_m > T^*_{k-1}$:
\begin{align*}
\frac{1}{T}\overset{T}{\underset{r=\widehat{T}_m}{\sum}}\text{vec}\big(\widehat{\Theta}_r-X_rX^\top_r\big) + \lambda_1\text{vec}\left(\overset{T}{\underset{r=\widehat{T}_m}{\sum}}\underset{u \neq v}{\sum}\xi_{uv,1r}\widehat{E}_{uv,1r}\right)+\lambda_2\text{vec}\left(\xi_{2\widehat{T}_m}\frac{\widehat{\Gamma}_{\widehat{T}_m}}{\|\widehat{\Gamma}_{\widehat{T}_m}\|_F}\right)= \mathbf{0}_{p^2 \times 1},
\end{align*}
and $\left\|\frac{1}{T}\overset{T}{\underset{r=T^*_k}{\sum}}\text{vec}\big(\widehat{\Theta}_r)-X_rX^\top_r\big) + \lambda_1\text{vec}\left(\overset{T}{\underset{r=T^*_k}{\sum}}\underset{u \neq v}{\sum}\xi_{uv,1r}\widehat{E}_{uv,1r}\right)\right\|_2 \leq  \lambda_2\xi_{2T^*_k}$.
Therefore, taking the differences, we get:
\begin{align*}
\lefteqn{2 \lambda_2(\xi_{2T^*_k}+\xi_{2\widehat{T}_m})+\lambda_1\sqrt{p(p-1)}\big(\underset{\widehat{T}_m \leq r \leq T^*_k}{\sup}\underset{1 \leq u \neq v \leq p}{\max}\xi_{uv,1r}\big)(T^*_k-\widehat{T}_m) }\\
& \geq \left\|\frac{1}{T}\overset{T^*_k-1}{\underset{r=\widehat{T}_m}{\sum}}(I_p\otimes I_p)\text{vec}(\widehat{\Sigma}_{m+1} - \Sigma^*_{k+1})+
\frac{1}{T}\overset{T^*_k-1}{\underset{r=\widehat{T}_m}{\sum}}(I_p\otimes I_p)\text{vec}(\Sigma^*_{k+1} - \Sigma^*_{k})+\frac{1}{T}\overset{T^*_k-1}{\underset{r=\widehat{T}_m}{\sum}}\text{vec}(\Sigma^*_k - X_rX^\top_r)\right\|_2.
\end{align*}
Therefore, the event $\Bc_T$ defined as
\begin{align*}
\lefteqn{\Bc_T \coloneqq  \Big\{\|\Sigma^*_{k+1}-\Sigma^*_k\|_F \leq \frac{2\lambda_2 T}{T^*_k-\widehat{T}_m}(\xi_{2T^*_k}+\xi_{2\widehat{T}_m})+\lambda_1\sqrt{p(p-1)}\big(\underset{\widehat{T}_m \leq r \leq T^*_k}{\sup}\underset{1 \leq u \neq v \leq p}{\max}\xi_{uv,1r}\big)}\\
  &\quad +\left\|\frac{1}{T^*_k-\widehat{T}_m}\overset{T^*_k-1}{\underset{r=\widehat{T}_m}{\sum}}(I_p\otimes I_p)\text{vec}(\widehat{\Sigma}_{m+1} - \Sigma^*_{k+1})\right\|_2+\left\|\frac{1}{T^*_k-\widehat{T}_m}\overset{T^*_k-1}{\underset{r=\widehat{T}_m}{\sum}}\text{vec}\big(\Sigma^*_k-X_rX^\top_r\big) \right\|_2\Big] \Big\},
\end{align*}
holds with probability one. Hence, we deduce
\begin{align*}
\resizebox{\linewidth}{!}{$\displaystyle\overset{m_{\max}}{\underset{m=m^*+1}{\sum}} \overset{m^*}{\underset{k=1}{\sum}} \Pb\big(L_{m,k,1} \cap \{\widehat{T}_m>T^*_{k-1}\}\big) = \overset{m_{\max}}{\underset{m=m^*+1}{\sum}} \overset{m^*}{\underset{k=1}{\sum}} \Pb\big(\Bc_T \cap L_{m,k,1} \cap \{\widehat{T}_m>T^*_{k-1}\}\big) \leq M_{1,1}+M_{1,2}+M_{1,3},$}
\end{align*}
with
\begin{align*}
M_{1,1} \coloneqq \overset{m_{\max}}{\underset{m=m^*+1}{\sum}} \overset{m^*}{\underset{k=1}{\sum}}\Pb\left(\|\Sigma^*_{k+1}-\Sigma^*_k\|_F/3\leq 2\lambda_2 \delta^{-1}_T(\xi_{2T^*_k}+\xi_{2\widehat{T}_m})+\lambda_1\sqrt{p(p-1)}\big(\underset{\widehat{T}_m \leq r \leq T^*_k}{\sup}\underset{1 \leq u \neq v \leq p}{\max}\xi_{uv,1r}\right)
\big),
\end{align*}
$$
 M_{1,2} \!\coloneqq\! \overset{m_{\max}}{\underset{m=m^*+1}{\sum}} \overset{m^*}{\underset{k=1}{\sum}} \Pb\left(T^*_k\!-\!\widehat{T}_m\!>\!T\delta_T,\|\Sigma^*_{k+1}\!-\!\Sigma^*_k\|_F/3 \!\leq\! \left\|\frac{1}{T^*_k-\widehat{T}_m}\overset{T^*_k-1}{\underset{r=\widehat{T}_m}{\sum}}\!(I_p\otimes I_p)\text{vec}(\widehat{\Sigma}_{m+1} \!-\! \Sigma^*_{k+1})\right\|_2\right),
$$
$$
M_{1,3} \!\coloneqq\! \overset{m_{\max}}{\underset{m=m^*+1}{\sum}} \overset{m^*}{\underset{k=1}{\sum}} \Pb\left(T^*_k\!-\!\widehat{T}_m\!>\!T\delta_T,\|\Sigma^*_{k+1}\!-\!\Sigma^*_k\|_F/3 \!\leq\! \left\|\frac{1}{T^*_k-\widehat{T}_m}\overset{T^*_k-1}{\underset{r=\widehat{T}_m}{\sum}}\!\text{vec}\big(\Sigma^*_k-X_rX^\top_r\big) \right\|_2\right).
$$
In the same vein as in the analysis of (\ref{bound_A+_adaptive}), we can show that $M_{1,1}, M_{1,3} \rightarrow 0$ as $T \rightarrow \infty$. $M_{1,2}$ requires more arguments. By Lemma (\ref{optimality_cond_adaptive}), with change points $t = T^*_k$ and $t = T^*_{k+1}$:
\begin{equation*}
\left\|\frac{1}{T}\overset{T}{\underset{r=T^*_k}{\sum}}\text{vec}\big(\widehat{\Sigma}_{m+1}-X_rX^\top_r) +\lambda_1\text{vec}\left(\overset{T}{\underset{r=T^*_k}{\sum}}\underset{u \neq v}{\sum}\xi_{uv,1r}\widehat{E}_{uv,1r}\right)\right\|_2 \leq  \lambda_2\xi_{2T^*_k}, 
\end{equation*}
and $\left\|\frac{1}{T}\overset{T}{\underset{r=T^*_{k+1}}{\sum}}\text{vec}\big(\widehat{\Sigma}_{m+1}-X_rX^\top_r\big) +\lambda_1\text{vec}\left(\overset{T}{\underset{r=T^*_{k+1}}{\sum}}\underset{u \neq v}{\sum}\xi_{uv,1r}\widehat{E}_{uv,1r}\right)\right\|_2 \leq  \lambda_2 \xi_{2T^*_{k+1}}$.
Therefore
\begin{align*}
\lefteqn{2\lambda_2(\xi_{2T^*_{k}}+\xi_{2T^*_{k+1}})+\lambda_1\sqrt{p(p-1)}\big(\underset{T^*_k \leq r \leq T^*_{k+1}-1}{\sup}\underset{1 \leq u \neq v \leq p}{\max}\xi_{uv,1r}\big)(T^*_{k+1}-T^*_k)}\\
& \geq \left\|\frac{1}{T}\overset{T^*_{k+1}-1}{\underset{r=T^*_{k}}{\sum}}(I_p\otimes I_p)\text{vec}(\widehat{\Sigma}_{m+1} - \Sigma^*_{k+1}) +\frac{1}{T}\overset{T^*_{k+1}-1}{\underset{r=T^*_k}{\sum}}\text{vec}\big(\Sigma^*_{k+1}-X_rX^\top_r\big)\right\|_2,
\end{align*}
which implies
\begin{align*}
\resizebox{1.1\linewidth}{!}{$\|\widehat{\Sigma}_{m+1}-\Sigma^*_{k+1}\|_F \leq \left\|\frac{1}{T^*_{k+1}-T^*_{k}}\overset{T^*_{k+1}-1}{\underset{r=T^*_k}{\sum}}\text{vec}\big(\Sigma^*_{k+1}-X_rX^\top_r\big)\right\|_2+ \frac{2\lambda_2 T(\xi_{2T^*_{k}}+\xi_{2T^*_{k+1}})}{T^*_{k+1}-T^*_{k}}+\lambda_1T\big(\underset{T^*_k \leq r \leq T^*_{k+1}-1}{\sup}\underset{1 \leq u \neq v \leq p}{\max}\xi_{uv,1r}\big)\sqrt{p(p-1)}$}.
\end{align*}
We deduce
{\small{\begin{align}
\lefteqn{M_{1,2} \leq \overset{m_{\max}}{\underset{m=m^*+1}{\sum}} \overset{m^*}{\underset{k=1}{\sum}} \Pb\big(\|\Sigma^*_{k+1}-\Sigma^*_k\|_F/3 \leq \|\widehat{\Sigma}_{m+1}-\Sigma^*_{k+1}\|_F\big) } \label{proba_M2_adaptive}\\
  & \leq\overset{m_{\max}}{\underset{m=m^*+1}{\sum}} \overset{m^*}{\underset{k=1}{\sum}}\Big[ \Pb\left( \|\Sigma^*_{k+1}-\Sigma^*_k\|_F/6 \leq \frac{2\lambda_2 T}{\mathcal{I}_{\min}}(\xi_{2T^*_{k}}+\xi_{2T^*_{k+1}})+\lambda_1T\big(\underset{T^*_k \leq r \leq T^*_{k+1}-1}{\sup}\underset{1 \leq u \neq v \leq p}{\max}\xi_{uv,1r}\big)\sqrt{p(p-1)}\right) \nonumber\\
&\quad  + \Pb\left(\|\Sigma^*_{k+1}\!-\!\Sigma^*_k\|_F/6 \!\leq\! \left\|\frac{1}{T^*_{k+1}\!-\!T^*_{k}}\overset{T^*_{k+1}-1}{\underset{r=T^*_k}{\sum}}\!\text{vec}\Big(\Sigma^*_{k+1}-X_rX^\top_r\Big)\right\|_2 \right)\Big].\nonumber
\end{align}}}
The first term in the second inequality of (\ref{proba_M2_adaptive}) tends to zero under $\lambda_2 T\max(\alpha_{p,T},a_T)^{-\mu_2}/(\mathcal{I}_{\min}\eta_{\min}) \rightarrow 0$, $\lambda_1T\max(\alpha_{p,T},a_T)^{-\mu_1}/\eta_{\min} \to 0$. Moreover, under the condition $(\eta_{\min}\mathcal{I}^{1/2}_{\min})^{-1}\,\sqrt{ T\log(T)}\rightarrow 0$, the second term tends to zero. Therefore, we conclude $\sum^{m_{\max}}_{m=m^*+1} \sum^{m^*}_{k=1} \Pb\big(L_{m,k,1} \cap \{\widehat{T}_m>T^*_{k-1}\}\big) \rightarrow 0$ as $T \rightarrow \infty$. Based on similar arguments, we can show $\sum^{m_{\max}}_{m=m^*+1} \sum^{m^*}_{k=1} \Pb\big(L_{m,k,1} \cap \{\widehat{T}_m \leq T^*_{k-1}\}\big) \rightarrow 0$. As $T \rightarrow \infty$, $\sum^{m_{\max}}_{m=m^*+1} \sum^{m^*}_{k=1}\Pb\big(L_{m,k,1}\big) \rightarrow 0$.
Similarly, we can show $\sum^{m_{\max}}_{m=m^*+1} \sum^{m^*}_{k=1}\Pb\big(L_{m,k,2}\big) \rightarrow 0$ as $T \rightarrow \infty$. \\
We now consider $\sum^{m_{\max}}_{m=m^*+1} \sum^{m^*}_{k=1}\Pb\big(L_{m,k,3}\big)$. Define
\begin{equation*}
\begin{array}{llll}
\resizebox{\linewidth}{!}{$\displaystyle L^{(1)}_{m,k,3} \coloneqq L_{m,k,3}\cap \{T^*_{k-1}<\widehat{T}_l < \widehat{T}_{l+1}<T^*_{k+1}\}, L^{(2)}_{m,k,3} \coloneqq L_{m,k,3}\cap \{T^*_{k-1}<\widehat{T}_l <T^*_{k+1}, \widehat{T}_{l+1}\geq T^*_{k+1}\},$} &&\\
\resizebox{\linewidth}{!}{$\displaystyle L^{(3)}_{m,k,3} \coloneqq L_{m,k,3}\cap \{\widehat{T}_l \leq T^*_{k-1},T^*_{k-1} <\widehat{T}_{l+1}<T^*_{k+1}\}, L^{(4)}_{m,k,3} \coloneqq L_{m,k,3}\cap \{\widehat{T}_l \leq T^*_{k-1},T^*_{k+1} <\widehat{T}_{l+1}\}.$} &&
\end{array}
\end{equation*}
First, we consider $L^{(1)}_{m,k,3}$. By Lemma (\ref{optimality_cond}), for the change points $t = T^*_k$ and $t = \widehat{T}_l$, we obtain
\begin{align}
\lefteqn{2\lambda_2(\xi_{2\widehat{T}_l}+\xi_{2T^*_k}) + \lambda_1 \big(\underset{\widehat{T}_l \leq r \leq T^*_{k}-1}{\sup}\underset{1 \leq u \neq v \leq p}{\max}\xi_{uv,1r}\big) \sqrt{p(p-1)} (T^*_{k}-\widehat{T}_l)}\nonumber\\
& \geq \left\|\frac{1}{T}\overset{T^*_{k}-1}{\underset{r=\widehat{T}_{l}}{\sum}}(I_p\otimes I_p)\text{vec}\big(\widehat{\Sigma}_{l+1} - \Sigma^*_{k}) +\frac{1}{T}\overset{T^*_{k}-1}{\underset{r=\widehat{T}_{l}}{\sum}}\text{vec}\big(\Sigma^*_{k}-X_rX^\top_r)\right\|_2,\label{changepoint_1_adaptive}
\end{align}
and for the change points $t = T^*_k$ and $t = \widehat{T}_{l+1}$, we get
\begin{align}
\lefteqn{2\lambda_2(\xi_{2T^*_k}+\xi_{2\widehat{T}_{l+1}}) + \lambda_1 \big(\underset{T^*_k \leq r \leq \widehat{T}_{l+1}-1}{\sup}\underset{1 \leq u \neq v \leq p}{\max}\xi_{uv,1r}\big) \sqrt{p(p-1)} (\widehat{T}_{l+1}-T^*_k)}\nonumber\\ 
& \geq \left\|\frac{1}{T}\overset{\widehat{T}_{l+1}-1}{\underset{r=T^*_k}{\sum}}(I_p\otimes I_p)\text{vec}(\widehat{\Sigma}_{l+1} - \Sigma^*_{k+1}) +\frac{1}{T}\overset{\widehat{T}_{l+1}-1}{\underset{r=T^*_{k}}{\sum}}\text{vec}\big(\Sigma^*_{k+1}-X_rX^\top_r\big)\right\|_2.\label{changepoint_2_adaptive}
\end{align}
Moreover, by the triangle inequality, we have
{\small{\begin{align*}
\lefteqn{\|\Sigma^*_{k+1}-\Sigma^*_k\|_F \leq \|\widehat{\Sigma}_{l+1}-\Sigma^*_k\|_F + \|\widehat{\Sigma}_{l+1}-\Sigma^*_{k+1}\|_F}\\
  & \leq \frac{2\lambda_2 T (\xi_{2\widehat{T}_l}+\xi_{2T^*_k})}{T^*_k-\widehat{T}_l}  + \lambda_1 \big(\underset{\widehat{T}_l \leq r \leq T^*_{k}-1}{\sup}\underset{1 \leq u \neq v \leq p}{\max}\xi_{uv,1r}\big) \sqrt{p(p-1)}+ \left\|\frac{1}{T^*_k-\widehat{T}_l}\overset{T^*_{k}-1}{\underset{r=\widehat{T}_{l}}{\sum}}\text{vec}\big(\Sigma^*_{k}-X_rX^\top_r\big)\right\|_2 \\
  &\quad  +\frac{2\lambda_2 T(\xi_{2T^*_k}+\xi_{2\widehat{T}_{l+1}})}{\widehat{T}_{l+1}-T^*_k} + \lambda_1 \big(\underset{T^*_k \leq r \leq \widehat{T}_{l+1}-1}{\sup}\underset{1 \leq u \neq v \leq p}{\max}\xi_{uv,1r}\big) \sqrt{p(p-1)}
  + \left\|\frac{1}{\widehat{T}_{l+1}-T^*_k}\overset{\widehat{T}_{l+1}-1}{\underset{r=T^*_k}{\sum}}\text{vec}\big(\Sigma^*_{k}-X_rX^\top_r\big)\right\|_2.
\end{align*}}}
So $\sum^{m_{\max}}_{m=m^*+1} \sum^{m^*}_{k=1}\Pb\big(L^{(1)}_{m,k,3}\big)$ is upper bounded as follows
{\footnotesize{\begin{align*}
\sum^{m_{\max}}_{m=m^*+1} \sum^{m^*}_{k=1}\Pb\big(L^{(1)}_{m,k,3}\big)
  & \leq  \sum^{m_{\max}}_{m=m^*+1} \sum^{m^*}_{k=1}\Pb\Bigg(\|\Sigma^*_{k+1}-\Sigma^*_{k}\|_F/3 \leq 2\lambda_2\delta^{-1}_T\big[(\xi_{2\widehat{T}_l}+\xi_{2T^*_k})+(\xi_{2T^*_k}+\xi_{2\widehat{T}_{l+1}})\big]\\
  &+ \lambda_1T\sqrt{p(p-1)}\big[\big(\underset{\widehat{T}_l \leq r \leq T^*_{k}-1}{\sup}\underset{1 \leq u \neq v \leq p}{\max}\xi_{uv,1r}\big)+\big(\underset{T^*_k \leq r \leq \widehat{T}_{l+1}-1}{\sup}\underset{1 \leq u \neq v \leq p}{\max}\xi_{uv,1r}\big)\big]\Bigg)
  \\
& + \sum^{m_{\max}}_{m=m^*+1} \sum^{m^*}_{k=1}\Pb\left(\|\Sigma^*_{k+1}-\Sigma^*_{k}\|_F/3 \leq \left\|\frac{1}{T^*_k-\widehat{T}_l}
\overset{T^*_{k}-1}{\underset{r=\widehat{T}_{l}}{\sum}}\text{vec}\big(\Sigma^*_{k}-X_rX^\top_r)\right\|_2,T^*_{k}-\widehat{T}_l\geq T \delta_T\right)\\
& + \sum^{m_{\max}}_{m=m^*+1} \sum^{m^*}_{k=1}\Pb\left(\|\Sigma^*_{k+1}-\Sigma^*_{k}\|_F/3 \leq \left\|\frac{1}{\widehat{T}_{l+1}-T^*_k}\overset{\widehat{T}_{l+1}-1}{\underset{r=T^*_k}{\sum}}\text{vec}\big(\Sigma^*_{k}-X_rX^\top_r)\right\|_2, \widehat{T}_{l+1}-T^*_{k}\geq T \delta_T\right),
\end{align*}}}
which tends to zero in the spirit as in (\ref{bound_A+_adaptive}). For $L^{(2)}_{m,k,3}$, by Lemma \ref{optimality_cond_adaptive} with change points $t = T^*_k$ and $t = \widehat{T}_l$ to obtain (\ref{changepoint_1_adaptive}) and with change points $t=T^*_{k}$, $t = T^*_{k+1}$, we get
\begin{align}
\lefteqn{2\lambda_2 (\xi_{2T^*_k}+\xi_{2T^*_{k+1}}) + \lambda_1\big(\underset{T^*_k \leq r \leq T^*_{k+1}-1}{\sup}\underset{1 \leq u \neq v \leq p}{\max}\xi_{uv,1r}\big)\sqrt{p(p-1)}(T^*_{k+1}-T^*_k)}\nonumber\\
& \geq\left\|\frac{1}{T}\overset{T^*_{k+1}-1}{\underset{r=T^*_{k}}{\sum}}(I_p\otimes I_p)\text{vec}(\widehat{\Sigma}_{l+1} - \Sigma^*_{k+1}) +\frac{1}{T}\overset{T^*_{k+1}-1}{\underset{r=T^*_{k}}{\sum}}\text{vec}\big(\Sigma^*_{k+1}-X_rX^\top_r\big)\right\|_2.\label{changepoint_3_adaptive}
\end{align}
By the triangle inequality, we have
{\footnotesize{\begin{align*}
\lefteqn{\|\Sigma^*_{k+1}-\Sigma^*_{k}\|_F\leq \|\widehat{\Sigma}_{l+1}-\Sigma^*_{k}\|_F+\|\widehat{\Sigma}_{l+1}-\Sigma^*_{k+1}\|_F}\\
  & \leq  \frac{2\lambda_2 T(\xi_{2\widehat{T}_l}+\xi_{2T^*_{k}})}{T^*_k-\widehat{T}_l} + \lambda_1 \big(\underset{\widehat{T}_l \leq r \leq T^*_{k}-1}{\sup}\underset{1 \leq u \neq v \leq p}{\max}\xi_{uv,1r}\big) \sqrt{p(p-1)}
  + \left\|\frac{1}{T^*_k-\widehat{T}_l}\overset{T^*_{k}-1}{\underset{r=\widehat{T}_{l}}{\sum}}\text{vec}\big(\Sigma^*_{k}-X_rX^\top_r\big)\right\|_2 \\
  &\quad  + \frac{2\lambda_2 T (\xi_{2T^*_k}+\xi_{2T^*_{k+1}})}{T^*_{k+1}-T^*_k} + \lambda_1 \big(\underset{T^*_k \leq r \leq T^*_{k+1}-1}{\sup}\underset{1 \leq u \neq v \leq p}{\max}\xi_{uv,1r}\big) \sqrt{p(p-1)} + \left\|\frac{1}{T^*_{k+1}-T^*_k}\overset{T^*_{k+1}-1}{\underset{r=T^*_k}{\sum}}\text{vec}\big(\Sigma^*_{k+1}-X_rX^\top_r\big)\right\|_2.
\end{align*}}}
Therefore, we obtain
{\footnotesize{\begin{align*}
\sum^{m_{\max}}_{m=m^*+1} \sum^{m^*}_{k=1}\Pb\big(L^{(2)}_{m,k,3}\big)
  & \leq  \sum^{m_{\max}}_{m=m^*+1} \sum^{m^*}_{k=1}\Pb\Bigg(\|\Sigma^*_{k+1}-\Sigma^*_{k}\|_F/3 \leq 2\lambda_2 \delta^{-1}_T(\xi_{2\widehat{T}_l}+\xi_{2T^*_{k}})+\frac{2\lambda_2 T}{\mathcal{I}_{\min}}(\xi_{2T^*_k}+\xi_{2T^*_{k+1}})\\
  & + \lambda_1\sqrt{p(p-1)}\big[ \big(\underset{\widehat{T}_l \leq r \leq T^*_{k}-1}{\sup}\underset{1 \leq u \neq v \leq p}{\max}\xi_{uv,1r}\big) + \big(\underset{T^*_k \leq r \leq T^*_{k+1}-1}{\sup}\underset{1 \leq u \neq v \leq p}{\max}\xi_{uv,1r}\big) \big]\Bigg)
  \\
  &  + \sum^{m_{\max}}_{m=m^*+1} \sum^{m^*}_{k=1}\Pb\left(\|\Sigma^*_{k+1}-\Sigma^*_{k}\|_F/3 \leq  \left\|\frac{1}{T^*_k-\widehat{T}_l}\overset{T^*_{k}-1}{\underset{r=\widehat{T}_{l}}{\sum}}\text{vec}\big(\Sigma^*_{k}-X_rX^\top_r\big)\right\|_2,T^*_{k}-\widehat{T}_l\geq T \delta_T\right) \\
  & + \sum^{m_{\max}}_{m=m^*+1} \sum^{m^*}_{k=1}\Pb\left(\|\Sigma^*_{k+1}-\Sigma^*_{k}\|_F/3 \leq \left\|\frac{1}{T^*_{k+1}-T^*_k}\overset{T^*_{k+1}-1}{\underset{r=T^*_k}{\sum}}\text{vec}\big(\Sigma^*_{k}-X_rX^\top_r\big)\right\|_2\right),
\end{align*}}}
which will tend to zero based on similar arguments as in the convergence of (\ref{bound_A+_adaptive}). For $L^{(3)}_{m,k,3}$, by Lemma \ref{optimality_cond_adaptive} with change points $t = T^*_{k-1}$ and $t = T^*_k$, we have
\begin{align}
\lefteqn{2\lambda_2(\xi_{2T^*_{k-1}}+\xi_{2T^*_{k}}) + \lambda_1 \big(\underset{T^*_{k-1} \leq r \leq T^*_k-1}{\sup}\underset{1 \leq u \neq v \leq p}{\max}\xi_{uv,1r}\big) \sqrt{p(p-1)} (T^*_{k}-T^*_{k-1})}\nonumber\\
& \geq \left\|\frac{1}{T}\overset{T^*_{k}-1}{\underset{r=T^*_{k-1}}{\sum}}(I_p\otimes I_p)\text{vec}\big(\widehat{\Sigma}_{l+1} - \Sigma^*_{k}\big) +\frac{1}{T}\overset{T^*_{k}-1}{\underset{r=T^*_{k-1}}{\sum}}\text{vec}\big(\Sigma^*_{k}-X_rX^\top_r\big)\right\|_2,\label{changepoint_4_adaptive}
\end{align}
and with change points $t=T^*_{k}$, $t = \widehat{T}_{l+1}$, we get
\begin{align}
\lefteqn{2\lambda_2(\xi_{2T^*_k}+\xi_{2\widehat{T}_{l+1}}) + \lambda_1 \big(\underset{T^*_k \leq r \leq \widehat{T}_{l+1}-1}{\sup}\underset{1 \leq u \neq v \leq p}{\max}\xi_{uv,1r}\big) \sqrt{p(p-1)} (\widehat{T}_{l+1}-T^*_k)}\nonumber\\
& \geq \left\|\frac{1}{T}\overset{\widehat{T}_{l+1}-1}{\underset{r=T^*_{k}}{\sum}}(I_p\otimes I_p)\text{vec}\big(\widehat{\Sigma}_{l+1} - \Sigma^*_{k+1}\big) +\frac{1}{T}\overset{\widehat{T}_{l+1}-1}{\underset{r=T^*_{k}}{\sum}}\text{vec}\big(\Sigma^*_{k+1}-X_rX^\top_r\big)\right\|_2.\label{changepoint_5_adaptive}
\end{align}
By the triangle inequality, we deduce
{\small{\begin{align*}
\lefteqn{\|\Sigma^*_{k+1}-\Sigma^*_{k}\|_F\leq \|\widehat{\Sigma}_{l+1}-\Sigma^*_{k}\|_F+\|\widehat{\Sigma}_{l+1}-\Sigma^*_{k+1}\|_F}\\
  & \leq \frac{2\lambda_2 T(\xi_{2T^*_{k-1}}+\xi_{2T^*_{k}})}{T^*_k-T^*_{k-1}}+\lambda_1 \big(\underset{T^*_{k-1} \leq r \leq T^*_k-1}{\sup}\underset{1 \leq u \neq v \leq p}{\max}\xi_{uv,1r}\big) \sqrt{p(p-1)} + \left\|\frac{1}{T^*_k-T^*_{k-1}}\overset{T^*_{k}-1}{\underset{r=T^*_{k-1}}{\sum}}\text{vec}\big(\Sigma^*_{k}-X_rX^\top_r\big)\right\|_2 \\
  &\quad  +\frac{2\lambda_2 T(\xi_{2\widehat{T}_l}+\xi_{2T^*_{k}})}{T^*_k-\widehat{T}_l} + \lambda_1 \big(\underset{\widehat{T}_l \leq r \leq T^*_{k}-1}{\sup}\underset{1 \leq u \neq v \leq p}{\max}\xi_{uv,1r}\big) \sqrt{p(p-1)} + \left\|\frac{1}{\widehat{T}_{l+1}-T^*_k}\overset{\widehat{T}_{l+1}-1}{\underset{r=T^*_k}{\sum}}\text{vec}\big(\Sigma^*_{k+1}-X_rX^\top_r)\right\|_2.
\end{align*}}}
We deduce
{\footnotesize{\begin{align*}
\sum^{m_{\max}}_{m=m^*+1} \sum^{m^*}_{k=1}\Pb\big(L^{(3)}_{m,k,3}\big)
  & \leq  \sum^{m_{\max}}_{m=m^*+1} \sum^{m^*}_{k=1}\Pb\Bigg(\|\Sigma^*_{k+1}-\Sigma^*_{k}\|_F/3 \leq \frac{2\lambda_2 T}{\mathcal{I}_{\min}}(\xi_{2T^*_{k-1}}+\xi_{2T^*_{k}}) + 2\lambda_2 \delta^{-1}_T(\xi_{2\widehat{T}_l}+\xi_{2T^*_{k}})\\
  & + \lambda_1 \sqrt{p(p-1)} \big[\big(\underset{T^*_{k-1} \leq r \leq T^*_k-1}{\sup}\underset{1 \leq u \neq v \leq p}{\max}\xi_{uv,1r}\big) + \big(\underset{\widehat{T}_l \leq r \leq T^*_{k}-1}{\sup}\underset{1 \leq u \neq v \leq p}{\max}\xi_{uv,1r}\big) \big]\Bigg)
  \\
  & + \sum^{m_{\max}}_{m=m^*+1} \sum^{m^*}_{k=1}\Pb\left(\|\Sigma^*_{k+1}-\Sigma^*_{k}\|_F/3 \leq \left\|\frac{1}{T^*_k-T^*_{k-1}}\overset{T^*_{k}-1}{\underset{r=T^*_{k-1}}{\sum}}\text{vec}\big(\Sigma^*_{k}-X_rX^\top_r)\right\|_2\right) \\
  & + \sum^{m_{\max}}_{m=m^*+1} \sum^{m^*}_{k=1}\Pb\left(\|\Sigma^*_{k+1}-\Sigma^*_{k}\|_F/3 \leq \left\|\frac{1}{\widehat{T}_{l+1}-T^*_{k+1}}\overset{\widehat{T}_{l+1}-1}{\underset{r=T^*_k}{\sum}}\text{vec}\big(\Sigma^*_{k+1}-X_rX^\top_r)\right\|_2, \widehat{T}_{l+1}-T^*_k\geq T\delta_T\right),
\end{align*}}}
which tends to zero based on the same arguments as in the convergence of (\ref{bound_A+_adaptive}). Finally, to analyze $L^{(4)}_{m,k,3}$, applying Lemma \ref{optimality_cond_adaptive} with $t = T^*_{k-1},t = T^*_k$ to obtain (\ref{changepoint_4_adaptive}) and with $t = T^*_{k},t = T^*_{k+1}$ to obtain (\ref{changepoint_3_adaptive}). By the triangle inequality, we have
{\footnotesize{\begin{align*}
\lefteqn{\|\Sigma^*_{k+1}-\Sigma^*_{k}\|_F\leq \|\widehat{\Sigma}_{l+1}-\Sigma^*_{k}\|_F+\|\widehat{\Sigma}_{l+1}-\Sigma^*_{k+1}\|_F}\\
  & \leq 
  \frac{2\lambda_2 T(\xi_{2T^*_{k-1}}+\xi_{2T^*_{k}})}{T^*_k-T^*_{k-1}} + \lambda_1 \big(\underset{T^*_{k-1} \leq r \leq T^*_k-1}{\sup}\underset{1 \leq u \neq v \leq p}{\max}\xi_{uv,1r}\big) \sqrt{p(p-1)} + \left\|\frac{1}{T^*_k-T^*_{k-1}}\overset{T^*_{k}-1}{\underset{r=T^*_{k-1}}{\sum}}\text{vec}\big(\Sigma^*_{k}-X_rX^\top_r\big)\right\|_2 \\
  &\quad  + \frac{2\lambda_2 T(\xi_{2T^*_{k}}+\xi_{2T^*_{k+1}})}{T^*_{k+1}-T^*_k} + \lambda_1 \big(\underset{T^*_{k} \leq r \leq T^*_{k+1}-1}{\sup}\underset{1 \leq u \neq v \leq p}{\max}\xi_{uv,1r}\big) \sqrt{p(p-1)}  + \left\|\frac{1}{T^*_{k+1}-T^*_k}\overset{T^*_{k+1}-1}{\underset{r=T^*_k}{\sum}}\text{vec}\big(\Sigma^*_{k+1}-X_rX^\top_r\big)\right\|_2,
\end{align*}}}
we deduce
{\footnotesize{\begin{align*}
\sum^{m_{\max}}_{m=m^*+1} \sum^{m^*}_{k=1}\Pb\big(L^{(4)}_{m,k,3}\big)
& \leq \sum^{m_{\max}}_{m=m^*+1} \sum^{m^*}_{k=1}\Bigg\{\Pb\Bigg(\|\Sigma^*_{k+1}-\Sigma^*_{k}\|_F/3 \leq \frac{2\lambda_2 T}{\mathcal{I}_{\min}}(\xi_{2T^*_{k-1}}+\xi_{2T^*_{k}})+\frac{2\lambda_2 T}{\mathcal{I}_{\min}}(\xi_{2T^*_{k}}+\xi_{2T^*_{k+1}})\\
& +\lambda_1 \big[\big(\underset{T^*_{k-1} \leq r \leq T^*_k-1}{\sup}\underset{1 \leq u \neq v \leq p}{\max}\xi_{uv,1r}\big) +\big(\underset{T^*_{k} \leq r \leq T^*_{k+1}-1}{\sup}\underset{1 \leq u \neq v \leq p}{\max}\xi_{uv,1r}\big) \big]\Bigg)\\
& + \Pb\left(\|\Sigma^*_{k+1}-\Sigma^*_{k}\|_F/3 \leq  \left\|\frac{1}{T^*_k-T^*_{k-1}}\overset{T^*_{k}-1}{\underset{r=T^*_{k-1}}{\sum}}\text{vec}\big(\Sigma^*_{k}-X_rX^\top_r\big)\right\|_2\right)\\
& + \Pb\left(\|\Sigma^*_{k+1}-\Sigma^*_{k}\|_F/3 \leq \left\|\frac{1}{T^*_{k+1}-T^*_k}\overset{T^*_{k+1}-1}{\underset{r=T^*_k}{\sum}}\text{vec}\big(\Sigma^*_{k}-X_rX^\top_r\big)\right\|_2\right)\Bigg\},
\end{align*}}}
which also tends to zero, as in the proof of  the convergence to zero of (\ref{bound_A+}). We conclude that $\Pb\big(\{h(\widehat{\Tc}_{\widehat{m}},\Tc^*_{m^*})> T\delta_T\}\cap \{m^* < \widehat{m} \leq m_{\max}\}\big) \rightarrow 0$ as $T \rightarrow \infty$.

\subsection{Proof of Theorem \ref{theorem_understimation_neg_adaptive}}

The proof follows the same steps as in the proof of Theorem \ref{theorem_understimation_neg}. Let $\breve{m}$, $\breve{\Tc}_{\breve{m}} = \{\breve{T}_1,\ldots,\breve{T}_{\breve{m}}\}$ and $\underline{\widehat{\Sigma}}_{\breve{m}}(\breve{\Tc}_{\breve{m}})=(\widehat{\Sigma}_1(\breve{\Tc}_{\breve{m}}),\ldots,\widehat{\Sigma}_{\breve{m}}(\breve{\Tc}_{\breve{m}}), \widehat{\Sigma}_{\breve{m}+1}(\breve{\Tc}_{\breve{m}}))$ be the hypothesized Group Fused LASSO estimator of the number of change points, set of change points and set of estimated variance-covariance matrices, respectively. Define $\underline{\widehat{\Sigma}}_{\Tc_m} = \arg\min_{\underline{\Sigma}_m} \; \Lb(\underline{\Sigma}_m,\mathcal{X}_T,\Tc_m)$, with \\
$\Lb(\underline{\Sigma}_m,\mathcal{X}_T,\Tc_m) = \frac{1}{2T}\overset{m+1}{\underset{j=1}{\sum}} \overset{T_j-1}{\underset{t=T_{j-1}}{\sum}}\text{tr}\left(\Sigma^\top_j\Sigma_j-(X_tX^\top_t)^\top\Sigma_j-\Sigma^\top_j(X_tX^\top_t)\right)$,
as the estimator of $\underline{\Sigma}_m = (\Sigma_1,\ldots,\Sigma_{m+1})$ for the given set of change-point locations $\Tc_m$. Define $\Lb_{\lambda_1\lambda_2}(\underline{\Sigma}_m,\mathcal{X}_T,\Tc_m) = \Lb(\underline{\Sigma}_m,\mathcal{X}_T,\Tc_m) + \lambda_1 \sum^{m+1}_{j=1}\sum_{u \neq v} \xi_{uv,1j}|\Sigma_{uv,j}|+ \lambda_2 \sum^m_{j=1}\xi_{2j}\|\Sigma_{j+1}-\Sigma_j\|_F$.
We aim to show that $\Pb\left(\Lb_{\lambda_1\lambda_2}(\underline{\widehat{\Sigma}}_{\breve{m}}(\breve{\Tc}_{\breve{m}}),\mathcal{X}_T,\breve{\Tc}_{\breve{m}})>\Lb_{\lambda_1\lambda_2}(\underline{\widehat{\Sigma}}_{m^*}(\widehat{\Tc}_{m^*}),\mathcal{X}_T,\widehat{\Tc}_{m^*})\right) \rightarrow 1$.
By definition of the adaptive weights, $\xi_{uv,1j}\leq \max(\alpha_{p,T},a_T)^{-\mu_1}$. Recall that $r_{p,T} := \frac{\lambda_2 \, T }{\max(\alpha_{p,T},a_T)^{\mu_2}\Ic_{\min}}+\frac{\lambda_1Tp}{\max(\alpha_{p,T},a_T)^{\mu_1}}\frac{T\delta_T}{\Ic_{\min}}+\frac{T\delta_T}{\Ic_{\min}} + p\sqrt{ \frac{\log(pT)}{\Ic_{\min}}}$. We have $\sum^{m^*+1}_{j=1}\sum_{u \neq v}|\Sigma^*_{uv,j}|=O(m^* s^*_{\max})$, and under $\|\widehat{\Sigma}_j-\Sigma^*_j\|_F=O_p(r_{p,T})$, we get $\sum^{m^*+1}_{j=1}\sum_{u \neq v}|\widehat{\Sigma}_{uv,j}(\widehat{\Tc}_{m^*})|=O_p(m^*s^*_{\max}+m^*pr_{p,T})$. Under Assumption \ref{assumption_rates_adaptive} and $m^* s^*_{\max}/(p\Ic_{\min}\eta_{\min}) \rightarrow 0$, $m^*T\delta_T/(\Ic_{\min}\eta_{\min})\rightarrow 0$, we have
{\small{\begin{align*}
\lefteqn{\frac{T}{\Ic_{\min}\eta^2_{\min}}\Big[\Lb_{\lambda_1\lambda_2}(\underline{\widehat{\Sigma}}_{\breve{m}}(\breve{\Tc}_{\breve{m}}),\mathcal{X}_T,\breve{\Tc}_{\breve{m}})-\Lb_{\lambda_1\lambda_2}(\underline{\widehat{\Sigma}}_{m^*}(\widehat{\Tc}_{m^*}),\mathcal{X}_T,\widehat{\Tc}_{m^*})\Big]}\\
& = \frac{T}{\Ic_{\min}\eta^2_{\min}}\Bigg[\frac{1}{2T}\overset{\breve{m}+1}{\underset{j=1}{\sum}}\overset{\breve{T}_j-1}{\underset{t=\breve{T}_{j-1}}{\sum}}\text{tr}(\widehat{\Sigma}_j(\breve{\Tc}_{\breve{m}})^\top\widehat{\Sigma}_j(\breve{\Tc}_{\breve{m}})-(X_tX^\top_t)^\top\widehat{\Sigma}_j(\breve{\Tc}_{\breve{m}})-\widehat{\Sigma}_j(\breve{\Tc}_{\breve{m}})^\top(X_tX^\top_t)) \\
&\quad  - \frac{1}{2T}\overset{m^*+1}{\underset{j=1}{\sum}}\overset{\widehat{T}_j-1}{\underset{t=\widehat{T}_{j-1}}{\sum}}\text{tr}(\widehat{\Sigma}_j(\widehat{\Tc}_{m^*})^\top\widehat{\Sigma}_j(\widehat{\Tc}_{m^*})- (X_tX^\top_t)^\top\widehat{\Sigma}_j(\widehat{\Tc}_{m^*})-\widehat{\Sigma}_j(\widehat{\Tc}_{m^*})^\top(X_tX^\top_t))\Bigg]\\
&\quad  + \frac{\lambda_1 T}{\Ic_{\min}\eta^2_{\min}}\Bigg[\sum^{\breve{m}+1}_{j=1}\sum_{u \neq v} \xi_{uv,1j}|\widehat{\Sigma}_{uv,j}(\breve{\Tc}_{\breve{m}})| - \sum^{m^*+1}_{j=1}\sum_{u \neq v} \xi_{uv,1j}|\widehat{\Sigma}_{uv,j}(\widehat{\Tc}_{m^*})| \Bigg]\\ 
&\quad  + \frac{\lambda_2 T}{\Ic_{\min}\eta^2_{\min}} \Bigg[\overset{\breve{m}}{\underset{j=1}{\sum}} \xi_{2j}\|\widehat{\Sigma}_{j+1}(\breve{\Tc}_{\breve{m}})-\widehat{\Sigma}_j(\breve{\Tc}_{\breve{m}})\|_F-\overset{m^*}{\underset{j=1}{\sum}}\xi_{2j} \|\widehat{\Sigma}_{j+1}(\widehat{\Tc}_{m^*})-\widehat{\Sigma}_j(\widehat{\Tc}_{m^*})\|_F\Bigg]\\
& \geq \frac{T}{\Ic_{\min}\eta^2_{\min}}\Bigg[\Lb(\underline{\widehat{\Sigma}}_{\breve{m}}(\breve{\Tc}_{\breve{m}}),\mathcal{X}_T,\breve{\Tc}_{\breve{m}})-\Lb(\underline{\widehat{\Sigma}}_{m^*}(\widehat{\Tc}_{m^*}),\mathcal{X}_T,\widehat{\Tc}_{m^*})\Bigg]+o_p(1).
\end{align*}}}
The rest of the proof follows the same steps as in the proof of Theorem \ref{theorem_understimation_neg}.

\subsection{Proof of Theorem \ref{theorem_recovery}}

The proof follows the same steps as in Theorem 5 of \cite{Kolar2012}. Under the conditions of Theorem \ref{theorem_date_recov_adaptive}, we work on the event $\Ec := \{\max_{1 \leq j \leq m^*}|\widehat{T}_j-T^*_j|\leq T \delta_T\}$. We show $\widehat{\Sc}_j \subset \Sc^*_j$ with probability tending to one. For block $j$, since $\widehat{\Sigma}_j$ is an optimal solution to (\ref{stat_crit_adaptive}), it satisfies
\begin{equation}\label{optim_block}
\frac{1}{T}|\widehat{\Bc}_j|\Big(\widehat{\Sigma}_j-\frac{1}{|\widehat{\Bc}_j|}\sum_{t \in \widehat{\Bc}_j}X_tX^\top_t\Big) + \lambda_1\sum_{t \in \widehat{\Bc}_j}\Xi_{1t}\odot \widehat{E}_{1t} + \lambda_2\Big(\xi_{2,\widehat{T}_{j-1}}\widehat{E}_{2\widehat{T}_{j-1}} - \xi_{2,\widehat{T}_{j}}\widehat{E}_{2\widehat{T}_{j}}\Big)=\mathbf{0}_{p \times p},
\end{equation}
where $\Xi_{1t}=\left(\xi_{uv,1t}\right)_{uv}$. We construct the matrices $\overline{\Sigma}_j,\widehat{E}_{2\widehat{T}_{j}}, \widehat{E}_{2\widehat{T}_{j-1}}, \widehat{E}_{1t}$ that satisfy (\ref{optim_block}) and verify that the sub-differential matrices are dual feasible. We now construct the oracle estimator $\overline{\Sigma} \in \mathcal{O}_{\widehat{m}+1}:=\left\{\Sigma_1,\ldots,\Sigma_{\widehat{m}+1}: \Sigma_{j,uv}=0,\, \forall (u,v)\notin \Sc^{*}_j\right\}$. It is defined
as the solution to the restricted problem
{\small{\begin{align*}
\left\{\overline{\Sigma}\right\}^{\widehat{m}+1}_{j=1} := \underset{\Sigma \in \mathcal{O}_{\widehat{m}+1}}{\arg\;\min}\left\{\sum^{\widehat{m}+1}_{j=1}\frac{|\widehat{\Bc}_j|}{2T}\left\|\Sigma_j-\frac{1}{|\widehat{\Bc}_j|}\sum_{t \in \widehat{\Bc}_j}X_tX^\top_t\right\|^2_F + \lambda_1 \sum^{\widehat{m}+1}_{j=1} \sum_{u \neq v}\Big(\sum_{t\in \widehat{\Bc}_j}\xi_{uv,1t}\Big)|\Sigma_{j,uv}| + \lambda_2\sum^{\widehat{m}}_{j=1}\xi_{2,\widehat{T}_{j}}\|\Sigma_{j+1}-\Sigma_{j}\|_F \right\}.
\end{align*}}}
Let $\widehat{S}_j:=\frac{1}{|\widehat{\Bc}_j|}\sum_{t \in \widehat{\Bc}_j}X_tX^\top_t$ and $\Xi_{1j}=\sum_{t \in \widehat{\Bc}_j}\Xi_{1t}$.
From (\ref{optim_block}), for a given block $\widehat{\Bc}_j$, we have:
\begin{equation*}
\overline{\Sigma}_{j,\Sc^*_j} = 
\widehat{S}_{j,\Sc^*_j} - T\lambda_1\frac{1}{|\widehat{\Bc}_j|}\Xi_{1j,\Sc^*_j}\odot \text{sgn}(\overline{\Sigma}_{j,\Sc^*_j}) - \frac{T}{|\widehat{\Bc}_j|}\lambda_2\Big(\xi_{2,\widehat{T}_{j}}\overline{E}_{2\widehat{T}_{j},\Sc^*_j} - \xi_{2,\widehat{T}_{j-1}}\overline{E}_{2\widehat{T}_{j-1},\Sc^*_j}\Big).
\end{equation*}
We deduce
\begin{equation}\label{optim_block_active}
\overline{\Sigma}_{j,\Sc^*_j} - \Sigma^*_{j,\Sc^*_j}= 
\widehat{S}_{j,\Sc^*_j}-\Sigma^*_{j,\Sc^*_j} - T\lambda_1\frac{1}{|\widehat{\Bc}_j|}\Xi_{1j,\Sc^*_j}\odot \text{sgn}(\overline{\Sigma}_{j,\Sc^*_j}) - \frac{T}{|\widehat{\Bc}_j|}\lambda_2\Big(\xi_{2,\widehat{T}_{j}}\overline{E}_{2\widehat{T}_{j},\Sc^*_j} - \xi_{2,\widehat{T}_{j-1}}\overline{E}_{2\widehat{T}_{j-1},\Sc^*_j}\Big).
\end{equation}
The term $\widehat{S}_{j,\Sc^*_j}-\Sigma^*_{j,\Sc^*_j}$ can be written as
\begin{equation*}
\widehat{S}_{j,\Sc^*_j}-\Sigma^*_{j,\Sc^*_j} = \frac{1}{|\widehat{\Bc}_j|}\sum_{k:\widehat{\Bc}_j \cap \Bc^*_k \neq \emptyset} \sum_{t \in \widehat{\Bc}_j \cap \Bc^*_k} (X_tX^\top_t - \Sigma^*_k) + \frac{1}{|\widehat{\Bc}_j|}  \sum_{k:\widehat{\Bc}_j \cap \Bc^*_k \neq \emptyset} |\widehat{\Bc}_j\cap \Bc^*_k|(\Sigma^*_k-\Sigma^*_j)=:A_{1,j}+A_{2,j}.
\end{equation*}
On $\Ec$, $\sum_{k \neq j}|\widehat{\Bc}_j\cap \Bc^*_k| \leq |\widehat{T}_{j-1}-T^*_{j-1}|+|\widehat{T}_j-T^*_j| \leq 2T\delta_T$. 
Therefore, under $\max_k\|\Sigma^*_k\|_{\max}=O(1)$, $\|A_{2,j}\|_{\max} \leq C \frac{T\delta_T}{|\widehat{\Bc}_j|}$ with $C>0$ a large finite constant.
Furthermore, as $|\widehat{\Bc}_j|\geq |\Bc^*_j|-2T\delta_T\geq \Ic_{\min}-2T\delta_T$, under $T\delta_T=o(\Ic_{\min})$, by Lemma \ref{bound_gradient}, $\|A_{1,j}\|_{\max}=O_p(\sqrt{\log(pT)/\Ic_{\min}})$ and $\|\widehat{S}_j-\Sigma^*_j\|_{\max}=O_p(\sqrt{\log(pT)/\Ic_{\min}}+T\delta_T/\Ic_{\min})$.
Since, by construction, $\overline{\Sigma}_{uv,j}=0$ for any $(u,v)\notin \Sc^*_j$, from (\ref{optim_block}), we have
\begin{equation*}
-\frac{1}{T}|\widehat{\Bc}_j|\widehat{S}_{uv,j} + \lambda_1\sum_{t \in \widehat{\Bc}_j}\xi_{uv,1t}\overline{E}_{uv,1t} + \lambda_2\Big(\xi_{2,\widehat{T}_{j-1}}\overline{E}_{uv,2\widehat{T}_{j-1}} - \xi_{2,\widehat{T}_{j}}\overline{E}_{uv,2\widehat{T}_{j}}\Big)=0.
\end{equation*}
Then, we get
\begin{equation}\label{optim_block_inactive}
\frac{\sum_{t \in \widehat{\Bc}_j}\xi_{uv,1t}\overline{E}_{uv,1t}}{\sum_{t \in \widehat{\Bc}_j}\xi_{uv,1t}}=\frac{\frac{1}{T}|\widehat{\Bc}_j|\widehat{S}_{uv,j}-\lambda_2\Big(\xi_{2,\widehat{T}_{j-1}}\overline{E}_{uv,2\widehat{T}_{j-1}} - \xi_{2,\widehat{T}_{j}}\overline{E}_{uv,2\widehat{T}_{j}}\Big)}{\lambda_1\sum_{t \in \widehat{\Bc}_j}\xi_{uv,1t}}.
\end{equation}
Since $\Sigma^*_{uv,j}=0$ for any $(u,v)\notin \Sc^*_j$, we have $\widehat{S}_{uv,j}=O_p(\sqrt{\log(pT)/\Ic_{\min}}+T\delta_T/\Ic_{\min})$. Moreover, $|\Big(\xi_{2,\widehat{T}_{j-1}}\overline{E}_{uv,2\widehat{T}_{j-1}} - \xi_{2,\widehat{T}_{j}}\overline{E}_{uv,2\widehat{T}_{j}}\Big)| \leq 2 \max_t \xi_{2,t}$. As for the adaptive weights in the LASSO penalty, for inactive entries, $|\widehat{\Bc}_j|^{-1}\sum_{t \in \widehat{\Bc}_j}\xi_{uv,1t} \gtrsim \max(\alpha_{p,T},a_T)^{-\mu_1}$. As a consequence, the right hand side of (\ref{optim_block_inactive}) becomes:
\begin{equation*}
\frac{\frac{1}{T}|\widehat{\Bc}_j|\widehat{S}_{uv,j}-\lambda_2\Big(\xi_{2,\widehat{T}_{j-1}}\overline{E}_{uv,2\widehat{T}_{j-1}} - \xi_{2,\widehat{T}_{j}}\overline{E}_{uv,2\widehat{T}_{j}}\Big)}{\lambda_1\sum_{t \in \widehat{\Bc}_j}\xi_{uv,1t}} = o_p(1),
\end{equation*}
under $\frac{1}{T}\Big(\sqrt{\frac{\log(pT)}{\Ic_{\min}}}+\frac{T\delta_T}{\Ic_{\min}}\Big) + \Ic^{-1}_{\min}\lambda_2\max_t\xi_{2,t}=o(\lambda_1\max(\alpha_{p,T},a_T)^{-\mu_1})$.
Therefore, with probability tending to one, we have 
$\max_j \max_{(u,v)\notin \Sc^*_j}\left|\frac{\sum_{t \in \widehat{\Bc}_j}\xi_{uv,1t}\overline{E}_{uv,1t}}{\sum_{t \in \widehat{\Bc}_j}\xi_{uv,1t}}\right|<1$.
By Lemma \ref{lemma_unique_inactive_pattern}, every minimizer of the unrestricted block-wise criterion with partition $\widehat{\Tc}_{\widehat{m}}$, in particular $\widehat{\Sigma}$, satisfies $\widehat{\Sigma}_{uv,j}=0$, for $(u,v)\notin \Sc^*_j$, for all $j=1,\ldots,\widehat{m}+1$. Therefore, $\widehat{\Sc}_j \subset \Sc^*_j$
with probability tending to one.

\medskip

Let us show that $\Sc^*_j\subset \widehat{\Sc}_j$ with probability tending to one. Taking the $\|\cdot\|_{\max}$ in (\ref{optim_block_active}), we have
\begin{equation*}
\|\overline{\Sigma}_{j,\Sc^*_j} - \Sigma^*_{j,\Sc^*_j}\|_{\max} \leq 
\|\widehat{S}_{j}-\Sigma^*_{j}\|_{\max} + T\lambda_1\frac{1}{|\widehat{\Bc}_j|}\|\Xi_{1j,\Sc^*_j}\|_{\max} + \frac{2T}{|\widehat{\Bc}_j|}\lambda_2\max_t\xi_{2,t}.
\end{equation*}
Under the condition $\min_j \min_{(u,v)\in \Sc^*_j}|\Sigma^*_{uv,j}| \geq L\left(p\sqrt{\frac{\log(pT)}{\Ic_{\min}}}+\frac{T\delta_T}{\Ic_{\min}}+T\lambda_1+\frac{T}{\Ic_{\min}}\lambda_2\max_t\xi_{2,t}\right)$,
with $L>0$ sufficiently large, we have $|\widetilde{\Sigma}_{uv,j}| \geq |\Sigma^*_{uv,j}|-|\widetilde{\Sigma}_{uv,j}-\Sigma^*_{uv,j}| >0$ with probability tending to one. We deduce $\frac{1}{|\widehat{\Bc}_j|}\|\Xi_{1j,\Sc^*_j}\|_{\max}=O_p(1)$, and
\begin{equation*}
\max_j\|\overline{\Sigma}_{j,\Sc^*_j} - \Sigma^*_{j,\Sc^*_j}\|_{\max} = O_p\left(\sqrt{\frac{\log(pT)}{\Ic_{\min}}}+\frac{T\delta_T}{\Ic_{\min}}+T\lambda_1+\frac{T}{\Ic_{\min}}\lambda_2\max_t\xi_{2,t}\right).
\end{equation*}
Moreover, $|\overline{\Sigma}_{uv,j}|\geq |\Sigma^*_{uv,j}|-|\overline{\Sigma}_{uv,j}-\Sigma^*_{uv,j}|>0$ with probability tending to one. Since $\widehat{\Sigma}$ coincides with a minimizer of the restricted problem on the event of strict dual feasibility, we deduce that $(u,v)\in \widehat{\Sc}_j$ for $(u,v)\in \Sc^*_j$, with probability tending to one.
Therefore, $\Sc^*_j\subset \widehat{\Sc}_j$ with probability tending to one. Combining with the first part, we conclude $\Sc^*_j=\widehat{\Sc}_j$ with probability tending to one.

\bibliography{biblio}

\end{document}